\tikzset{
  cir/.style = {circle,draw,fill,inner sep=.7pt},
  circ/.style = {circle,draw,fill,inner sep=1.3pt},
  circg/.style = {circle,draw=lightgray,fill=lightgray,inner sep=1.3pt},
  circr/.style = {circle,draw=Crimson,fill=Crimson,inner sep=1.3pt},
  invisible/.style = {circle,draw=none,inner sep=0pt,font=\tiny},
  nonedge/.style={decorate,decoration={snake,amplitude=.3mm,segment length=1mm},draw},
}
\newtcolorbox{mybox}[1]{minipage boxed title*=-2cm,
enhanced,attach boxed title to top center=
{yshift=-3mm,yshifttext=-1mm},colback=Lavender!30!white,
boxed title style={size=small,colback=Lavender},coltitle=black,
center title,title={#1}}
\newcommand{\vol}{\mathrm{vol}}
\newcommand{\rk}{\mathrm{rk}}
\newcommand{\tw}{\mathrm{tw}}
\newcommand{\tin}{\mathrm{tree}\textnormal{-}\alpha}
\newtheorem{question}{Question}
\mathchardef\mhyphen="2D
\title{Polynomial-time approximation schemes for induced subgraph problems on fractionally tree-independence-number-fragile graphs} 
\titlerunning{PTASes for induced subgraph problems on fractionally $\boldsymbol{\tin}$-fragile graphs} 
\author{Esther {Galby}}{Hamburg University of Technology, Institute for Algorithms and Complexity,
Hamburg, Germany}{esther.galby@tuhh.de}{}{}
\author{Andrea {Munaro}}{Department of Mathematical, Physical and Computer Sciences, University of Parma, Parma, Italy}{andrea.munaro@unipr.it}{https://orcid.org/0000-0003-1509-8832}{}
\author{Shizhou {Yang}}{School of Mathematics and Physics, Queen’s University Belfast, Belfast, UK}{syang22@qub.ac.uk}{}{}
\authorrunning{E. Galby, A. Munaro, S. Yang} 
\keywords{meta-algorithm, PTAS, tree-independence number, intersection graphs} 
\let\orgdescriptionlabel\descriptionlabel
\renewcommand*{\descriptionlabel}[1]{%
  \let\orglabel\label
  \let\label\@gobble
  \phantomsection
  \edef\@currentlabel{#1\unskip}%
  \let\label\orglabel
  \orgdescriptionlabel{#1}%
}
\begin{document}

\maketitle

\begin{abstract}
We investigate a relaxation of the notion of fractional treewidth-fragility, namely fractional tree-independence-number-fragility. In particular, we obtain polynomial-time approximation schemes for meta-problems such as finding a maximum-weight sparse induced subgraph satisfying a given $\mathsf{CMSO}_2$ formula on fractionally tree-independence-number-fragile graph classes. Our approach unifies and extends several known polynomial-time approximation schemes on seemingly unrelated graph classes, such as classes of intersection graphs of fat objects in a fixed dimension or proper minor-closed classes. We also study the related notion of layered tree-independence number, a relaxation of layered treewidth, and its applications to exact subexponential-time algorithms.
\end{abstract}

\section{Introduction}
\label{sec:introA}
Many optimization problems involving collections of geometric objects in the $d$-dimensional space are known to admit a polynomial-time approximation scheme (PTAS). Arguably the earliest example of such behavior is the problem of finding the maximum number of pairwise non-intersecting disks or squares in a collection of unit disks or unit squares, respectively \cite{HM85}. Such subcollection is usually called an \textit{independent packing}. This result was later extended to collections of arbitrary disks and squares and, more generally, fat objects \cite{Cha03,EJS05}. The reason for the abundance of approximation schemes for geometric problems is that shifting and layering techniques can be used to reduce the problem to small subproblems that can be solved by dynamic programming. In fact, the same phenomenon occurs for graph problems, as evidenced by the seminal work of Baker~\cite{Bak94} on approximation schemes for local problems, such as \textsc{Independent Set}, on planar graphs and its generalizations first to apex-minor-free graphs \cite{Epp00} and further to graphs embeddable on a surface of bounded genus with a bounded number of crossings per edge \cite{GB07}. The notion of intersection graph allows to jump from the geometric world to the graph-theoretic one. Given a collection $\mathcal{O}$ of geometric objects in $\mathbb{R}^d$, we can consider its \textit{intersection graph}, the graph whose vertices are the objects in $\mathcal{O}$ and where two distinct vertices $O_i, O_j \in \mathcal{O}$ are adjacent if and only if $O_i \cap O_j \neq \varnothing$. An independent packing in $\mathcal{O}$ is then nothing but an independent set in the corresponding intersection graph. Notice that intersection graphs of unit disks or squares are not minor-closed, as they contain arbitrarily large cliques. Our motivating question is the following:
\begin{center}
\textit{Is there any underlying graph-theoretical reason for the existence of the seemingly unrelated PTASes for \textsc{Independent Set} mentioned above?} 
\end{center}
We provide a positive answer to this question that also allows us to further generalize to a framework of maximization problems. We also remark that the similar question of whether there is a general notion under which PTASes using Baker's technique can be obtained was already asked in \cite{GB07}. 

Baker's layering technique relies on a form of decomposition theorem for planar graphs that can be roughly summarized as follows. Given a planar graph $G$ and $k \in \mathbb{N}$, the vertex set of $G$ can be partitioned into $k$ possibly empty sets in such a way that deleting any part induces in $G$ a graph of treewidth at most $O(k)$. Moreover, such a partition together with tree decompositions of width at most $O(k)$ of the respective graphs can be found in polynomial time. A statement of this form is typically referred to as a \textit{Vertex Decomposition Theorem} (VDT) \cite{PSZ19}. VDTs are known to exist in planar graphs \cite{Bak94}, graphs of bounded-genus and apex-minor-free graphs \cite{Epp00}, and $H$-minor-free graphs \cite{DHK05,DDO04}. However, their existence is in general something too strong to ask for, as is the case of intersection graphs of unit disks or squares and hence fat objects in general. There are then two natural ways in which one can try to relax the notion of VDT. First, we can consider an approximate partition of the vertex set, where a vertex can belong to some constant number of sets. Second, we can look for a width parameter less restrictive than treewidth.

Dvo\v{r}\'{a}k~\cite{Dvo16} pursued the first direction and introduced the notion of efficient fractional treewidth-fragility. We state here an equivalent formulation from \cite{DL21}. A class of graphs $\mathcal{G}$ is \textit{efficiently fractionally treewidth-fragile} if there exists a function $f\colon \mathbb{N} \rightarrow \mathbb{N}$ and an algorithm that, for every $k \in \mathbb{N}$ and $G \in \mathcal{G}$, returns in time $\mathsf{poly}(|V(G)|)$ a collection of subsets $X_1, X_2,\ldots, X_m \subseteq V(G)$ such that each vertex of $G$ belongs to at most $m/k$ of the subsets and moreover, for $i = 1,\ldots, m$, the algorithm also returns a tree decomposition of $G - X_i$ of width at most $f(k)$. The ``efficiently'' in the definition refers to the existence of such a polynomial-time algorithm; in case such an algorithm is not known, we simply talk about fractional treewidth-fragility. Several graph classes are known to be efficiently fractionally treewidth-fragile. In fact, a class $\mathcal{G}$ is efficiently fractionally treewidth-fragile in each of the following cases (see, e.g., \cite{DL21}): $\mathcal{G}$ is subgraph-closed and has strongly sublinear separators and bounded maximum degree, $\mathcal{G}$ is proper minor-closed, or $\mathcal{G}$ consists of intersection graphs of convex objects with bounded aspect ratio in $\mathbb{R}^d$ (for fixed $d$) and the graphs in $\mathcal{G}$ have bounded clique number. Dvo\v{r}\'{a}k~\cite{Dvo16} showed that \textsc{Independent Set} admits a PTAS on every efficiently fractionally treewidth-fragile class. This result was later extended \cite{Dvo22,DL21} to a framework of maximization problems including, for example, \textsc{Max Weight Distance-$d$ Independent Set}, \textsc{Max Weight Induced Forest} and \textsc{Max Weight Induced Matching}. However, the notion of fractional treewidth-fragility falls short of capturing classes such as unit disk graphs, as it implies the existence of sublinear separators \cite{Dvo16}.  

One can then try to pursue the second direction mentioned above and further relax the notion of efficient fractional treewidth-fragility by considering width parameters \textit{more powerful} than treewidth (i.e., bounded on a larger class of graphs) and algorithmically useful. A natural candidate is \textit{tree-independence number}, a width parameter defined in terms of tree decompositions which is more powerful than treewidth (see \Cref{sec:layeredA}), introduced independently by Dallard et al.~\cite{DaMS22} and Yolov~\cite{Yol18}. Several algorithmic applications of boundedness of tree-independence number have been provided, most notably polynomial-time solvability of $(c, \psi)$-\textsc{Max Weight Induced Subgraph} (informally, the meta-problem of finding a maximum-weight induced subgraph with clique number at most $c$ satisfying a certain $\mathsf{CMSO}_2$ formula $\psi$) \cite{LMM24}, \textsc{Max Weight Independent Packing} (informally, the problem of packing pairwise disjoint and non-adjacent weighted graphs from a family $\mathcal{H}$ given in input) \cite{DaMS22}, and its distance-$d$ version, for $d$ even \cite{LMM24}. These are generalizations of problems such as \textsc{Max Weight Distance-$d$ Independent Set}, \textsc{Max Weight Induced Forest} and \textsc{Max Weight Induced Matching}. Investigating the notion of efficient fractional tree-independence-number-fragility ($\tin$-fragility for short) was recently suggested in a talk by Dvo\v{r}\'{a}k \cite{GWP22}, where it was stated that, using an argument from \cite{DGLTT22}, it is possible to show that intersection graphs of balls and cubes in $\mathbb{R}^d$ are fractionally $\tin$-fragile.   

A successful notion related to fractional treewidth-fragility is the layered treewidth of a graph \cite{DMW17}. Loosely speaking, a graph $G$ has small layered treewidth if there exist a tree decomposition and a layering of $G$ such that the intersection between any bag and layer is small. In particular, the union of any constant number of consecutive layers induces a subgraph of small treewidth and hence Baker's technique applies (provided a suitable layering can be found in polynomial time) \cite{Dvo18,Dvo20}. Besides being algorithmically interesting, this notion proved useful especially in the context of coloring-type problems (we refer to \cite{DEMWW22} for additional references). It should be mentioned that classes of bounded layered treewidth include planar graphs and, more generally, apex-minor-free graphs and graphs embeddable on a surface of bounded genus with a bounded number of crossings per edge, amongst others \cite{DEW17}. It can be shown that bounded layered treewidth implies fractional treewidth-fragility (see \Cref{sec:fragilityA}). Layered treewidth is also related to local treewidth, a notion first introduced by Eppstein~\cite{Epp00}, and in fact, on proper minor-closed classes, having bounded layered treewidth coincides with having bounded local treewidth (see, e.g., \cite{DEW17}).

\subsection{Main results}\label{sec:mainres}

In this paper, we investigate the notion of efficient fractional $\tin$-fragility, which generalizes efficient fractional treewidth-fragility and bounded tree-independence number, and show that it answers our motivating question in the positive and allows to unify and extend several known results. More precisely, we provide an approximation meta-theorem which pushes the limits of tractability well beyond the state of the art by extending existing approximation meta-theorems to the broad family of efficiently fractionally $\tin$-fragile classes. In this way, we also give a uniform and natural explanation for a large number of algorithmic results. The family of problems covered by our framework belongs to the general meta-problem introduced by Lund and Yannakakis~\cite{LY93} and called \textsc{Max Induced $\Pi$-Subgraph}: Given a graph $G$, the task is to find a maximum-size \textit{induced} subgraph of $G$ satisfying a certain fixed property $\Pi$. Loosely speaking, our framework captures the family of problems whose task is to find a \textit{maximum-weight sparse} induced subgraph satisfying some fixed \textit{hereditary} property expressible in counting monadic second-order logic ($\mathsf{CMSO}_2$). Counting monadic second order logic is a counting variant of monadic second-order logic ($\mathsf{MSO}_2$), where one is allowed to have atomic formulae expressing that the cardinality of a set is equal to $q$ modulo $p$, for some integers $p\geq2$ and $0\leq q < p$. As we will not directly work with the formalism of $\mathsf{CMSO}_2$, we refer the reader to \cite{CE} for an introduction to this subject. 

We can finally define the meta-problem called $(c, h, \psi)$-\textsc{Max Weight Induced Subgraph}, which will be the focus of our paper. For $h \in \mathbb{N}$, we say that a $\mathsf{CMSO}_2$ formula $\psi$ \textit{expresses an $h$-near-monotone property} if, for any graph $G$ and any subset $Y \subseteq V(G)$ with $G[Y] \models \psi$, there exists a system $\{R_v \subseteq Y : v \in Y\}$ of subsets of $Y$ such that $v \in R_v$ for each $v \in Y$, each vertex of $Y$ belongs to $R_v$ for at most $h$ vertices $v \in Y$, and $G[Y\setminus \bigcup_{v\in X} R_v] \models \psi$ for each $X \subseteq Y$. 
The interesting special case $h = 1$ is that of a formula $\psi$ expressing a monotone property, i.e., a formula $\psi$ such that for any graph $G$ and any subsets $X \subseteq Y \subseteq V(G)$, if $G[Y] \models \psi$, then $G[X] \models \psi$. Let $\psi$ be a fixed $\mathsf{CMSO}_2$ formula expressing an $h$-near-monotone property and let $c$ be a fixed positive integer.

\begin{mybox}{$(c, h, \psi)$-\textsc{Max Weight Induced Subgraph}}
\textbf{Input:} A graph $G$ equipped with a weight function $w\colon V(G) \rightarrow \mathbb{Q}_{+}$.\\
\textbf{Task:} Find a set $F \subseteq V(G)$ such that:
\begin{enumerate}
\item $G[F] \models \psi$,
\item $\omega(G[F]) \leq c$,
\item $F$ is of maximum weight subject to the conditions above,
\end{enumerate}
or conclude that no such set exists.
\end{mybox}

The meta-problem $(c, h, \psi)$-\textsc{Max Weight Induced Subgraph} captures several well-known problems, such as \textsc{Max Weight Independent Set}, \textsc{Max Weight Induced Matching}, \textsc{Max Weight Induced Forest} (see \Cref{sec:ptasesA} for several other examples). This framework is in fact closely related to those considered in \cite{FTV15,LMM24}, in the context of exact algorithms, and in \cite{Dvo22,DL21}, in the context of approximation algorithms. We briefly highlight the differences. We already mentioned that Lima et al.~\cite{LMM24} showed that, for each fixed $c$ and $\psi$, $(c, \psi)$-\textsc{Max Weight Induced Subgraph} (which is nothing but the problem obtained from $(c, h, \psi)$-\textsc{Max Weight Induced Subgraph} by dropping the $h$-near-monotonicity constraint) can be solved in polynomial time for graphs of bounded tree-independence number. However, as we shall see in \Cref{sec:ptasesA}, some sort of monotonicity constraint is inevitable in our case. The main difference between the $(c, h, \psi)$-\textsc{Max Weight Induced Subgraph} framework and those provided in \cite{Dvo22,DL21} is that ours does not allow to model problems defined in terms of distances. However, this is again inevitable, as the parity of the distances plays a role: For each $\varepsilon > 0$ and fixed odd $d \geq 3$, it is $\mathsf{NP}$-hard to approximate the distance-$d$ version of \textsc{Independent Set} to within a factor of $n^{1/2 - \varepsilon}$ for chordal graphs \cite{EGM14}, which coincides with the class of graphs with tree-independence number $1$ \cite{DaMS22}. To partially overcome this, we also consider the framework of \textsc{Max Weight Independent Packing} and its distance-$d$ version called \textsc{Max Weight Distance-$d$ Packing} (both formally defined in \Cref{sec:ptasesA}). Our main results can be summarized as follows (see also \Cref{reldiagram}).
\begin{enumerate}[label={\fontfamily{lmss}\selectfont\textbf{(\Alph*)}},ref=\Alph*]
\item\label{item:first} For each fixed $c, h \in \mathbb{N}$ and $\mathsf{CMSO}_2$ formula $\psi$, $(c, h, \psi)$-\textsc{Max Weight Induced Subgraph} admits a PTAS on every efficiently fractionally $\tin$-fragile class. 
\item\label{item:second} \textsc{Max Weight Independent Packing} admits a PTAS on every efficiently fractionally $\tin$-fragile class.
\item\label{item:third} Every class of intersection graphs of fat objects in $\mathbb{R}^d$, for fixed $d$, is efficiently fractionally $\tin$-fragile.
\end{enumerate}
We also introduce the notion of layered tree-independence number, which is a relaxation of layered treewidth and which provides a strengthening of fractional $\tin$-fragility (as we explain in the next section), and prove the following.
\begin{enumerate}[label={\fontfamily{lmss}\selectfont\textbf{(\Alph*)}},ref=\Alph*,resume]
\item\label{item:fourth} For each fixed even $p \in \mathbb{N}$, \textsc{Max Weight Distance-$p$ Packing} admits a PTAS on every class of bounded layered tree-independence number\footnote{Provided that a tree decomposition and a layering witnessing small layered tree-independence number can be computed efficiently.} and on every class of intersection graphs of fat objects in $\mathbb{R}^d$, for fixed $d$.
\end{enumerate}

\begin{remark} Results \ref{item:first}, \ref{item:second}, \ref{item:fourth} cannot be improved to guarantee EPTASes, unless $\mathsf{FPT} = \mathsf{W}[1]$. Indeed, Marx~\cite{Mar05} showed that \textsc{Independent Set} remains $\mathsf{W}[1]$-complete on intersection graphs of unit disks and unit squares.
\end{remark}

The main message of our work is that a doubly-relaxed version of a VDT suffices for algorithmic applications and is general enough to hold for several interesting graph classes. In particular, Result \ref{item:first} provides an approximation meta-theorem similar to those obtained in \cite{Dvo22,DL21} but applicable to the substantially broader family of efficiently fractionally $\tin$-fragile classes. It should also be noted that there exist several definitions of fatness in the literature and the one we adopt in this paper slightly generalizes that of Chan~\cite{Cha03} and is implicitly used in some of the arguments from \cite{HP17}. Informally, a collection of objects\footnote{We remark that the objects need not be convex nor similarly-sized, i.e., the ratio of the largest and smallest diameter of the objects need not be bounded by a fixed constant.} is fat according to our definition if it satisfies a sort of ``low-density property'': for each $r$, there is at most a constant number of pairwise non-intersecting objects of size at least $r$ intersecting any region of size $r$ (see \Cref{fatcomp} for the formal definition). In fact, our notion of fatness captures that of \textit{low density}, introduced by Har-Peled and Quanrud in \cite{HP17}. Therefore, Result \ref{item:first} also extends some of the local-search-based PTASes from \cite{HP17} for \textsc{Max Induced $\Pi$-Subgraph}, where $\Pi$ is a hereditary and mergeable\footnote{A property $\Pi$ is mergeable if, for any subsets of vertices $X, Y$ which are at distance at least $2$, if $X$ and $Y$ each satisfy $\Pi$, then $X \cup Y$ satisfies $\Pi$.} property, on intersection graphs of collections of low-density objects in $\mathbb{R}^d$, for fixed $d$ (or, more generally, collections of objects where every subcollection of pairwise non-intersecting objects has low density). 

The natural trade-off in extending the tractable families with respect to approximation is that fewer problems will admit a PTAS. In our case this is exemplified by the minimization problem \textsc{Feedback Vertex Set}, which admits no PTAS on unit ball graphs in $\mathbb{R}^3$, unless $\mathsf{P} = \mathsf{NP}$ \cite{FLS12}, but admits an EPTAS on disk graphs \cite{LPS23}. In fact, Lokshtanov et al.~\cite{LPS23} established a framework for designing EPTASes for a broad class of unweighted vertex-deletion problems on disk graphs including, among others, \textsc{Feedback Vertex Set} (the complement dual of \textsc{Max Induced Forest}) and $d$-\textsc{Bounded Degree Vertex Deletion} (the complement dual of \textsc{Max $d$-Dependent Set}, a problem defined in \Cref{sec:ptasesA}). Previous sporadic PTASes on this class were known only for \textsc{Vertex Cover} \cite{EJS05,vLee06}, \textsc{Dominating Set} \cite{GB10}, \textsc{Independent Set} \cite{Cha03,EJS05} and \textsc{Max Clique} \cite{BBB21}. Very recently, Dvo\v{r}\'{a}k et al.~\cite{DLP23}, generalizing \cite{LPS23}, provided EPTASes for the unweighted minimization meta-problem \textsc{(Induced) Subgraph Hitting} on graph classes with polynomial expansion (which, in the case of subgraph-closed classes, is equivalent to having strongly sublinear separators \cite{DN16}) and on intersection graphs of convex globally fat\footnote{\label{note1}Global fatness is the ``standard'' notion of fatness, typically referred to as fatness. Given $k \geq 1$, an object $O \subseteq \mathbb{R}^d$ is $k$-globally fat if there exist two $d$-dimensional balls $B_{\mbox{\tiny{in}}}$ and $B_{\mbox{\tiny{out}}}$ with radius $R_{\mbox{\tiny{in}}}$ and $R_{\mbox{\tiny{out}}}$, respectively, such that $B_{\mbox{\tiny{in}}} \subseteq O \subseteq B_{\mbox{\tiny{out}}}$ and $R_{\mbox{\tiny{out}}} \leq k \cdot R_{\mbox{\tiny{in}}}$.} objects and pseudo-disks. Results \ref{item:first}, \ref{item:second} and \ref{item:fourth} complement the results from \cite{DLP23,LPS23}, as $(c, h, \psi)$-\textsc{Max Weight Induced Subgraph} and \textsc{Max Weight Independent Packing} capture the \textit{weighted} dual of several problems addressed therein. For example, every problem which consists in computing a subset $S \subseteq V(G)$ of minimum size such that $G - S$ does not contain any graph from a fixed finite family $\mathcal{F}$ as a (induced) subgraph and $G - S$ has bounded clique number. Observe that any fixed finite family $\mathcal{F}$ containing a complete graph satisfies this requirement, and in this way one can obtain problems such as $d$-\textsc{Bounded Degree Vertex Deletion}, \textsc{$C_k$-Hitting} and \textsc{$\ell$-Component Order Connectivity}.

\subsection{Overview of the results and organization of the paper}

\textit{Fatness.} In \Cref{fatcomp}, we introduce our notion of fatness, called $c$-fatness, and compare it with three among the most general (in the sense that they apply to arbitrary objects in arbitrary dimensions) notions of fatness from the literature, which we call global fatness\footnote{See \Cref{note1}.}, local fatness and thickness (see \cite{SHO93}). We show, in particular, that all these notions are equivalent when restricting to convex objects.

\textit{Layered and local tree-independence number.} In \Cref{sec:layeredA}, we begin our study of fractional $\tin$-fragility by introducing a subclass of fractionally $\tin$-fragile graphs, namely the class of graphs with bounded layered tree-independence number. We obtain the notion of layered tree-independence number by relaxing the successful notion of layered treewidth and show that, besides graphs of bounded layered treewidth, the following classes have bounded layered tree-independence number:
\begin{itemize}  
\item Intersection graphs of similarly-sized $c$-fat objects in $\mathbb{R}^2$ (in particular, unit disk graphs); 
\item Intersection graphs of unit-width rectangles in $\mathbb{R}^2$; 
\item (Vertex and edge) intersection graphs of paths with bounded horizontal part on a grid (these classes contain some interesting families of string graphs). 
\end{itemize}
As a consequence, we show that graphs in these classes have $O(\sqrt{n})$ tree-independence number and that this is tight up to constant factors. Moreover, we observe that, for minor-closed classes, having bounded layered tree-independence number is equivalent to having bounded local tree-independence number, which in turn is equivalent to excluding an apex graph as a minor, thus extending a characterization of bounded layered treewidth stated in \cite{DEW17}. We also consider the behavior of layered tree-independence number with respect to graph powers and show that odd powers of graphs of bounded layered tree-independence number have bounded layered tree-independence number and that this does not extend to even powers. This result is crucial for the proof of Result \ref{item:fourth}.  

\textit{Fractional $\tin$-fragility.} In \Cref{sec:fragilityA}, we formally define (efficient) fractional $\tin$-fragility and show that every class of bounded layered tree-independence number is fractionally $\tin$-fragile (in fact, efficiently fractionally $\tin$-fragile, provided that a tree decomposition and a layering witnessing small layered tree-independence number can be computed efficiently). It is then natural to identify necessary conditions for fractional $\tin$-fragility. Graphs of low treewidth must have small-size balanced separators (and the converse holds in subgraph-closed classes \cite{DN19}), whereas graphs of low tree-independence number must have balanced separators with small independence number. Contrary to fractional treewidth-fragility, where sublinear-size balanced separators are needed \cite{Dvo16}, one should then expect that in the case of fractional $\tin$-fragility it is not the size of a separator that has to be small but rather its independence number. Indeed, we show that every fractionally $\tin$-fragile class has balanced separators of sublinear independence number. This result implies, unsurprisingly, that $3$-regular expanders and intersection graphs of rectangles in the plane are not fractionally $\tin$-fragile. Whether \textsc{Independent Set} admits a PTAS on intersection graphs of rectangles in the plane remains a major open problem (see, e.g., \cite{GKM22}).

\textit{Intersection graphs of fat objects.} In \Cref{fatA}, we investigate families of intersection graphs of fat objects in bounded dimensional spaces. In particular, we show that such families are efficiently fractionally $\tin$-fragile (Result \ref{item:third}) and that they are closed under taking odd powers. The latter result is used in the proof of Result \ref{item:fourth}.  

\textit{PTAS frameworks.} In \Cref{sec:ptasesA}, we formally define \textsc{Max Weight Independent Packing} and its distance-$d$ version and provide several examples of problems captured by this framework and that of $(c, h, \psi)$-\textsc{Max Weight Induced Subgraph}. Moreover, we prove Results \ref{item:first}, \ref{item:second}, \ref{item:fourth}. Given the generality of the frameworks and the broadness of the graph classes to which they are applicable, the running times obtained are typically not competitive. However, in \Cref{sec:improvedA}, we focus on a specific problem, namely \textsc{Max Weight Independent Set}, and show how tree-independence number arguments can still lead to competitive PTASes for some classes of intersection graphs. Specifically, we obtain PTASes for \textsc{Max Weight Independent Set} for intersection graphs of families of unit disks, unit-width (or, equivalently, unit-height) rectangles, and paths with bounded horizontal part on a grid, which improve or generalize results from \cite{BBCGP20,Cha04,Mat98} mentioned in the next section. These results were all obtained by applying the shifting technique: consider subgraphs whose geometric realizations are contained in narrow strips and exploit properties of such graphs for dynamic programming. We show that the common structural reason that allows fast dynamic programming algorithms on these graphs is in fact boundedness of tree-independence number.

\textit{Subexponential-time algorithms.} In \Cref{sec:subexp}, we depart from the main topic of the paper, namely approximation schemes, and note some interesting consequences of \Cref{sec:layeredA} in relation to subexponential-time algorithms that can be summarized as follows. There exists a subexponential-time algorithm for \textsc{Max Weight Distance-$d$ Packing}, for each fixed even $d\in\mathbb{N}$, on each class of bounded layered tree-independence number (provided a tree decomposition and a layering witnessing this can be computed efficiently). In particular, we obtain a $2^{O(\sqrt{n}\log{n})}$-time algorithm for \textsc{Max Weight Distance-$d$ Packing} on intersection graphs of similarly-sized $c$-fat families of objects in $\mathbb{R}^2$. This is related to the seminal work of de Berg et al.~\cite{BBK20}, who provided $2^{O(\sqrt{n})}$-time algorithms for the \textit{unweighted} version of many problems on intersection graphs of similarly-sized globally fat objects in $\mathbb{R}^d$. For weighted problems, the situation is much more obscure and de Berg and Kisfaludi{-}Bak~\cite{BK20} asked to determine the complexity of the weighted versions of problems falling in the framework of \cite{BBK20} when restricted to intersection graphs of similarly-sized fat objects in $\mathbb{R}^2$. Our $2^{O(\sqrt{n}\log{n})}$ upper bound partially answers this question, as \textsc{Max Weight Distance-$d$ Packing} captures the weighted version of several problems from \cite{BBK20}.

In \Cref{sec:remarks}, we conclude the paper with some open questions whose answers allow to identify the applicability limits of our approximation frameworks.

\begin{remark}
All our PTASes for intersection graphs of geometric objects are not robust, i.e., they require a geometric realization to be part of the input. 
\end{remark}

\subsection{Other consequences of our work}

All problems mentioned in this section are defined in \Cref{sec:ptasesA}.

\textbf{Disk graphs.} Li et al.~\cite{LSH18} provided a PTAS for \textsc{Max $\mathcal{H}_k$-Free Node Set} when restricted to disk graphs of bounded heterogeneity\footnote{The heterogeneity of a disk graph is the ratio of the maximum radius to the minimum radius of disks.} or, in other words, intersection graphs of similarly-sized disks. Moreover, they asked whether the assumption of bounded heterogeneity is necessary and what happens to the weighted version of the problem. Results \ref{item:first} and \ref{item:third} answer the first question in the negative and show that a PTAS for the weighted version can be obtained in a very general setting.

\textbf{Unit disk graphs.} Unit disk graphs are arguably one of the most well-studied graph classes in computational geometry, as they naturally model several real-world problems. Great attention has been devoted to approximation algorithms for \textsc{Max Weight Independent Set} on this class (see, e.g., \cite{HMR98,NHK04,vLee05}). To the best of our knowledge, the fastest known PTAS is a $(1-1/k)$-approximation algorithm with running time $O(k n^{4\lceil \frac{2(k-1)}{\sqrt{3}}\rceil})$ \cite{Mat98}. In \Cref{sec:ptasesA} we improve on this running time and provide a $(1-1/k)$-approximation algorithm with running time $O(\lceil 3k \rceil n^{3\lceil\frac{3k-1}{2}\rceil+3})$. We also remark that a special type of Decomposition Theorem was recently shown to hold for the class of unit disk graphs. A Contraction Decomposition Theorem (CDT) is a statement of the following form: given a graph $G$, for any $p \in \mathbb{N}$, one can partition the edge set of $G$ into $E_1,\ldots, E_p$ such that contracting the edges in each $E_i$ in $G$ yields a graph of treewidth at most $f(p)$, for some function $f\colon \mathbb{N} \rightarrow \mathbb{N}$. CDTs are useful in designing efficient approximation and parameterized algorithms and are known to hold for classes such as unit disk graphs \cite{BLLSX22} and graphs of bounded genus \cite{DHM10}. Since these classes are efficiently fractionally $\tin$-fragile, our results can be seen as providing a different type of relaxed decomposition theorem for them.

\textbf{Intersection graphs of unit-height rectangles.} As observed by Agarwal et al.~\cite{AKS98}, this class of graphs arises naturally as a model for the problem of labeling maps with labels of the same font size. Improving on \cite{HM85}, they obtained a $(1 - 1/k)$-approximation algorithm for \textsc{Max Weight Independent Set} on this class with running time $O(n^{2k-1})$. Chan~\cite{Cha04} provided a $(1-1/k)$-approximation algorithm with running time $O(n^k)$. Jana et al.~\cite{JMMR20} provided a PTAS for \textsc{Max Bipartite Subgraph} on intersection graphs of unit squares (and unit disks) and a $2$-approximation algorithm for intersection graphs of unit-height rectangles. Moreover, they asked whether the problem admits in fact a PTAS on the latter class. Result \ref{item:first}, together with the fact that unit-height rectangle graphs are efficiently fractionally $\tin$-fragile, answer this question in the positive. 

\textbf{Intersection graphs of paths on a grid.} Asinowski et al.~\cite{asinowski} introduced the class of \textit{Vertex intersection graphs of Paths on a Grid} (\textit{VPG graphs} for short). A graph $G$ is a \textit{VPG graph} if there exists a collection $\mathcal{P}$ of paths on a grid $\mathcal{G}$ such that $\mathcal{P}$ is in one-to-one correspondence with $V(G)$ and two vertices are adjacent in $G$ if and only if the corresponding paths intersect. It is not difficult to see that this class coincides with the well-known class of string graphs. If every path in $\mathcal{P}$ has at most $k$ \textit{bends}, i.e., $90$ degrees turns at a grid-point, the graph is a \textit{$B_k$-VPG graph}. Golumbic et al.~\cite{GLS09} introduced the class of \textit{Edge intersection graphs of Paths on a Grid} (\textit{EPG graphs} for short) which is defined similarly to VPG, except that two vertices are adjacent if and only if the corresponding paths share a grid-edge. It turns out that every graph is EPG \cite{GLS09} and $B_{k}$-EPG graphs have been defined similarly to $B_{k}$-VPG graphs. Approximation algorithms for \textsc{Independent Set} on VPG and EPG graphs have been deeply investigated, especially when the number of bends is a small constant (see, e.g., \cite{BD17,BCK22,FP11,LMS15}). It is an open problem whether \textsc{Independent Set} admits a PTAS on $B_1$-VPG graphs \cite{BD17,LMS15}. Concerning EPG graphs, Bessy et al.~\cite{BBCGP20} showed that the problem admits no PTAS on $B_{1}$-EPG graphs, unless $\mathsf{P} = \mathsf{NP}$, even if each path has its vertical segment or its horizontal segment of length at most $1$. On the other hand, they provided a PTAS for \textsc{Independent Set} on $B_1$-EPG graphs where the length of the horizontal part\footnote{The \textit{horizontal part} of a path is the interval corresponding to the projection of the path onto the $x$-axis.} of each path is at most a constant $c$ with running time $O^{*}(n^{\frac{3c}{\varepsilon}})$, where $O^{*}(\cdot)$ hides terms polynomial in $c$ and $1/\varepsilon$. In \Cref{sec:ptasesA} we extend this result to a PTAS for \textsc{Max Weight Independent Set} on $B_k$-EPG and $B_k$-VPG graphs with bounded horizontal part, for any fixed $k\geq 1$.

\subsection{Relationships between the main graph classes addressed in our work}

To facilitate the navigation through the main graph classes related to the paper, we depict in \Cref{reldiagram} a full picture of the relationships between these classes and conclude this section by explaining the incomparabilities therein. The terminology we adopt in the following refers to \Cref{reldiagram}. 

The class of stars shows that \textsf{bounded $\mathsf{tw}$ $\nsubseteq$ bounded degree}, \textsf{planar $\nsubseteq$ unit disks}, and \textsf{planar $\nsubseteq$ bounded degree}. The class of complete graphs shows that \textsf{unit disks $\nsubseteq$ fractionally $\mathsf{tw}$-fragile}, \textsf{unit disks $\nsubseteq$ bounded local $\mathsf{tw}$}, \textsf{bounded $\mathsf{tree}\textnormal{-}\alpha$ $\nsubseteq$ bounded local $\mathsf{tw}$}, and \textsf{bounded $\mathsf{tree}\textnormal{-}\alpha$ $\nsubseteq$ fractionally $\mathsf{tw}$-fragile}. The class of $2$-dimensional grids shows that \textsf{planar $\nsubseteq$ bounded $\mathsf{tree}\textnormal{-}\alpha$} and \textsf{unit disks $\nsubseteq$ bounded $\mathsf{tree}\textnormal{-}\alpha$}. \Cref{equivlayeredA} shows that \textsf{proper minor-closed $\nsubseteq$ bounded local $\mathsf{tree}\textnormal{-}\alpha$}. \Cref{degreefragility} shows that \textsf{bounded degree $\nsubseteq$ fractionally $\mathsf{tree}\textnormal{-}\alpha$-fragile}. \Cref{fragileunboundedlocal} shows that \textsf{disks $\nsubseteq$ bounded local $\mathsf{tree}\textnormal{-}\alpha$}. \cite[Theorem~3.4]{GB07} shows that \textsf{bounded number of crossings per edge $\nsubseteq$ proper minor-closed}. We are not aware of any class showing that \textsf{bounded $\mathsf{tw}$ $\nsubseteq$ bounded number of crossings per edge} or that \textsf{bounded $\mathsf{tw}$ $\nsubseteq$ fat objects}.

\begin{figure}
\begin{center}
\scalebox{0.65}{\begin{tikzpicture}[scale=.95]
\node[rectangle,draw] (ftf) at (7,13) {\textsf{fractionally $\mathsf{tw}$-fragile}};
\node[rectangle,draw] (ftaf) at (7,16) {\textsf{fractionally $\mathsf{tree}\textnormal{-}\alpha$-fragile}};
\node[rectangle,draw] (layeredtin) at (-1,12) {\textsf{bounded layered $\mathsf{tree}\textnormal{-}\alpha$}};
\node[rectangle,draw] (layeredtw) at (-1,9) {\textsf{bounded layered $\mathsf{tw}$}};
\node[rectangle,draw] (cross) at (-1,6) {\parbox{3cm}{\centering \textsf{bounded number of}\\ \textsf{crossings per edge}}}; 
\node[rectangle,draw] (minor) at (7,10) {\textsf{proper minor-closed}};
\node[rectangle,draw] (tin) at (3,7.5) {\textsf{bounded $\mathsf{tree}\textnormal{-}\alpha$}};
\node[rectangle,draw] (tw) at (3,2) {\textsf{bounded $\mathsf{tw}$}};
\node[rectangle,draw] (planar) at (7,3) {\textsf{planar}};
\node[rectangle,draw] (bdegree) at (-5,2) {\textsf{bounded degree}};
\node[rectangle,draw] (bltw) at (-5,11) {\textsf{bounded local $\mathsf{tw}$}};
\node[rectangle,draw] (bltin) at (-5,14) {\textsf{bounded local $\mathsf{tree}\textnormal{-}\alpha$}};
\node[rectangle,draw] (unitdisk) at (11,5) {\textsf{unit disks}};
\node[rectangle,draw] (disk) at (11,8) {\textsf{disks}};
\node[rectangle,draw] (fat) at (11,11) {\textsf{fat objects}};

\node (layeredlocal) at (-2,13.2) {\Cref{blayeredblocalA}};
\node (fatfragile) at (11.9,13) {\Cref{treealphafatA}};
\node (layeredfragile) at (1.1,14.5) {\Cref{layeredtofragileA}};
\node (layeredtwlocaltw) at (-2.6,10.3) {\cite{DMW17}};
\node (crossingslayeredtw) at (-1.5,7.5) {\cite{DEW17}};
\node (unitlayered) at (4,10) {\Cref{unitdisklayeredA}};
\node (layeredtwfragile) at (3.5,12) {\Cref{layeredtwfragile}};
\node (minorclosedfragile) at (7.5,11.5) {\cite{DDO04}};
\node (planardisks) at (8.6,4) {(see, e.g., \cite{GS93})};

\draw[->](layeredtin) ..controls (3,15)..  (ftaf); 
\draw[->](layeredtw) -- (ftf);  
\draw[->](tin) -- (layeredtin); 
\draw[->](unitdisk) -- (layeredtin);  
\draw[->](fat) ..controls (10.5,14.5).. (ftaf); 
\draw[->](layeredtw) -- (layeredtin); 
\draw[->](ftf) -- (ftaf); 
\draw[->](minor) -- (ftf); 
\draw[->](cross) -- (layeredtw);
\draw[->](planar) ..controls (-0.5,5).. (cross); 
\draw[->](unitdisk) -- (disk);
\draw[->](disk) -- (fat);
\draw[->](bdegree) -- (bltw);
\draw[->](bltw) -- (bltin);
\draw[->](tw) -- (tin);
\draw[->,dashed,red](tw) ..controls (-0.5,3.5).. (cross); 
\draw[->,dashed,red](tw) -- (fat); 
\draw[->](tw) ..controls (2.8,6).. (layeredtw); 
\draw[->](planar) -- (minor); 
\draw[->](planar) ..controls (8,7).. (disk);
\draw[->](layeredtw) -- (bltw);
\draw[->](layeredtin) -- (bltin);
\draw[->](tw) ..controls (4,5).. (minor);
\end{tikzpicture}}
\end{center}
\caption{Relationships between the main graph classes related to the paper, where an arrow represents class inclusion or implication between class properties. The following shorthands are adopted. $\mathsf{tw}$ and $\mathsf{tree}\textnormal{-}\alpha$ are shorthands for treewidth and tree-independence-number, respectively. \textsf{bounded number of crossings per edge} stands for the class of graphs embeddable on a surface of bounded genus with a bounded number of crossings per edge. \textsf{unit disks}, \textsf{disks}, and \textsf{fat objects} are shorthands for the class of intersection graphs of a collection of unit disks in the plane, disks in the plane, and $c$-fat objects in some $d$-dimensional space, respectively. We reference only the inclusions or implications not directly following from the definition. A dashed red arrow indicates that determining whether the corresponding inclusion holds is, to the best of our knowledge, open.}\label{reldiagram}
\end{figure}


\section{Preliminaries}
\label{sec:prelimA}

We let $\mathbb{N} = \{1, 2, \ldots\}$ and $\mathbb{N}_0 = \mathbb{N} \cup \{0\}$ and, for each $n \in \mathbb{N}$, we let $[n] = \{1, \ldots, n\}$. We consider only finite simple graphs. If $G'$ is a subgraph of $G$ and $G'$ contains all the edges of $G$ with both endpoints in $V(G')$, then $G'$ is an \textit{induced subgraph} of $G$ and we write $G' = G[V(G')]$. For a vertex $v \in V(G)$ and $r \in \mathbb{N}_0$, the $r$-\textit{closed neighborhood} $N^{r}_{G}[v]$ is the set of vertices at distance at most $r$ from $v$ in $G$. The \textit{degree} $d_{G}(v)$ of a vertex $v \in V(G)$ is the number of edges incident to $v$ in $G$. The \textit{maximum degree} $\Delta(G)$ of $G$ is the quantity $\max\left\{d_{G}(v): v \in V\right\}$. Given a graph $G = (V, E)$ and $V' \subseteq V$, the operation of \textit{deleting the set of vertices $V'$} from $G$ results in the graph $G - V' = G[V\setminus V']$. Given a family $\mathcal{H}$ of graphs, a graph is \textit{$\mathcal{H}$-free} if it does not contain any induced subgraph isomorphic to a graph in $\mathcal{H}$. The complete bipartite graph (or biclique) with parts of sizes $r$ and $s$ is denoted by $K_{r,s}$. An \textit{independent set} of a graph is a set of pairwise non-adjacent vertices. The maximum size of an independent set of $G$ is denoted by $\alpha(G)$. A \textit{clique} of a graph is a set of pairwise adjacent vertices. The maximum size of a clique of $G$ is denoted by $\omega(G)$. For each $a, b \in \mathbb{N}$, we denote by $R(a,b)$ the smallest integer $n$ such that every graph on at least $n$ vertices contains either a clique of size $a$ or an independent set of size $b$ (such an $n$ exists by Ramsey's theorem).

\textbf{Intersection graphs of unit disks and unit-width rectangles.} We now explain how the geometric realizations of these intersection graphs are encoded. First, a collection of unit disks is a family of closed disks in $\mathbb{R}^2$ with common radius $c \in \mathbb{R}$, whereas a collection of unit-width rectangles is a family of axis-aligned closed rectangles in $\mathbb{R}^2$ with common width $c \in \mathbb{R}$. A collection of unit disks is encoded by a collection of points in $\mathbb{R}^2$ representing the centers of the disks. Each rectangle in a collection of unit-width rectangles is encoded by the four points in $\mathbb{R}^2$ representing its four vertices. In general, and unless otherwise stated, when we refer to a rectangle we mean an axis-aligned closed rectangle in $\mathbb{R}^2$. 
 
\textbf{VPG and EPG graphs.} Given a rectangular grid $\mathcal{G}$, its horizontal lines are referred to as \textit{rows} and its vertical lines as \textit{columns}. For a VPG (EPG) graph $G$, the pair $\mathcal{R} = (\mathcal{G},\mathcal{P})$ is a \textit{VPG representation} (\textit{EPG representation}) of $G$. More generally, a \emph{grid representation} of a graph $G$ is a triple $\mathcal{R} = (\mathcal{G},\mathcal{P},x)$ where $x \in \{e,v\}$, such that $(\mathcal{G},\mathcal{P})$ is an EPG representation of $G$ if $x = e$, and $(\mathcal{G},\mathcal{P})$ is a VPG representation of $G$ if $x = v$. Note that, irrespective of whether $x=e$ (i.e., $G$ is an EPG graph) or $x=v$ (i.e., $G$ is a VPG graph), if two vertices $u,v \in V(G)$ are adjacent in $G$ then $P_u$ and $P_v$ share at least one grid-point. A \textit{bend-point} of a path $P\in\mathcal{P}$ is a grid-point corresponding to a bend of $P$ and a \textit{segment} of $P$ is either a vertical or horizontal line segment in the polygonal curve constituting $P$. Paths in $\mathcal{P}$ are encoded as follows. For each $P \in \mathcal{P}$, we have one sequence $s(P)$ of points in $\mathbb{R}^2$: $s(P) = (x_1,y_1), (x_2,y_2), \dots, (x_{\ell_P},y_{\ell_P})$ consists of the endpoints $(x_1,y_1)$ and $(x_{\ell_P},y_{\ell_P})$ of $P$ and all the bend-points of $P$ in their order of appearance when traversing $P$ from $(x_1,y_1)$ to $(x_{\ell_P},y_{\ell_P})$. If each path in $\mathcal{P}$ has number of bends polynomial in $|V(G)|$, then the size of this data structure is polynomial in $|V(G)|$. Given $s(P)$, we can easily determine the horizontal part $h(P)$ of the path $P$ (i.e., the projection of $P$ onto the $x$-axis). Note that our results for VPG and EPG graphs (\Cref{layeredVPGA,indepPTASA} and \Cref{pathA}), although stated for constant number of bends, still hold for polynomial (in $|V(G)|$) number of bends, with a worse polynomial running time.  

\textbf{PTAS.} A PTAS for a maximization problem is an algorithm which takes an instance $I$ of the problem and a parameter $\varepsilon > 0$ and produces a solution within a factor $1 - \varepsilon$ of the optimal in time $n^{O(f(1/\varepsilon))}$. A PTAS with running time $f(1/\varepsilon)\cdot n^{O(1)}$ is called an efficient PTAS (EPTAS for short).


\section{Comparing different notions of fatness}\label{fatcomp}

In this section we introduce our notion of fatness, called $c$-fatness, and compare it with alternative notions of fatness from the literature. We then show that all these notions are equivalent when restricting to convex objects.

We first require some additional definitions. Let $d \geq 2$ be an arbitrary but fixed integer. An \textit{object} in $\mathbb{R}^d$ is a path-connected compact set $O \subset \mathbb{R}^d$. The \textit{size} of an object $O$ in $\mathbb{R}^d$, denoted $s(O)$, is the side length of its smallest enclosing axis-aligned hypercube. Unless otherwise stated, all boxes considered in the paper are axis-aligned. 

Chan~\cite{Cha03} introduced the following definition of fatness: A collection of objects in $\mathbb{R}^d$ is fat if, for any $r$ and size-$r$ box $R$, we can choose a constant number $c$ of points in $\mathbb{R}^d$ such that every object that intersects $R$ and has size at least $r$ contains at least one of the chosen points. In particular, if a collection of objects is fat according to this definition, every size-$r$ box intersects at most $c$ pairwise disjoint objects of size at least $r$ from the collection. Chan also stated that collections of balls and collections of boxes with bounded aspect ratios are fat (recall that the aspect ratio of a box is the ratio of its largest side length over its smallest side length). We slightly generalize this fatness definition as follows.

\begin{definition}\label{defc-fat} 
Let $c\in \mathbb{R}$ be a constant. A collection $\mathcal{O}$ of objects in $\mathbb{R}^d$ is $c$-$\textit{fat}$ if, for any $r\in \mathbb{R}$ and any closed box $R$ of side length $r$, there exist at most $c$ pairwise non-intersecting objects from $\mathcal{O}$ of size at least $r$ and which intersect $R$.
\end{definition}

Loosely speaking, a collection of objects is fat according to the previous definition if it satisfies a sort of ``low-density property''. In fact, our notion of fatness captures that of low density, defined by Har-Peled and Quanrud~\cite{HP17} as follows. Given a constant $\rho$, a collection $\mathcal{O}$ of objects in $\mathbb{R}^d$ has \textit{density $\rho$} if any object $R$ (not necessarily in $\mathcal{O}$) intersects at most $\rho$ objects from $\mathcal{O}$ with diameter at least that of $R$. It should be noticed that, in some of the their arguments, Har-Peled and Quanrud make implicit use of the notion of $\rho$-fatness (see \cite[Lemma~3.6]{HP17}). 
As mentioned, collections of balls and collections of boxes with bounded aspect ratios in $\mathbb{R}^d$ are $c$-fat, for some $c\geq 1$; we will formally show this below. As another interesting example, we will also show that certain collections of annuli in $\mathbb{R}^d$ are $c$-fat.  

\begin{remark} When working with a $c$-fat collection of objects, we assume that some reasonable operations can be done in constant time: determining center, size and diameter of an object, as well as its projection onto one of the axes, deciding if two objects intersect, and constructing the geometric realization of the collection. 
\end{remark}

Several definitions of fatness have been proposed in the literature, essentially all of which are equivalent for convex objects. In the rest of this section, we compare the notion of $c$-fatness defined above with three among the most general (in the sense that they apply to arbitrary objects in arbitrary dimensions) notions of fatness. We refer the reader to \cite{vdS94} for a general overview on fatness. 

Let $d \geq 2$ be an arbitrary but fixed integer. For a measurable set $A \subseteq \mathbb{R}^d$, we denote by $\vol(A)$ the Lebesgue measure of $A$. The following definition is due to van der Stappen et al.~\cite{SHO93}. Given $k > 0$, an object $O \subseteq \mathbb{R}^d$ is $k$-\textit{locally fat}\footnote{Notice that, in \cite{SHO93}, such a notion is referred to as $k$-fatness.} if, for each closed $d$-dimensional ball $B$ with center in $O$ and whose boundary intersects $O$ (or, equivalently, not properly containing $O$), $\mathrm{vol}(B) \leq k^d \cdot \mathrm{vol}(O \cap B)$. It can be seen that there are no $k$-locally fat objects for $k < 2$ and balls are exactly the $2$-locally fat objects \cite{SHO93}. The terminology is justified by the fact that a constant portion of the proximity of every point of the object must be covered by the object and so no object with infinitesimally thin protuberances is locally fat.

The following two fatness definitions have a more global nature, with the first one being arguably the ``standard'' definition. Given $k \geq 1$, an object $O \subseteq \mathbb{R}^d$ is $k$-\textit{globally fat} if there exist two $d$-dimensional balls $B_{\mbox{\tiny{in}}}$ and $B_{\mbox{\tiny{out}}}$ with radius $R_{\mbox{\tiny{in}}}$ and $R_{\mbox{\tiny{out}}}$, respectively, such that $B_{\mbox{\tiny{in}}} \subseteq O \subseteq B_{\mbox{\tiny{out}}}$ and $R_{\mbox{\tiny{out}}} \leq k \cdot R_{\mbox{\tiny{in}}}$. The second definition is again due to van der Stappen et al.~\cite{SHO93}. Given $k \geq 1$, an object $O \subseteq \mathbb{R}^d$ is $k$-\textit{thick} if, denoting by $B_{O}$ the minimal enclosing ball of $O$ (i.e., the $d$-dimensional ball $B_{O}$ of smallest volume such that $O \subseteq B_{O}$), we have that $\mathrm{vol}(B_O) \leq k^d \cdot \mathrm{vol}(O)$. Clearly, balls are exactly the $1$-globally fat objects and are $1$-thick.   

Observe that a $k$-locally fat (or $k$-globally fat, or $k$-thick) object is $k'$-locally fat (or $k'$-globally fat, or $k'$-thick) for any $k' \geq k$. Moreover, in all these fatness definitions, the value of $k$ can be seen as a qualitative measure of fatness: the smaller the value of $k$, the fatter the object. The three definitions above naturally extend to collections of objects as follows: Given $k$, we say that a collection of objects in $\mathbb{R}^d$ is \textit{$k$-locally fat} (or \textit{$k$-globally fat}, or \textit{$k$-thick}) if every object in the collection is $k$-locally fat (or $k$-globally fat, or $k$-thick). 

It is easy to see that, in general, global fatness is a stronger notion than thickness, as we show next for completeness. Note that, in this section, we make repeated use of the following well-known fact: For any $d$-dimensional ball $B$ with radius $r$, $\mathrm{vol}(B) = C_d \cdot r^d$, for a constant $C_d$ depending on $d$. 

\begin{lemma}[Folklore]\label{folk} Every $k$-globally fat object in $\mathbb{R}^d$ is $k$-thick. Moreover, there exists a $k$-thick collection of objects in $\mathbb{R}^2$ which is not $k'$-globally fat, for any $k' \geq 1$.
\end{lemma}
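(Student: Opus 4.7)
The plan splits along the two clauses of the statement. For the implication ``$k$-globally fat $\Rightarrow$ $k$-thick,'' the plan is a direct volume comparison: given balls $B_{\mathrm{in}} \subseteq O \subseteq B_{\mathrm{out}}$ witnessing $k$-global fatness with radii $R_{\mathrm{in}} \leq R_{\mathrm{out}} \leq k R_{\mathrm{in}}$, the minimal enclosing ball $B_O$ of $O$ satisfies $\vol(B_O) \leq \vol(B_{\mathrm{out}})$ because $B_O$ is the \emph{minimum-volume} ball containing $O$ while $B_{\mathrm{out}}$ is one such. In parallel, $\vol(O) \geq \vol(B_{\mathrm{in}})$ by inclusion. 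The volume formula $\vol(B) = C_d r^d$ for a $d$-dimensional ball $B$ of radius $r$ (recalled in the excerpt) then yields $\vol(B_O)/\vol(O) \leq (R_{\mathrm{out}}/R_{\mathrm{in}})^d \leq k^d$, i.e., $O$ is $k$-thick.

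For the separating example, I would construct a ``pinwheel-like'' family $\{O_n\}_{n \geq 2}$ of planar objects. Define $O_n$ as the union of $n$ closed circular sectors of outer radius $1$ and central angle $\pi/n$, all sharing the origin as apex and placed so that consecutive sectors are separated by an angular gap of the same size $\pi/n$ (so half of the unit disk's angular measure is filled by sectors and half by gaps). Each $O_n$ is compact (finite union of closed sectors) and path-connected (all sectors meet at the apex), hence a valid object. The claim is that the collection $\mathcal{O} = \{O_n : n \geq 2\}$ is $\sqrt{2}$-thick but fails to be $k'$-globally fat for every $k' \geq 1$.

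Thickness follows from two easy computations: the minimal enclosing ball of $O_n$ is the closed unit disk (each sector touches the unit circle from the origin), so $\vol(B_{O_n}) = \pi$, while $\vol(O_n) = n \cdot \tfrac{1}{2}(\pi/n) = \pi/2$ by the formula for the area of a circular sector. Hence $\vol(B_{O_n})/\vol(O_n) = 2$ and each $O_n$ is $\sqrt{2}$-thick uniformly in $n$. For the failure of global fatness, any positive-radius ball contained in $O_n$ has its open interior inside the topological interior of $O_n$, which is the \emph{disjoint} union of the $n$ open sectors (the apex is a boundary point, since every neighborhood of the origin meets the excluded gaps). Connectedness then forces the inscribed ball to lie in a single closed sector. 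A standard tangency computation inside a sector of angle $\pi/n$ and outer radius $1$ gives an inscribed radius of exactly $\sin(\pi/(2n))/(1 + \sin(\pi/(2n))) = \Theta(1/n)$. Since any enclosing ball of $O_n$ has radius at least $1$, the ratio $R_{\mathrm{out}}/R_{\mathrm{in}}$ in any global-fatness witness for $O_n$ is $\Omega(n)$. Consequently, for every fixed $k' \geq 1$, choosing $n$ large enough makes $O_n$ violate $k'$-global fatness, and hence so does the collection $\mathcal{O}$.

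The only mildly delicate step is the argument that an inscribed ball cannot ``bridge'' two sectors through the apex; this is immediate from the observation above that the apex lies on the boundary, not in the interior, of $O_n$. Everything else is routine: a single volume inequality on one side, and an elementary tangency calculation in a circular sector on the other.
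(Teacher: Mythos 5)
Your proof of the first clause is exactly the paper's argument: $\vol(B_O)\le\vol(B_{\mbox{\tiny{out}}})\le k^d\vol(B_{\mbox{\tiny{in}}})\le k^d\vol(O)$, so nothing to add there. For the second clause you take a different (but morally analogous) counterexample: the paper simply exhibits the comb family $O_n$ with $(n+1)/2$ teeth of width $1/n$ (Figure~\ref{Fig:comb}), asserts it is $\sqrt{\pi}$-thick, and leaves the failure of global fatness as ``easy to see'' -- the point being that any inscribed ball must fit inside a single thin tooth while the enclosing ball has constant radius. Your pinwheel of $n$ sectors of angle $\pi/n$ realizes the same phenomenon (constant area fraction of the enclosing disk, but only $\Theta(1/n)$ inscribed radius), and you actually verify the details the paper omits: the interior of $O_n$ decomposes into the $n$ disjoint open sectors, so an inscribed ball lies in one closed sector, and the sector's inscribed radius is $\sin(\pi/(2n))/(1+\sin(\pi/(2n)))$. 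Two small remarks: you do not need the minimal enclosing ball to be exactly the unit disk -- $\vol(B_{O_n})\le\pi$ already follows from $O_n$ being contained in the unit disk, which is all the thickness bound uses; and the claim that every enclosing ball has radius at least $1$ is immediate only when the sector pattern contains antipodal points (e.g.\ $n$ even) -- for odd $n$ the covered arcs still force diameter at least $2\cos(\pi/(2n))$, so the enclosing radius is at least $\cos(\pi/(2n))\ge\sqrt{2}/2$, which is all that is needed for the ratio to be $\Omega(n)$. With those caveats your argument is correct and, unlike the paper's, fully self-contained; since both your $\sqrt{2}$-thick family and the paper's $\sqrt{\pi}$-thick family are $k$-thick for all larger $k$, either serves the statement as intended.
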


\begin{proof} Let $O$ be a $k$-globally fat object in $\mathbb{R}^d$. By assumption, there exist $k \geq 1$ and two $d$-dimensional balls $B_{\mbox{\tiny{in}}}$ and $B_{\mbox{\tiny{out}}}$ with radius $R_{\mbox{\tiny{in}}}$ and $R_{\mbox{\tiny{out}}}$, respectively, such that $B_{\mbox{\tiny{in}}} \subseteq O \subseteq B_{\mbox{\tiny{out}}}$ and $R_{\mbox{\tiny{out}}} \leq k \cdot R_{\mbox{\tiny{in}}}$. Therefore, $\mathrm{vol}(B_{\mbox{\tiny{out}}}) \leq k^d \cdot \mathrm{vol}(B_{\mbox{\tiny{in}}})$. Denoting by $B_{O}$ the minimal enclosing ball of $O$, we then obtain that $\mathrm{vol}(B_O) \leq \mathrm{vol}(B_{\mbox{\tiny{out}}}) \leq k^d \cdot \mathrm{vol}(B_{\mbox{\tiny{in}}}) \leq k^d \cdot \mathrm{vol}(O)$, and so $O$ is $k$-thick. As for the second statement, simply consider the collection of combs depicted in \Cref{Fig:comb}.
\end{proof}

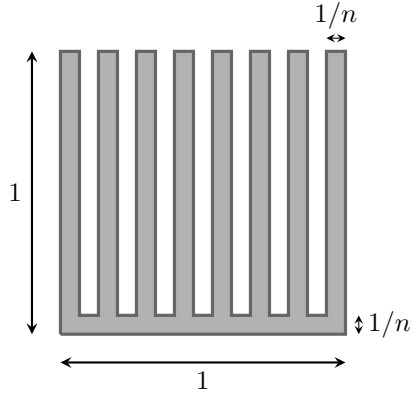
\begin{figure}
\centering
\begin{tikzpicture}[scale=2.5]
\draw[very thick,fill=gray,opacity=.6] (0,0) -- (0,1.5) -- (.1,1.5) -- (.1,.1) -- (.2,.1) -- (.2,1.5) -- (.3,1.5) -- (.3,.1) -- (.4,.1) -- (.4,1.5) -- (.5,1.5) -- (.5,.1) -- (.6,.1) -- (.6,1.5) -- (.7,1.5) -- (.7,.1)-- (.8,.1) -- (.8,1.5) -- (.9,1.5) -- (.9,.1) -- (1,.1) -- (1,1.5) -- (1.1,1.5) -- (1.1,.1) -- (1.2,.1) -- (1.2,1.5) -- (1.3,1.5) -- (1.3,.1) -- (1.4,.1) -- (1.4,1.5) -- (1.5,1.5) -- (1.5,0) -- (0,0);

\draw[thick,<->,>=stealth] (-.15,0) -- (-.15,1.5) node[midway,left] {1};
\draw[thick,<->,>=stealth] (0,-.15) -- (1.5,-.15) node[midway,below] {1};
\draw[<->,>=stealth] (1.4,1.57) -- (1.5,1.57) node[midway,above] {$1/n$};
\draw[<->,>=stealth] (1.57,0) -- (1.57,.1) node[midway,right] {$1/n$};
\end{tikzpicture}
\caption{A comb $O_n$ with $(n+1)/2$ teeth each of width $1/n$, for some odd $n \in \mathbb{N}$. It is easy to see that $O_n$ is $\sqrt{\pi}$-thick but the collection $\{O_n : n \in \mathbb{N}\}$ is not $k$-globally fat, for any $k \geq 1$.}\label{Fig:comb}
\end{figure}

We now observe how our notion of $c$-fatness relates to local fatness, global fatness and thickness. We first show that $c$-fatness does not imply thickness and hence global fatness either. An \textit{annulus} is a closed region in $\mathbb{R}^d$ bounded by two concentric $d$-dimensional balls. The bigger ball is called the \textit{outer ball}, whereas the other is called the \textit{inner ball}.

\begin{lemma}\label{annuli} Fix $R>0$. The collection of annuli in $\mathbb{R}^d$ with outer balls of radius $R$ is $c$-fat, for some $c \geq 1$, but not $k$-thick, for any $k \geq 1$.  
\end{lemma}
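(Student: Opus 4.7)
The non-thickness part is a direct volume computation. An annulus with outer radius $R$ and inner radius $r < R$ has volume $C_d(R^d - r^d)$, while its minimum enclosing ball is precisely the outer ball, of volume $C_d R^d$. The thickness ratio $R^d/(R^d - r^d)$ tends to $\infty$ as $r \to R$, so for every fixed $k \geq 1$ the collection contains an annulus that fails to be $k$-thick.

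For the $c$-fatness part, the key observation is that every annulus in the collection has size exactly $2R$, namely the diameter of its outer ball. In particular, no annulus has size at least $r'$ whenever $r' > 2R$, so it suffices to bound, for any $r' \in (0, 2R]$ and any closed box $R'$ of side length $r'$, the number of pairwise non-intersecting annuli in the collection that intersect $R'$.

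The crucial step is a clean geometric claim: two annuli $A_i, A_j$ in the collection, with centers $c_i, c_j$, are disjoint if and only if $\|c_i - c_j\| > 2R$. The ``only if'' direction is immediate, since any common point of $A_i$ and $A_j$ lies in both outer balls. For the converse, whenever $\|c_i - c_j\| \leq 2R$ the two spheres $\partial B_R(c_i)$ and $\partial B_R(c_j)$ meet (using $d \geq 2$); any point $p$ in their intersection satisfies $\|p - c_i\| = \|p - c_j\| = R$, which is at least the inner radius of each annulus, so $p \in A_i \cap A_j$. This is the only step of the proof that really uses the specific geometry of annuli, and it is the part I expect to require the most care.

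Granted the claim, the fatness bound is a routine packing estimate. Any annulus intersecting $R'$ has its outer ball, and hence its center, within distance $R$ of $R'$, so the centers of all relevant annuli lie in a box of side at most $r' + 2R \leq 4R$. By the claim, in a pairwise non-intersecting subfamily the centers have pairwise distance strictly greater than $2R$, so the open balls of radius $R$ around these centers are pairwise disjoint and jointly contained in a box of side at most $6R$. Comparing volumes yields at most $6^d/C_d$ centers, which gives the desired constant $c$ depending only on $d$.
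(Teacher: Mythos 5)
Your proposal is correct and follows essentially the same route as the paper: the non-thickness part is the same thin-annulus volume computation, and the fatness part rests on the same key observation that two annuli with outer radius $R$ are disjoint if and only if their outer balls are. The only difference is that where the paper then invokes a known packing result for fat objects (\cite{vdS94}), you prove the needed bound directly by a disjoint-balls volume comparison (and you also supply the short sphere-intersection argument for the disjointness claim, which the paper only asserts), making the proof self-contained.
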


\begin{proof} The fact that the collection of annuli with outer balls of radius $R$ is $c$-fat, for some $c \geq 1$, follows from the fact that two such annuli are disjoint if and only if the corresponding outer balls are and from \cite[Theorem~2.9]{vdS94} (see also \cite[Theorem~2.2]{OS96}). Consider now the remaining statement. Fix $k \geq 1$ and take $n > k$. Consider an annulus $A$ with outer ball of radius $R$ and inner ball of radius $(1-1/n^d)^{1/d}R$. The minimal enclosing ball of $A$ has volume $C_d\cdot R^d$, whereas $\vol(A) = C_d \cdot \frac{R^d}{n^d}$. The choice of $n$ implies that $A$ is not $k$-thick.   
\end{proof}

We now show that global fatness (and hence thickness) does not imply $c$-fatness. The following result comes in handy. Recall that an \textit{outerstring graph} is the intersection graph of a set of curves in $\mathbb{R}^2$ that lie inside a disk such that each curve intersects the boundary of the disk in one of its endpoints.

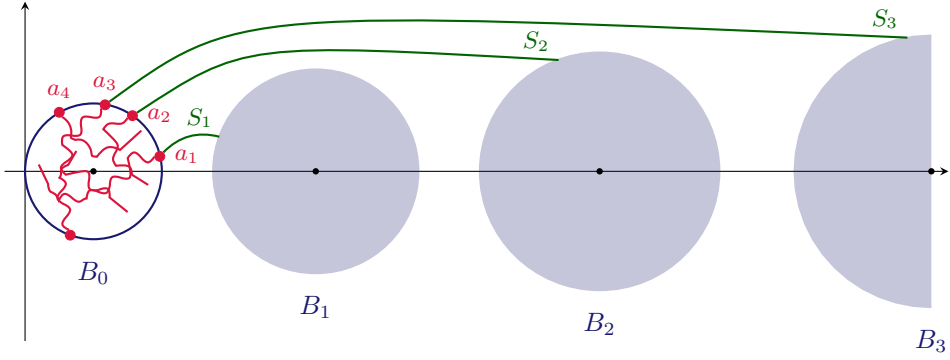
\begin{figure}
\centering
\begin{tikzpicture}[scale=0.9]
\draw[->,>=stealth] (0,-2.5) -- (0,2.5);
\draw[->,>=stealth] (-.3,0) -- (13.5,0);

\draw[thick,MidnightBlue] (1,0) circle (1cm);
\draw[thick,MidnightBlue,fill=MidnightBlue,opacity=.25] (4.25,0) circle (1.5cm);
\draw[thick,MidnightBlue,fill=MidnightBlue,opacity=.25] (8.4,0) circle (1.75cm);
\draw[thick,MidnightBlue,domain=90:270,fill=MidnightBlue,opacity=.25] plot ({13.25 + 2*cos(\x)},{2*sin(\x)});
\node[cir] at (1,0) {};
\node[draw=none] at (1,-1.5) {{\color{MidnightBlue} $B_0$}};
\node[cir] at (4.25,0) {};
\node[draw=none] at (4.25,-2) {{\color{MidnightBlue} $B_1$}};
\node[cir] at (8.4,0) {};
\node[draw=none] at (8.4,-2.25) {{\color{MidnightBlue} $B_2$}};
\node[cir] at (13.25,0) {};
\node[draw=none] at (13.25,-2.5) {{\color{MidnightBlue} $B_3$}};

\node[circr,label=right:{{\color{Crimson} \small $a_1$}}] (a1) at (1.97,.22) {};
\node[circr,label=right:{{\color{Crimson} \small $a_2$}}] (a2) at (1.57,.82) {}; 
\node[circr,label=above:{{\color{Crimson} \small $a_3$}}] (a3) at (1.17,.98) {};
\node[circr,label=above:{{\color{Crimson} \small $a_4$}}] (a4) at (.5,.87) {};
\node[circr] (an) at (0.66,-.94) {};

\draw[thick,DarkGreen] (a1) edge[bend left] (2.84,.51); 
\node[draw=none,DarkGreen,above left] at (2.88,.51) {\small $S_1$};

\draw[thick,DarkGreen] (a2) ..controls (3.2,1.9).. (7.8,1.64);
\node[draw=none,DarkGreen,above left] at (7.8,1.64) {\small $S_2$};

\draw[thick,DarkGreen] (a3) ..controls (3.1,2.4).. (12.9,1.97);
\node[draw=none,DarkGreen,above left] at (12.9,1.97) {\small $S_3$};

\draw[Crimson,thick] decorate [decoration={coil,aspect=0,amplitude=2pt}] {(a1) .. controls (1.6,.2) and (.7,-1).. (.2,.1)};
\draw[Crimson,thick] decorate [decoration={coil,aspect=0,amplitude=1pt}] {(a2) .. controls (1.1,.6) and (1.4,0).. (1.8,-.2)};
\draw[Crimson,thick] decorate [decoration={coil,aspect=0,amplitude=2pt}] {(a3) ..controls (.4,.3).. (1.5,-.6)};
\draw[Crimson,thick] decorate [decoration={coil,aspect=1,amplitude=1pt}] {(a4) .. controls (1,0) and (1.4,.4).. (1.7,.6)};
\draw[Crimson,thick] decorate [decoration={coil,aspect=0,amplitude=2pt}] {(an) .. controls (.5,-.2).. (.75,.3)};
\end{tikzpicture}
\caption{Realizing an outerstring graph as the intersection graph of a globally fat collection of objects in $\mathbb{R}^2$.}\label{Fig:outerstring}
\end{figure}

\begin{lemma}\label{outerfat} Let $k > 1$. Every outerstring graph can be realized as the intersection graph of a $k$-globally fat collection of objects in $\mathbb{R}^d$, for any $d \geq 2$.
\end{lemma}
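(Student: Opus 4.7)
The plan is to realize each vertex of the outerstring graph as an object consisting of its string inside the representation disk $D$, a thin connecting curve, and a large $d$-dimensional ball placed outside $D$, where the ball is chosen large enough to dominate the enclosing ball of the object and thus force $k$-global fatness.

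I would start from an outerstring representation of $G$: curves $s_1,\ldots,s_n$ inside a disk $D\subset\mathbb{R}^2$ of radius $r_D$ centered at the origin (viewed as the $xy$-plane of $\mathbb{R}^d$), with $s_i$ having one endpoint $a_i\in\partial D$ (WLOG distinct). Setting $\lambda:=1/(2k-1)\in(0,1)$, I would construct pairwise disjoint $d$-balls $B_1,\ldots,B_n$ along the positive $x$-axis, with $B_i$ centered at $(c_i,0,\ldots,0)$ and of radius $R_i:=\lambda(c_i+r_D)$. The disjointness condition $c_{i+1}-c_i>R_i+R_{i+1}$ reduces to the geometric recursion $c_{i+1}>\tfrac{1+\lambda}{1-\lambda}\,c_i+\tfrac{2\lambda}{1-\lambda}\,r_D$, which admits a solution for every $k>1$; I would then pick $c_1$ large enough that in addition $R_1>r_D$.

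After relabeling so that $a_1,\ldots,a_n$ appear in counterclockwise cyclic order on $\partial D$ starting just above the positive $x$-axis, I would draw a thin curve $\gamma_i$ from $a_i$ to a point on $\partial B_i$ by a nested routing in the case $d=2$: $\gamma_i$ first follows a thin outer layer of $\partial D$ at distance $\varepsilon_i$, travelling clockwise until reaching just above the positive $x$-axis, and then proceeds rightward at height $y_i$ onto $\partial B_i$, passing above $B_1,\ldots,B_{i-1}$ (in the spirit of \Cref{Fig:outerstring}). Choosing $\varepsilon_i,y_i$ strictly increasing in $i$ keeps the $\gamma_i$'s pairwise disjoint outside $D$ and avoids any intersection $\gamma_i\cap B_j$ with $j\neq i$. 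For $d\geq 3$, straight-line tails from $a_i$ to a suitable point on $\partial B_i$ off the $xy$-plane already work, since their $xy$-projections stay in $\{x\leq c_i\}$ and avoid the projections of $B_{i+1},\ldots,B_n$, while genericity in the extra dimensions prevents crossings between distinct tails. I would then define $O_i:=B_i\cup\gamma_i\cup s_i$, a path-connected compact subset of $\mathbb{R}^d$. Since all interactions between distinct $O_i$'s take place inside $D$, the intersection graph of $\{O_1,\ldots,O_n\}$ is precisely the given outerstring graph.

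To certify $k$-global fatness, I would take $B_i$ itself as the inner ball of radius $R_i$. Because $R_i>r_D$ and the tail lies within a slab of thickness at most $R_{i-1}<R_i$, the transverse extent of $O_i$ is dominated by the $2R_i$ coming from $B_i$, while the axial extent runs from $-r_D$ (the far side of $D$) to $c_i+R_i$. A short optimization, placing the centre of the enclosing ball along the line from the origin through $B_i$'s centre, shows that the smallest enclosing $d$-ball of $O_i$ has radius $(c_i+R_i+r_D)/2$, yielding fatness ratio
\[
\frac{c_i+R_i+r_D}{2R_i}\;=\;\frac{1+\lambda}{2\lambda}\;=\;k.
\]
The main obstacle will be the planar routing of the tails in the case $d=2$: one must produce a globally consistent nested routing compatible with the cyclic order of the $a_i$'s on $\partial D$, which may be facilitated by first passing to a grounded string representation. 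In dimensions $d\geq 3$ this difficulty essentially disappears, and the remaining verifications reduce to the direct computations above.
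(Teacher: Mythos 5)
Your construction is essentially the paper's own proof: attach pairwise disjoint, increasingly large balls along the positive $x$-axis outside the representation disk, join them to the string endpoints (relabelled so the cyclic order matches the left-to-right order of the balls) by thin nested tails, and certify global fatness of each object by comparing the attached ball with an explicit enclosing ball; the paper does this with a geometric sequence for $1<k<\sqrt{2}$ and then invokes monotonicity in $k$, while you parametrize via $\lambda=1/(2k-1)$ to hit ratio $k$ directly and treat $d\ge 3$ explicitly. The loose ends you flag (the consistent planar routing of the tails, and the $\varepsilon_i$-detours that make the true enclosing radius marginally exceed $(c_i+R_i+r_D)/2$ unless absorbed into the choice of $R_i$) are dealt with at the same informal ``it is easy to see'' level in the paper, so nothing essential is missing.
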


\begin{proof} We show the result for $d=2$ and leave the easy extension to the reader. Fix an arbitrary $1 < k < \sqrt{2}$. Let $G$ be an outerstring graph on $n$ vertices. By possibly rescaling, we may assume that its geometric realization consists of the disk $B_0$ centered at $(1/2,0)$ and with radius $1/2$. Fix an arbitrary counterclockwise order $a_1,\ldots,a_n$ of the endpoints of the curves on the boundary of $B_0$. For each positive integer $i$, let $r_i = (\frac{1}{k-1})^{2i}$ and $c_i = (\frac{1}{k-1})^{2i-1} + (\frac{1}{k-1})^{2i}$, and let $B_i$ be the disk with radius $r_i$ centered at $(c_i,0)$. Hence, $B_1,\ldots,B_n$ are disks of increasing radius arranged along the positive $x$-axis. It is easy to see that $B_1$ is disjoint from $B_0$. Observe now that, for $i \geq 1$, any point in $B_i$ has $x$-coordinate at most $c_i + r_i = (\frac{1}{k-1})^{2i-1} + (\frac{1}{k-1})^{2i} + (\frac{1}{k-1})^{2i}$, whereas any point in $B_{i+1}$ has $x$-coordinate at least $c_{i+1} - r_{i+1} = (\frac{1}{k-1})^{2i+1}$. Since $1 < k < \sqrt{2}$, we have that $(\frac{1}{k-1})^{2i+1} > (\frac{1}{k-1})^{2i-1} + (\frac{1}{k-1})^{2i} + (\frac{1}{k-1})^{2i}$, and so $B_{i+1}$ lies completely to the right of $B_i$ for each $i \geq 0$, hence the disks $B_0, B_1, \ldots,B_n$ are pairwise disjoint. It is easy to see that, for every $i \in \{1,\ldots,n\}$, we can connect $a_i$ to $B_i$ using a thin string $S_i$ in such a way that $S_i$ does not intersect any disk other than $B_0$ and $B_i$, $S_i$ is contained in the disk with center $(c_i,0)$ and radius $c_i$, and the strings $S_1,\ldots,S_n$ are pairwise disjoint (see \Cref{Fig:outerstring}). 

Let $X_1,\ldots, X_n$ be the geometric objects obtained in this way. Clearly, $G$ is isomorphic to the intersection graph of $\{X_1,\ldots, X_n\}$. By construction, $X_i$ is contained in a disk of radius $c_i$ and contains a disk of radius $r_i$. Since 
\begin{equation*}
\frac{c_i}{r_i}
 = \frac{(\frac{1}{k-1})^{2i-1} + (\frac{1}{k-1})^{2i}}{(\frac{1}{k-1})^{2i}} = \frac{1 + \frac{1}{k-1}}{\frac{1}{k-1}} = k,
\end{equation*}
the collection $\{X_1,\ldots, X_n\}$ is $k$-globally fat. Since this holds for any $1 < k < \sqrt{2}$, we conclude by recalling that a $k$-globally fat collection is $k'$-globally fat for any $k' \geq k$.
\end{proof}

In order to show the following consequence of \Cref{outerfat} we make use of two results, \Cref{bicliqueA} and \Cref{treealphafatA}, which will be proved in \Cref{sec:fragilityA} and \Cref{fatA}, respectively. 

\begin{corollary}\label{cfatincomp} The class of intersection graphs of $k$-globally fat objects in $\mathbb{R}^d$ is fractionally $\tin$-fragile if and only if $k=1$. In particular, there exists a $k$-globally fat collection of objects in $\mathbb{R}^d$, for some $k > 1$, which is not $c$-fat, for any $c \geq 1$. 
\end{corollary}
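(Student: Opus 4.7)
The plan is to prove the two implications of the ``if and only if'' separately and then deduce the ``In particular'' statement by combining them.

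For the $(\Leftarrow)$ direction, I would observe that every $1$-globally fat object is a $d$-dimensional ball, so the class under consideration is simply the class of intersection graphs of balls in $\mathbb{R}^d$. A standard packing argument shows that any collection of balls in $\mathbb{R}^d$ is $c$-fat for some constant $c = c(d)$: indeed, any pairwise non-intersecting balls of radius at least $r/2$ intersecting a box of side $r$ must, by volume considerations, be bounded in number. Hence \Cref{treealphafatA} (Result \ref{item:third}) applies and the class is efficiently fractionally $\tin$-fragile.

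For the $(\Rightarrow)$ direction, I would argue by contrapositive: assume $k > 1$. By \Cref{outerfat}, every outerstring graph can be realized as the intersection graph of a $k$-globally fat collection of objects in $\mathbb{R}^d$. The key observation is that outerstring graphs contain arbitrarily large bicliques $K_{n,n}$: place $n$ endpoints on one arc of the boundary circle and $n$ endpoints on another, and connect them pairwise by disjoint curves inside the disk. Therefore the class of intersection graphs of $k$-globally fat objects in $\mathbb{R}^d$ contains $K_{n,n}$ for all $n$. By \Cref{bicliqueA}, a fractionally $\tin$-fragile class cannot contain arbitrarily large bicliques, so the class is not fractionally $\tin$-fragile.

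For the ``In particular'' statement, the contrapositive approach again is clean. Suppose for contradiction that for some $k > 1$ every $k$-globally fat collection of objects in $\mathbb{R}^d$ were $c$-fat for some $c = c(k,d) \geq 1$. Then by \Cref{treealphafatA}, the class of intersection graphs of $k$-globally fat objects in $\mathbb{R}^d$ would be fractionally $\tin$-fragile, contradicting the $(\Rightarrow)$ direction just proved. Hence a $k$-globally fat collection which is not $c$-fat for any $c \geq 1$ must exist.

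The main obstacle is verifying that bicliques genuinely arise within outerstring graphs (straightforward) and double-checking that the quoted \Cref{bicliqueA} indeed rules out arbitrarily large bicliques in fractionally $\tin$-fragile classes; this is the logical linchpin connecting \Cref{outerfat} to the non-fragility conclusion.
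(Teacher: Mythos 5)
Your proof follows essentially the same route as the paper's: for $k=1$ the class is that of ball graphs, which are $c$-fat, so \Cref{treealphafatA} applies; for $k>1$ the chain ``bicliques are outerstring graphs'', \Cref{outerfat}, and \Cref{bicliqueA} rules out fractional $\tin$-fragility; and the ``in particular'' statement follows by again invoking \Cref{treealphafatA} for $c$-fat collections. One minor imprecision: your sketch of why $K_{n,n}$ is outerstring (``connect them pairwise by disjoint curves'') does not literally describe a valid representation --- the right picture is $n$ pairwise disjoint nested arcs for one side crossed by $n$ pairwise disjoint curves for the other, each with one endpoint on the boundary circle --- but the fact itself is true and is exactly what the paper dismisses as ``easy to see''.
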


\begin{proof} The class of intersection graphs of $1$-globally fat objects in $\mathbb{R}^d$, which coincides with the class of intersection graphs of balls in $\mathbb{R}^d$, is fractionally $\tin$-fragile thanks to \Cref{treealphafatA}. It is easy to see that the class of complete bipartite graphs is contained in that of outerstring graphs. By \Cref{outerfat}, the latter is contained in the class of intersection graphs of $k$-globally fat objects in $\mathbb{R}^d$, for any $k > 1$. It then suffices to observe that the class of complete bipartite graphs is not fractionally $\tin$-fragile (by \Cref{bicliqueA}) and that, for any $c \geq 1$, the class of intersection graphs of $c$-fat collections of objects in $\mathbb{R}^d$ is fractionally $\tin$-fragile (\Cref{treealphafatA}).  
\end{proof}

Finally, we observe that local fatness is a stronger notion than $c$-fatness. 

\begin{lemma}\label{cfatA} If $\mathcal{O}$ is a $k$-locally fat collection of objects in $\mathbb{R}^d$, then $\mathcal{O}$ is $(2k)^d$-fat. Moreover, for any fixed $R > 0$, the collection of annuli in $\mathbb{R}^d$ with outer balls of radius $R$ is $c$-fat, for some $c \geq 1$, but not $k$-locally fat, for any $k \geq 1$.  
\end{lemma}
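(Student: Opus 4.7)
The lemma has two parts. For part (i)—that $k$-local fatness implies $(2k)^d$-fatness—my plan is a volume-packing argument. Fix $r > 0$ and an axis-aligned closed box $R$ of side $r$, and suppose $O_1, \ldots, O_m \in \mathcal{O}$ are pairwise non-intersecting with $s(O_i) \geq r$ and $O_i \cap R \neq \varnothing$. For each $i$, pick $p_i \in O_i \cap R$. The hypothesis $s(O_i) \geq r$ means the smallest enclosing axis-aligned hypercube of $O_i$ has side at least $r$, so some coordinate projection of $O_i$ has length at least $r$, and hence $\mathrm{diam}(O_i) \geq r$. In particular $O_i$ contains a point at distance at least $r/2$ from $p_i$, so $O_i$ is not contained in the open ball of radius $r/2$ about $p_i$; combined with path-connectedness of $O_i$ and $p_i \in O_i$, this forces the boundary of the closed ball $B_i := B(p_i, r/2)$ to meet $O_i$. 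Applying $k$-local fatness then yields $\vol(O_i \cap B_i) \geq \vol(B_i)/k^d$.

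Since $p_i \in R$, each $B_i$ lies inside the axis-aligned hypercube $H$ of side $2r$ obtained by inflating $R$ by $r/2$ on each side, and because the $O_i$ are pairwise disjoint, so are the sets $O_i \cap B_i$. Summing the lower bounds against $\vol(H) = (2r)^d$ already delivers a bound $m = O(k^d)$. To reach exactly $(2k)^d$, I will swap $B_i$ for the axis-aligned hypercube $Q_i$ of side $r$ centered at $p_i$ and lower-bound $\vol(O_i \cap Q_i)$ by $r^d/k^d$, exploiting the fact that $O_i$ cannot fit in any axis-aligned hypercube of side $< r$ (so $\partial Q_i$ meets $O_i$) together with $k$-local fatness applied to a suitable ball inscribed in $Q_i$; packing in $H$ then yields $m \cdot r^d / k^d \leq (2r)^d$, i.e., $m \leq (2k)^d$.

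For part (ii), $c$-fatness of the collection of annuli with outer balls of radius $R$ follows directly from \Cref{annuli}. To refute $k$-local fatness for any fixed $k \geq 1$, take a thin annulus $A$ with outer ball of radius $R$ and inner ball of radius $R(1-\delta)^{1/d}$, where $\delta > 0$ is to be chosen. Pick $p$ on the outer boundary of $A$ and let $B := B(p, R)$. The boundary of $B$ meets $A$: in $\mathbb{R}^2$, $\partial B$ and the outer circle of $A$ are two circles of radius $R$ whose centers lie at distance $R$, so they intersect in two points lying on the outer boundary of $A$; for general $d$ the analogous intersection of two $(d-1)$-spheres is a nonempty $(d-2)$-sphere. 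Then $k$-local fatness would require $\vol(B) \leq k^d \vol(A \cap B) \leq k^d \vol(A) = k^d C_d R^d \delta$, whereas $\vol(B) = C_d R^d$, giving $\delta \geq 1/k^d$—a contradiction once $\delta$ is chosen small enough.

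The principal obstacle I foresee is pinning down the sharp constant $(2k)^d$ in part (i): the plain ball-packing computation loses a dimension-dependent factor coming from the ratio $\vol(B_i)/(2r)^d$, and recovering the exact $(2k)^d$ relies on the hypercube-based refinement outlined above, whose verification hinges on showing that the boundary of $Q_i$ meets $O_i$ under the size hypothesis and on extracting a clean lower bound $r^d/k^d$ for $\vol(O_i \cap Q_i)$ from the ball-based local fatness condition.
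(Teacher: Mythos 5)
Your part (i), as outlined, does not prove the constant claimed in the lemma. The sound portion of your packing argument --- choosing $p_i \in O_i \cap R$, checking via $\mathrm{diam}(O_i) \geq r$ that the ball $B_i = B(p_i, r/2)$ does not properly contain $O_i$, applying local fatness, and packing the pairwise disjoint sets $O_i \cap B_i$ into the inflated cube $H$ of side $2r$ --- yields $m \cdot C_d (r/2)^d / k^d \leq (2r)^d$, i.e.\ $m \leq 4^d k^d / C_d$, where $C_d$ is the volume of the unit ball. Since $C_d < 2^d$ for every $d$, this is strictly weaker than $(2k)^d$. The refinement you propose to close the gap does not work: $k$-local fatness only speaks about balls centered in the object, so the only way it can control $\vol(O_i \cap Q_i)$ for the cube $Q_i$ of side $r$ is through a ball contained in $Q_i$ and centered at a point of $O_i$; any such ball has radius at most $r/2$, hence volume at most $C_d (r/2)^d$, and local fatness then gives at most $\vol(O_i \cap Q_i) \geq C_d (r/2)^d / k^d$ --- never the bound $r^d/k^d$ you need (the observation that $\partial Q_i$ meets $O_i$ does not help, because the definition says nothing about cubes, and replacing one ball by several disjoint balls inside $Q_i$ loses the same packing factor). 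So the cube-based step reproduces $m \leq 4^d k^d / C_d$, and the constant $(2k)^d$ is out of reach by this route. What your argument actually establishes is that a $k$-locally fat collection is $c$-fat for $c = \lceil 4^d k^d / C_d \rceil$; that is enough for every use of \Cref{cfatA} in the paper (which only needs some constant depending on $k$ and $d$), but it is not the statement as written.

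For comparison, the paper gives no direct argument: the first statement is quoted from \cite[Theorem~2.9]{vdS94} (see also \cite[Theorem~2.2]{OS96}), and the second is obtained by combining \Cref{annuli} with the observation that every $k$-locally fat object is $k$-thick. Your part (ii) is correct and genuinely different: instead of passing through thickness, you refute local fatness directly by centering a radius-$R$ ball $B$ at a point of the outer sphere of a thin annulus $A$ with inner radius $R(1-\delta)^{1/d}$, verifying that $\partial B$ meets $A$ via the two-sphere intersection, and comparing $\vol(B) = C_d R^d$ with $k^d \vol(A) = k^d C_d R^d \delta$, which is contradictory once $\delta < k^{-d}$. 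This is a clean, self-contained alternative to the paper's route; the $c$-fatness half is, as in the paper, just \Cref{annuli}.
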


\begin{proof} The first statement immediately follows from \cite[Theorem~2.9]{vdS94} (see also \cite[Theorem~2.2]{OS96}). As for the second, it follows from \Cref{annuli} and the observation that every $k$-locally fat object is $k$-thick, which is left as an easy exercise.  
\end{proof}

We note some consequences of \Cref{cfatA}. Since balls in any dimension are $2$-locally fat, a collection of balls in $\mathbb{R}^d$ is $4^d$-fat. Moreover, since the length of a main diagonal of a $d$-dimensional box of side length $l$ is $l\sqrt{d}$, a size-$r$ box with aspect ratio at most $t$ has volume at least $(\frac{r}{t\sqrt{d}})^d$ and so is $td\sqrt[d]{C_d}$-locally fat (see, e.g., \cite[Theorem~2.7]{vdS94}). This implies that a collection of boxes (not necessarily axis-aligned) in $\mathbb{R}^d$ with aspect ratios at most $t$ is $C_d(2td)^d$-fat. 

We conclude this section by considering the case of collections of convex objects, where all definitions of fatness introduced above turn out to be equivalent. It is known that, when restricting to convex objects in $\mathbb{R}^d$, local fatness implies global fatness \cite[Lemma~4.6]{OS96}, and that local fatness and thickness are equivalent \cite[Theorem~2.5]{SHO93}. In order to add $c$-fatness to the picture, it is enough to show that, for collections of convex objects, $c$-fatness implies global fatness. In the next two results, we assume that the collections of convex objects are closed under rotations and translations.

\begin{lemma}\label{fatglobalfat} If $\mathcal{O}$ is a $c$-fat collection of convex objects in $\mathbb{R}^d$, then $\mathcal{O}$ is $cd\sqrt{d}$-globally fat.
\end{lemma}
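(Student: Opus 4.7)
The plan is to show that for any $O \in \mathcal{O}$, writing $\rho$ for the inradius, $R$ for the circumradius, and $s = s(O)$ for the size, one has $R \leq cd\sqrt{d}\cdot \rho$. Taking $B_{\mathrm{in}}$ to be an inscribed ball of radius $\rho$ and $B_{\mathrm{out}}$ the smallest enclosing ball of $O$ then yields the desired $cd\sqrt{d}$-global fatness. Since the smallest enclosing ball of $O$ is contained in the ball circumscribing the smallest enclosing axis-aligned hypercube of $O$, we have the trivial bound $R \leq s\sqrt{d}/2$, so it suffices to prove $s \leq 2cd\cdot\rho$.

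To relate $s$ and $\rho$, I would invoke John's theorem: $O$ contains a unique ellipsoid $E$ of maximum volume, and $E \subseteq O \subseteq dE$, where $dE$ denotes the scaling of $E$ by factor $d$ about its center. Since $\mathcal{O}$ is closed under rotations and translations, I may rigidly move $O$ so that $E$ is centered at the origin and its principal axes coincide with the coordinate axes; let $a_1 \leq a_2 \leq \cdots \leq a_d$ denote the semi-axes. Such a motion preserves $\rho$ and $R$, and the bound $R \leq s\sqrt{d}/2$ remains valid for the transformed body's size. From $B(0, a_1) \subseteq E \subseteq O$ we deduce $\rho \geq a_1$, and from $O \subseteq dE$ we obtain that the width $w_1$ of $O$ along the $e_1$-axis (the shortest semi-axis direction) satisfies $w_1 \leq 2d a_1 \leq 2d\rho$.

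The packing argument goes as follows. Suppose for a contradiction that $s > cw_1$; then I can choose real numbers $t_1 < t_2 < \cdots < t_{c+1}$ in $[-s/2, s/2]$ with $t_{i+1} - t_i > w_1$ for each $i$, since the total span required is strictly less than $s$. Consider the translates $O_i := O + t_i\, e_1$, which belong to $\mathcal{O}$ and have size $s$. Each $O_i$ contains the point $t_i e_1$ (because $O$ contains the origin, since $E$ does), which lies in the box $Q := [-s/2, s/2]^d$ of side $s$, so each $O_i$ meets $Q$. The projections of the $O_i$ onto the $e_1$-axis are pairwise disjoint, hence so are the $O_i$ themselves. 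This produces $c+1$ pairwise non-intersecting elements of $\mathcal{O}$ of size at least $s$ all meeting a box of side $s$, contradicting the $c$-fatness of $\mathcal{O}$. We conclude $s \leq cw_1 \leq 2cd\rho$, and therefore $R \leq s\sqrt{d}/2 \leq cd\sqrt{d}\cdot\rho$.

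The central technical point is choosing the packing direction to be the shortest semi-axis of John's ellipsoid: only along this axis does John's theorem furnish the width bound $w_1 \leq 2d\rho$, which supplies exactly the factor of $d$ in the final constant. The remaining $\sqrt{d}$ is inherited from the size-to-circumradius relation. An alternative route using Steinhagen's inequality would bound the minimum width by $O(\sqrt{d})\cdot\rho$, but the John-based packing yields the stated constant $cd\sqrt{d}$ cleanly and without Steinhagen.
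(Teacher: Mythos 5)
Your proof is correct, and it reaches the stated constant $cd\sqrt{d}$ by a genuinely different route from the paper. The paper argues elementarily: for each coordinate direction $e_i$ it runs a packing/contradiction argument (using closure under translations and rotations) to show that $O$ must contain two points differing by $\frac{2r}{c}e_i$, where $r=s(O)$; it then proves an induction claim on convex combinations (\Cref{orthogonalclaim}) producing a single point $t\in O$ with $t+\frac{2r}{cd}e_i\in O$ for all $i$, hence an inscribed ball of radius $\frac{r}{2cd}$, and compares with the enclosing ball of radius $r\sqrt d/2$. You instead invoke John's theorem $E\subseteq O\subseteq dE$ to relate the inradius to the width in the direction of the shortest John axis ($w_1\le 2d a_1\le 2d\rho$), and then run a single packing argument in that one direction; your packing is in fact cleaner than the paper's, since each translate contains the point $t_ie_1$ of the box $[-s/2,s/2]^d$, so intersection with the box is immediate, and disjointness follows from disjoint projections. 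What each approach buys: yours is shorter and isolates exactly where the factor $d$ (John) and the factor $\sqrt d$ (size-to-circumradius) come from, at the cost of importing John's theorem and using rotation-closure to axis-align the John ellipsoid (legitimate, since the paper assumes closure under rotations and translations for this lemma, and you correctly note that only the rotated body's size enters the estimate, while $\rho$ and $R$ are rotation-invariant); the paper's proof is fully self-contained, needing nothing beyond convexity. One tiny point worth a sentence in your write-up: John's theorem presupposes that $O$ is full-dimensional. Lower-dimensional convex objects are excluded anyway, since a body of width $0$ in some direction admits arbitrarily many disjoint translates meeting a fixed box, contradicting $c$-fatness exactly as in your packing step (the one-point case being trivial), so this is a remark rather than a gap.
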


\begin{proof} Let $\{e_1,\ldots,e_{d}\}$ be the standard basis of $\mathbb{R}^d$ and let $O \in \mathcal{O}$ be of size $r$. We first claim that, for each $i \in \{1,\ldots, d\}$, there exist points $p_i,q_i \in O$ such that $\vv{p_iq_i} = \frac{2r}{c} e_i$. Suppose, to the contrary, that there exists $i\in \{1,\ldots,d\}$ such that, for every pair of points $p_i,q_i \in O$, we have that $\vv{p_iq_i} \neq \frac{2r}{c}e_i$. Since $O$ is convex, it is not difficult to see that this implies that $O$ is contained in a box of width less than $\frac{2r}{c}$ along the $i$-th axis. But then we can find a subcollection of at least $c+1$ pairwise non-intersecting objects of size $r$ which intersect a fixed box of side length $r$, as depicted in \Cref{Fig:cfat}, contradicting the assumption that $\mathcal{O}$ is $c$-fat.
\begin{figure}
\centering
\begin{tikzpicture}[scale=0.9]
\draw[dashed] (0,0) rectangle (4,4); 
\draw[dashed] (0,1) -- (4,1);
\draw[dashed] (0,2) -- (4,2);
\draw[dashed] (0,3) -- (4,3);
\draw[thick] (2.8,-.05) rectangle (6.8,4.05);
\draw[dashed] (5.6,0) rectangle (9.6,4);
\draw[dashed] (5.6,1) -- (9.6,1);
\draw[dashed] (5.6,2) -- (9.6,2);
\draw[dashed] (5.6,3) -- (9.6,3);
\draw[thick,<->,>=stealth] (2.8,4.25) -- (6.8,4.25) node[midway,above] {$r$};
\draw[thick,<->,>=stealth] (9.8,3) -- (9.8,4) node[midway,right] {$\frac{2r}{c}$};


\draw[thick,fill=Crimson,opacity=.3,Crimson] [rounded corners=10pt] (.6,.85) .. controls (2,.85).. (4.2,.8) ..controls (3.1,.35)..  (2.1,.1) .. controls (.7,.1) and (.3,.2) .. (-.16,.4) .. controls (0,.6) and (.12,.8) .. (.6,.85);
\draw[thick,fill=Crimson,opacity=.3,Crimson] [rounded corners=10pt] (6.2,.85) .. controls (7.6,.85).. (9.8,.8) ..controls (8.7,.35)..  (7.7,.1) .. controls (6.3,.1) and (5.9,.2) .. (5.44,.4) .. controls (5.6,.6) and (5.72,.8) .. (6.2,.85);
\draw[thick,fill=Crimson,opacity=.3,Crimson] [rounded corners=10pt] (6.2,1.85) .. controls (7.6,1.85).. (9.8,1.8) ..controls (8.7,1.35)..  (7.7,1.1) .. controls (6.3,1.1) and (5.9,1.2) .. (5.44,1.4) .. controls (5.6,1.6) and (5.72,1.8) .. (6.2,1.85);
\draw[dashed] (4.3,0) rectangle (5.3,4);
\draw[thick,fill=Crimson,opacity=.3,Crimson] [rounded corners=10pt] (4.45,.6) .. controls (4.45,2).. (4.4,4.2) ..controls (4.95,3.1)..  (5.2,2.1) .. controls (5.2,.7) and (5.1,.3) .. (4.9,-.16) .. controls (4.7,0) and (4.5,.12) .. (4.45,.6);
\end{tikzpicture}
\caption{A construction used in the proof of \Cref{fatglobalfat}.}\label{Fig:cfat}
\end{figure}

We now show that there exists a point $t\in O$ such that $t+\frac{2r}{cd}e_i \in O$ for every $i$. This immediately follows from a general claim we show next.

\begin{claim}\label{orthogonalclaim} Let $O$ be a convex object in $\mathbb{R}^d$ and let $\varepsilon >0$. Suppose that there exists a set of vectors $\{e_1,\ldots,e_{k}\}$ such that, for each $i \in \{1,\ldots, k\}$, there exist points $p_i,q_i \in O$ with $\vv{p_iq_i} = \varepsilon e_i$. Then there exists a point $t\in O$ such that $t+\frac{\varepsilon}{k}e_i \in O$ for every $i\in \{1,\ldots,k\}$.
\end{claim}

\begin{claimproof}[Proof of \Cref{orthogonalclaim}] We proceed by induction on $k$. The base case $k=1$ trivially holds. Let now $k > 1$ and suppose that there exists a set of vectors $\{e_1,\ldots,e_{k}\}$ such that, for each $i \in \{1,\ldots, k\}$, there exist points $p_i,q_i \in O$ with $\vv{p_iq_i} = \varepsilon e_i$. By the induction hypothesis, we can find $s\in O$ such that $s+\frac{\varepsilon}{k-1}e_i \in O$ for every $i\in \{1,\ldots,k-1\}$. Consider now $t = \frac{k-1}{k}s + \frac{1}{k}p_k$. Since $O$ is convex and $s, p_k \in O$, we have that $t \in O$. Moreover, since $s+\frac{\varepsilon}{k-1}e_i \in O$ for each $i \in \{1,\ldots,k-1\}$, we have that $t+\frac{\varepsilon}{k}e_i = \frac{k-1}{k}(s+\frac{\varepsilon}{k-1}e_i) + \frac{1}{k}p_k \in O$ for each $i \in \{1,\ldots,k-1\}$. Similarly, since $p_k+\varepsilon e_k = q_k \in O$, it follows that $t+\frac{\varepsilon}{k}e_k = \frac{k-1}{k}s + \frac{1}{k}(p_k + \varepsilon e_k) \in O$. 
\end{claimproof}

The above implies that $O$ contains a hypercube of side length $\frac{r}{cd}$ and so a ball of radius $\frac{r}{2cd}$ as well. On the other hand, since $O$ has size $r$, the side length of its smallest enclosing hypercube is at most $r$ and so the radius of its smallest enclosing ball is at most $r\sqrt{d}/2$. This implies that $O$ is $cd\sqrt{d}$-globally fat.
\end{proof}

\begin{proposition}\label{equivconvex} Let $\mathcal{O}$ be a collection of convex objects in $\mathbb{R}^d$. The following are equivalent:
\begin{enumerate}
\item $\mathcal{O}$ is $k_0$-fat for some $k_0 \geq 1$;
\item $\mathcal{O}$ is $k_1$-globally fat for some $k_1 \geq 1$; 
\item $\mathcal{O}$ is $k_2$-thick for some $k_2 \geq 1$;
\item $\mathcal{O}$ is $k_3$-locally fat for some $k_3 \geq 2$. 
\end{enumerate}  
\end{proposition}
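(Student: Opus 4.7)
The plan is to close the cycle of implications $1 \Rightarrow 2 \Rightarrow 3 \Rightarrow 4 \Rightarrow 1$, stitching together lemmas already established in this section with one classical result from the literature. The bulk of the work is done: \Cref{fatglobalfat} gives $1 \Rightarrow 2$ (any $c$-fat convex collection in $\mathbb{R}^d$ is $cd\sqrt{d}$-globally fat), \Cref{folk} gives $2 \Rightarrow 3$ in full generality with no convexity needed (every $k$-globally fat object is $k$-thick), and \Cref{cfatA} gives $4 \Rightarrow 1$ (every $k$-locally fat collection in $\mathbb{R}^d$ is $(2k)^d$-fat).

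The only missing implication is $3 \Rightarrow 4$, and this is the step where convexity is genuinely needed: the comb of \Cref{Fig:comb} is $\sqrt{\pi}$-thick yet fails to be $k$-locally fat for any $k$, showing that the implication is false in general. For convex objects, however, thickness and local fatness are known to be equivalent, a classical result of van der Stappen, Halperin and Overmars, and I would simply invoke \cite[Theorem~2.5]{SHO93} at this point rather than reprove it.

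Composing the four implications closes the cycle and yields the four-way equivalence. I do not anticipate any real obstacle: the only care required is bookkeeping, propagating the \emph{``for some $k_i \geq 1$''} (respectively $k_3 \geq 2$) quantifier in each clause along the chain so that the constant produced at each step depends only on $d$ and on the constant coming from the previous step. Once one observes that each of the four invoked statements has exactly this quantitative form, the equivalence follows immediately by transitivity.
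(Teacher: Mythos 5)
Your proof is correct and matches the paper's own argument exactly: the same cycle $1 \Rightarrow 2 \Rightarrow 3 \Rightarrow 4 \Rightarrow 1$, using \Cref{fatglobalfat}, \Cref{folk}, \cite[Theorem~2.5]{SHO93}, and \Cref{cfatA} for the four respective implications.
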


\begin{proof} The implication (1) $\Rightarrow$ (2) follows from \Cref{fatglobalfat}. The implication (2) $\Rightarrow$ (3) follows from \Cref{folk}. The implication (3) $\Rightarrow$ (4) follows from \cite[Theorem~2.5]{SHO93}. Finally, the implication (4) $\Rightarrow$ (1) follows from \Cref{cfatA}.
\end{proof}


\section{Layered and local tree-independence number}\label{sec:layeredA}

In this section we initiate a study of the notion of layered tree-independence number. In order to do so, we first have to recall the key definitions of tree-independence number and layering. A \textit{tree decomposition} of a graph $G$ is a pair $\mathcal{T} = (T, \{X_t\}_{t\in V(T)})$, where $T$ is a tree whose every node $t$ is assigned a vertex subset $X_t \subseteq V(G)$, called \textit{bag}, such that the following conditions are satisfied: 
\begin{description}
\item[(T1)] Every vertex of $G$ belongs to at least one bag; 
\item[(T2)] For every $uv \in E(G)$, there exists a bag containing both $u$ and $v$; 
\item[(T3)] For every $u \in V(G)$, the subgraph $T_u$ of $T$ induced by $\{t \in V(T) : u \in X_t\}$ is connected. 
\end{description}
The \textit{width} of $\mathcal{T} = (T, \{X_t\}_{t\in V(T)})$ is the maximum value of $|X_t| - 1$ over all $t \in V(T)$. The \textit{treewidth} of a graph $G$, denoted $\tw(G)$, is the minimum width of a tree decomposition of $G$. The \textit{independence number} of $\mathcal{T}$, denoted $\alpha(\mathcal{T}$), is the quantity $\max_{t\in V(T)} \alpha(G[X_t])$. The \textit{tree-independence number} of a graph $G$, denoted $\tin(G)$, is the minimum independence number of a tree decomposition of $G$. Clearly, $\tin(G) \leq \tw(G)+1$, for any graph $G$. On the other hand, tree-independence number is a width parameter more powerful than treewidth, as there exist classes with bounded tree-independence number and unbounded treewidth (for example, chordal graphs \cite{DaMS22}).

A \textit{layering} of a graph $G$ is a partition $(V_0, V_1,\ldots)$ of $V(G)$ such that, for every edge $vw \in E(G)$, if $v \in V_i$ and $w \in V_j$, then $|i - j| \leq 1$. Each set $V_i$ is a \textit{layer}. The \textit{layered width} of a tree decomposition $\mathcal{T} = (T, \{X_t\}_{t\in V(T)})$ of a graph $G$ is the minimum integer $\ell$ 
for which there exists a layering $(V_0,V_1,\ldots)$ of $G$ such that, for each bag $X_t$ and layer $V_i$, we have $|X_t \cap V_i| \leq \ell$. The \textit{layered treewidth} of a graph $G$ is the minimum layered width of a tree decomposition of $G$. Layerings with one layer show that the layered treewidth of $G$ is at most $\tw(G)+1$. We now introduce the analogue of layered treewidth for the width parameter tree-independence number.

\begin{definition} The \textit{layered independence number} of a tree decomposition $\mathcal{T} = (T, \{X_t\}_{t\in V(T)})$ of a graph $G$ is the minimum integer $\ell$ for which there exists a layering $(V_0,V_1,\ldots)$ of $G$ such that, for each bag $X_t$ and layer $V_i$, we have $\alpha(G[X_t \cap V_i]) \leq \ell$. The \textit{layered tree-independence number} of a graph $G$ is the minimum layered independence number of a tree decomposition of $G$. 
\end{definition}

Layerings with one layer show that the layered tree-independence number of $G$ is at most $\tin(G)$. Moreover, the layered tree-independence number of a graph is clearly at most its layered treewidth. The proof of \cite[Lemma~10]{DMW17} shows, mutatis mutandis, that graphs of bounded layered tree-independence number have $O(\sqrt{n})$ tree-independence number, as we observe next.

\begin{lemma}\label{sqrttreealphaA} Let $k \in \mathbb{N}$ and let $G$ be a $n$-vertex graph. Given a tree decomposition $\mathcal{T} = (T,\{X_t\}_{t\in V(T)})$ of $G$ and a layering $(V_0,V_1,\ldots)$ of $G$ such that, for each bag $X_t$ and layer $V_i$, $\alpha(G[X_t \cap V_i]) \leq k$, it is possible to compute, in time polynomial in $n$ and $|V(T)|$, a tree decomposition of $G$ with independence number at most $2\sqrt{kn}$. In particular, every $n$-vertex graph with layered tree-independence number $k$ has tree-independence number at most $2\sqrt{kn}$.
\end{lemma}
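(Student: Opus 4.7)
The plan is to mimic the technique used in \cite{DMW17} for bounding treewidth in terms of layered treewidth, substituting the cardinality bound $|X_t \cap V_i| \le k$ by subadditivity of the independence number, namely $\alpha(G[A \cup B]) \le \alpha(G[A]) + \alpha(G[B])$. Concretely, I would set $s = \lceil \sqrt{n/k} \rceil$ and, for each $r \in \{0, \ldots, s-1\}$, consider the ``every $s$-th'' selection $S_r = \bigcup_{j \ge 0} V_{r+js}$. Since $\sum_r |S_r| = n$, pigeonhole yields some $r^*$ with $|S_{r^*}| \le n/s \le \sqrt{kn}$, and such an $r^*$ can be found in polynomial time.

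Removing $S_{r^*}$ decomposes $V(G) \setminus S_{r^*}$ into groups $C_1, \ldots, C_m$, each a union of at most $s-1$ consecutive layers; since edges respect the layering, vertices in distinct $C_j$'s are non-adjacent, so each $C_j$ is a union of connected components of $G - S_{r^*}$. For each $j$, the restriction $\mathcal{T}_j = (T, \{X_t \cap C_j\}_{t\in V(T)})$ is a tree decomposition of $G[C_j]$, and subadditivity of $\alpha$ across the at most $s - 1$ relevant layers of $C_j$ yields $\alpha(G[X_t \cap C_j]) \le (s-1)k$.

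To assemble a tree decomposition of $G$, I would stitch the $\mathcal{T}_j$'s into $m$ parallel copies of $T$ sharing a common new root $r_0$ (attached to a fixed copy of a chosen node $t_0 \in V(T)$): the node $(t,j)$ of the new tree $T^*$ receives the bag $X_t \cap C_j$, while $r_0$ receives the empty bag. This is a valid tree decomposition of $G - S_{r^*}$, because each $v \notin S_{r^*}$ lies in a unique chunk $C_{j(v)}$ and is hosted precisely on the connected subtree $T_v \times \{j(v)\}$ of $T^*$. Augmenting every bag of $T^*$ with $S_{r^*}$ then produces a tree decomposition of $G$: edges between $S_{r^*}$ and $C_j$ are now covered by the augmented $\mathcal{T}_j$-bags, and edges inside $S_{r^*}$ by any bag. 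The resulting independence number is at most $(s-1)k + |S_{r^*}| \le (s-1)k + n/s \le 2\sqrt{kn}$, since $s - 1 \le \sqrt{n/k}$ and $s \ge \sqrt{n/k}$.

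The main obstacle is that the naive attempt of simply restricting each $X_t$ to $V(G) \setminus S_{r^*}$ and then adding $S_{r^*}$ to every bag of $\mathcal{T}$ would fail: a single bag $X_t$ may meet many chunks $C_j$, and since vertices from distinct chunks are independent in $G$, $\alpha(G[X_t \setminus S_{r^*}])$ is not controlled by $(s-1)k$ but can grow with the number of chunks that $X_t$ meets. The parallel-copies construction circumvents this by forcing each bag to see a single chunk. All steps are polynomial in $n$ and $|V(T)|$: computing the $S_r$'s requires one pass over the layering, and the new tree has $1 + m \cdot |V(T)|$ nodes with $m \le n/s + 1 \le \sqrt{kn} + 1$.
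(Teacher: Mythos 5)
Your proposal is correct and follows essentially the same route as the paper's proof: choose every $\lceil\sqrt{n/k}\rceil$-th layer, pick the lightest residue class by pigeonhole, restrict the given tree decomposition to the resulting between-layer pieces (each spanning at most $\lceil\sqrt{n/k}\rceil-1$ consecutive layers, hence independence number at most $k(\lceil\sqrt{n/k}\rceil-1)$), glue these decompositions into one for the remainder, and add the removed set to every bag. Your explicit ``parallel copies of $T$ with an empty root bag'' stitching and the remark about the naive single-copy restriction failing are just a more detailed account of the merging step the paper performs implicitly.
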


\begin{proof} Let $p = \lceil \sqrt{n/k} \rceil$. For each $j \in \{0, \ldots, p-1\}$, let $W_j = V_j \cup V_{p+j} \cup V_{2p+j} \cup \cdots$. Observe that $(W_0, W_1, \ldots, W_{p-1})$ is a partition of $V(G)$. We then find $j \in \{0, \ldots, p-1\}$ such that $|W_j| \leq \frac{n}{p} \leq \sqrt{kn}$. Now, each component $K$ of $G - W_j$ is contained within $p - 1$ consecutive layers and so, since $\alpha(G[X_t \cap V_i]) \leq k$ for each bag $X_t$ and layer $V_i$, restricting the bags in $\mathcal{T}$ to $V(K)$ gives a tree decomposition of $K$ with independence number at most $k(p - 1) \leq \sqrt{kn}$. We then merge the tree decompositions of the components of $G - W_j$ into a tree decomposition of $G - W_j$ with independence number at most $\sqrt{kn}$. Finally, adding $W_j$ to every bag of this tree decomposition gives a tree decomposition of $G$ with independence number at most $\sqrt{kn} + |W_j| \leq 2\sqrt{kn}$.
\end{proof}

Given a width parameter $p$, a graph class $\mathcal{G}$ has \textit{bounded local $p$} if there is a function $f\colon \mathbb{N} \rightarrow \mathbb{N}$ such that for every integer $r \in \mathbb{N}$, graph $G \in \mathcal{G}$, and vertex $v \in V(G)$, the subgraph $G[N^{r}[v]]$ has $p$-width at most $f(r)$. In \cite{DMW17}, it is shown that if every graph in a class $\mathcal{G}$ has layered treewidth at most $\ell$, then $\mathcal{G}$ has bounded local treewidth with $f(r) = \ell(2r + 1) - 1$. Similarly, bounded layered tree-independence number implies bounded local tree-independence number, as we show next. 

\begin{lemma}\label{blayeredblocalA} If every graph in a class $\mathcal{G}$ has layered tree-independence number at most $\ell$, then $\mathcal{G}$ has bounded local tree-independence number with $f(r) = \ell(2r + 1)$.
\end{lemma}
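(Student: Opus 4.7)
The plan is to mimic the corresponding argument for layered treewidth from \cite{DMW17}, replacing counting of vertices by bounding the independence number. Fix a graph $G\in\mathcal{G}$, a vertex $v\in V(G)$, and an integer $r\in\mathbb{N}_0$; we have to exhibit a tree decomposition of $G[N^{r}_G[v]]$ of independence number at most $\ell(2r+1)$.

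First, I would invoke the hypothesis to obtain a tree decomposition $\mathcal{T}=(T,\{X_t\}_{t\in V(T)})$ of $G$ and a layering $(V_0,V_1,\ldots)$ such that $\alpha(G[X_t\cap V_i])\leq\ell$ for every bag $X_t$ and every layer $V_i$. Let $j$ be the index of the layer containing $v$. The defining property of a layering is that edges of $G$ join vertices in the same or consecutive layers, so by induction on the distance from $v$ one has $N^{r}_G[v]\subseteq V_{j-r}\cup V_{j-r+1}\cup\cdots\cup V_{j+r}$, a union of at most $2r+1$ consecutive layers (using the convention $V_i=\varnothing$ for $i<0$).

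Next I would form the restricted decomposition $\mathcal{T}'=(T,\{X_t\cap N^{r}_G[v]\}_{t\in V(T)})$. It is routine (and standard) to check that restricting every bag to the vertex set of an induced subgraph produces a valid tree decomposition of that induced subgraph, so $\mathcal{T}'$ is a tree decomposition of $G[N^{r}_G[v]]$. For any $t\in V(T)$, the bag $X_t\cap N^{r}_G[v]$ decomposes as
\[
X_t\cap N^{r}_G[v]\;=\;\bigcup_{i=j-r}^{j+r}\bigl(X_t\cap V_i\cap N^{r}_G[v]\bigr).
\]
Because an independent set intersected with any subset is still independent, and the independence number is subadditive under vertex-set union, we get
\[
\alpha\bigl(G[X_t\cap N^{r}_G[v]]\bigr)\;\leq\;\sum_{i=j-r}^{j+r}\alpha\bigl(G[X_t\cap V_i\cap N^{r}_G[v]]\bigr)\;\leq\;\sum_{i=j-r}^{j+r}\alpha\bigl(G[X_t\cap V_i]\bigr)\;\leq\;(2r+1)\ell,
\]
which gives $\tin(G[N^{r}_G[v]])\leq \ell(2r+1)$, as required.

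There is essentially no obstacle here: the only two facts used are the defining property of a layering (to trap $N^{r}_G[v]$ inside $2r+1$ consecutive layers) and the monotonicity plus subadditivity of $\alpha$ under vertex partitions. The mild point of care is the subadditivity step, which is why the bound $\ell(2r+1)$ arises (rather than $\ell$ alone, as would be the case if we could afford to absorb $2r$ layers into each bag — we cannot, because doing so would destroy the independence-number bound).
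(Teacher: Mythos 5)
Your argument is correct and is essentially the same as the paper's proof: both trap $N^{r}_G[v]$ inside $2r+1$ consecutive layers, restrict the bags of the witnessing decomposition to $N^{r}_G[v]$, and bound the independence number of each restricted bag by summing the at most $\ell$ contribution from each of the $2r+1$ layers. The only difference is that you spell out the restriction and subadditivity steps more explicitly than the paper does.
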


\begin{proof} Let $r \in \mathbb{N}$, $G \in \mathcal{G}$ and $v \in V(G)$. Let $G' = G[N^{r}[v]]$. By assumption, $G$ has a tree decomposition $\mathcal{T} = (T, \{X_t\}_{t\in V(T)})$ of layered independence number $\ell$ with respect to some layering $(V_0, V_1,\ldots)$. Suppose that $v \in V_i$. Then $V(G') \subseteq V_{i-r} \cup \cdots \cup V_{i+r}$ and so, for each bag $X_t$, we have that $\alpha(G'[X_t]) \leq \sum_{j=-r}^{r}\alpha(G[X_t \cap V_{i-j}]) \leq \ell(2r + 1)$. This implies that $\tin(G') \leq \ell(2r + 1)$.  
\end{proof}

\begin{corollary}\label{layeredKnnA} The layered tree-independence number of $K_{n,n}$ is at least $n/5$.
\end{corollary}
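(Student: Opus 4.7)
The plan is to combine the local tree-independence bound from \Cref{blayeredblocalA} with a direct lower bound $\tin(K_{n,n}) \ge n$.

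Since $K_{n,n}$ has diameter $2$, for any vertex $v$ we have $N^{2}[v] = V(K_{n,n})$. Writing $\ell$ for the layered tree-independence number of $K_{n,n}$ and applying \Cref{blayeredblocalA} with $r = 2$, we would obtain $\tin(K_{n,n}) \le \ell(2 \cdot 2 + 1) = 5\ell$. It therefore suffices to show $\tin(K_{n,n}) \ge n$, which then yields $\ell \ge n/5$.

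For the lower bound, let $A$ and $B$ denote the two parts of $K_{n,n}$ and let $\mathcal{T} = (T, \{X_t\}_{t \in V(T)})$ be an arbitrary tree decomposition. Because $A$ and $B$ are independent sets of size $n$, it is enough to exhibit a bag containing all of $A$ or all of $B$. For each vertex $v$, set $T_v := \{t \in V(T) : v \in X_t\}$, which induces a subtree of $T$ by property (T3); moreover $T_a \cap T_b \neq \varnothing$ for every $(a,b) \in A \times B$ because $ab \in E(K_{n,n})$. Arguing by contradiction, I would assume that no bag contains all of $A$; then there exist $a_1, a_2 \in A$ with $T_{a_1} \cap T_{a_2} = \varnothing$, and one can consider the unique path $P$ in $T$ joining $T_{a_1}$ to $T_{a_2}$. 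For each $b \in B$, picking $p_i \in T_b \cap T_{a_i}$ for $i=1,2$ and using that $T_b$, $T_{a_1}$, $T_{a_2}$ are connected subtrees, the unique $T$-path from $p_1$ to $p_2$ lies in $T_b$ and must traverse $P$ in full. Consequently, every bag indexed by a node of $P$ contains all of $B$, producing the desired bag.

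The only subtle point is the topological claim that $V(P) \subseteq T_b$, but this follows immediately from the uniqueness of paths in trees together with the connectedness of $T_b$, $T_{a_1}$ and $T_{a_2}$. Apart from that, the argument is essentially a clean combination of \Cref{blayeredblocalA} with the classical Helly-type behavior of subtrees of a tree.
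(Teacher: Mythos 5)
Your proof is correct and follows essentially the same route as the paper: apply \Cref{blayeredblocalA} with $r=2$ (using that $K_{n,n}$ has diameter $2$) and combine it with $\tin(K_{n,n}) \geq n$. The only difference is that the paper simply cites the fact $\tin(K_{n,n}) = n$ from the literature, whereas you reprove the lower bound via the Helly property of subtrees of a tree, and that argument is sound.
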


\begin{proof} Suppose, to the contrary, that the layered tree-independence number of $K_{n,n}$ is less than $n/5$. Since the diameter of $K_{n,n}$ is $2$, \Cref{blayeredblocalA} implies that $\tin(K_{n,n}) < n$, contradicting the fact that $\tin(K_{n,n}) = n$ \cite{DaMS22}.  
\end{proof}

In general, bounded local tree-independence number does not imply bounded layered tree-independence number. This will be a consequence of \Cref{layeredtofragileA} and \Cref{degreefragility}. However, it is known that a proper minor-closed class has bounded layered treewidth if and only if it has bounded local treewidth if and only if it excludes some apex graph\footnote{An apex graph is a graph that can be made planar by deleting a single vertex.} as a minor (see, e.g., \cite{DMW17}). The next result extends this equivalence to layered tree-independence number and local tree-independence number.

\begin{theorem}\label{equivlayeredA} The following are equivalent for a minor-closed class $\mathcal{G}$:
\begin{enumerate}
\item Some apex graph is not in $\mathcal{G}$;
\item $\mathcal{G}$ has bounded local tree-independence number;
\item $\mathcal{G}$ has linear local tree-independence number (i.e., $f(r)$ is linear in $r$);
\item $\mathcal{G}$ has bounded layered tree-independence number.
\end{enumerate}
\end{theorem}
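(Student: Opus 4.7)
My plan is to establish the cycle $(4)\Rightarrow(3)\Rightarrow(2)\Rightarrow(1)\Rightarrow(4)$, echoing the classical treewidth analogue. The implications $(4)\Rightarrow(3)$ and $(3)\Rightarrow(2)$ are essentially immediate: the first is \Cref{blayeredblocalA}, whose bound $f(r)=\ell(2r+1)$ is already linear in $r$; the second is trivial. For $(1)\Rightarrow(4)$, I would invoke the treewidth analogue for minor-closed classes, namely that excluding some apex graph as a minor is equivalent to having bounded layered treewidth (see \cite{DMW17}), together with the trivial inequality $\alpha(H)\leq|V(H)|$ applied bag-by-layer, which immediately upgrades bounded layered treewidth to bounded layered tree-independence number.

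The substantive direction is $(2)\Rightarrow(1)$, which I prove by contrapositive. Suppose every apex graph lies in $\mathcal{G}$. Let $\Gamma_k$ denote the $k\times k$ grid and let $G_k$ be obtained from $\Gamma_k$ by adding a universal apex vertex $v_k$. Each $G_k$ is an apex graph (its planar part is $\Gamma_k$), has radius $1$, and therefore satisfies $N^{1}_{G_k}[v_k]=V(G_k)$, so $G_k[N^{1}[v_k]]=G_k$. Restricting any tree decomposition of $G_k$ to $V(\Gamma_k)$ gives a tree decomposition of $\Gamma_k$ whose bags are induced subgraphs of the original bags, and in particular $\tin(\Gamma_k)\leq\tin(G_k)$. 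Under the contrapositive hypothesis, $G_k\in\mathcal{G}$ for all $k$, so local tree-independence number in $\mathcal{G}$ is unbounded as soon as I can show $\tin(\Gamma_k)\to\infty$.

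For this final step, I rely on a standard balanced-separator argument: in any tree decomposition of $\Gamma_k$, some bag $X$ is a balanced separator of the grid (every component of $\Gamma_k-X$ has at most $|V(\Gamma_k)|/2$ vertices), and grid isoperimetry yields $|X|=\Omega(k)$. Since $\Gamma_k[X]$ is a subgraph of the grid, its maximum degree is at most $4$, so the greedy independent-set bound gives $\alpha(\Gamma_k[X])\geq|X|/5=\Omega(k)$, and hence $\tin(\Gamma_k)=\Omega(k)$. The only delicate point in the whole proof is this isoperimetric estimate $|X|=\Omega(k)$: it rests on the observation that a small vertex set cannot simultaneously meet enough rows and columns to cut the grid into parts each of at most half its vertices, and making this argument fully quantitative is the one piece of bookkeeping that requires care; everything else is routine.
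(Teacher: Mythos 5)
Your proposal is correct and follows essentially the same route as the paper: the easy implications are handled identically, and for $(2)\Rightarrow(1)$ the paper also uses the $k\times k$ grid with a dominating apex vertex, noting that the radius-$1$ ball is the whole graph and that $\tin$ of grids is unbounded. The only cosmetic difference is that the paper cites the unboundedness of $\tin$ on grids (via non-$(\tw,\omega)$-boundedness in \cite{DaMS22}) rather than reproving it, though it later establishes exactly your balanced-separator lower bound ($\Omega(n)$ tree-independence number for the $n\times n$ grid) as a separate lemma.
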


\begin{proof} Consider first (1) $\Rightarrow$ (4). By \cite{DMW17}, if $\mathcal{G}$ excludes some apex graph as a minor, then $\mathcal{G}$ has bounded layered treewidth, hence bounded layered tree-independence number as well. The implication (4) $\Rightarrow$ (3) follows from \Cref{blayeredblocalA}. The implication (3) $\Rightarrow$ (2) follows by definition. Finally, consider (2) $\Rightarrow$ (1). Let $G_n$ be the graph obtained from the $n\times n$-grid graph (the Cartesian product of two $n$-vertex paths) by adding a dominating vertex $v_n$. Observe that the class $\{G_n : n\in \mathbb{N}\}$ has unbounded local tree-independence number, as $v_n$ is dominating and the class of grids has unbounded tree-independence number (since it is not $(\tw,\omega)$-bounded, see \cite[Lemma~3.2]{DaMS22}). Hence, if $\mathcal{G}$ contains all apex graphs, then in particular it contains $\{G_n : n\in \mathbb{N}\}$ and so has unbounded local tree-independence number.
\end{proof}

\Cref{equivlayeredA} implies the following result from \cite{DMiS22}.

\begin{corollary} A minor-closed class has bounded tree-independence number if and only if some planar graph is not in the class.
\end{corollary}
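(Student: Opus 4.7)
The plan is to prove the two directions of the equivalence separately.

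For $(\Rightarrow)$, I argue by contrapositive. Suppose $\mathcal{G}$ contains every planar graph; then in particular $\mathcal{G}$ contains the $n\times n$-grid $\Gamma_n$ for every $n \in \mathbb{N}$. We have $\omega(\Gamma_n) = 2$ while $\tw(\Gamma_n)$ grows without bound in $n$, so the family $\{\Gamma_n : n \in \mathbb{N}\}$ has bounded clique number but unbounded treewidth. Bounded $\tin$ together with bounded $\omega$ forces bounded $\tw$ by Ramsey's theorem (a bag with independence number at most $k$ and clique number at most $c$ has fewer than $R(c+1, k+1)$ vertices), so the grids must have unbounded tree-independence number. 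Hence so does $\mathcal{G}$. This is essentially the observation already used in the proof of \Cref{equivlayeredA} for the implication (2) $\Rightarrow$ (1).

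For $(\Leftarrow)$, suppose $\mathcal{G}$ is minor-closed and some planar graph $H$ does not belong to $\mathcal{G}$. The key ingredient is the Excluded Grid Theorem of Robertson and Seymour, which states that a minor-closed graph class has bounded treewidth if and only if some planar graph is not in the class. Applied to $\mathcal{G}$, this yields a constant $C = C(|V(H)|)$ such that $\tw(G) \le C$ for every $G \in \mathcal{G}$. The conclusion then follows from the elementary inequality $\tin(G) \le \tw(G) + 1$.

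The main obstacle lies in the second direction: although every planar graph is an apex graph, so that \Cref{equivlayeredA} immediately hands us bounded \emph{layered} tree-independence number, this by itself only yields the $O(\sqrt{n})$ bound of \Cref{sqrttreealphaA} on the tree-independence number, which is insufficient. Bridging the gap from $O(\sqrt{n})$ to $O(1)$ is exactly what the Excluded Grid Theorem provides, exploiting the minor-closed structure more strongly than the apex-exclusion argument of \Cref{equivlayeredA}. Thus \Cref{equivlayeredA} plays a largely contextual role here, and the real content of the corollary comes from Robertson and Seymour's theorem combined with the trivial bound $\tin \le \tw + 1$.
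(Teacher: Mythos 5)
Your proof is correct, but the backward direction takes a genuinely different route from the paper. The forward direction is essentially the paper's: the paper excludes walls rather than grids, but both rest on the same fact that planar graphs of bounded clique number and unbounded treewidth must have unbounded tree-independence number (the Ramsey/$(\tw,\omega)$-boundedness argument you spell out). For the converse, you invoke the Excluded Grid Theorem of Robertson and Seymour to bound the treewidth of every $H$-minor-free graph and then use $\tin(G)\leq\tw(G)+1$; this is short and relies on a classical external result. The paper instead deliberately derives the corollary from its own \Cref{equivlayeredA}: given planar $H\notin\mathcal{G}$ and $G\in\mathcal{G}$, it forms $H^{+}$ and $G^{+}$ by adding dominating vertices, cites \cite{DMW17} for the fact that $G^{+}$ is $H^{+}$-minor-free, applies \Cref{equivlayeredA} to conclude that $G^{+}$ has layered tree-independence number at most some $\ell$, and then observes that $G^{+}$ has radius $1$, so at most three layers are used and $\tin(G)\leq\tin(G^{+})\leq 3\ell$. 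So the "bridge from $O(\sqrt{n})$ to $O(1)$" that you attribute to the grid theorem is achieved in the paper by the dominating-vertex trick rather than by Robertson--Seymour; your approach buys brevity and independence from the layered machinery, while the paper's approach showcases that \Cref{equivlayeredA} alone (plus the apex construction of \cite{DMW17}) suffices, which is the point of presenting it as a corollary of that theorem.
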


\begin{proof} Let $\mathcal{G}$ be a minor-closed class. If $\mathcal{G}$ has bounded tree-independence number then, since the class of walls has unbounded tree-independence number \cite{DMS21a,DaMS22}, $\mathcal{G}$ does not contain a planar graph as a minor. 

Conversely, we claim that, for every planar graph $H$ there is an integer $c$ such that every $H$-minor-free graph $G$ has tree-independence number at most $c$. Let $H^{+}$ be the apex
graph obtained from $H$ by adding a dominating vertex $v$ and let $G^{+}$ be the graph obtained from $G$ by adding a dominating vertex $x$. It is shown in \cite{DMW17} that $G^{+}$ is $H^{+}$-minor-free. By \Cref{equivlayeredA}, $G^{+}$ has layered tree-independence number at most $\ell$, for some fixed integer $\ell$. Since $G^{+}$ has radius $1$, at most three layers are used. Thus $G^{+}$, and hence $G$, have tree-independence number at most $3\ell$.
\end{proof}

We now consider the behavior of layered tree-independence number with respect to graph powers. For $p \in \mathbb{N}$, the \textit{$p$-th power} of a graph $G$ is the graph $G^p$ with vertex set $V(G^p) = V(G)$ where, for distinct $u,v \in V(G^p)$, $uv \in E(G^p)$ if and only if $u$ and $v$ are at distance at most $p$ in $G$. Bonomo-Braberman and Gonzalez \cite{BG22} showed that fixed powers of bounded treewidth and bounded degree graphs are of bounded treewidth. More specifically, for any graph $G$ and $p \geq 2$, $\tw(G^p) \leq (\tw(G) + 1)(\Delta(G) + 1)^{\lceil\frac{p}{2}\rceil}-1$. It follows from the work of Dujmovi\'c et al.~\cite{DMW23} that powers of graphs of bounded layered treewidth and bounded maximum degree have bounded layered treewidth. The upper bound was improved by Dujmovi\'c et al.~\cite{DEMWW22}, who showed that if $G$ has layered treewidth $k$, then $G^p$ has layered treewidth less than $2pk\Delta(G)^{\lfloor\frac{p}{2}\rfloor}$. Lima et al.~\cite{LMM24} showed that, for any graph $G$ and odd $p \in \mathbb{N}$, $\tin(G^p) \leq \tin(G)$ and that, for every fixed even $p \in \mathbb{N}$, there is no function $f$ such that $\tin(G^p) \leq f(\tin(G))$ for all graphs $G$. We show that odd powers of bounded layered tree-independence number graphs have bounded layered tree-independence number and that this result does not extend to even powers. Before doing so, we need a definition and a result from \cite{LMM24}. Given a graph $G$ and a family $\mathcal{H} = \{H_j\}_{j\in J}$ of subgraphs of $G$, we denote by $G(\mathcal{H})$ the graph with vertex set $J$, in which two distinct elements $i, j \in J$ are adjacent if and only if $H_i$ and $H_j$ either have a vertex in common or there is an edge in $G$ connecting them.

\begin{lemma}[Lima et al.~\cite{LMM24}]\label{sametreeA} Let $G$ be a graph and let $k$ and $d$ be positive integers. For $v \in V(G)$, let $H_v$ be the subgraph of $G$ induced by the vertices at distance at most $d$ from $v$, and let $\mathcal{H} = \{H_v\}_{v\in V(G)}$. Then $G^{k+2d}$ is isomorphic to $G^k(\mathcal{H})$. 
\end{lemma}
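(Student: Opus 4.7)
The plan is to verify that $G^{k+2d}$ and $G^k(\mathcal{H})$ have the same vertex set (namely $V(G)$, since $\mathcal{H}$ is indexed by $V(G)$) and then prove they have the same edge set by two distance arguments. Throughout, write $d_G(\cdot,\cdot)$ for distances in $G$, and note that $x\in V(H_v)$ is equivalent to $d_G(v,x)\le d$.

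First I would handle the inclusion $E(G^k(\mathcal{H}))\subseteq E(G^{k+2d})$. Suppose $uv\in E(G^k(\mathcal{H}))$. By definition, either (a) there exists $w\in V(H_u)\cap V(H_v)$, in which case $d_G(u,v)\le d_G(u,w)+d_G(w,v)\le 2d$; or (b) there exist $x\in V(H_u)$ and $y\in V(H_v)$ with $xy\in E(G^k)$, i.e., $d_G(x,y)\le k$, and then $d_G(u,v)\le d_G(u,x)+d_G(x,y)+d_G(y,v)\le d+k+d=k+2d$. In both subcases $uv\in E(G^{k+2d})$.

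For the converse inclusion, assume $u\ne v$ with $d_G(u,v)=\ell\le k+2d$ and fix a shortest $u$–$v$ path $u=w_0,w_1,\ldots,w_\ell=v$ in $G$. If $\ell\le 2d$, let $j=\lceil\ell/2\rceil$; then $j\le d$ and $\ell-j=\lfloor\ell/2\rfloor\le d$, so $w_j\in V(H_u)\cap V(H_v)$ and condition (a) of the definition of $G^k(\mathcal{H})$ is met. If instead $2d<\ell\le k+2d$, set $x=w_d$ and $y=w_{\ell-d}$; these are distinct vertices satisfying $x\in V(H_u)$, $y\in V(H_v)$, and $d_G(x,y)=\ell-2d\le k$, so $xy\in E(G^k)$, giving condition (b). In either case $uv\in E(G^k(\mathcal{H}))$, which completes the proof.

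There is no real obstacle here: the argument reduces to choosing the right cut points along a shortest path and applying the triangle inequality. The only subtlety worth being explicit about is ensuring, in the second case of the converse, that the indices $d$ and $\ell-d$ are distinct (which is exactly the hypothesis $\ell>2d$) so that $xy$ is a genuine edge of $G^k$ rather than a loop.
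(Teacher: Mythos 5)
Your argument is correct: both inclusions are handled properly, the triangle inequality gives $E(G^k(\mathcal{H}))\subseteq E(G^{k+2d})$, and the choice of cut points $w_{\lceil \ell/2\rceil}$ (when $\ell\le 2d$) and $w_d,w_{\ell-d}$ (when $2d<\ell\le k+2d$) on a shortest path yields the converse, with the distinctness issue for the edge of $G^k$ correctly dispatched. Note that the paper itself gives no proof of this statement — it is quoted from Lima et al.~\cite{LMM24} — so there is nothing internal to compare against; your direct distance argument is the standard one for this fact.
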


\begin{theorem}\label{layeredtreealgoA} Let $G$ be a graph and let $d$ be a positive integer. Given a tree decomposition $\mathcal{T} = (T,\{X_t\}_{t\in V(T)})$ of $G$ and a layering $(V_1,\ldots,V_{m})$ of $G$ such that, for each bag $X_t$ and layer $V_i$, $\alpha(G[X_t \cap V_i]) \leq k$, it is possible to compute in $O(|V(T)| \cdot (|V(G)| + |E(G)|))$ time a tree decomposition $\mathcal{T'} = (T,\{X'_t\}_{t\in V(T)})$ of $G^{1+2d}$ and a layering $(V'_1,\ldots,V'_{\lceil \frac{m}{1+2d} \rceil})$ of $G^{1+2d}$ such that, for each bag $X'_t$ and layer $V'_i$, $\alpha(G^{1+2d}[X'_t \cap V'_i]) \leq (1+4d)k$. In particular, if $G$ has layered tree-independence number $k$, then $G^{1+2d}$ has layered tree-independence number at most $(1+4d)k$.
\end{theorem}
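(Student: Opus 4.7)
The plan is to invoke \Cref{sametreeA} with $k = 1$ and parameter $d$, which gives $G^{1+2d} \cong G(\mathcal{H})$ where $\mathcal{H} = \{H_v\}_{v \in V(G)}$ and each $H_v = G[N^{d}_{G}[v]]$ is a connected subgraph of $G$. We then build $\mathcal{T}'$ and the new layering out of $\mathcal{T}$ and $(V_1, \ldots, V_m)$ via two standard constructions, and finish with a pigeonhole argument on a carefully chosen charging map.

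For the decomposition, use the same tree $T$ and set
\[
X'_t \;=\; \{\,v \in V(G) : V(H_v) \cap X_t \neq \varnothing\,\} \;=\; \{\,v \in V(G) : \mathrm{dist}_G(v, X_t) \leq d\,\}.
\]
That $(T, \{X'_t\}_{t \in V(T)})$ is a tree decomposition of $G(\mathcal{H}) \cong G^{1+2d}$ is the standard observation that feeding a family of connected subgraphs through a tree decomposition of $G$ yields a tree decomposition of the associated intersection-and-adjacency graph (conditions (T1)--(T3) follow from connectedness of each $H_v$). Each $X'_t$ is computable from $X_t$ by a single multi-source BFS of depth $d$ in $O(|V(G)| + |E(G)|)$ time, giving the claimed total running time.

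For the layering, partition the original layers into consecutive blocks of size $1+2d$: for $i \in \{1, \ldots, \lceil m/(1+2d)\rceil\}$, let
\[
V'_i \;=\; \bigcup_{j=(i-1)(1+2d)+1}^{\min(i(1+2d),\,m)} V_j .
\]
If $uv \in E(G^{1+2d})$ with $u \in V'_i$ and $v \in V'_j$, $i < j$, fix any $G$-path of length at most $1+2d$ between them; walking along it changes the original layer index by at most $1$ per step, so the layer indices of $u$ and $v$ differ by at most $1+2d$. Comparing with the block boundaries forces $j - i \leq 1$, so $(V'_i)$ is a valid layering of $G^{1+2d}$.

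The key step is bounding $\alpha(G^{1+2d}[X'_t \cap V'_i])$. Let $S \subseteq X'_t \cap V'_i$ be independent in $G^{1+2d}$, so any two distinct $u, v \in S$ satisfy $\mathrm{dist}_G(u,v) \geq 2+2d$. For each $v \in S$ choose $\phi(v) \in X_t$ with $\mathrm{dist}_G(v, \phi(v)) \leq d$; such a vertex exists because $v \in X'_t$. The triangle inequality yields $\mathrm{dist}_G(\phi(u), \phi(v)) \geq 2 + 2d - 2d = 2$ for distinct $u,v \in S$, so $\phi$ is injective and $\phi(S)$ is an independent set of $G[X_t]$. Since every $v \in S$ lies in one of the $1+2d$ layers composing $V'_i$, each $\phi(v)$ lies in one of at most $1+4d$ consecutive original layers $V_{a_1}, \ldots, V_{a_{1+4d}}$. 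By pigeonhole, some $V_{a_j}$ contains at least $|S|/(1+4d)$ vertices of $\phi(S)$; these form an independent set of $G[X_t \cap V_{a_j}]$, so by hypothesis $|S|/(1+4d) \leq k$, i.e.\ $|S| \leq (1+4d)k$. The main (minor) obstacle is precisely verifying that $\phi$ is injective with independent image---the constants $1+2d$ and $d$ have been chosen exactly to make this go through.
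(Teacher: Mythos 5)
Your proposal is correct and follows essentially the same route as the paper's proof: the same bags $X'_t=\{v:\mathrm{dist}_G(v,X_t)\leq d\}$ justified via \Cref{sametreeA} (the paper cites \cite[Lemma~6.1]{DaMS22} for the tree-decomposition check), the same block layering, and the same pigeonhole on representatives in $X_t$; your only cosmetic deviation is phrasing the non-adjacency of the representatives via the triangle inequality in $G$ rather than via the definition of $G(\mathcal{H})$, which is an equivalent argument.
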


\begin{proof} Let $\mathcal{T} = (T,\{X_t\}_{t\in V(T)})$ and $(V_1,\ldots,V_{m})$ be the given tree decomposition and layering of $G$, respectively. For each vertex $u \in V(G)$, let $l(u)$ be the unique index $i$ such that $u \in V_i$. For each $v \in V(G)$, let $H_v$ be the subgraph of $G$ induced by the vertices at distance at most $d$ from $v$, and let $\mathcal{H} = \{H_v\}_{v\in V(G)}$. Let $\mathcal{T'} = (T,\{X'_t\}_{t\in V(T)})$, with $X'_t = \{v \in V(G) : V(H_v) \cap X_t \neq \varnothing\}$ for each $t \in V(T)$. By \cite[Lemma~6.1]{DaMS22}, $\mathcal{T}'$ is a tree decomposition of $G(\mathcal{H})$ and hence, by \Cref{sametreeA}, of $G^{1+2d}$ as well. Moreover, for each $v \in V(G)$, $V(H_v) \cap X_t \neq \varnothing$ if and only if $v$ is at distance at most $d$ from $X_t$ in $G$ and the set of all such vertices $v$ can be computed using BFS in $O(|V(G)| + |E(G)|)$ time. Therefore, $\mathcal{T}'$ can be computed in $O(|V(T)| \cdot (|V(G)| + |E(G)|)$ time. For each $1 \leq i \leq \lceil \frac{m}{1+2d} \rceil$, let now $V'_i = \bigcup_{(1+2d)(i-1) < j \leq (1+2d)i}V_j$. We claim that $(V'_1,\ldots,V'_{\lceil \frac{m}{1+2d} \rceil})$ is a layering of $G^{1+2d}$. Clearly, these sets partition $V(G^{1+2d})$. Moreover, for each edge $uv \in E(G^{1+2d})$, we have that $d_{G}(u,v) \leq 1+2d$ and so $|l(i)-l(j)| \leq 1+2d$. Consequently, $u$ and $v$ belong to either the same $V'_{i}$ or to consecutive $V'_{i}$'s. 

We now show that, for each $1 \leq i \leq \lceil \frac{m}{1+2d} \rceil$ and $t\in V(T)$, $\alpha(G^{1+2d}[V'_i \cap   X'_t]) \leq (1+4d)k$. Suppose, to the contrary, that $\alpha(G^{1+2d}[V'_i \cap X'_t]) > (1+4d)k$ for some $i$ and $t$ as above. Then there exists an independent set $U = \{{u_1},\ldots,{u_{(1+4d)k+1}}\}$ of $G^{1+2d}$ contained in $V'_i \cap X'_t$. By construction, $V(H_{u_p}) \cap X_t \neq \varnothing$ for each $1 \leq p \leq (1+4d)k+1$ and, for each such $p$, we pick an arbitrary vertex in $V (H_{u_p}) \cap X_t$ and denote it by $r(u_p,t)$. Note that, for $p \neq q$, $r(u_p,t)$ is distinct from and non-adjacent to $r(u_q,t)$ in $G$, for otherwise either $H_{u_p}$ and $H_{u_q}$ share a vertex or there is an edge connecting them in $G$, from which $u_pu_q \in G(\mathcal{H}) = G^{1+2d}$, contradicting the fact that $U$ is an independent set of $G^{1+2d}$. Now, by construction, each $r(u_p,t)$ belongs to $V(H_{u_p})$ and so is at distance at most $d$ in $G$ from $u_p$. Moreover, $u_p$ belongs to the layer $V'_i$ of $G^{1+2d}$, for each $1\leq p \leq (1+4d)k + 1$. Therefore, $(1+2d)(i-1) < l(u_p) \leq (1+2d)i$ and $(1+2d)(i-1) - d < l(r(u_p,t)) \leq (1+2d)i + d$, for each $1\leq p \leq (1+4d)k + 1$. That is, each $r(u_p,t)$ belongs to one of the $1+4d$ consecutive layers $V_{(1+2d)(i-1) - d + 1},\ldots, V_{(1+2d)i + d}$ of $G$. Since $|U| > (1+4d)k$, at least one such layer $V_r$ contains $k+1$ vertices of the form $r(u_p,t)$. Therefore, $\alpha(G[V_r \cap X_t]) > k$, a contradiction.
\end{proof}

\begin{lemma}\label{countereven} Fix an even $k \in \mathbb{N}$. There exist graphs $G$ with tree-independence number $1$ and such that the layered tree-independence number of $G^k$ is arbitrarily large. 
\end{lemma}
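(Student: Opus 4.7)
The plan is to exhibit, for each fixed even $k = 2m$ and every $n \in \mathbb{N}$, a chordal graph $G_{m,n}$ such that $G_{m,n}^k$ contains $K_{n,n}$ as an induced subgraph. Since chordal graphs are exactly the graphs of tree-independence number $1$ \cite{DaMS22}, and since layered tree-independence number is monotone under induced subgraphs---restricting every bag and every layer of a witnessing tree decomposition and layering to the vertex subset yields a valid decomposition and layering in which the independence numbers can only drop---\Cref{layeredKnnA} will then give that $G_{m,n}^k$ has layered tree-independence number at least $n/5$, which tends to infinity with $n$.

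The construction is the following. Take a clique $Z = \{z_{ij} : i, j \in [n]\}$. For each $i \in [n]$, attach a pendant path $x_i - p^{(i)}_1 - \cdots - p^{(i)}_{m-1}$ to $Z$ by joining $p^{(i)}_{m-1}$ to every $z_{ij}$, $j \in [n]$; symmetrically, for each $j \in [n]$, attach a pendant path $y_j - q^{(j)}_1 - \cdots - q^{(j)}_{m-1}$ by joining $q^{(j)}_{m-1}$ to every $z_{ij}$, $i \in [n]$. When $m = 1$ the paths collapse so that $x_i$ (resp.\ $y_j$) is directly adjacent to the corresponding row (resp.\ column) of $Z$.

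To check chordality of $G_{m,n}$, I would first observe that the subgraph $H$ induced by $Z \cup \{p^{(i)}_{m-1}\}_i \cup \{q^{(j)}_{m-1}\}_j$ is a split graph (clique part $Z$, independent part $\{p^{(i)}_{m-1}\} \cup \{q^{(j)}_{m-1}\}$), hence chordal. The remaining vertices are then added one by one as pendants (starting from $p^{(i)}_{m-2}$ and $q^{(j)}_{m-2}$ and proceeding outward to $x_i$ and $y_j$), each having a single previously added neighbour, so chordality is preserved. For the distance computations, the only edge out of $x_i$ goes into $P^{(i)}$, so the shortest $x_i$-$y_j$ path has length $(m-1) + 1 + 1 + (m-1) = 2m = k$, while any $x_i$-$x_{i'}$ path with $i \neq i'$ must additionally use an edge inside $Z$ in order to switch between the $Z$-neighbourhoods of $p^{(i)}_{m-1}$ and $p^{(i')}_{m-1}$, giving distance at least $2m + 1 > k$; the analogous bound holds for pairs of $y_j$'s. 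Hence $G_{m,n}^k[X \cup Y] = K_{n,n}$ as an induced subgraph.

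The delicate point is arranging chordality together with the lower bound on $d_{G_{m,n}}(x_i, x_{i'})$: a naive construction in which the $x_i$-$y_j$ paths meet through a single subdivision vertex (as in the bipartite incidence graph) creates an induced $2k$-cycle and fails to be chordal. Replacing the bridge by the large clique $Z$ simultaneously absorbs all such long induced cycles through chords inside $Z$ and forces the extra edge inside $Z$ that prevents $x_i x_{i'}$ from becoming an edge in $G_{m,n}^k$, which is the reason both conditions can be satisfied at once.
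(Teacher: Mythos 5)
Your proof is correct, and it follows the same overall strategy as the paper --- produce a chordal graph (hence tree-independence number $1$) whose $k$-th power contains a large induced biclique, then invoke \Cref{layeredKnnA} together with the (easily verified, and rightly made explicit by you) fact that layered tree-independence number is monotone under induced subgraphs --- but it differs in how the key ingredient is obtained. The paper simply cites \cite[Lemma~5.16]{LMM24}, a black-box result stating that for every graph $H$ and fixed even $k$ there is a chordal graph $G$ with $H$ induced in $G^k$, and applies it to $H = K_{5n,5n}$; you instead build the required chordal graph explicitly for $H = K_{n,n}$ (a clique $Z$ on $n^2$ vertices with pendant paths of length $m-1$ attached to its ``rows'' and ``columns''), verify chordality via a split graph plus pendant additions, and compute the relevant distances by hand. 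The trade-off is clear: the paper's argument is shorter and leans on a more general statement, while yours is self-contained and elementary and exhibits concrete witnesses. One small polish point: your claim that any $x_i$--$x_{i'}$ path ``must additionally use an edge inside $Z$'' is not literally true (a path may detour through a $q$-vertex, avoiding $Z$-edges, at the cost of being even longer); the cleanest justification of $d_G(x_i,x_{i'}) \geq 2m+1$ is that $p^{(i)}_{m-1}$ and $p^{(i')}_{m-1}$ are non-adjacent and have no common neighbour, so they are at distance at least $3$, and every $x_i$--$x_{i'}$ path passes through both. With that rewording the argument is airtight.
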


\begin{proof} By \cite[Lemma~5.16]{LMM24}, for every graph $H$, there exists a chordal graph $G$ such that $G^k$ contains an induced subgraph isomorphic to $H$. Take $H = K_{5n,5n}$ and one such $G$. By \Cref{layeredKnnA}, the layered tree-independence number of $G^k$ is at least $n$, whereas $\tin(G) = 1$.
\end{proof}


\subsection{Intersection graphs with bounded layered tree-independence number}\label{sec:layeredlemmas}

In this section we show that the following graph classes have bounded layered tree-independence number: intersection graphs of similarly-sized $c$-fat families of objects in $\mathbb{R}^2$ (in particular, unit disk graphs), unit-width rectangle graphs, and VPG/EPG graphs where the paths have bounded horizontal part and number of bends. As it will appear from the proofs, our tree decompositions witnessing this are in fact path decompositions. \Cref{sqrttreealphaA} then implies that graphs from these classes have $O(\sqrt{n})$ tree-independence number and we argue that this is tight up to constant factors. 

Note that, in general, intersection graphs of disks or rectangles in the plane and of paths on a grid (VPG/EPG graphs) all have unbounded layered tree-independence number (see \Cref{Fig:unboundlayered}). This follows from the fact that large bicliques and large grids with a dominating vertex have large layered tree-independence number, thanks to \Cref{layeredKnnA} and the proof of \Cref{equivlayeredA}, respectively.

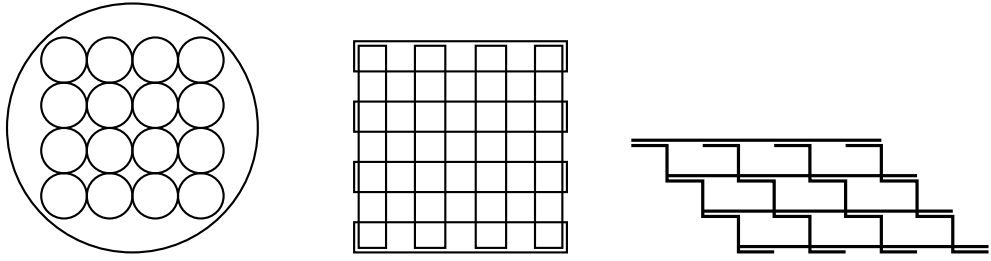
\begin{figure}[h!]
\centering
\begin{subfigure}{.3\linewidth}
\centering
\begin{tikzpicture}[scale=.6]
\draw[thick] (0,0) circle (.5cm);
\draw[thick] (1,0) circle (.5cm);
\draw[thick] (2,0) circle (.5cm);
\draw[thick] (3,0) circle (.5cm);

\draw[thick] (0,1) circle (.5cm);
\draw[thick] (1,1) circle (.5cm);
\draw[thick] (2,1) circle (.5cm);
\draw[thick] (3,1) circle (.5cm);

\draw[thick] (0,2) circle (.5cm);
\draw[thick] (1,2) circle (.5cm);
\draw[thick] (2,2) circle (.5cm);
\draw[thick] (3,2) circle (.5cm);

\draw[thick] (0,3) circle (.5cm);
\draw[thick] (1,3) circle (.5cm);
\draw[thick] (2,3) circle (.5cm);
\draw[thick] (3,3) circle (.5cm);

\draw[thick] (1.5,1.5) circle (2.75cm);
\end{tikzpicture}
\end{subfigure}
\begin{subfigure}{.3\linewidth}
\centering
\begin{tikzpicture}[scale=.4]

\draw[thick] (0,0) rectangle (7,1);
\draw[thick] (0,2) rectangle (7,3);
\draw[thick] (0,4) rectangle (7,5);
\draw[thick] (0,6) rectangle (7,7);

\draw[thick] (.15,.15) rectangle (1.05,6.85);
\draw[thick] (2,.15) rectangle (3,6.85);
\draw[thick] (4,.15) rectangle (5,6.85);
\draw[thick] (5.95,.15) rectangle (6.85,6.85);

\end{tikzpicture}
\end{subfigure}
\begin{subfigure}{.34\linewidth}
\centering
\begin{tikzpicture}[scale=.47,rotate=90]

\draw[very thick] (0,0) -- (0,1) -- (1,1) -- (1,2) -- (2,2) -- (2,3) -- (3,3) -- (3,4);
\draw[very thick] (0,2) -- (0,3) -- (1,3) -- (1,4) -- (2,4) -- (2,5) -- (3,5) -- (3,6);
\draw[very thick] (0,4) -- (0,5) -- (1,5) -- (1,6) -- (2,6) -- (2,7) -- (3,7) -- (3,8);
\draw[very thick] (0,6) -- (0,7) -- (1,7) -- (1,8) -- (2,8) -- (2,9) -- (3,9) -- (3,10);

\draw[very thick] (.15,0) -- (.15,7);
\draw[very thick] (1.15,1) -- (1.15,8);
\draw[very thick] (2.15,2) -- (2.15,9);
\draw[very thick] (3.15,3) -- (3.15,10);

\end{tikzpicture}
\end{subfigure}
\caption{Examples showing that intersection graphs of disks and rectangles in $\mathbb{R}^2$ and VPG/EPG graphs have unbounded layered tree-independence number: Realization of the $4\times 4$-grid graph with a dominating vertex as a disk graph (left), and of $K_{4,4}$ as an intersection graph of rectangles (middle) and as a VPG/EPG graph (right).}\label{Fig:unboundlayered}
\end{figure}

A collection of objects is $k$-\textit{similarly-sized} if the ratio of the largest and smallest object diameter is at most some absolute constant $k \geq 1$. 

\begin{theorem}\label{layeredfat}
Let $G$ be the intersection graph of a $k$-similarly-sized $c$-fat family $\mathcal{O}$ of $n$ objects in $\mathbb{R}^2$, for some constants $c$ and $k$. It is possible to compute, in $O(n\log n)$ time, a tree decomposition $\mathcal{T} = (T,\{X_t\}_{t\in V(T)}\})$ and a layering $(V_1, V_2, \ldots)$ of $G$ such that $|V(T)| = O(n)$ and, for each bag $X_t$ and layer $V_i$, $\alpha(G[X_t \cap V_i]) \leq \lceil 2\sqrt{2}k \rceil c$. In particular, $G$ has layered tree-independence number at most $\lceil 2\sqrt{2}k \rceil c$.
\end{theorem}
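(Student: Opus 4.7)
The plan is to build both a layering of $G$ and a path decomposition of $G$ (which is in particular a tree decomposition) by two orthogonal sweeps of the geometric realization, with the horizontal strip height and the vertical strip width calibrated to the sizes of the objects. Write $s_{\min}$ and $s_{\max}$ for the minimum and maximum sizes of objects in $\mathcal{O}$; both are computable in $O(n)$ time under the standard constant-time primitives for $c$-fat collections. The key quantitative input, obtained from $s(O)\le\mathrm{diam}(O)\le\sqrt{2}\,s(O)$ (valid for every object $O\subset\mathbb{R}^2$) together with $k$-similar-sizedness, is the inequality
\[
\frac{s_{\max}}{s_{\min}} \;\le\; \frac{D_{\max}}{D_{\min}/\sqrt{2}} \;\le\; \sqrt{2}\,k .
\]

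First I will define the layering: assign each $O\in\mathcal{O}$ to layer $V_i$ with $i=\lfloor y_{\max}(O)/s_{\max}\rfloor$, where $y_{\max}(O)$ is the topmost $y$-coordinate of $O$. Because the $y$-extent of $O$ is at most $s(O)\le s_{\max}$, every object in $V_i$ lies inside the horizontal slab $\mathbb{R}\times[(i-1)s_{\max},(i+1)s_{\max})$. Hence any intersection of $O_1\in V_i$ and $O_2\in V_j$ has $y$-coordinate in both slabs, forcing $|i-j|\le 1$, so $(V_i)_i$ is a valid layering.

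Next I will construct the path decomposition by a vertical sweep. Sort the $2n$ leftmost and rightmost $x$-coordinates of the objects in $O(n\log n)$ time; between two consecutive events the set of \emph{active} objects (those whose $x$-range contains the sweep position) is constant, and I introduce one bag $X_t$ per maximal constant-active-set interval, linking them along the sweep to form a path $T$ with $|V(T)|\le 2n$. The three tree-decomposition axioms are immediate: every object is active throughout its own $x$-range (T1); any intersection at a point $p$ has $p_x$ contained in both $x$-ranges, placing both objects together in the bag covering $p_x$ (T2); and the $x$-range of each path-connected object is a real interval, so the bags containing it form a consecutive subpath (T3). To bound $\alpha(G[X_t\cap V_i])$, I will fix any $x_t$ inside the interval associated with $X_t$: every $O\in X_t\cap V_i$ contains $x_t$ in its $x$-range and lies inside the $y$-slab of $V_i$, so $O$ intersects the closed rectangle
\[
R \;=\; [x_t-s_{\min}/2,\,x_t+s_{\min}/2]\times[(i-1)s_{\max},(i+1)s_{\max}]
\]
of dimensions $s_{\min}\times 2s_{\max}$. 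Cover $R$ by $\lceil 2s_{\max}/s_{\min}\rceil$ closed axis-aligned squares of side $s_{\min}$ stacked vertically; each $O\in X_t\cap V_i$ intersects at least one of them, and since every object has size $\ge s_{\min}$, the $c$-fatness of $\mathcal{O}$ guarantees at most $c$ pairwise non-intersecting objects meeting any such square. Summing,
\[
\alpha(G[X_t\cap V_i]) \;\le\; c\cdot\lceil 2s_{\max}/s_{\min}\rceil \;\le\; c\cdot\lceil 2\sqrt{2}\,k\rceil,
\]
as required.

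The only real subtlety will be the $O(n\log n)$ running time: individual bags can be large (the total bag size can be superlinear), so the decomposition must be produced in compressed form, for example by outputting the path and, for each object, only the pair of bag indices at which it enters and leaves the active set. With this standard encoding, every geometric computation (sizes, top/left/right coordinates, event sort, sweep, and layer/bag labeling) fits inside the claimed $O(n\log n)$ budget.
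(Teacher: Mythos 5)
Your proposal is correct and takes essentially the same route as the paper's proof: a sweep/strip-based path decomposition along one coordinate axis, a layering by fixed-width strips along the other axis sized to the largest object extent, and a pigeonhole argument covering the relevant $s_{\min}\times 2s_{\max}$ region by $\lceil 2s_{\max}/s_{\min}\rceil \le \lceil 2\sqrt{2}\,k\rceil$ boxes of side $s_{\min}$, each of which meets at most $c$ pairwise non-intersecting objects by $c$-fatness. The differences (swapping the roles of the two axes, working with sizes instead of diameters and converting the diameter ratio at the end, and outputting the decomposition in compressed form) are cosmetic and do not change the argument.
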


\begin{proof} Let $d_{\mathrm{min}}$ and $d_{\mathrm{max}}$ be the minimum and maximum diameter of the objects in $\mathcal{O}$, respectively. Since $\mathcal{O}$ is $k$-similarly-sized, $d_{\mathrm{max}} \leq k\cdot d_{\mathrm{min}}$. Without loss of generality, the family $\mathcal{O}$ is contained in the positive quadrant. For each vertex $v \in V(G)$, let $O_{v}$ be the corresponding object in $\mathcal{O}$. For each $O_{v}$, let $I_v$ be its projection onto the $y$-axis. Since $O_{v}$ is path-connected, $I_v$ is an interval, and let $p_v \geq q_v$ be its endpoints. Sort $\{p_v,q_v\}_{v \in V(G)}$ in $O(n\log n)$ time to obtain an ordering $z_1\leq \cdots \leq z_{2n}$. 

For each $i \in \mathbb{N}$, let $C_i = \{(x,y) \in \mathbb{R}^2 : (i-1)kd_{\mathrm{min}} \leq x < ikd_{\mathrm{min}}\}$ be the \textit{$i$-th vertical strip}. For each $i \in \mathbb{N}$, the \textit{$i$-th horizontal strip} $R_i$ is defined as follows. If $i$ is odd, then $R_i = \{(x,y) \in \mathbb{R}^2 : z_{(i+1)/2} \leq y \leq \frac{z_{(i+1)/2}+z_{(i+3)/2}}{2}\}$, whereas if $i$ is even, then $R_i = \{(x,y) \in \mathbb{R}^2 : \frac{z_{i/2}+z_{(i+2)/2}}{2} \leq y \leq z_{(i+2)/2}\}$. Note that some horizontal strips might be degenerate, i.e., they are a horizontal line. See \Cref{Fig:layeredunit} for an example.

We first construct a tree decomposition of $G$. Consider a path $T$ with $4n-2$ vertices $\{t_1,\ldots,t_{4n-2}\}$ and, for each $1\leq i \leq 4n-2$, let $X_{t_i} = \{v \in V(G) : O_v \cap R_i \neq \varnothing\}$. Clearly, for each $v\in V(G)$, there exists $i$ with $v \in X_{t_i}$. Consider now an edge $uv \in E(G)$. There exists a point $(x, y) \in \mathbb{R}^2$ contained in both $O_{u}$ and $O_{v}$, and this point belongs to some $R_k$. Therefore, $\{u, v\} \subseteq X_{t_{k}}$. Finally, for each $v \in V(G)$, the horizontal strips intersecting $O_{v}$ are consecutive, and so $\{t_i \in V(T) : v \in X_{t_i}\}$ induces a subpath of $T$. This shows that $\mathcal{T} = (T,\{X_{t_i}\}_{t_i\in V(T)})$ is a tree decomposition of $G$ and it is easy to see that such tree decomposition can be computed in $O(n)$ time. 

We now construct a layering of $G$ as follows. For each $j \in \mathbb{N}$, let $V_j$ be the set of vertices whose corresponding objects have leftmost points inside $C_j$. It is easy to see that $(V_1, V_2, \ldots)$ is a partition of $V(G)$. Consider now $i$ and $j$ with $i-j \geq 2$. If $u\in V_i$ and $v\in V_j$, then $O_v$ does not intersect $C_i$, as each object has diameter at most $k\cdot d_{\mathrm{min}}$, and $O_u$ does not intersect $C_{i-1}$. Hence, $O_u$ does not intersect $O_v$ and so $uv \not \in E(G)$. Therefore, $(V_1, V_2, \ldots)$ is a layering of $G$, which can clearly be computed in $O(n)$ time. 

Consider an arbitrary bag $X_{t_i}$ and layer $V_j$ as defined above. Suppose that $i$ is odd (the case $i$ even is similar and thus left to the reader). Let $v \in X_{t_i} \cap V_j$. Then $O_v$ intersects the line $y = z_{\frac{i+1}{2}}$, or else the lowermost points of $O_v$ would have $y$-coordinate strictly between consecutive $z_j$'s. Moreover, the leftmost points of $O_v$ belong to $C_j$. But then, since the diameter of $O_v$ is at most $k\cdot d_{\mathrm{min}}$, each object $O_v$ with $v \in X_{t_i} \cap V_j$, intersects the line $y = z_{\frac{i+1}{2}}$ in a point with $x$-coordinate in the interval $[(j-1)kd_{\mathrm{min}}, (j+1)kd_{\mathrm{min}}]$. Consider now a family $\mathcal{F}$ of axis-aligned closed boxes with pairwise disjoint interiors and satisfying the following properties: Each box has size $d_{\mathrm{min}}/\sqrt{2}$, lower side on the line $y = z_{\frac{i+1}{2}}$, and the union of the lower sides of these boxes covers the horizontal line segment $\{(x, y) \in \mathbb{R}^2 : (j-1)kd_{\mathrm{min}} \leq x \leq (j+1)kd_{\mathrm{min}} \ \mbox{and} \ y = z_{\frac{i+1}{2}}\}$. Clearly, there exists such a family $\mathcal{F}$ of size $\lceil 2\sqrt{2}k \rceil$. Since $\mathcal{O}$ is $c$-fat and each object in $\mathcal{O}$ has size at least $d_{\mathrm{min}}/\sqrt{2}$, we have that each box in $\mathcal{F}$ intersects at most $c$ pairwise non-intersecting objects from $\mathcal{O}$. This implies that there are at most $\lceil 2\sqrt{2}k \rceil c$ pairwise non-intersecting objects from $\mathcal{O}$ whose corresponding vertices belong to $X_{t_i} \cap V_j$, thus concluding the proof.
\end{proof}

As mentioned above, the similarly-sized constraint cannot be dropped, for example because of the class of disk graphs. However, we will see in \Cref{fatA} that, for any fixed $d \geq 2$, the class of intersection graphs of $c$-fat families of objects in $\mathbb{R}^d$ is efficiently fractionally $\tin$-fragile, a property weaker than boundedness of layered tree-independence number.  

\begin{figure}
\begin{center}
\begin{tikzpicture}[scale=1.2]
\draw[->,>=stealth] (-1.5,-2.7) -- (-1.5,3.9);
\draw[->,>=stealth] (-1.5,-2.7) -- (5.5,-2.7);

\node[draw=none,below] at (-1.5,-2.7) {{\small 0}};

\draw[dashed] (.5,3.9) -- (.5,-2.7) node[below] {{\small $2r$}};
\draw[dashed] (2.5,3.9) -- (2.5,-2.7) node[below] {{\small $4r$}};
\draw[dashed] (4.5,3.9) -- (4.5,-2.7) node[below] {{\small $6r$}};

\node[circle,draw,fill=PowderBlue,inner sep=2,opacity=.3] at (-4,3.41) {};
\draw (-4,3.41) ellipse (.8cm and .15cm); 

\draw[thick] (-4,3.26) -- (-4,3.16);

\node[circle,draw,fill=Pink,inner sep=2,opacity=.3] at (-4.2,3.01) {};
\node[circle,draw,fill=PowderBlue,inner sep=2,opacity=.3] at (-3.8,3.01) {};
\draw (-4,3.01) ellipse (.8cm and .15cm); 

\draw[thick] (-4,2.86) -- (-4,2.75);

\node[circle,draw,fill=Pink,inner sep=2,opacity=.3] at (-4.2,2.6) {};
\node[circle,draw,fill=PowderBlue,inner sep=2,opacity=.3] at (-3.8,2.6) {};
\draw (-4,2.6) ellipse (.8cm and .15cm); 

\draw[thick] (-4,2.45) -- (-4,2.36);

\node[circle,draw,fill=Pink,inner sep=2,opacity=.3] at (-4.4,2.21) {};
\node[circle,draw,fill=Olive,inner sep=2,opacity=.3] at (-4,2.21) {};
\node[circle,draw,fill=PowderBlue,inner sep=2,opacity=.3] at (-3.6,2.21) {};
\draw (-4,2.21) ellipse (.8cm and .15cm); 

\draw[thick] (-4,2.06) -- (-4,1.97);
\draw[thick,dotted] (-4,1.97) -- (-4,1.8);

\node[draw=none] at (-4,1.1) {\vdots};

\draw[thick] (-4,.14) -- (-4,.23);
\draw[thick,dotted] (-4,.23) -- (-4,.4);

\node[circle,draw,fill=Olive,inner sep=2,opacity=.3] at (-4.4,-.01) {};
\node[circle,draw,fill=Gold,inner sep=2,opacity=.3] at (-4,-.01) {};
\node[circle,draw,fill=Coral,inner sep=2,opacity=.3] at (-3.6,-.01) {};
\draw (-4,-.01) ellipse (.8cm and .15cm); 

\draw[thick] (-4,-.16) -- (-4,-.25);
\draw[thick,dotted] (-4,-.25) -- (-4,-.42);

\node[draw=none] at (-4,-1.16) {\vdots};

\draw[thick] (-4,-2.16) -- (-4,-2.07);
\draw[thick,dotted] (-4,-2.07) -- (-4,-1.9);

\node[circle,draw,fill=RoyalBlue,inner sep=2,opacity=.3] at (-4,-2.31) {};
\draw (-4,-2.31) ellipse (.8cm and .15cm); 

\node[draw=none,below] at (-4,-2.7) {$\mathcal{T}$};

\node[circle,draw,fill=Pink,inner sep=3.2,opacity=.3] at (-.7,-3.2) {};
\node[circle,draw,fill=PowderBlue,inner sep=3.2,opacity=.3] at (-.3,-3.2) {};
\node[draw=none] at (-.5,-3.7) {{\small $V_1$}};

\node[circle,draw,fill=Olive,inner sep=3.2,opacity=.3] at (1.1,-3.2) {};
\node[circle,draw,fill=Gold,inner sep=3.2,opacity=.3] at (1.5,-3.2) {};
\node[circle,draw,fill=Coral,inner sep=3.2,opacity=.3] at (1.9,-3.2) {};
\node[draw=none] at (1.5,-3.7) {{\small $V_2$}};

\node[circle,draw,fill=RoyalBlue,inner sep=3.2,opacity=.3] at (3.5,-3.2) {};
\node[draw=none] at (3.5,-3.7) {{\small $V_3$}};

\draw[dashed] (5.5,.79) -- (-1.5,.79) node[left] {{\small $z_7$}};
\draw[dashed] (5.5,2.81) -- (-1.5,2.81) node[left] {{\small $z_{11}$}};

\draw[dashed] (5.5,1.59) -- (-1.5,1.59) node[left] {{\small $z_9$}};
\draw[dashed] (5.5,3.61) -- (-1.5,3.61) node[left] {{\small $z_{12}$}};

\draw[dashed] (5.5,-.01) -- (-1.5,-.01) node[left] {{\small $z_5 = z_6$}};
\draw[dashed] (5.5,2.01) -- (-1.5,2.01) node[left] {{\small $z_{10}$}};

\draw[dashed] (5.5,-.71) -- (-1.5,-.71) node[left] {{\small $z_3$}};
\draw[dashed] (5.5,1.31) -- (-1.5,1.31) node[left] {{\small $z_8$}}; 

\draw[dashed] (5.5,-2.01) -- (-1.5,-2.01) node[left] {{\small $z_2$}};

\draw[dashed] (5.5,-.39) -- (-1.5,-.39) node[left] {{\small $z_4$}};
\draw[dashed] (5.5,-2.41) -- (-1.5,-2.41) node[left] {{\small $z_1$}};

\draw[dashed] (5.5,3.21) -- (-1.5,3.21);
\draw[dashed] (5.5,2.41) -- (-1.5,2.41);
\draw[dashed] (5.5,1.8) -- (-1.5,1.8);
\draw[dashed] (5.5,1.45) -- (-1.5,1.45);
\draw[dashed] (5.5,1.045) -- (-1.5,1.045);
\draw[dashed] (5.5,.39) -- (-1.5,.39);
\draw[dashed] (5.5,-.2) -- (-1.5,-.2);
\draw[dashed] (5.5,-.55) -- (-1.5,-.55);
\draw[dashed] (5.5,-1.36) -- (-1.5,-1.36);
\draw[dashed] (5.5,-2.21) -- (-1.5,-2.21);

\node[draw=none] at (5.8,-2.31) {{\small $R_1$}};
\node[draw=none] at (5.8,-2.11) {{\small $R_2$}};
\node[draw=none] at (5.8,-1.685) {{\small $R_3$}};
\node[draw=none] at (5.8,-1.035) {{\small $R_4$}};
\node[draw=none] at (5.8,-.42) {\vdots};
\node[draw=none] at (6.2,-.01) {{\small $R_9=R_{10}$}};
\node[draw=none] at (5.8,1.13) {\vdots};
\node[draw=none] at (5.8,2.21) {{\small $R_{19}$}};
\node[draw=none] at (5.8,2.6) {{\small $R_{20}$}};
\node[draw=none] at (5.8,3.01) {{\small $R_{21}$}};
\node[draw=none] at (5.8,3.41) {{\small $R_{22}$}};

\draw[thin] (0,1.8) circle (1cm);
\draw[fill=Pink,opacity=.3] (0,1.8) circle (1cm);

\draw[thin] (1.3,2.6) circle (1cm);
\draw[fill=PowderBlue,opacity=.3] (1.3,2.6) circle (1cm);

\draw[thin] (2.3,1) circle (1cm);
\draw[fill=Olive,opacity=.3] (2.3,1) circle (1cm);

\draw[thin] (2.9,.3) circle (1cm);
\draw[fill=Gold,opacity=.3] (2.9,.3) circle (1cm);

\draw[thin] (1.7,-1) circle (1cm);
\draw[fill=Coral,opacity=.3] (1.7,-1) circle (1cm);

\draw[thin] (3.6,-1.4) circle (1cm);
\draw[fill=RoyalBlue,opacity=.3] (3.6,-1.4) circle (1cm);
\end{tikzpicture}
\caption{A tree decomposition and a layering of a unit disk graph witnessing layered tree-independence number at most $2$.}
\label{Fig:layeredunit}
\end{center}
\end{figure}
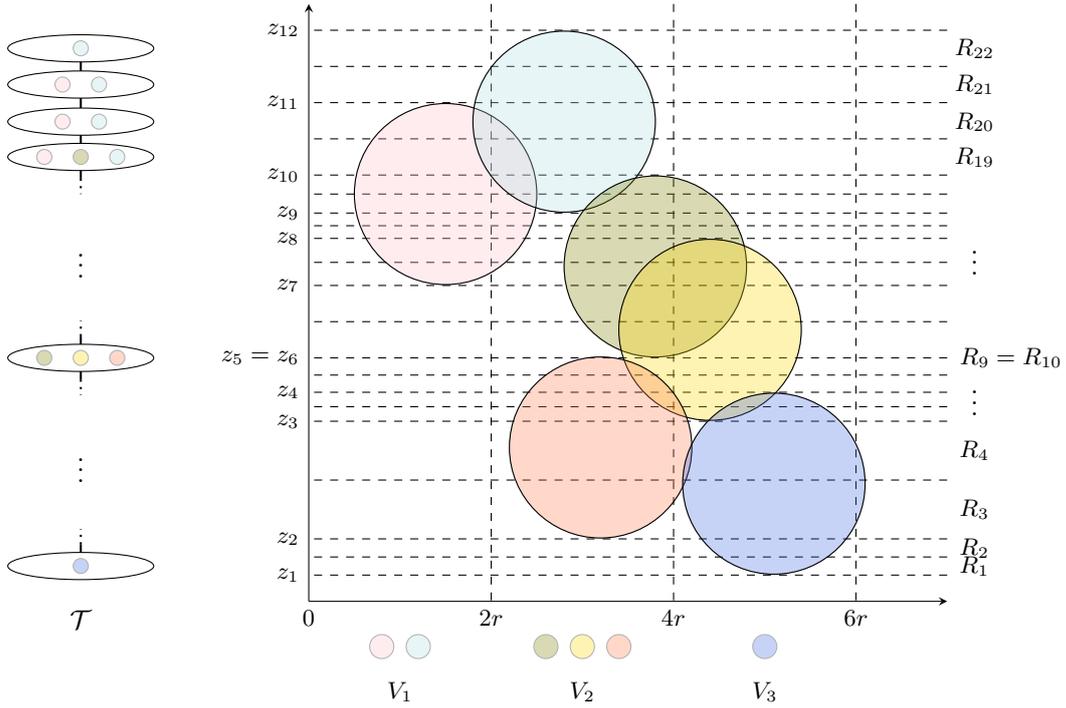

An important special case of \Cref{layeredfat} is that of unit disk graphs, for which we can obtain the following improved bound.

\begin{theorem}\label{unitdisklayeredA} Let $G$ be the intersection graph of a family $\mathcal{D}$ of $n$ unit disks. It is possible to compute, in $O(n\log n)$ time, a tree decomposition $\mathcal{T} = (T,\{X_t\}_{t\in V(T)}\})$ and a layering $(V_1, V_2, \ldots)$ of $G$ such that $|V(T)| = O(n)$ and, for each bag $X_t$ and layer $V_i$, $\alpha(G[X_t \cap V_i]) \leq 3$. In particular, $G$ has layered tree-independence number at most $3$. 
\end{theorem}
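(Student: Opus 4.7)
The plan is to mirror the construction from the proof of Theorem~\ref{layeredfat}, but exploit the rigid geometry of congruent disks to improve the generic bound $\lceil 2\sqrt{2}k\rceil c$ down to $3$. Without loss of generality, I rescale so that every disk in $\mathcal{D}$ has diameter $1$. After sorting in $O(n\log n)$ time the $2n$ endpoints of the vertical projections of the disks to produce horizontal strips $R_1, R_2, \ldots$, I define the path decomposition $\mathcal{T}$ whose $i$-th bag $X_{t_i}$ consists of those $v\in V(G)$ whose disk $O_v$ meets $R_i$, and the layering $(V_1, V_2,\ldots)$ in which $V_j$ collects those $v$ whose disk has its leftmost point inside the vertical strip $C_j$ of width $1$. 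The verifications that $\mathcal{T}$ is a valid tree decomposition, that $(V_1, V_2,\ldots)$ is indeed a layering, that $|V(T)|=O(n)$, and that both can be produced in $O(n\log n)$ time, are word-for-word the same as in the proof of Theorem~\ref{layeredfat} (with $k=1$ and $d_{\min}=1$).

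The only new ingredient is the sharper bound on $\alpha(G[X_{t_i}\cap V_j])$. Fix any such pair $(t_i, V_j)$ and, for concreteness, assume $i$ is odd (the even case is symmetric). For every $v\in X_{t_i}\cap V_j$ the disk $O_v$ satisfies both (i) its leftmost point lies in $C_j$, so that the center of $O_v$ has $x$-coordinate in an interval of length $1$, and (ii) it meets the horizontal line $y=z_{(i+1)/2}$, so that the center has $y$-coordinate in an interval of length $1$ as well (since $O_v$ has radius $1/2$). Hence every such center lies inside a fixed axis-aligned $1\times 1$ square $Q$. I then partition $Q$ into four closed sub-squares of side $1/2$; each sub-square has diameter $\sqrt{2}/2 < 1$. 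Two centers contained in a common sub-square are therefore at distance strictly less than $1$, so the corresponding (closed) unit disks intersect and the two vertices are adjacent in $G$. If $\alpha(G[X_{t_i}\cap V_j])\geq 4$, then the pigeonhole principle produces two centers of pairwise non-intersecting disks lying in the same sub-square, a contradiction. Therefore $\alpha(G[X_{t_i}\cap V_j])\leq 3$, yielding the claimed upper bound.

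The construction is essentially a calibration of the preceding proof, so no new algorithmic obstacle arises; the one substantive step is the four-sub-square pigeonhole, which works cleanly precisely because, for unit disks of diameter $1$, the bounding box of the admissible centers in a bag-layer intersection is itself of side $1$.
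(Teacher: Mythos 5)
Your construction (rescaling to diameter~$1$, the strips, the path decomposition, the layering, and the running-time analysis) is the same as the paper's, which likewise reuses the proof of \Cref{layeredfat}; and your localization of the centers of the disks in $X_{t_i}\cap V_j$ to a $1\times 1$ square $Q$ matches the paper's region $S$ of side $2r$. The problem is the final step. Partitioning $Q$ into four sub-squares of side $1/2$ and invoking the pigeonhole principle with \emph{four} centers does not force two centers into the same sub-square: four points can occupy the four sub-squares one each (e.g.\ points near the four corners of $Q$), so no contradiction arises. As written, your argument only shows that \emph{five} pairwise non-adjacent vertices are impossible, i.e.\ $\alpha(G[X_{t_i}\cap V_j])\leq 4$, which yields layered tree-independence number at most $4$, not the claimed $3$.

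To get the bound $3$ you need the sharper geometric fact that four points in a square of side $1$ cannot be pairwise at distance strictly greater than $1$ (the maximum of the minimum pairwise distance for four points in a unit square is exactly $1$, attained only at the corners, while disjointness of closed unit-diameter disks forces strict inequality). The paper proves this directly: if the four centers are in convex position, some interior angle of the quadrilateral is at least $90^{\circ}$, so by the law of cosines the corresponding diagonal has length greater than $\sqrt{2}$, exceeding the diameter of $Q$; if they are not in convex position, the point inside the triangle formed by the other three sees some pair at an angle of at least $120^{\circ}$, forcing a distance greater than $\sqrt{3}$. Either case contradicts all four centers lying in $Q$. Replacing your four-cell pigeonhole with an argument of this kind closes the gap; everything else in your proposal is fine.
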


\begin{proof} Let $r$ be the common radius of the disks. Without loss of generality, the family $\mathcal{D}$ is contained in the positive quadrant. For each vertex $v \in V(G)$, let $D_{v}$ be the corresponding disk in $\mathcal{D}$. Let $z_1\leq \cdots \leq z_{2n}$ be the ordering obtained as in the proof of \Cref{layeredfat} by projecting the disks onto the $y$-axis. The construction of a tree decomposition and layering of $G$ is similar to that of \Cref{layeredfat}. For each $i \in \mathbb{N}$, let $C_i = \{(x,y) \in \mathbb{R}^2 : 2(i-1)r \leq x < 2ir\}$ be the $i$-th vertical strip. For each $i \in \mathbb{N}$, the $i$-th horizontal strip $R_i$ is defined as in \Cref{layeredfat}. We now construct a tree decomposition $\mathcal{T} = (T,\{X_{t_i}\}_{t_i\in V(T)})$ and a layering $(V_1, V_2, \ldots)$ of $G$ precisely as in \Cref{layeredfat} (see \Cref{Fig:layeredunit}). 

Consider now an arbitrary bag $X_{t_i}$ and layer $V_j$. Suppose that $i$ is odd (the case $i$ even is similar). Let $v \in X_{t_i} \cap V_j$. Then it is easy to see that the center of $D_v$ must belong to the region $S = \{(x, y) \in \mathbb{R}^2 : 2ir - r \leq x < 2ir + r \ \mbox{and} \ z_{\frac{i+1}{2}} - r \leq y \leq z_{\frac{i+1}{2}} + r\}$. But there are at most three non-intersecting unit disks with centers in $S$. Indeed, suppose four such centers lie inside $S$. By assumption, they are at pairwise distance greater than $2r$. If they are in convex position, then two of them must be at distance greater than $2\sqrt{2}r$. If they are not in convex position, then two of them must be at distance greater than $2\sqrt{3}r$. In either case we obtain a contradiction to the fact that the diameter of $S$ is at most $2\sqrt{2}r$. 
\end{proof}

Note that \Cref{unitdisklayeredA} (and hence \Cref{layeredfat} as well) cannot be extended to $\mathbb{R}^d$ with $d \geq 3$. Indeed, every $3$-dimensional grid graph is the intersection graph of a family of unit balls in $\mathbb{R}^3$. Moreover, since the class of $3$-dimensional grids has unbounded layered treewidth \cite{DEW17}, it has unbounded layered tree-independence number as well. This follows from the fact that, by Ramsey's theorem, every graph with tree-independence number at most $k$ and clique number at most $p$ has treewidth at most $R(k+1,p+1)-2$ \cite{DaMS22}.  

It is easy to see that the collection of unit-width rectangles is not $c$-fat. However, we now argue that intersection graphs of unit-width rectangles form another class of bounded layered tree-independence number.

\begin{theorem}\label{layeredrectA}
Let $G$ be the intersection graph of a family $\mathcal{R}$ of $n$ unit-width rectangles. It is possible to compute, in $O(n\log n)$ time, a tree decomposition $\mathcal{T} = (T,\{X_t\}_{t\in V(T)}\})$ and a layering $(V_1, V_2, \ldots)$ of $G$ such that $|V(T)| = O(n)$ and, for each bag $X_t$ and layer $V_i$, $\alpha(G[X_t \cap V_i]) \leq 1$. In particular, $G$ has layered tree-independence number at most $1$.
\end{theorem}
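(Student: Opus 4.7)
Mimic the construction used in \Cref{layeredfat,unitdisklayeredA}, but exploit the unit-width assumption to force each bag-layer intersection to be a clique instead of merely an independent-set-bounded set. Let $c$ denote the common width of the rectangles in $\mathcal{R}$, assumed horizontal, and for each $v \in V(G)$ let $R_v$ be the corresponding rectangle, with $x$-projection $[a_v, a_v + c]$ and $y$-projection $[b_v, t_v]$. Without loss of generality place $\mathcal{R}$ in the positive quadrant.

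First, define the layering from vertical strips of width $c$: set $C_j = \{(x, y) \in \mathbb{R}^2 : (j-1)c \leq x < jc\}$ and let $V_j = \{v : a_v \in C_j\}$. If $u \in V_i$, $v \in V_j$ with $i - j \geq 2$, then $a_v + c < jc + c \leq (i-1)c \leq a_u$, so $R_u$ and $R_v$ are disjoint; hence $(V_1, V_2, \ldots)$ is a valid layering, computable in $O(n)$ time. Conversely, if $u, v \in V_j$, then $|a_u - a_v| < c$, so $R_u$ and $R_v$ have overlapping $x$-projections.

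Next, build the tree (in fact path) decomposition by sorting the $2n$ values $\{b_v, t_v\}_{v}$ in $O(n \log n)$ time as $z_1 \leq \cdots \leq z_{2n}$ and defining the horizontal half-slabs $R_i$ exactly as in \Cref{layeredfat}: for odd $i$, $R_i = [z_{(i+1)/2},\,(z_{(i+1)/2}+z_{(i+3)/2})/2]$, and for even $i$, $R_i = [(z_{i/2}+z_{(i+2)/2})/2,\,z_{(i+2)/2}]$. Take the path $T$ on $4n-2$ nodes with $X_{t_i} = \{v : R_v \cap R_i \neq \varnothing\}$; the argument from \Cref{layeredfat} applies verbatim to show $\mathcal{T} = (T, \{X_{t_i}\})$ is a tree decomposition of $G$, constructible in $O(n)$ time once the $z_i$ are sorted.

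The crux is to argue that $\alpha(G[X_{t_i} \cap V_j]) \leq 1$. The key observation is that all rectangles appearing in a single bag $X_{t_i}$ share a common horizontal line in their $y$-projections. Indeed, suppose $i$ is odd and set $j_0 = (i+1)/2$; I claim every $v \in X_{t_i}$ satisfies $b_v \leq z_{j_0} \leq t_v$. The only $z$-value lying in $R_i = [z_{j_0}, (z_{j_0}+z_{j_0+1})/2]$ is $z_{j_0}$ itself, since no $z_k$ lies strictly between $z_{j_0}$ and $z_{j_0+1}$ and $(z_{j_0}+z_{j_0+1})/2 < z_{j_0+1}$ (the degenerate case $z_{j_0}=z_{j_0+1}$ is harmless). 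Hence if $R_v$ meets $R_i$ then either $b_v$ or $t_v$ equals $z_{j_0}$, or $R_v$'s $y$-projection straddles $R_i$; in each case $z_{j_0} \in [b_v, t_v]$. The case $i$ even is symmetric with $z_{j_0+1}$ playing the role of $z_{j_0}$. Consequently, any two rectangles in $X_{t_i}$ have $y$-projections that share the point $z_{j_0}$ (or $z_{j_0+1}$), and any two rectangles in the same layer $V_j$ have overlapping $x$-projections by the first paragraph. Combining these two facts, any two rectangles in $X_{t_i} \cap V_j$ intersect, so $X_{t_i} \cap V_j$ induces a clique in $G$ and $\alpha(G[X_{t_i} \cap V_j]) \leq 1$, as required.

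The only mildly delicate point is the half-slab definition of $R_i$, which is what rules out two rectangles sharing a bag without sharing a horizontal line; the analysis above is where that choice pays off, but it is entirely local and I do not foresee it being a serious obstacle.
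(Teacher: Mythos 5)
Your proof is correct and follows essentially the same route as the paper's: the identical vertical strips of width $c$ for the layering, the identical horizontal half-slabs and path decomposition from \Cref{layeredfat}, and the observation that all rectangles in a bag meet a common horizontal line $y=z_{j_0}$ while rectangles in a common layer have overlapping $x$-projections, forcing each bag-layer intersection to be a clique. The paper compresses this last step into ``clearly, there are no two disjoint such rectangles,'' whereas you spell out the endpoint/straddle case analysis, but the argument is the same.
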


\begin{proof} Let $c$ be the common width of the rectangles. Without loss of generality, the family $\mathcal{R}$ is contained in the positive quadrant. For each vertex $v \in V(G)$, let $R_{v}$ be the corresponding rectangle in $\mathcal{R}$. Let $z_1\leq \cdots \leq z_{2n}$ be the ordering obtained as in the proof of \Cref{layeredfat} by projecting the rectangles onto the $y$-axis. The construction of a tree decomposition and layering of $G$ is again similar to that of \Cref{layeredfat}, the only difference being in the definition of the vertical strips: For each $i \in \mathbb{N}$, let $C_i = \{(x,y) \in \mathbb{R}^2 : (i-1)c \leq x < ic\}$ be the $i$-th vertical strip. We then construct a tree decomposition $\mathcal{T} = (T,\{X_{t_i}\}_{t_i\in V(T)})$ and a layering $(V_1, V_2, \ldots)$ of $G$ precisely as in \Cref{layeredfat}. 

Consider an arbitrary bag $X_{t_i}$ and layer $V_j$. Suppose that $i$ is odd (the case $i$ even is similar). Let $v \in X_{t_i} \cap V_j$. Then the rectangle $R_v$ must intersect the line $y = z_{\frac{i+1}{2}}$ and the leftmost points of $R_v$ belong to $C_j$. Clearly, there are no two disjoint such rectangles, thus concluding the proof.  
\end{proof}

We now consider VPG and EPG graphs.

\begin{theorem}\label{layeredVPGA}
Let $G$ be a graph on $n$ vertices together with a grid representation $\mathcal{R} = (\mathcal{G}, \mathcal{P}, x)$ such that each path in $\mathcal{P}$ has horizontal part of length at most $\ell$, for some fixed $\ell \geq 1$, and number of bends constant. It is possible to compute, in $O(n\log n)$ time, a tree decomposition $\mathcal{T} = (T, \{X_t\}_{t\in V(T)})$ and a layering $(V_1,V_2\ldots)$ of $G$ such that $|V(T)| = O(n)$ and, for each bag $X_t$ and layer $V_i$, 
\begin{itemize}
\item $\alpha(G[X_t \cap V_i]) \leq 2\ell$, if $x = v$; 
\item $\alpha(G[X_t \cap V_i]) \leq 6\ell-1$, if $x = e$. 
\end{itemize}
In particular, $G$ has layered tree-independence number at most $2\ell$, if $x = v$, and at most $6\ell-1$, if $x = e$. 
\end{theorem}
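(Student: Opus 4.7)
The plan is to follow the strategy of \Cref{layeredfat} and \Cref{layeredrectA}, adapting the vertical-strip and horizontal-strip construction to the grid setting. Using the sorted endpoints $z_1 \leq \cdots \leq z_{2n}$ of the $y$-projections $I_v$ of the paths in $\mathcal{P}$, I would reuse verbatim the horizontal strips $R_i$ and the path decomposition $\mathcal{T} = (T, \{X_{t_i}\})$ with $X_{t_i} = \{v : P_v \cap R_i \neq \varnothing\}$; correctness of $\mathcal{T}$ is inherited from the proof of \Cref{layeredfat}. For the layering I would set $C_j = \{(x,y) : (j-1)\ell \leq x < j\ell\}$ and place $v$ in $V_j$ when the leftmost grid-point of $P_v$ lies in $C_j$. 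Because each path has horizontal part of length at most $\ell$, every grid-point of $P_v$ (for $v \in V_j$) has $x$-coordinate in $[(j-1)\ell, (j+1)\ell)$, so $|i-j| \geq 2$ forces $P_u$ and $P_v$ to have disjoint $x$-ranges and hence to not intersect. After an initial $O(n \log n)$ sort of the $y$-endpoints, both structures can be assembled in linear time.

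The key geometric input, proved exactly as in the analogous steps of \Cref{layeredfat} and \Cref{layeredrectA}, is that for odd $i$ (the even case being symmetric), every $v \in X_{t_i} \cap V_j$ satisfies $z_{(i+1)/2} \in I_v$. Since $P_v$ is a connected polygonal curve on the grid whose vertical edges span a single grid row, $P_v$ must contain at least one grid-point on the horizontal line $y = z_{(i+1)/2}$, and the layering restricts its $x$-coordinate to the integer range $[(j-1)\ell, (j+1)\ell - 1]$, which has exactly $2\ell$ values. This immediately yields the VPG bound: in the VPG case independence means pairwise vertex-disjoint paths, so distinct independent paths must pick distinct grid-points from this size-$2\ell$ candidate set, giving $\alpha(G[X_{t_i} \cap V_j]) \leq 2\ell$.

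The main obstacle is the EPG case, since paths may now share grid-points as long as they share no grid-edge, so a vertex-based charging argument no longer suffices; instead I would charge each path to a grid-\emph{edge}. The relevant set consists of the $2\ell - 1$ horizontal grid-edges lying on the segment $\{(x, z_{(i+1)/2}) : (j-1)\ell \leq x \leq (j+1)\ell - 1\}$ together with the $4\ell$ vertical grid-edges incident to the $2\ell$ candidate grid-points (two per grid-point, one going up and one going down), for a total of $6\ell - 1$ edges. I then need to verify that every $P_v$ uses at least one edge from this set: at any candidate grid-point of $P_v$ the incident edges used by $P_v$ lie among the four cardinal neighbors, and the bounded-$x$-range property guarantees that any such incident edge belongs to the enumerated set (the only way out would be a horizontal neighbor at $x = (j-1)\ell - 1$ or $x = (j+1)\ell$, which is ruled out since it would force a grid-point of $P_v$ outside its allowed $x$-range). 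Edge-disjointness then gives $\alpha(G[X_{t_i} \cap V_j]) \leq 6\ell - 1$, completing both bounds.
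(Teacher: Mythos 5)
Your proposal is correct and follows essentially the same route as the paper: the same horizontal-strip path decomposition and leftmost-point vertical-strip layering as in \Cref{layeredfat}, the same observation that each path in $X_{t_i}\cap V_j$ contains a grid-point on the row $y=z_{(i+1)/2}$ within $2\ell$ admissible columns, and the same counts of $2\ell$ grid-points (VPG) and $6\ell-1$ grid-edges (EPG). Your explicit justification that the two boundary horizontal edges are unusable because of the restricted $x$-range is exactly why the paper's bound is $6\ell-1$ rather than $6\ell+1$, a point the paper leaves implicit.
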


\begin{proof} For each vertex $v \in V(G)$, let $P_{v}$ be the corresponding path in $\mathcal{P}$. Let $z_1\leq \cdots \leq z_{2n}$ be the ordering obtained as in the proof of \Cref{layeredfat} by projecting the paths onto the $y$-axis. The construction of a tree decomposition and layering of $G$ is again similar to that of \Cref{layeredfat}, the only difference being in the definition of the vertical strips: For each $i \in \mathbb{N}$, let $C_i = \{(x,y) \in \mathbb{R}^2 : (i-1)\ell \leq x < i\ell\}$ be the $i$-th vertical strip. We then construct a tree decomposition $\mathcal{T} = (T,\{X_{t_i}\}_{t_i\in V(T)})$ and a layering $(V_1, V_2, \ldots)$ of $G$ precisely as in \Cref{layeredfat}. 

Consider an arbitrary bag $X_{t_i}$ and layer $V_j$. Suppose that $i$ is odd (the case $i$ even is similar). Let $I$ be an independent set of $G[X_{t_i} \cap V_j]$. Observe that, for each $v \in I \subseteq X_{t_i} \cap V_j$, the path $P_v$ contains a grid-point at the intersection of the row of $\mathcal{G}$ indexed by $z_{\frac{i+1}{2}}$ and a column of $\mathcal{G}$ indexed by $p$, for some $ (j-1)\ell \leq p < (j+1)\ell$. If $x = v$, then each such grid-point belongs to at most one $P_v$ with $v \in I$, from which $|I| \leq 2\ell$. If $x = e$, each of the at most $6\ell - 1$ grid-edges containing such grid-points belongs to at most one $P_v$ with $v \in I$ (or else two distinct paths $P_u$ and $P_v$ with $u, v \in I$ share a grid-edge), from which $|I| \leq 6\ell - 1$.                    
\end{proof}

\begin{remark} Let $G$ be the intersection graph of a family $\mathcal{O}$ of objects in $\mathbb{R}^2$. It is interesting to observe what are the key properties that guarantee boundedness of layered tree-independence number for $G$ in the previous four results. What we need in our arguments are the following two properties:
\begin{itemize}
\item The objects in $\mathcal{O}$ have similar widths: the ratio of the largest ($w_{\mathrm{max}}$) and smallest ($w_{\mathrm{min}}$) object width (i.e., length of the horizontal part) is bounded; 
\item Any horizontal segment of length at most $w_{\mathrm{max}}$ stabs a bounded number of pairwise non-intersecting objects from $\mathcal{O}$.    
\end{itemize}
\end{remark}

We have observed earlier (in the proof of \Cref{sqrttreealphaA}) that if a graph $G$ has layered tree-independence number at most $\ell$ and the witnessing layering consists of $c$ layers, then $G$ has tree-independence number at most $\ell c$. We can apply this easy observation to the tree decomposition and layering built in the proof of \Cref{unitdisklayeredA,layeredrectA,layeredVPGA}, respectively, in order to obtain constant tree-independence number in case the corresponding geometric realization is contained in an axis-aligned rectangle with bounded width. This is the content of the next three results, whose proofs are easy and thus only sketched. These results will then be used in \Cref{sec:improvedA} to obtain simple Baker-style PTASes for \textsc{Max Weight Independent Set} with running times matching or improving the state of the art.   

\begin{corollary}\label{diskA}
Let $G$ be the intersection graph of a family $\mathcal{D}$ of $n$ unit disks of common radius $r$ such that its geometric realization is contained in an axis-aligned rectangle of width at most $\ell$, for some integer $\ell > 0$. It is possible to compute, in $O(n\log n)$ time, a tree decomposition $\mathcal{T} = (T, \{X_t\}_{t\in V(T)})$ of $G$ such that $|V(T)| = O(n)$ and $\alpha(\mathcal{T}) \leq 3\lceil\frac{\ell}{2r}\rceil$. 
\end{corollary}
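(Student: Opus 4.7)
The plan is to directly invoke \Cref{unitdisklayeredA} and then exploit the bounded width of the enclosing rectangle to convert layered independence number into ordinary tree-independence number. First, I would apply the algorithm of \Cref{unitdisklayeredA} to $G$ and $\mathcal{D}$, which in time $O(n\log n)$ produces a tree decomposition $\mathcal{T}=(T,\{X_t\}_{t\in V(T)})$ with $|V(T)|=O(n)$ together with a layering $(V_1,V_2,\ldots)$ of $G$ satisfying $\alpha(G[X_t\cap V_j])\leq 3$ for every bag $X_t$ and every layer $V_j$. Recall that in that construction $V_j$ is defined via the vertical strips $C_j=\{(x,y)\in\mathbb{R}^2:2(j-1)r\leq x<2jr\}$: a vertex belongs to $V_j$ exactly when the leftmost point of its unit disk lies in $C_j$.

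Next, I would observe that the hypothesis on the geometric realization forces the layering to use only a bounded number of layers. Indeed, since the realization of $\mathcal{D}$ is contained in an axis-aligned rectangle of width at most $\ell$, the leftmost points of the disks occupy an $x$-range of length at most $\ell$, so at most $\lceil \ell/(2r)\rceil$ of the strips $C_j$ can contain such a point. By translating horizontally (which does not affect the intersection graph or the computation time), we may assume that the non-empty layers are $V_1,\ldots,V_c$ with $c\leq\lceil \ell/(2r)\rceil$.

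Finally, for each bag $X_t$, every independent set of $G[X_t]$ is the disjoint union of its intersections with the layers $V_1,\ldots,V_c$, each of which is an independent set of $G[X_t\cap V_j]$ and hence has size at most $3$. Therefore
\[
\alpha(G[X_t])\;\leq\;\sum_{j=1}^{c}\alpha(G[X_t\cap V_j])\;\leq\;3c\;\leq\;3\bigl\lceil\tfrac{\ell}{2r}\bigr\rceil,
\]
so $\alpha(\mathcal{T})\leq 3\lceil \ell/(2r)\rceil$ as required. The whole procedure runs in the $O(n\log n)$ time inherited from \Cref{unitdisklayeredA}, and $|V(T)|=O(n)$ is preserved. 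There is no genuine obstacle here: the only thing to be a little careful about is that the layering from \Cref{unitdisklayeredA} indeed uses at most $\lceil \ell/(2r)\rceil$ non-empty layers, which is immediate from its definition in terms of the strips $C_j$.
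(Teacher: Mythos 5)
Your proposal is correct and follows essentially the same route as the paper's own proof: apply \Cref{unitdisklayeredA}, note that the width-$\ell$ rectangle forces the layering (defined via the strips $C_j$ of width $2r$) to have at most $\lceil\frac{\ell}{2r}\rceil$ non-empty layers, and bound $\alpha(G[X_t])$ by summing the per-layer bound of $3$ over those layers. The only difference is cosmetic (your explicit horizontal translation versus the paper's implicit positioning), and in the edge case where $\ell$ is a multiple of $2r$ the bound still holds because each disk extends $2r$ to the right of its leftmost point, so the leftmost points actually span length at most $\ell-2r$.
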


\begin{proof} Build a tree decomposition $\mathcal{T} = (T, \{X_t\}_{t\in V(T)})$ and a layering $(V_1, V_2, \ldots)$ of $G$ as in the proof of \Cref{unitdisklayeredA}. Recall that, for each $j \in \mathbb{N}$, $V_j$ is the set of vertices whose corresponding disks have leftmost points inside the $j$-th vertical strip $C_j = \{(x,y) \in \mathbb{R}^2 : 2(j-1)r \leq x < 2jr\}$. Since the geometric realization of $G$ is contained in an axis-aligned rectangle of width at most $\ell$, the disks in $\mathcal{D}$ intersect at most $\lceil\frac{\ell}{2r}\rceil$ vertical strips and so there are at most $\lceil\frac{\ell}{2r}\rceil$ layers. By \Cref{unitdisklayeredA}, for each bag $X_t$ and layer $V_i$, $\alpha(G[X_t \cap V_i]) \leq 3$. Therefore, $\alpha(\mathcal{T}) \leq \max_{t \in V(T)}\sum_{i\in\mathbb{N}}\alpha(G[X_t \cap V_i]) \leq 3\lceil\frac{\ell}{2r}\rceil$.    
\end{proof}

Mutatis mutandis we obtain the following.

\begin{corollary}\label{rectangleA}
Let $G$ be the intersection graph of a family $\mathcal{R}$ of $n$ unit-width rectangles of common width $c$ such that its geometric realization is contained in an axis-aligned rectangle of width at most $\ell$, for some integer $\ell > 0$. It is possible to compute, in $O(n\log n)$ time, a tree decomposition $\mathcal{T} = (T, \{X_t\}_{t\in V(T)})$ of $G$ such that $|V(T)| = O(n)$ and $\alpha(\mathcal{T}) \leq \lceil\frac{\ell}{c}\rceil$. 
\end{corollary}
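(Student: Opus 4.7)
The plan is to mimic the proof of \Cref{diskA}, replacing the disk-specific construction by the one used for unit-width rectangles in \Cref{layeredrectA}. First I would invoke \Cref{layeredrectA} to obtain, in $O(n\log n)$ time, a tree decomposition $\mathcal{T} = (T, \{X_t\}_{t\in V(T)})$ with $|V(T)| = O(n)$ together with a layering $(V_1, V_2, \ldots)$ of $G$ such that $\alpha(G[X_t \cap V_j]) \leq 1$ for every bag $X_t$ and layer $V_j$. Recall from the construction in \Cref{layeredrectA} that $V_j$ is the set of vertices whose corresponding rectangles have their leftmost points lying in the $j$-th vertical strip $C_j = \{(x,y) \in \mathbb{R}^2 : (j-1)c \leq x < jc\}$.

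Next I would bound the number of nonempty layers. Since the geometric realization of $G$ is contained in an axis-aligned rectangle of width at most $\ell$, the leftmost points of the rectangles in $\mathcal{R}$ all lie within an interval of length at most $\ell$ on the $x$-axis. Hence the rectangles have leftmost points in at most $\lceil \ell/c \rceil$ consecutive vertical strips, so only $\lceil \ell/c \rceil$ of the layers $V_j$ are nonempty.

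Finally I would conclude by the trivial bound
\begin{equation*}
\alpha(\mathcal{T}) \;=\; \max_{t \in V(T)} \alpha(G[X_t]) \;\leq\; \max_{t \in V(T)} \sum_{j \in \mathbb{N}} \alpha(G[X_t \cap V_j]) \;\leq\; 1 \cdot \left\lceil \frac{\ell}{c} \right\rceil \;=\; \left\lceil \frac{\ell}{c} \right\rceil,
\end{equation*}
where the first inequality uses that any independent set in $G[X_t]$ partitions into independent sets across the layers $V_j$. There is no real obstacle here: the argument is entirely parallel to \Cref{diskA}, with the bound of $3$ on per-bag-per-layer independence replaced by $1$ from \Cref{layeredrectA}, and with strips of width $c$ (instead of $2r$) yielding $\lceil \ell/c \rceil$ layers instead of $\lceil \ell/(2r) \rceil$.
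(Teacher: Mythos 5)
Your proof is correct and follows essentially the same route as the paper, which proves \Cref{rectangleA} ``mutatis mutandis'' from \Cref{diskA}: build the tree decomposition and layering from \Cref{layeredrectA}, observe that the width-$\ell$ bounding rectangle leaves at most $\lceil\ell/c\rceil$ nonempty layers, and sum the per-bag-per-layer bound of $1$ over the layers.
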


\begin{corollary}\label{pathA}
Let $G$ be a $n$-vertex graph together with a grid representation $\mathcal{R} = (\mathcal{G}, \mathcal{P},x)$ such that $\mathcal{G}$ contains at most $\ell$ columns, for some integer $\ell \geq 1$, and each path in $\mathcal{P}$ has number of bends constant. It is possible to compute, in $O(n\log n)$ time, a tree decomposition $\mathcal{T} = (T, \{X_t\}_{t\in V(T)})$ of $G$ such that $|V(T)| = O(n)$ and:    
\begin{itemize}
\item $\alpha(\mathcal{T}) \leq \ell$, if $x = v$;
\item $\alpha(\mathcal{T}) \leq 3\ell - 1$, if $x = e$.  
\end{itemize}
\end{corollary}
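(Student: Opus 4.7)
The plan is to mirror the derivations of \Cref{diskA} and \Cref{rectangleA} from \Cref{unitdisklayeredA} and \Cref{layeredrectA} respectively, specializing the construction of \Cref{layeredVPGA} to the regime where the whole grid $\mathcal{G}$ has at most $\ell$ columns. In this regime, the layering built in that proof (whose layers index the vertical strips) collapses to a single layer, so the per-bag per-layer independence bound given there becomes directly a bound on $\alpha(\mathcal{T})$; there is no need for any separate layered-to-global step as in \Cref{sqrttreealphaA}.

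\emph{Step 1 (build $\mathcal{T}$).} First I would sort, in $O(n\log n)$ time, the $2n$ endpoints $z_1\leq\cdots\leq z_{2n}$ of the projections of the paths $P_v$ onto the $y$-axis. Using these, define the horizontal strips $R_i$ exactly as in \Cref{layeredfat,layeredVPGA}, let $T$ be a path on $4n-2=O(n)$ nodes, and set $X_{t_i}=\{v\in V(G):P_v\cap R_i\neq\varnothing\}$. The verification that $\mathcal{T}=(T,\{X_{t_i}\}_{t_i\in V(T)})$ is a tree decomposition of $G$ transfers verbatim from \Cref{layeredVPGA}, and the construction obviously runs in $O(n\log n)$ time with $|V(T)|=O(n)$.

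\emph{Step 2 (bound $\alpha(\mathcal{T})$).} Fix a bag $X_{t_i}$ and an independent set $I\subseteq X_{t_i}$; by symmetry assume $i$ is odd, the even case being analogous. As in the proof of \Cref{layeredVPGA}, each $v\in I$ contributes a grid-point lying on the row of $\mathcal{G}$ indexed by $z_{(i+1)/2}$, and since $\mathcal{G}$ has at most $\ell$ columns there are at most $\ell$ such candidate grid-points. If $x=v$, distinct members of $I$ must occupy distinct grid-points, yielding $|I|\leq \ell$. If $x=e$, distinct members of $I$ cannot share a grid-edge, and the grid-edges containing these at most $\ell$ candidate grid-points consist of at most $\ell-1$ horizontal edges (between consecutive candidates on the row) together with at most $2\ell$ vertical edges (one above and one below each candidate), hence $|I|\leq 3\ell-1$.

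I do not anticipate a substantial obstacle: the statement is essentially the layered bound of \Cref{layeredVPGA} read off in the degenerate case where the natural layering uses a single layer, and the only piece of bookkeeping requiring care is the edge-count for $x=e$, which already appears in that earlier proof. The bound $|V(T)|=O(n)$ and the $O(n\log n)$ running time are inherited directly from the construction, so no further work is needed on that front.
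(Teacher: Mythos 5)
Your proposal is correct and follows essentially the same route as the paper: build the horizontal-strip tree decomposition of \Cref{layeredVPGA} and then bound the independence number of each whole bag directly, using that all the relevant grid-points lie on a single row spanning at most $\ell$ columns, which gives $\ell$ for $x=v$ and (via the $\ell-1$ horizontal plus $2\ell$ vertical grid-edges) $3\ell-1$ for $x=e$. The introductory remark about the layering collapsing to one layer is not load-bearing (and the strips could in principle split the columns into two layers), but your Step 2 does not rely on it, so the argument stands as written.
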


\begin{proof} Note that a direct application of \Cref{layeredVPGA} gives the upper bounds $2\ell$ (if $x = v$) and $6\ell-1$ (if $x = e$). These can be easily improved as follows. Simply build a tree decomposition $\mathcal{T} = (T, \{X_t\}_{t\in V(T)})$ as in the proof of \Cref{layeredVPGA}. Now, let $t \in V(T)$ and let $I$ be an independent set of $G[X_t]$. For each $v \in I$, the path $P_v$ contains a grid-point on a fixed row of $\mathcal{G}$. But these grid-points are within $\ell$ columns, and so $|I| \leq \ell$, if $x = v$, and $|I| \leq 3\ell - 1$, if $x = e$.                    
\end{proof}

In general, bounded layered tree-independence number implies sublinear (in the number of vertices) tree-independence number. More specifically, \Cref{layeredfat,layeredrectA,layeredVPGA}, paired with \Cref{sqrttreealphaA}, have the following immediate consequence.

\begin{corollary}\label{tighttreealpha} If $G$ is a graph on $n$ vertices belonging to one of the following classes, then $G$ has $O(\sqrt{n})$ tree-independence number:
\begin{itemize}
\item Intersection graphs of $c$-fat $k$-similarly-sized families of objects in $\mathbb{R}^2$, for some constants $c$ and $k$ (hence, in particular, unit disk graphs);
\item Intersection graphs of unit-width rectangles in $\mathbb{R}^2$;
\item VPG or EPG graphs where each path has bounded horizontal part and number of bends.
\end{itemize}
\end{corollary}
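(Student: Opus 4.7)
The plan is to combine the layered tree-independence number bounds established earlier in the section with \Cref{sqrttreealphaA}, which precisely converts bounded layered tree-independence number into $O(\sqrt{n})$ tree-independence number. Since the three cases are already handled by distinct theorems, the proof is essentially a dispatch: in each case, invoke the relevant theorem to produce a tree decomposition $\mathcal{T}$ and a layering whose intersection has bounded independence number, then feed this data into \Cref{sqrttreealphaA}.

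Concretely, for the first bullet I would apply \Cref{layeredfat}: a $c$-fat $k$-similarly-sized family in $\mathbb{R}^2$ has layered tree-independence number at most $\lceil 2\sqrt{2}k\rceil c$, a constant depending only on $c$ and $k$. For the second bullet, \Cref{layeredrectA} gives the even stronger bound of $1$ for intersection graphs of unit-width rectangles. For the third bullet, \Cref{layeredVPGA} gives the bound $2\ell$ for VPG and $6\ell-1$ for EPG, where $\ell$ is the constant bound on the horizontal part. In all three cases, the layered tree-independence number is bounded by a constant $k_0$ depending only on the parameters of the class, not on $n$.

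Having secured a constant upper bound $k_0$ on the layered tree-independence number, I would then directly apply \Cref{sqrttreealphaA}, which states that every $n$-vertex graph of layered tree-independence number $k_0$ has tree-independence number at most $2\sqrt{k_0 n} = O(\sqrt{n})$. This yields the desired conclusion for all three classes simultaneously.

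There is essentially no obstacle here: the statement is a clean corollary, and the work was carried out in the preceding theorems. The only point worth noting for correctness is that the constants absorbed into the $O(\sqrt{n})$ notation depend on the class parameters ($c$ and $k$ for fat objects, the horizontal-part bound $\ell$ for VPG/EPG), which is consistent with the way the constants enter \Cref{layeredfat,layeredrectA,layeredVPGA}. The proof therefore amounts to writing out these three one-line applications.
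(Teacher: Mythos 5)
Your proposal is correct and matches the paper's argument exactly: the corollary is stated there as an immediate consequence of \Cref{layeredfat,layeredrectA,layeredVPGA} paired with \Cref{sqrttreealphaA}, which is precisely your dispatch. Nothing further is needed.
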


In particular, unit disk graphs have $O(\sqrt{n})$ tree-independence number, from which it is easy to see that unit disk graphs of bounded degree have $O(\sqrt{n})$ treewidth. The latter was first observed by Fomin et al.~\cite{FLS12}. 

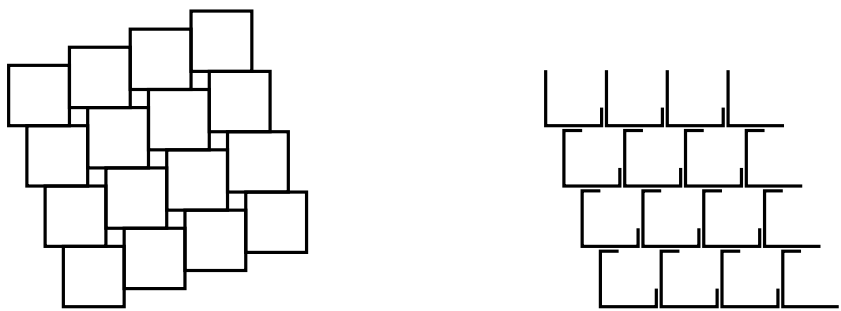
\begin{figure}
\begin{subfigure}{.45\linewidth}
\begin{center}
\begin{tikzpicture}[scale=0.8]
\draw[very thick] (0,0) rectangle (1,1);
\draw[very thick] (-.3,1) rectangle (.7,2); 
\draw[very thick] (-.6,2) rectangle (.4,3);
\draw[very thick] (-.9,3) rectangle (.1,4);

\draw[very thick] (1,.3) rectangle (2,1.3);
\draw[very thick] (.7,1.3) rectangle (1.7,2.3); 
\draw[very thick] (.4,2.3) rectangle (1.4,3.3);
\draw[very thick] (.1,3.3) rectangle (1.1,4.3);

\draw[very thick] (2,.6) rectangle (3,1.6);
\draw[very thick] (1.7,1.6) rectangle (2.7,2.6); 
\draw[very thick] (1.4,2.6) rectangle (2.4,3.6);
\draw[very thick] (1.1,3.6) rectangle (2.1,4.6);

\draw[very thick] (3,.9) rectangle (4,1.9);
\draw[very thick] (2.7,1.9) rectangle (3.7,2.9); 
\draw[very thick] (2.4,2.9) rectangle (3.4,3.9);
\draw[very thick] (2.1,3.9) rectangle (3.1,4.9);
\end{tikzpicture}
\end{center}
\end{subfigure}
\hspace*{.5cm}
\begin{subfigure}{.45\linewidth}
\begin{center}
\begin{tikzpicture}[scale=0.8]
\draw[very thick] (.3,.92) -- (0,.92) -- (0,0) -- (.92,0) -- (.92,.3);
\draw[very thick] (1.3,.92) -- (1,.92) -- (1,0) -- (1.92,0) -- (1.92,.3);
\draw[very thick] (2.3,.92) -- (2,.92) -- (2,0) -- (2.92,0) -- (2.92,.3);
\draw[very thick] (3.3,.92) -- (3,.92) -- (3,0) -- (3.92,0);

\draw[very thick] (0,1.92) -- (-.3,1.92) -- (-.3,1) -- (.62,1) -- (.62,1.3);
\draw[very thick] (1,1.92) -- (.7,1.92) -- (.7,1) -- (1.62,1) -- (1.62,1.3);
\draw[very thick] (2,1.92) -- (1.7,1.92) -- (1.7,1) -- (2.62,1) -- (2.62,1.3);
\draw[very thick] (3,1.92) -- (2.7,1.92) -- (2.7,1) -- (3.62,1);

\draw[very thick] (-.3,2.92) -- (-.6,2.92) -- (-.6,2) -- (.32,2) -- (.32,2.3);
\draw[very thick] (.7,2.92) -- (.4,2.92) -- (.4,2) -- (1.32,2) -- (1.32,2.3);
\draw[very thick] (1.7,2.92) -- (1.4,2.92) -- (1.4,2) -- (2.32,2) -- (2.32,2.3);
\draw[very thick] (2.7,2.92) -- (2.4,2.92) -- (2.4,2) -- (3.32,2);

\draw[very thick] (-.9,3.92) -- (-.9,3) -- (.02,3) -- (.02,3.3);
\draw[very thick] (.1,3.92) -- (.1,3) -- (1.02,3) -- (1.02,3.3);
\draw[very thick] (1.1,3.92) -- (1.1,3) -- (2.02,3) -- (2.02,3.3);
\draw[very thick] (2.1,3.92) -- (2.1,3) -- (3.02,3);
\end{tikzpicture}
\end{center}
\end{subfigure}
\caption{The $4\times 4$-grid graph as the intersection graph of a family of unit squares, and as a VPG/EPG graph where each path has bounded horizontal part and at most $3$ bends.}
\label{Fig:gridgraph}
\end{figure}

We now argue that the bounds obtained in \Cref{tighttreealpha} are tight up to constant factors. Since every grid graph can be realized as the intersection graph of a family of unit disks or unit-width rectangles, and is a VPG/EPG graph where each path has bounded horizontal part and number of bends (see \Cref{Fig:unboundlayered,Fig:gridgraph}), it is enough to show that the $n \times n$-grid graph has tree-independence number $\Omega(n)$. In order to do so, we first recall some definitions. A pair of vertex subsets $(A,B)$ is a \textit{separation} in a graph $G$ if $A \cup B = V(G)$ and there is no edge between $A \setminus B$ and $B \setminus A$. The \textit{size} of a separation $(A, B)$ is the quantity $|A \cap B|$. A separation $(A,B)$ is \textit{balanced} if $|A \setminus B| \leq 2|V(G)|/3$ and $|B \setminus A| \leq 2|V(G)|/3$. We also need the following known result (see \cite[Lemma~7.20]{CFK} for a proof).

\begin{lemma}[Folklore]\label{separator} Let $\mathcal{T}=(T,\{X_t\}_{t\in V(T)})$ be a tree decomposition of a graph $G$. There exists a balanced separation $(A, B)$ in $G$ such that $A \cap B = X_t$, for some bag $X_t$ of $\mathcal{T}$.
\end{lemma}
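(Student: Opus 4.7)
The plan is to perform an orientation argument on $T$ in order to locate a ``central'' node $t^{\ast}$ whose bag $X_{t^{\ast}}$ already separates $G$ into subtrees of moderate mass, and then greedily group those subtrees into the two sides of a balanced separation. For an edge $e=tt'$ of $T$, removing $e$ splits $T$ into subtrees $T_{t\to t'}\ni t'$ and $T_{t'\to t}\ni t$; write $V_{t\to t'}=\bigcup_{s\in V(T_{t\to t'})}X_s$. Property (T3) gives $V_{t\to t'}\cap V_{t'\to t}\subseteq X_t\cap X_{t'}$, and a short double count (every vertex of $G$ lies in $V_{t\to t'}\cup V_{t'\to t}=V(G)$, with overlaps only inside $X_t\cap X_{t'}$) yields
$$|V_{t\to t'}\setminus X_t|+|V_{t'\to t}\setminus X_{t'}|=|V(G)|-|X_t\cap X_{t'}|\leq |V(G)|.$$
I would then orient $e$ from $t$ to $t'$ precisely when $|V_{t\to t'}\setminus X_t|>2|V(G)|/3$; the identity above forces at most one such orientation per edge.

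Starting from any node of $T$ and repeatedly following an outgoing edge cannot revisit a vertex (because $T$ is acyclic and, by uniqueness of orientation, the edge just traversed cannot be reoriented backwards), so after finitely many steps I reach a node $t^{\ast}$ all of whose incident edges are either unoriented or incoming. Equivalently, every neighbor $t_i$ of $t^{\ast}$ satisfies $w_i:=|V_{t^{\ast}\to t_i}\setminus X_{t^{\ast}}|\leq 2|V(G)|/3$. Since $X_{t^{\ast}}$ separates the distinct $V_{t^{\ast}\to t_i}\setminus X_{t^{\ast}}$'s (another consequence of (T3)), these sets partition $V(G)\setminus X_{t^{\ast}}$, and in particular $\sum_i w_i=|V(G)|-|X_{t^{\ast}}|$.

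A standard two-bin packing argument then partitions the neighbors: if some $w_j>|V(G)|/3$, put $\{j\}$ alone in $I_1$ and the rest in $I_2$; otherwise all items are $\leq |V(G)|/3$ and a first-fit pass yields $I_1,I_2$ with each side of weight at most $(|V(G)|+|V(G)|/3)/2=2|V(G)|/3$. Setting $A:=X_{t^{\ast}}\cup\bigcup_{i\in I_1}(V_{t^{\ast}\to t_i}\setminus X_{t^{\ast}})$ and $B:=X_{t^{\ast}}\cup\bigcup_{i\in I_2}(V_{t^{\ast}\to t_i}\setminus X_{t^{\ast}})$, the same (T3) argument forces $A\cap B=X_{t^{\ast}}$, $A\cup B=V(G)$, and shows that no edge of $G$ joins $A\setminus B$ to $B\setminus A$ (an edge $uv$ lives in some bag $X_s$, and $s$ lies in a single component of $T-t^{\ast}$, hence on a single side). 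Thus $(A,B)$ is the desired balanced separation with $A\cap B=X_{t^{\ast}}$. The only delicate step is establishing the double-counting identity that underpins the orientation; once that is in hand, the rest is a routine centroid-style walk on $T$ followed by an elementary bin-packing split.
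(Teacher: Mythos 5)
Your proof is correct and is essentially the standard centroid-style argument that the paper itself relies on by citing \cite[Lemma~7.20]{CFK}: orient each tree edge toward a too-heavy side (at most one orientation per edge by your double-counting identity), walk to a sink node $t^{\ast}$, and greedily group the components of $G-X_{t^{\ast}}$ into two sides of weight at most $2|V(G)|/3$ each. The only cosmetic point is that the grouping step should be phrased as ``add each part to the currently lighter side'' (rather than ``first fit''), which is exactly what your bound $(|V(G)|+|V(G)|/3)/2$ reflects.
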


\begin{lemma} The $n \times n$-grid graph has tree-independence number $\Omega(n)$.
\end{lemma}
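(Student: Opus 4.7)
The plan is to combine the separator lemma (\Cref{separator}) with a direct combinatorial argument about balanced separations in the grid. Let $G$ be the $n\times n$-grid graph and fix any tree decomposition $\mathcal{T}$ of $G$. By \Cref{separator}, there exists a bag $X_t$ of $\mathcal{T}$ such that $(A,B)$ is a balanced separation of $G$ with $A\cap B = X_t$. It suffices to show that $\alpha(G[X_t]) \geq n/6$, since then $\alpha(\mathcal{T}) \geq n/6$ and hence $\tin(G) \geq n/6$. Write $S = A\cap B$.

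Since there are no edges of $G$ between $A\setminus B$ and $B\setminus A$, each row of the grid, being a connected path on $n$ vertices, is either entirely contained in $A\setminus B$, entirely in $B\setminus A$, or hits $S$; denote the corresponding sets of rows by $R_A$, $R_B$, $R_S$, and define $C_A, C_B, C_S$ analogously for columns. I would next argue that $\max(|R_S|,|C_S|) \geq n/3$ by splitting into two cases. If $R_A$ and $R_B$ are both nonempty, then any column avoiding $S$ would be forced, by intersecting a row from $R_A$ and a row from $R_B$, to lie entirely in $A\setminus B$ and entirely in $B\setminus A$ simultaneously, which is impossible; hence $|C_S| = n$. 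Otherwise, assume WLOG $R_B = \emptyset$. The balanced separation bound gives $|A\setminus B| \leq 2n^2/3$, and since every row of $R_A$ contributes $n$ vertices to $A\setminus B$, we have $|R_A| \leq 2n/3$. If additionally $R_A = \emptyset$, then $|R_S| = n$; otherwise, any column not in $C_S$ must (through its intersection with an $R_A$-row) lie entirely in $A\setminus B$, so $|C_A| \geq n - |C_S|$, and the same volume argument yields $|C_A| \leq 2n/3$, whence $|C_S| \geq n/3$. In every case, $\max(|R_S|,|C_S|) \geq n/3$.

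To conclude, assume WLOG $|R_S| \geq n/3$. Partition $R_S$ into its odd-indexed and even-indexed rows; one of these two classes has size at least $|R_S|/2 \geq n/6$. Picking a single $S$-vertex from each row of that class yields a set of at least $n/6$ vertices of $S$ lying pairwise in rows whose indices differ by at least $2$, and any two such grid vertices are non-adjacent; hence $\alpha(G[S]) \geq n/6$, proving $\tin(G) = \Omega(n)$. The main delicate point is the case analysis in the middle paragraph, and in particular handling the degenerate situations in which some of $R_A, R_B, C_A, C_B$ are empty; the volume bound coming from balancedness of $(A,B)$ is what absorbs all of these cases uniformly.
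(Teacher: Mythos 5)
Your argument is correct: the case analysis on $R_A,R_B,R_S$ and $C_A,C_B,C_S$ is sound (connectivity of each row/column forces it into exactly one of the three classes, the volume bound $|A\setminus B|\leq 2n^2/3$ does handle the degenerate cases, and vertices chosen from rows whose indices differ by at least $2$ are indeed pairwise non-adjacent in the grid), so you get $\alpha(G[X_t])\geq n/6$ for the bag produced by \Cref{separator} and hence $\tin(G)=\Omega(n)$. The paper follows the same skeleton — apply \Cref{separator} to exhibit a bag that is a balanced separator — but then lower-bounds the independence number of that bag differently: it invokes the folklore fact (left as an easy exercise) that every balanced separation of the $n\times n$ grid has size at least $n/4$, and then uses bipartiteness of the grid to convert size into independence number, giving $\alpha(\mathcal{T})\geq n/8$. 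Your proof replaces both of these ingredients with a single self-contained combinatorial argument that extracts a large independent set directly inside the separator; the trade-off is a longer write-up (you essentially re-prove a form of the grid-separator lower bound, in the ``hits many rows or many columns'' formulation), while what you gain is that nothing is delegated to an unproved exercise and no appeal to bipartiteness is needed, with a marginally better constant ($n/6$ versus $n/8$). One small simplification: in your sub-case with $R_B=\emptyset$ and $R_A\neq\emptyset$, the bound $|R_A|\leq 2n/3$ already gives $|R_S|=n-|R_A|\geq n/3$ directly, so the detour through columns there is unnecessary.
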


\begin{proof} Let $G$ be the $n \times n$-grid graph and let $\mathcal{T}$ be an arbitrary tree decomposition of $G$. It is well known (and an easy exercise) that, for $n \geq 4$, every balanced separation of $G$ has size at least $n/4$. \Cref{separator} then implies that there exists a bag of $\mathcal{T}$ of size at least $n/4$. But since $G$ is bipartite, we obtain that $\alpha(\mathcal{T}) \geq n/8$.   
\end{proof}


\section{Fractional $\boldsymbol{\tin}$-fragility}\label{sec:fragilityA}

Let $p$ be a width parameter in $\{\tw, \tin\}$. Fractional $\tw$-fragility was first defined in \cite{Dvo16}. We provide here an equivalent definition from \cite{Dvo22}, which was explicitly extended to the case $p = \tin$ in \cite{GWP22}, and which constitutes the key notion of the paper. We then investigate necessary and sufficient conditions guaranteeing fractional $\tin$-fragility.

\begin{definition}
For $\beta \leq 1$, a $\beta$-general cover of a graph $G$ is a multiset $\mathcal{C}$ of subsets of $V(G)$ such that each vertex belongs to at least $\beta|\mathcal{C}|$ elements of the cover. The $p$-width of the cover is $\max_{C \in \mathcal{C}}p(G[C])$.

For a parameter $p$, a graph class $\mathcal{G}$ is fractionally $p$-fragile if there exists a function $f\colon\mathbb{N}\rightarrow\mathbb{N}$ such that, for every $r\in\mathbb{N}$, every $G \in \mathcal{G}$ has a $(1 - 1/r)$-general cover with $p$-width at most $f(r)$. 

A fractionally $p$-fragile class $\mathcal{G}$ is efficiently fractionally $p$-fragile if there exists an algorithm that, for every $r\in\mathbb{N}$ and $G \in \mathcal{G}$, returns in $\mathsf{poly}(|V(G)|)$ time a $(1 - 1/r)$-general cover $\mathcal{C}$ of $G$ and, for each $C \in \mathcal{C}$, a tree decomposition of $G[C]$ of width (if $p = \tw$) or independence number (if $p = \tin$) at most $f(r)$, for some function $f\colon\mathbb{N}\rightarrow\mathbb{N}$.
\end{definition}

Note that classes of bounded tree-independence number are efficiently fractionally $\tin$-fragile thanks to \cite{DFGK22}. Hence, the family of efficiently fractionally $\tin$-fragile classes contains the two incomparable families of bounded tree-independence number classes and efficiently fractionally $\tw$-fragile classes (to see that they are incomparable, consider chordal graphs and planar graphs). We now identify one more subfamily, consisting of classes of bounded layered tree-independence number. 

\begin{lemma}\label{layeredtofragileA} Let $\ell \in \mathbb{N}$ and let $G$ be a graph. For each $r \in \mathbb{N}$, given a tree decomposition $\mathcal{T} = (T,\{X_t\}_{t\in V(T)})$ and a layering $(V_0,V_1,\ldots)$ of $G$ such that, for each bag $X_t$ and layer $V_i$, $\alpha(G[X_t \cap V_i]) \leq \ell$, it is possible to compute in $O(|V(G)| + |E(G)| + |V(G)|\cdot|V(T)|)$ time a $(1 - 1/r)$-general cover $\mathcal{C}$ of $G$ and, for each $C \in \mathcal{C}$, a tree decomposition of $G[C]$ with independence number at most $\ell (r-1)$. In particular, if every graph in a class $\mathcal{G}$ has layered tree-independence number at most $\ell$, then $\mathcal{G}$ is fractionally $\tin$-fragile with $f(r) = \ell (r-1)$.
\end{lemma}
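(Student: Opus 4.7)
The plan is to follow Baker's/Dvo\v{r}\'{a}k's fractional fragility strategy by removing one layer out of every $r$ consecutive layers, using $r$ different offsets. For each $i\in\{0,1,\ldots,r-1\}$, let $Z_i=\bigcup_{k\geq 0}V_{i+kr}$ and define $C_i=V(G)\setminus Z_i$. Each vertex of $G$ lies in exactly one layer and hence is removed in exactly one of the $r$ choices of $i$, so it belongs to exactly $r-1$ of the sets $C_0,\ldots,C_{r-1}$. Thus $\mathcal{C}=\{C_0,\ldots,C_{r-1}\}$ is a $(1-1/r)$-general cover of $G$. Both $\mathcal{C}$ and the sets $Z_i$ can be read off directly from the layering, which is immediate given the input.

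Next I need to construct, for each fixed $i$, a tree decomposition of $G[C_i]$ with independence number at most $\ell(r-1)$. The graph $G[C_i]$ decomposes into slabs: the connected components lie inside maximal groups of at most $r-1$ consecutive layers $V_a,V_{a+1},\ldots,V_{a+r-2}$ separated by the removed layers of $Z_i$, since the layering forbids edges between non-consecutive layers. Let $S_0^{(i)},S_1^{(i)},\ldots$ be these slab-vertex-sets. For each $j$, the pair $(T,\{X_t\cap S_j^{(i)}\}_{t\in V(T)})$ is a tree decomposition of $G[S_j^{(i)}]$: properties (T1)--(T3) are inherited from $\mathcal{T}$ by intersecting every bag with the vertex subset, which is standard. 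To glue all slabs together into a single tree decomposition of $G[C_i]$, I combine the trees across $j$ by choosing an arbitrary node (say a fixed root of $T$) and linking the copies into one tree via empty connecting bags; this is sound because the slabs are pairwise non-adjacent in $G[C_i]$.

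The key bound on the independence number follows from a simple layer-by-layer argument. Fix a node $t$ and a slab $j$, and let $I\subseteq X_t\cap S_j^{(i)}$ be independent in $G$. Writing $S_j^{(i)}\subseteq V_a\cup V_{a+1}\cup\cdots\cup V_{a+r-2}$, we have
\[
|I|=\sum_{k=0}^{r-2}|I\cap V_{a+k}|\leq\sum_{k=0}^{r-2}\alpha(G[X_t\cap V_{a+k}])\leq(r-1)\ell,
\]
using the hypothesis on $\mathcal{T}$ and the layering at the last inequality. Hence every bag in the tree decomposition of $G[C_i]$ has independence number at most $\ell(r-1)$, as required.

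Finally, for the running time, computing the partition of $V(G)$ into layers and into slabs takes $O(|V(G)|+|E(G)|)$, while producing all intersections $X_t\cap V_i$ (equivalently $X_t\cap S_j^{(i)}$) over all $t\in V(T)$ and all $i$ can be done by iterating once over each vertex-bag incidence, giving $O(|V(G)|\cdot|V(T)|)$. The main conceptual obstacle, which dictates the slab-based construction rather than a naive restriction, is that a single bag $X_t$ may contain vertices from arbitrarily many layers: if one simply set $X_t'=X_t\cap C_i$, vertices from different slabs inside $X_t'$ would be non-adjacent in $G[C_i]$ and could combine into an independent set of size far exceeding $\ell(r-1)$; splitting each bag by slab is precisely what avoids this.
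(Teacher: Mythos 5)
Your proposal is correct and follows essentially the same route as the paper: the identical cover $C_m=\bigcup_{i\not\equiv m\ (\mathrm{mod}\ r)}V_i$, followed by restricting the given tree decomposition to the pieces of $G[C_m]$ lying in at most $r-1$ consecutive layers (the paper restricts to connected components, you restrict to the slabs containing them, which is the same argument), summing $\alpha(G[X_t\cap V_i])\leq\ell$ over those layers to get the $\ell(r-1)$ bound, and gluing the pieces into one tree decomposition. No gaps.
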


\begin{proof} Fix $r \in \mathbb{N}$. Let $\mathcal{T} = (T, \{X_t\}_{t\in V(T)})$ and $(V_0,V_1,\ldots)$ be the given tree decomposition and layering of $G$, respectively. For each $m \in \{0,\ldots,r-1\}$, let $C_m = \bigcup_{i \not\equiv m \pmod r}V_i$. We claim that $\mathcal{C} = \{C_m : 0 \leq m \leq r-1\}$ is a $(1-1/r)$-general cover of $G$ with tree-independence number at most $\ell (r-1)$. Observe first that each $v \in V(G)$ is not covered by exactly one element of $\mathcal{C}$ and so it belongs to $r-1 = (1-1/r)|\mathcal{C}|$ elements of $\mathcal{C}$. Let now $C \in \mathcal{C}$. Each component $K$ of $G[C]$ is contained within $r-1$ consecutive layers and so, since $\alpha(G[X_t \cap V_i]) \leq \ell$ for each bag $X_t$ and layer $V_i$, restricting the bags in $\mathcal{T}$ to $V(K)$ gives a tree decomposition of $K$ with independence number at most $\ell (r-1)$. We then merge the tree decompositions of the components of $G[C]$ into a tree decomposition of $G[C]$ with independence number at most $\ell (r-1)$.    
\end{proof}

\begin{remark}\label{layeredtwfragile} The argument used in \Cref{layeredtofragileA} also implies that, if every graph in a class $\mathcal{G}$ has bounded layered treewidth, then $\mathcal{G}$ is fractionally $\tw$-fragile. 
\end{remark}

But what are necessary conditions for fractional $\tin$-fragility? Contrary to fractional $\tw$-fragility, where sublinear-size separators are needed \cite{Dvo16}, one should expect that in the case of fractional $\tin$-fragility it is not the size of a separator that has to be small but rather its independence number. In order to formally prove this, we introduce some notation. 

Recall that a pair of vertex subsets $(A,B)$ is a \textit{separation} in a graph $G$ if $A \cup B = V(G)$ and there is no edge between $A \setminus B$ and $B \setminus A$. The \textit{separator} of a separation $(A, B)$ is the set $A \cap B$. The \textit{size} of a separation $(A, B)$ is the quantity $|A \cap B|$, whereas the \textit{independence number} of $(A, B)$ is the quantity $\alpha(G[A \cap B])$. A separation $(A,B)$ is \textit{balanced} if $|A \setminus B| \leq 2|V(G)|/3$ and $|B \setminus A| \leq 2|V(G)|/3$. The minimum independence number of a balanced separation in $G$ is denoted by $s\mhyphen\alpha(G)$. For a graph class $\mathcal{G}$, let $s\mhyphen\alpha_{\mathcal{G}}(n)$ denote the smallest non-negative integer such that every graph in $\mathcal{G}$ with at most $n$ vertices has a balanced separation of independence number at most $s\mhyphen\alpha_{\mathcal{G}}(n)$. In other words, $s\mhyphen\alpha_{\mathcal{G}}(n) = \max\{s\mhyphen\alpha(G): G \in \mathcal{G}, |V(G)| \leq n\}$. We say that $\mathcal{G}$ has \textit{separators of sublinear independence number} if $\lim_{n\rightarrow\infty}\frac{s\mhyphen\alpha_{\mathcal{G}}(n)}{n} = 0$. 

\begin{lemma}\label{separators} Let $\mathcal{G}$ be a fractionally $\tin$-fragile class and let $c \in \mathbb{N}$ be arbitrary. Then there exists $k \in \mathbb{N}$ such that, for each $G \in \mathcal{G}$ with $|V(G)| \geq k$, $s\mhyphen\alpha(G) < |V(G)|/c$. In particular, every fractionally $\tin$-fragile class has separators of sublinear independence number. 
\end{lemma}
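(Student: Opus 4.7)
The plan is to combine the definition of fractional $\tin$-fragility with the folklore fact (\Cref{separator}) that tree decompositions yield balanced separations whose separators are bags. Given $c \in \mathbb{N}$, I would apply the fragility definition at a carefully chosen parameter $r$ (ultimately $r = 2c$) to get a $(1-1/r)$-general cover $\mathcal{C}$ of $G$ together with, for each $C \in \mathcal{C}$, a tree decomposition of $G[C]$ of independence number at most $f(r)$. The first key step is a pigeonhole: summing the covering condition $|\{C \in \mathcal{C} : v \in C\}| \geq (1-1/r)|\mathcal{C}|$ over $v \in V(G)$ gives $\sum_{C \in \mathcal{C}} |C| \geq (1-1/r)|V(G)||\mathcal{C}|$, so at least one $C \in \mathcal{C}$ satisfies $|V(G)\setminus C| \leq |V(G)|/r$.

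Next, I would fix such a $C$ and apply \Cref{separator} to the prescribed tree decomposition of $G[C]$, producing a balanced separation $(A,B)$ of $G[C]$ whose separator $A \cap B$ is a bag, and hence satisfies $\alpha(G[C][A \cap B]) \leq f(r)$. I would then lift $(A,B)$ to a separation $(A', B')$ of the whole graph $G$ by setting $A' = A \cup (V(G)\setminus C)$ and $B' = B \cup (V(G)\setminus C)$. A short routine check gives $A' \setminus B' = A \setminus B$, $B' \setminus A' = B \setminus A$, and $A' \cap B' = (A \cap B) \cup (V(G) \setminus C)$; in particular there is no edge in $G$ between $A' \setminus B'$ and $B' \setminus A'$, since any such edge would lie in $G[C]$ and violate the separation property of $(A,B)$. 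Balance of $(A',B')$ in $G$ is inherited from balance in $G[C]$ because $|A'\setminus B'| = |A\setminus B| \leq 2|C|/3 \leq 2|V(G)|/3$, and symmetrically for $B' \setminus A'$.

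It remains to bound the independence number of the lifted separator: any independent set of $G[A' \cap B']$ restricts to an independent set of $G[A \cap B]$ together with an arbitrary subset of $V(G)\setminus C$, giving
\begin{equation*}
\alpha(G[A' \cap B']) \;\leq\; \alpha(G[A \cap B]) + |V(G)\setminus C| \;\leq\; f(r) + |V(G)|/r.
\end{equation*}
Taking $r = 2c$, the right-hand side is strictly less than $|V(G)|/c$ as soon as $|V(G)| > 2c\,f(2c)$, so $k := 2c\,f(2c) + 1$ works. The ``in particular'' statement then follows: for each $\varepsilon > 0$, picking $c > 1/\varepsilon$ and the corresponding $k$ gives $s\mhyphen\alpha(G) < \varepsilon |V(G)|$ for every sufficiently large $G \in \mathcal{G}$, while small graphs contribute only a bounded additive amount to $s\mhyphen\alpha_{\mathcal{G}}(n)$, so $s\mhyphen\alpha_{\mathcal{G}}(n)/n \to 0$. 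The main (and only mildly subtle) point is the lifting construction and the verification that adding $V(G)\setminus C$ to both sides preserves the separation property and injects an error of only $|V(G)|/r$ into the separator's independence number; everything else is bookkeeping.
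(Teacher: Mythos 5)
Your proposal is correct and follows essentially the same route as the paper's proof: apply fragility at $r=2c$, pick a cover element $C$ missing at most $|V(G)|/(2c)$ vertices, use \Cref{separator} on the prescribed tree decomposition of $G[C]$, and lift the separation by adding $V(G)\setminus C$ to both sides, bounding the separator's independence number by $f(2c)+|V(G)|/(2c)$. The only differences are cosmetic (you take $k=2cf(2c)+1$ where the paper takes $k=3cf(2c)$, and you spell out the pigeonhole and the final sublinearity deduction, which the paper leaves implicit).
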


\begin{proof}
Since $\mathcal{G}$ is fractionally $\tin$-fragile, there exists a function $f\colon\mathbb{N}\rightarrow\mathbb{N}$ such that, for every $r\in\mathbb{N}$, every $G \in \mathcal{G}$ has a $(1 - 1/r)$-general cover $\mathcal{C}$ such that, for each $C \in \mathcal{C}$, $\tin(G[C]) \leq f(r)$. We show that the statement holds by taking $k = 3cf(2c)$. To this end, let $G\in \mathcal{G}$ be an arbitrary graph with $|V(G)| \geq k$, and let $\mathcal{C}$ be a $(1-\frac{1}{2c})$-general cover of $G$ as above. Then there exists $C \in \mathcal{C}$ such that $|C| \geq (1-\frac{1}{2c})|V(G)|$. Moreover, there exists a tree decomposition $\mathcal{T}=(T,\{X_t\}_{t\in V(T)})$ of $G[C]$ with independence number at most $f(2c)$. 
By \Cref{separator}, there exists a separation $(A, B)$ in $G[C]$ such that $A \cap B = X_t$, for some bag $X_t$ of $\mathcal{T}$, and $|A \setminus B| \leq 2|C|/3$ and $|B \setminus A| \leq 2|C|/3$. Observe now that $(A \cup (V(G) \setminus C), B \cup (V(G) \setminus C))$ is a balanced separation in $G$ with separator $X = (A\cap B) \cup (V(G) \setminus C) = X_t \cup (V(G) \setminus C)$. Moreover, \[\alpha(G[X]) \leq \alpha(G[X_t]) + |V(G)\setminus C| \leq f(2c) + \frac{1}{2c}|V(G)| \leq \frac{1}{3c}|V(G)| + \frac{1}{2c}|V(G)| < \frac{1}{c}|V(G)|.\] This implies that $s\mhyphen\alpha(G) < |V(G)|/c$, as claimed.
\end{proof}

Recall from \Cref{layeredKnnA} that large induced bicliques are an obstruction to small layered tree-independence number. \Cref{separators} immediately implies that they remain an obstruction to fractional $\tin$-fragility in hereditary graph classes. In fact, this is true even if the graph class is not hereditary, as we show next. Here the \textit{induced biclique number} of a graph $G$ is the maximum $n \in \mathbb{N}$ such that the complete bipartite graph $K_{n,n}$ is an induced subgraph of $G$. 

\begin{lemma}\label{bicliqueA}
Every fractionally $\tin$-fragile graph class has bounded induced biclique number. 
\end{lemma}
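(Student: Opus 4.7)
The plan is to argue directly from the definition of fractional $\tin$-fragility, bypassing the separator consequence in \Cref{separators}. First I would fix $r = 3$ and the corresponding value $f(r)$ from the definition of fractional $\tin$-fragility, and suppose for contradiction that some $G \in \mathcal{G}$ contains $K_{n,n}$ as an induced subgraph with $n > 3f(3)$.

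The next step is to apply fragility to $G$ to obtain a $(1-1/r)$-general cover $\mathcal{C}$ of $V(G)$ such that $\tin(G[C]) \leq f(r)$ for every $C \in \mathcal{C}$. Let $A$ and $B$ denote the two sides of the induced biclique. Since every vertex of $A \cup B$ lies in at least $(1 - 1/r)|\mathcal{C}|$ members of $\mathcal{C}$, a simple averaging argument produces some $C \in \mathcal{C}$ with $|C \cap (A \cup B)| \geq 2n(1 - 1/r)$. Combining this with the trivial upper bounds $|C \cap A|, |C \cap B| \leq n$ forces both $|C \cap A|$ and $|C \cap B|$ to be at least $n(1 - 2/r)$, so $G[C]$ contains an induced biclique $K_{m,m}$ with $m = \lceil n(1 - 2/r)\rceil$.

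The key step is then to invoke the identity $\tin(K_{m,m}) = m$ (recalled in the proof of \Cref{layeredKnnA}) together with the fact that tree-independence number is monotone under induced subgraphs, which yields $\tin(G[C]) \geq m$. Combined with the upper bound $\tin(G[C]) \leq f(r)$ coming from fragility, this gives $\lceil n(1 - 2/r)\rceil \leq f(r)$, which for $r = 3$ reduces to $\lceil n/3\rceil \leq f(3)$, contradicting the choice of $n > 3f(3)$. Therefore the induced biclique number of $\mathcal{G}$ is at most $3f(3)$.

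The only subtlety I anticipate, which is precisely the reason this lemma is stated separately from the hereditary case, is that $\mathcal{G}$ need not be hereditary, so I cannot simply restrict attention to $K_{n,n}$ as a graph in $\mathcal{G}$. The argument above sidesteps this because it only uses the cover provided by fragility applied to $G$ itself and properties of the induced subgraph $G[C]$, both of which are available without closure under induced subgraphs.
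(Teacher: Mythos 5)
Your proposal is correct and essentially reproduces the paper's argument: the paper likewise uses the averaging bound to find a cover element $C$ with $|C\cap A|,|C\cap B|\geq n(1-2/r)$ and then invokes $\tin(K_{m,m})=m$ together with monotonicity of $\tin$ under induced subgraphs to contradict $\tin(G[C])\leq f(r)$. The only cosmetic difference is that the paper first observes that a $\beta$-general cover of $G$ restricts to one of the induced $K_{n,n}$ and then counts inside that copy (for general $r>2$), whereas you count directly within the cover of $G$ with $r=3$ fixed; both handle the non-hereditary issue in the same way.
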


\begin{proof}
Observe first that, if $\mathcal{C}$ is a $\beta$-general cover of a graph $G$ and $H$ is an induced subgraph of $G$, then $\mathcal{C} \cap H = \{C \cap V(H) : C \in \mathcal{C}\}$ is a $\beta$-general cover of $H$. Recalling that tree-independence number does not increase when taking induced subgraphs, it is therefore enough to show the following. For any function $f\colon \mathbb{N} \rightarrow \mathbb{N}$ and integer $r >2$, there exists $n$ such that no $(1-1/r)$-general cover of $K_{n,n}$ has tree-independence number less than $f(r)$. To this end, fix arbitrary $f\colon \mathbb{N} \rightarrow \mathbb{N}$ and $r>2$. Consider a copy $H$ of $K_{n,n}$, with $n > f(r)/(1 - 2/r)$. Let $\mathcal{C}$ be a $(1-1/r)$-general cover of $H$. Then every vertex of $H$ belongs to at least $(1-1/r)|\mathcal{C}|$ elements of $\mathcal{C}$ and so there exists $C \in \mathcal{C}$ of size at least $2n(1-1/r)$. Let $A$ and $B$ be the two bipartition classes of $H$. Then $|A\cap C| \geq |C| - |B| \geq 2n(1-1/r) - n = n(1-2/r) > f(r)$ and, similarly, $|B\cap C| > f(r)$. Therefore, $H[C]$ contains $K_{f(r),f(r)}$ as an induced subgraph and since $\tin(K_{f(r),f(r)}) = f(r)$ \cite{DaMS22}, $\tin(H[C]) \geq f(r)$. 
\end{proof}

\Cref{separators} has another consequence. Recall that, for fixed $\alpha > 0$, a graph $G$ is an \textit{$\alpha$-expander} if, for every $S \subseteq V(G)$ of size at most $|V(G)|/2$, there are at least $\alpha|S|$ edges of $G$ with exactly one endpoint in $S$. $3$-regular $\alpha$-expanders on $n$ vertices are known to exist for each sufficiently large even $n$ and each of their separations has size $\Omega(n)$ (see, e.g., \cite{DSW16,Dvo16}). Therefore, \Cref{separators} and Brooks' theorem imply that $3$-regular $\alpha$-expanders are not fractionally $\tin$-fragile. The same holds for $1$-subdivisions of $3$-regular $\alpha$-expanders, where we use \cite[Lemma~28]{Dvo16} to bound the size of a separation. We thus obtain the following result.

\begin{corollary}\label{degreefragility} There exist classes of subcubic $K_{2,2}$-free graphs which are not fractionally $\tin$-fragile.    
\end{corollary}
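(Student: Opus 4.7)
\textit{Plan.} I will take $\mathcal{G}$ to be the class consisting of the $1$-subdivisions of $3$-regular $\alpha$-expanders, which exist for every sufficiently large even order. Two properties need to be checked by inspection. First, the $1$-subdivision of a $3$-regular graph is subcubic: original vertices keep degree $3$, while new vertices (one per subdivided edge) have degree exactly $2$. Second, such a $1$-subdivision is $K_{2,2}$-free (indeed even $C_4$-free), because the $1$-subdivision of any simple graph is bipartite with all cycles of even length at least $6$, so the bipartite graph $K_{2,2}=C_4$ cannot occur as an induced subgraph.

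Next, I will argue that every graph $G' \in \mathcal{G}$ has $s\mhyphen\alpha(G') = \Omega(|V(G')|)$. The $3$-regular $\alpha$-expanders have all separations of size $\Omega(n)$; by \cite[Lemma~28]{Dvo16}, this lower bound transfers to $1$-subdivisions, so every balanced separation $(A,B)$ of $G' \in \mathcal{G}$ has separator $S = A \cap B$ of size at least $\gamma|V(G')|$ for some constant $\gamma > 0$ depending only on $\alpha$. The induced subgraph $G'[S]$ is subcubic, so by Brooks' theorem it is $3$-colorable (for $|S|$ large enough to rule out a $K_4$ component, which is impossible in a subdivision anyway since $G'$ is triangle-free); hence $\alpha(G'[S]) \geq |S|/3 \geq \gamma|V(G')|/3$. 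Consequently, $s\mhyphen\alpha(G') = \Omega(|V(G')|)$.

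Finally, I will conclude by contradiction: if $\mathcal{G}$ were fractionally $\tin$-fragile, then \Cref{separators} applied with $c$ chosen so that $1/c < \gamma/3$ would give, for all sufficiently large $G' \in \mathcal{G}$, the inequality $s\mhyphen\alpha(G') < |V(G')|/c$, contradicting the linear lower bound just established. The main obstacle is really a bookkeeping one: verifying that the $1$-subdivision operation both preserves the expansion-type separation lower bound (handled by citing \cite[Lemma~28]{Dvo16}) and kills all induced $C_4$'s, while at the same time keeping the graph subcubic so that Brooks' theorem yields a linear independence number inside any separator.
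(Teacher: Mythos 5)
Your proposal is correct and follows essentially the same route as the paper: the paper also takes $1$-subdivisions of $3$-regular $\alpha$-expanders, invokes \cite[Lemma~28]{Dvo16} to transfer the linear lower bound on balanced separation size to the subdivisions, and combines Brooks' theorem with \Cref{separators} to rule out fractional $\tin$-fragility. Your write-up merely makes explicit the subcubicity, $K_{2,2}$-freeness (girth at least $6$), and constant-tracking steps that the paper leaves implicit.
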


We conclude this section with another explicit construction of a graph class which is not fractionally $\tin$-fragile. The \textit{$d$-dimensional grid of side length $n$}, denoted $G_{d,n}$, is the graph with vertex set $[n]^d = \{(x_1,\ldots,x_d) : x_i \in \{1, 2, \ldots, n\} \ \mbox{for each} \ i\}$, where two vertices $(x_1,\ldots,x_d)$ and $(y_1,\ldots,y_d)$ are adjacent if and only if $\sum_{1 \leq i \leq d} |x_i-y_i|=1$.

\begin{lemma}\label{gridsunbounded} Let $I, J \subseteq \mathbb{N}$ with $I$ infinite and $J \neq \{1\}$. The class $\{G_{d,n} : d\in I, n \in J\}$ is not fractionally $\tin$-fragile. 
\end{lemma}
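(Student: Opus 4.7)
The plan is to apply Lemma~\ref{separators} to the hereditary closure $\mathcal{G}^{\ast}$ of $\mathcal{G} = \{G_{d,n} : d \in I, n \in J\}$. Fractional $\tin$-fragility passes to induced subgraphs: the restriction $\{C \cap V(H) : C \in \mathcal{C}\}$ of a $(1-1/r)$-general cover $\mathcal{C}$ of $G$ to any induced subgraph $H$ of $G$ remains $(1-1/r)$-general and satisfies $\tin(H[C \cap V(H)]) \le \tin(G[C]) \le f(r)$. Hence $\mathcal{G}^{\ast}$ is fractionally $\tin$-fragile whenever $\mathcal{G}$ is, and it suffices to contradict Lemma~\ref{separators} applied to $\mathcal{G}^{\ast}$.

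Since $J \ne \{1\}$, I fix $n_0 \in J$ with $n_0 \ge 2$. Restricting each coordinate of $G_{d,n_0}$ to $\{1,2\}$ embeds the Boolean hypercube $Q_d = G_{d,2}$ as an induced subgraph, so $Q_d \in \mathcal{G}^{\ast}$ for every $d \in I$. To derive the contradiction, I plan to exhibit, for infinitely many $d$, an induced subgraph $H_d$ of $Q_d$ with $|V(H_d)| \to \infty$ and $s\mhyphen\alpha(H_d) \ge |V(H_d)|/c^{\ast}$ for some fixed constant $c^{\ast}$. A natural choice for $H_d$ is the bipartite graph induced by a few consecutive weight layers near the middle of $Q_d$: bipartiteness of $H_d$ gives $\alpha(H_d[S]) \ge |S|/2$ for every $S \subseteq V(H_d)$, so it suffices to show that every balanced separator of $H_d$ has size at least a constant fraction of $|V(H_d)|$.

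This vertex-expansion property of $H_d$ should follow from the Kruskal--Katona shadow inequality, which lower-bounds the sizes of up- and down-shadows of subsets of weight layers of $Q_d$. The main obstacle is verifying the expansion with a constant uniform in $d$: a single Kruskal--Katona application to two middle layers gives only expansion ratio essentially $1$, which does not suffice to kill every balanced separation. To get strict expansion, one can either take three consecutive middle weight layers and combine up- and down-shadows so that the composed expansion is bounded away from $1$, or employ an asymmetric form of Kruskal--Katona exploiting the slight size imbalance between consecutive middle layers. Once such uniform expansion is in hand, the resulting linear lower bound on $s\mhyphen\alpha(H_d)$ directly contradicts Lemma~\ref{separators} applied to $\mathcal{G}^{\ast}$.
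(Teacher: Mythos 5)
There is a genuine gap at the heart of your plan: the expansion claim you need for the middle-layer graphs $H_d$ is false. Take $d=2k+1$ and let $H_d$ be the subgraph of $Q_d$ induced by the two middle layers (the argument is the same for any constant number of consecutive layers). Slice by the statistic $f(S)=|S\cap[k]|$: for a suitable threshold $j$, the set $S^{*}$ consisting of the layer-$k$ vertices with $f=j$ together with the layer-$(k+1)$ vertices with $f=j+1$ is a balanced separator of $H_d$, since along any edge $u\subset v$ the value of $f$ increases by $0$ or $1$. As $f$ is hypergeometrically distributed with standard deviation $\Theta(\sqrt{k})$, each such slice has only $O(|V(H_d)|/\sqrt{d})$ vertices, so $s\mhyphen\alpha(H_d)=O(|V(H_d)|/\sqrt{d})=o(|V(H_d)|)$. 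Hence no fixed constant $c^{*}$ with $s\mhyphen\alpha(H_d)\geq |V(H_d)|/c^{*}$ exists, and no contradiction with \Cref{separators} can be extracted. This is not a weakness of the Kruskal--Katona method that three layers or an asymmetric shadow inequality could repair (indeed the colex-extremal family of measure about $1/2$ expands only by an additive $O(1/d)$, matching the threshold-cut example): the statement you are trying to prove about $H_d$ is simply not true. More fundamentally, \Cref{separators} gives only a necessary condition for fragility, and the paper itself exhibits non-fragile classes all of whose members have separators of sublinear independence number; so your route would at minimum require induced subgraphs of hypercubes that are genuine vertex expanders, which middle layers are not, and which your proposal does not provide. (Your preliminary steps are fine: covers do restrict to induced subgraphs, and $Q_d$ is an induced subgraph of $G_{d,n_0}$ for $n_0\geq 2$.)

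For contrast, the paper avoids separators altogether. Given any $(1-1/r)$-general cover of $G_{d,n}$, some cover element $C$ contains at least a $(1-1/r)$ fraction of the vertices; since $G_{d,n}$ has minimum degree at least $d$ and maximum degree at most $2d$, deleting the at most $n^{d}/r$ uncovered vertices leaves $G[C]$ with average degree at least $d(1-4/r)$, hence treewidth at least $\frac{r-4}{2r}d$, hence a bag of size at least $R(3,f(r))$ once $d$ is chosen large enough in $I$. Triangle-freeness of the grid then forces an independent set of size $f(r)$ inside that bag, so $\tin(G[C])\geq f(r)$, contradicting fragility. The key device is the Ramsey-plus-triangle-free conversion of large treewidth into large tree-independence number, which sidesteps any isoperimetric input.
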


\begin{proof} Fix arbitrary $f\colon \mathbb{N}\rightarrow\mathbb{N}$ and $r>2$. For such a choice, fix $d \in \mathbb{N}$ such that $\frac{r-4}{2r}d + 1 \geq R(3, f(r))$, where $R(3, f(r))$ denotes a Ramsey number. We now show that every $(1-1/r)$-general cover of $G_{d,n}$ has tree-independence at least $f(r)$. Let $\mathcal{C}$ be a $(1-1/r)$-general cover of $G_{d,n}$. Then every vertex of $G_{d,n}$ belongs to at least $(1-1/r)|\mathcal{C}|$ elements of $\mathcal{C}$ and so there exists $C \in \mathcal{C}$ containing at least $(1-1/r)|V(G_{d,n})| = (1-1/r)n^d$ vertices of $G_{d,n}$. Fix such a $C$ and let $G$ be the subgraph of $G_{d,n}$ induced by $C$. We claim that $\tin(G) \geq f(r)$.

Observe first that, for each $v \in V(G_{d,n})$, $d \leq d_{G_{d,n}}(v) \leq 2d$. Hence, $2|E(G_{d,n})| = \sum_{v \in V(G_{d,n})}d_{G_{d,n}}(v) \geq d\cdot n^d$. Consider now the graph $G'$ obtained from $G_{d,n}$ by deleting the vertex set $C$. Clearly, $G'$ has at most $n^d/r$ vertices. Since deleting a vertex from $G_{d,n}$ decreases the number of edges of the resulting graph by at most $2d$, we have that $|E(G)|\geq |E(G_{d,n})| - 2d|V(G')|$, from which $\sum_{v \in V(G)}d_{G}(v) \geq d\cdot n^d - 2\cdot 2d\cdot n^d/r = d\cdot n^d(1 - 4/r)$. Therefore, the average degree of $G$ is at least $d(1-4/r)$ and so $\tw(G) \geq \frac{r-4}{2r}d$, for example by \cite[Corollary~1]{CS05}. This implies that every tree decomposition of $G$ has a bag of size at least $\frac{r-4}{2r}d + 1 \geq R(3, f(r))$ and, since $G$ is triangle-free, it follows that $\tin(G) \geq f(r)$.
\end{proof}

However, as we shall see in the next section, any family of bounded-dimensional grids is fractionally $\tin$-fragile. 


\section{Intersection graphs of fat objects}\label{fatA}

Let $d \geq 2$ be an arbitrary but fixed integer. The main result of this section is the following: The class of intersection graphs of $c$-fat collections of objects in $\mathbb{R}^d$, defined in \Cref{fatcomp}, is efficiently fractionally $\tin$-fragile (Result \ref{item:third}). More precisely, we show the following.

\begin{theorem}\label{treealphafatA} Let $\mathcal{O}$ be a $c$-fat collection of objects in $\mathbb{R}^d$ and let $G$ be its intersection graph. For each $r_0 > 1$, let $f(r_0) = 2\Big\lceil \frac{1}{1-\big(1-\frac{1}{r_0}\big)^{\frac{1}{d}}}\Big\rceil$. Then we can compute in linear time a $(1-1/r_0)$-general cover $\mathcal{C}$ of $G$ of size at most $(f(r_0)/2-1)^d$. Moreover, for each $C \in \mathcal{C}$, we can compute in linear time a tree decomposition $\mathcal{T} = (T,\{X_t\}_{t\in V(T)})$ of $G[C]$, with $|V(T)| \leq |V(G)|+1$, such that $\alpha(\mathcal{T}) \leq cf(r_0)^{2d}$. 
\end{theorem}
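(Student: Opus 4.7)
The plan is to adapt the Baker-style shifting argument to a $c$-fat collection of objects with varying sizes, using a hierarchy of grids indexed by size scale. After rescaling so that every object has size at least $1$, I classify objects into scales: for each integer $\ell \geq 0$, let $L_\ell = \{O \in \mathcal{O} : s(O) \in [f^\ell, f^{\ell+1})\}$ with $f = f(r_0)$, and let $\mathcal{G}_\ell$ be the axis-aligned grid whose cells have side length $f^{\ell+1}$. Every object at scale $\ell$ is contained in the union of at most $2^d$ adjacent cells of $\mathcal{G}_\ell$, and the grids nest in the sense that $\mathcal{G}_{\ell+1}$ refines $\mathcal{G}_\ell$ with ratio $f$ per coordinate.

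Setting $m = f/2-1$, I index the cover by shift vectors $\vec{j} \in \{0,\ldots,m-1\}^d$ and, for each $\vec{j}$, define $C_{\vec{j}}$ by removing every scale-$\ell$ object whose bounding box crosses a hyperplane of $\mathcal{G}_\ell$ at a coordinate $k f^{\ell+1}$ with $k \equiv j_i \pmod{m}$ in some direction $i$. An averaging argument shows each object is bad for only a bounded number of shifts per direction, and the inequality $(1-2/f)^d \geq 1-1/r_0$ guaranteed by the definition of $f(r_0)$ turns this into the required $(1-1/r_0)$-coverage fraction; the cover has $m^d = (f/2-1)^d$ elements and is produced in linear time by bucket-sorting objects by scale.

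For each fixed $\vec{j}$, I would build the tree decomposition of $G[C_{\vec{j}}]$ from a hierarchy of \emph{super-cells}: maximal axis-aligned blocks of $m$ consecutive non-forbidden cells of $\mathcal{G}_\ell$, which capture the fact that a surviving scale-$\ell$ object is confined to a single such block. A careful choice of tree structure (ensuring that super-cells at successive scales relate consistently, so that each object's support is a connected subtree) lets me take $T$ to be the forest on non-empty super-cells, rooted at a universal node, with one node per super-cell containing at least one surviving object, so $|V(T)| \leq |V(G)| + 1$. The bag at the node for a scale-$\ell$ super-cell $S$ consists of the objects of $L_\ell$ intersecting $S$ together with those surviving objects of scales $\ell' > \ell$ whose super-cells contain $S$ and which geometrically intersect $S$. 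Standard checks verify the three tree-decomposition axioms, with an edge between surviving objects of scales $\ell_1 \leq \ell_2$ being covered at the bag corresponding to the larger-scale endpoint.

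The main obstacle, and the place where $c$-fatness is critical, is bounding the independence number of each bag. Any independent set inside the bag at a super-cell $S$ at scale $\ell$ decomposes into (i) pairwise non-intersecting scale-$\ell$ objects meeting $S$, and (ii) pairwise non-intersecting objects of size at least $f^{\ell+1}$ passing through $S$. Since $S$ has side at most $m f^{\ell+1}$, covering it by $m^d f^d$ boxes of side $f^\ell$ and applying $c$-fatness yields at most $c m^d f^d \leq c f^{2d}$ contributions of type (i). Analogously, covering $S$ by $m^d$ boxes of side $f^{\ell+1}$ and applying $c$-fatness at scale $f^{\ell+1}$ bounds the total of type (ii) by $c m^d \leq c f^d$, a single application absorbing all larger scales simultaneously. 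Summing the two bounds gives $\alpha(\mathcal{T}) \leq c f^{2d}$, as required.
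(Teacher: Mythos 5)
Your proposal follows the same overall strategy as the paper's proof (multi-scale grids, shifting, discarding objects that cross the grid at their own scale, one tree-decomposition node per surviving block, $c$-fatness to bound the bags' independence number), and the parts you carry out in detail --- the bucket-sorting and the two-type covering argument giving $\alpha(\mathcal{T})\le cf^{2d}$ --- are fine. The decisive gap is the cross-scale structure of your blocks. You keep the grids $\mathcal{G}_\ell$ fixed (hyperplanes at integer multiples of $f^{\ell+1}$) and let the shift $\vec{\jmath}$ only select which residue classes of hyperplanes are forbidden. Then the boundaries of scale-$\ell$ super-cells in direction $i$ sit at coordinates $kf^{\ell+1}$ with $k\equiv j_i \pmod{m}$, while those of scale-$(\ell+1)$ super-cells sit at $(k'f)f^{\ell+1}$ with $k'\equiv j_i\pmod{m}$; since $f=2m+2\equiv 2\pmod{m}$, such a coarse boundary is a forbidden scale-$\ell$ hyperplane only when $2j_i\equiv j_i\pmod{m}$, i.e.\ only when $j_i=0$ (and the same failure occurs with $f/2$ classes). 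So for every nonzero shift the super-cells at consecutive scales are \emph{not} laminar: coarse block boundaries cut through fine blocks. This is exactly the point you delegate to ``a careful choice of tree structure'': without nesting, your bag definition (``larger-scale objects whose super-cells contain $S$'') is not available, the parent relation is ill-defined, and already the edge-coverage axiom (T2) is in jeopardy --- for an edge between a scale-$\ell$ object contained in its block $S$ and a larger surviving object whose block does not contain $S$, no bag need contain both endpoints under your definitions, and (T3) faces the same obstruction. The paper engineers laminarity by shifting each grid geometrically by the cumulative amount $y_j\sum_{k\ge i}(1/r)^{k}$, which makes every hyperplane of a coarse grid also a hyperplane of all finer grids, so the boxes of different ranks nest; that nesting claim is the engine of the whole tree-decomposition construction, and it (or a substitute) is missing from your proposal.

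There is also an off-by-one in the shifting count. With $m=f/2-1$ residue classes per axis and each object bad for at most one class per axis, the guaranteed coverage is $\bigl(1-\tfrac{1}{f/2-1}\bigr)^{d}$, not the $(1-2/f)^{d}$ you invoke, and this can drop below $1-1/r_0$: for $d=2$, $r_0=2$ one gets $f=8$, $m=3$ and coverage $(2/3)^2=4/9<1/2$. You need $f/2$ shifts per axis (as in the paper's proof, where the shift vectors range over $\{0,\dots,f/2-1\}^{d}$), giving the per-axis fraction $1-2/f$ and hence $(1-2/f)^{d}\ge 1-1/r_0$; the cover then has $(f/2)^{d}$ elements, which is what the paper's construction actually produces. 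A smaller slip in the same vein: an edge between surviving objects of scales $\ell_1<\ell_2$ must be covered at a bag of the \emph{smaller}-scale (finer) endpoint's node, since a bag at a scale-$\ell_2$ node contains no scale-$\ell_1$ objects, so ``covered at the larger-scale endpoint'' has the roles reversed.
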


Before proving \Cref{treealphafatA}, we outline the idea in the case of disk graphs in $\mathbb{R}^2$. Suppose first that we are trying to find a $(1-1/r_0)$-general cover $\mathcal{C}$ of bounded tree-independence number of a unit disk graph. We can build each element of the cover starting from an appropriate grid in the plane as follows. Suppose that $\mathcal{H}(y)$ is a grid in $\mathbb{R}^2$, indexed by some $y \in \mathbb{R}^2$, splitting the plane into a collection $\mathcal{B}$ of squares of side length $2r_0$. We first discard all disks intersecting $\mathcal{H}(y)$. The vertices corresponding to the remaining disks will form the element $C(y)$ of the cover. We can obtain a tree decomposition for the subgraph induced by $C(y)$ as follows. We add a node for each square $B \in \mathcal{B}$ and associate to this node a bag containing precisely the vertices whose corresponding disks lie in $B$. We then connect the nodes appropriately to obtain a tree. The resulting tree decomposition will have small independence number since, inside each square $B$, there are at most $4r_{0}^2$ pairwise non-intersecting disks. Shifting the grid $\mathcal{H}(y)$ around the plane via the vector $y$ and proceeding as above will ensure that every vertex of the unit disk graph is covered by sufficiently many elements. 

The situation is more challenging if disks have different radiuses. When both large and small disks occur, if the grids are too dense (i.e., they divide the plane into very small squares), then large disks will not belong to most elements of the cover, whereas if the grids are too sparse (i.e., they divide the plane into very large squares), then there might be too many pairwise non-intersecting small disks inside each square. To resolve this, we use an idea from \cite[Theorem~4]{DGLTT22}. Specifically, we sort disks into different ranks according to their radius, so that the larger the radius the smaller the rank. Large disks will be ``covered'' by sparse grids, whereas small disks will be ``covered'' by dense grids. For each possible value $i$ of the rank, we will consider grids of rank $i$ arising in a quadtree-like manner from a fixed rank-$0$ grid (a sparesest grid), and we will discard rank-$i$ disks intersecting rank-$i$ grids. The vertices corresponding to the remaining disks will form an element $C(y)$ of the cover. We will then add a node for each square $B_i$ induced by the rank-$i$ grid, and associate to this node a bag containing precisely the vertices whose corresponding disks intersect $B_i$ and have rank at most $i$. Finally, for each node $t_i$ corresponding to a rank-$i$ square $B_i$, we will add the edge $t_it_j$ if $t_j$ correspond to the rank-$j$ square $B_j$, with $j > i$, such that $B_j$ is contained in $B_i$. As before, we will shift the grids around the plane via $y$ to ensure that we obtain indeed a general cover. 

\begin{proof}[Proof of \Cref{treealphafatA}] In this proof, $[n]$ denotes the set $\{0, 1, \ldots, n\}$. Fix an arbitrary $r_0 > 1$. In the following, for ease of notation, we simply let $r:= f(r_0) = 2\Big\lceil \frac{1}{1-\big(1-\frac{1}{r_0}\big)^{\frac{1}{d}}}\Big\rceil$, and denote by $O_v$ the object corresponding to the vertex $v \in V(G)$. By possibly rescaling, we may assume that each object in the collection $\mathcal{O}$ has size at most $1$. For each $v \in V(G)$, define the \textit{rank} of the object $O_v$ as the quantity $\rk(O_v) = \lfloor \log_{\frac{1}{r}}s(O_v)\rfloor$. Let $k_0 = \max_{v \in V(G)}\rk(O_v)$. For each $0 \leq i \leq k_0$, $1 \leq j \leq d$ and $y = (y_1,\ldots,y_d) \in [\frac{r}{2}-1]^d$, let $\mathcal{H}^i_j(y)$ be the set of points in $\mathbb{R}^d$ whose $j$-th coordinate is equal to $m_j(\frac{1}{r})^{i-1}+y_j\sum_{k={i}}^{k_0+1}(\frac{1}{r})^{k}$, for some $m_j \in \mathbb{Z}$, and let $\mathcal{H}^i(y) = \bigcup_{1\leq j \leq d}\mathcal{H}^i_j(y)$. Moreover, let $V^i = \{v \in V(G) : \rk(O_v)=i\}$, $C^i(y) = \{v \in V^i: O_v \cap \mathcal{H}^i(y) = \varnothing\}$, and $C(y) = \bigcup_{0 \leq i \leq k_0}C^i(y)$. See \Cref{Fig:refinedgrid}.

\begin{figure}
\centering
\begin{tikzpicture}
\draw (0,0) -- (0,-.1);
\draw (2,0) -- (2,-.1);
\draw (4,0) -- (4,-.1);
\draw (6,0) -- (6,-.1);
\draw (8,0) -- (8,-.1);

\draw (0,0) -- (-.1,0);
\draw (0,2) -- (-.1,2);
\draw (0,4) -- (-.1,4);
\draw (0,6) -- (-.1,6);
\draw (0,8) -- (-.1,8);

\draw[very thick] (0,0) -- (0,8.5);
\draw[very thick] (2,0) -- (2,8.5);
\draw[very thick] (4,0) -- (4,8.5);
\draw[very thick] (6,0) -- (6,8.5);
\draw[very thick] (8,0) -- (8,8.5);

\draw[very thick] (0,0) -- (8.5,0);
\draw[very thick] (0,2) -- (8.5,2);
\draw[very thick] (0,4) -- (8.5,4);
\draw[very thick] (0,6) -- (8.5,6);
\draw[very thick] (0,8) -- (8.5,8);

\node[draw= none] at (0,-.5) {\small $y'_1$};
\node[draw= none] at (2,-.5) {\small $y'_1 + (\frac{1}{r})^{i-1}$};
\node[draw= none] at (4,-.5) {\small $y'_1 + 2(\frac{1}{r})^{i-1}$};
\node[draw= none] at (6,-.5) {\small $y'_1 + 3(\frac{1}{r})^{i-1}$};
\node[draw= none] at (8,-.5) {\small $y'_1 + 4(\frac{1}{r})^{i-1}$};

\node[draw= none] at (-.5,0) {\small $y'_2$};
\node[draw= none] at (-1,2) {\small $y'_2 + (\frac{1}{r})^{i-1}$};
\node[draw= none] at (-1,4) {\small $y'_2 + 2(\frac{1}{r})^{i-1}$};
\node[draw= none] at (-1,6) {\small $y'_2 + 3(\frac{1}{r})^{i-1}$};
\node[draw= none] at (-1,8) {\small $y'_2 + 4(\frac{1}{r})^{i-1}$};

\foreach \i in {0,...,34}{
\pgfmathsetmacro{\x}{0.25*\i}
\draw[red,opacity=.2] (\x,0) -- (\x,8.5);
\draw[red,opacity=.2] (0,\x) -- (8.5,\x);
}

\draw[thick] (4.1,2.1) rectangle (5.9,3.9);
\node[draw=none] at (5,3) {\small $B^i(y,(2,1))$};
\end{tikzpicture}
\caption{Grid used in the proof of \Cref{treealphafatA}: $\mathcal{H}^i(y)$ (black) and $\mathcal{H}^{i+1}(y)$ (red) in dimension $2$, where $y'_j = y_j \sum_{k=i}^{k_0+1} (\frac{1}{r})^k$.}\label{Fig:refinedgrid}
\end{figure}
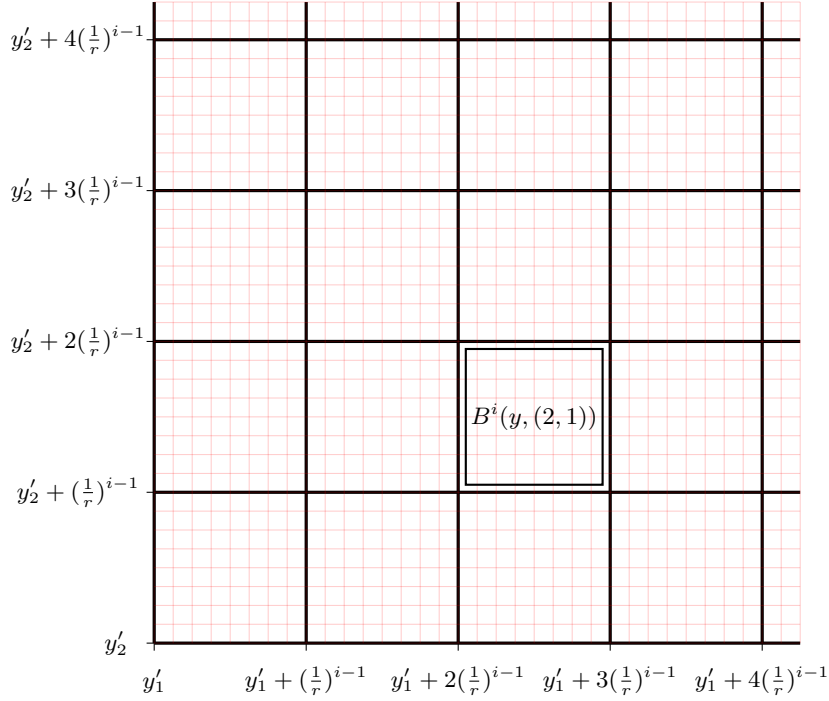

\begin{claim}
$\mathcal{C} = \{C(y) : y \in [\frac{r}{2}-1]^d\}$ is a $(1-1/r_0)$-general cover of $G$ of size $(f(r_0)/2-1)^d$.
\end{claim}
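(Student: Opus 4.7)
The claim has two parts: the size bound for $\mathcal{C}$, which follows immediately from the definition $\mathcal{C} = \{C(y) : y \in [r/2-1]^d\}$ together with the convention $[n] = \{0, 1, \ldots, n\}$, and the general-cover property, which is where all of the work lies. For the latter I would fix $v \in V(G)$ with $i := \rk(O_v)$ (so $s(O_v) \leq (1/r)^i$), observe that $v \in C(y)$ iff $\mathcal{H}^i_j(y)$ avoids $O_v$ for every coordinate $j$, and then count the ``good'' $y$'s by multiplying single-coordinate contributions over $j$.

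The core technical step is the following single-coordinate lemma: for each $j \in \{1,\ldots,d\}$, the set $B_j := \{y_j \in \{0, \ldots, r/2-1\} : \mathcal{H}^i_j(y_j) \cap O_v \neq \varnothing\}$ satisfies $|B_j| \leq 1$. Setting $\delta := \sum_{k=i}^{k_0+1}(1/r)^k$, I would extract from the geometric-series form the two controlling inequalities $\delta > (1/r)^i$ (strictly, since the first summand already equals $(1/r)^i$) and $(r/2-1)\delta \leq (1/r)^{i-1}/2$ (since $(r-2)/(2(r-1)) \leq 1/2$). Let $I_j$ denote the projection of $O_v$ onto the $j$-th axis, an interval of length at most $(1/r)^i$. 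Two distinct $y_j < y_j'$ in $B_j$ would yield integers $m, m'$ with $m(1/r)^{i-1} + y_j\delta \in I_j$ and $m'(1/r)^{i-1} + y_j'\delta \in I_j$; subtracting gives $|\Delta m \cdot (1/r)^{i-1} + \Delta y \cdot \delta| \leq (1/r)^i$ where $\Delta m := m'-m$ and $\Delta y := y_j'-y_j \in \{1,\ldots,r/2-1\}$. I would then split into cases: if $\Delta m = 0$, then $\Delta y \cdot \delta \leq (1/r)^i < \delta$ forces $\Delta y < 1$, a contradiction; if $|\Delta m| \geq 1$, the triangle inequality gives $|\Delta m \cdot (1/r)^{i-1} + \Delta y \cdot \delta| \geq (1/r)^{i-1} - (1/r)^{i-1}/2 = (1/r)^{i-1}/2 > (1/r)^i$, using $r \geq 4$ (which is guaranteed because $r_0 > 1$ forces the ceiling in $f(r_0)$ to be at least $2$), again a contradiction.

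Granted the lemma, $v \in C(y)$ whenever $y_j \notin B_j$ for every $j$, so the number of such $y$ is at least $\prod_{j=1}^d (r/2 - |B_j|) \geq (r/2-1)^d$. Finally, the defining choice $r = f(r_0) = 2\lceil 1/(1-(1-1/r_0)^{1/d})\rceil$ gives $1 - 2/r \geq (1-1/r_0)^{1/d}$, hence $(r/2-1)^d \geq (1-1/r_0)(r/2)^d \geq (1-1/r_0)|\mathcal{C}|$, which is the required $(1-1/r_0)$-general cover property. The main obstacle is the single-coordinate lemma: it is not a routine calculation, and its case analysis is where the precise balance between the two geometric parameters---$\delta$ just above $(1/r)^i$ (enough to separate distinct translates within $I_j$) and $(r/2-1)\delta$ just below $(1/r)^{i-1}/2$ (enough that shifts cannot wrap around the grid period)---becomes visible and is exactly what forces the particular form of the grid $\mathcal{H}^i(y)$.
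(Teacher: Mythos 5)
Your proposal is correct and follows essentially the same route as the paper: fix $v$ with $i=\rk(O_v)$, show that for each coordinate $j$ at most one value of $y_j$ makes $\mathcal{H}^i_j(y)$ hit $O_v$, multiply over coordinates to get at least $(r/2-1)^d$ good shifts, and conclude via $1-2/r \geq (1-1/r_0)^{1/d}$ from the definition of $f(r_0)$. Your explicit case analysis ($\Delta m = 0$ versus $|\Delta m|\geq 1$, using $\delta > (1/r)^i$ and $(r/2-1)\delta \leq (1/r)^{i-1}/2$) merely spells out the step the paper compresses into ``the previous observations imply that there is at most one value of $y_j$,'' so there is no substantive difference.
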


\begin{claimproof} For each $1\leq j \leq d$, let $e_j$ be the unit vector in $\mathbb{R}^d$ whose $j$-th coordinate is $1$. Let $v\in V(G)$ and let $i = \rk(O_v)$. Then $v \in C(y)$ for some $y \in [\frac{r}{2}-1]^d$ if and only if $O_v$ does not intersect $\mathcal{H}^i(y)$, and the latter happens if and only if $O_v$ does not intersect $\mathcal{H}^i_j(y)$ for any $1 \leq j \leq d$. Note that $\mathcal{H}^i_j(y)$ is a collection of hyperplanes in $\mathbb{R}^d$, which are orthogonal to the $j$-th axis and at pairwise distance at least $(\frac{1}{r})^{i-1} = r(\frac{1}{r})^{i}$. Moreover, for any $1 \leq j \leq d$, 
$\mathcal{H}^i_{j}(y + e_j)$ can be obtained by shifting $\mathcal{H}^i_j(y)$ along the $j$-th axis of a quantity $\sum_{k={i}}^{k_0+1}(\frac{1}{r})^{k}$, and it is easy to see that, since $r \geq 2$, we have $(\frac{1}{r})^{i} < \sum_{k={i}}^{k_0+1}(\frac{1}{r})^{k} < 2(\frac{1}{r})^{i}$. On the other hand, since $\rk(O_v)=i$, it follows from the definition of rank that $O_v$ can be enclosed in a box of size $(\frac{1}{r})^{i}$. We now count the number of points $y \in [\frac{r}{2}-1]^d$ such that $O_v$ does not intersect $\mathcal{H}^i(y)$. For fixed $1 \leq j \leq d$, the previous observations imply that there is at most one value of $y_j$ for which a point $y \in [\frac{r}{2}-1]^d$ is such that $\mathcal{H}^i_j(y) \cap O_v \neq \varnothing$, so at least $\frac{r}{2} - 1$ values of $y_j$ for which $y$ is such that $\mathcal{H}^i_j(y) \cap O_v = \varnothing$. Therefore, there are at least $(\frac{r}{2}-1)^d$ points $y \in [\frac{r}{2}-1]^d$ such that $\mathcal{H}^i(y)$ does not intersect $O_v$. Since the set $[\frac{r}{2}-1]^d$ has size $(\frac{r}{2})^d$, the proportion of elements of $\mathcal{C}$ containing $v$ is at least $\frac{(\frac{r}{2}-1)^d}{(\frac{r}{2})^d} = (1-\frac{2}{r})^d \geq ((1-\frac{1}{r_0})^{\frac{1}{d}})^d = 1-\frac{1}{r_0}$, as claimed.
\end{claimproof}

Note that, for fixed $i$ and $y$ as above, the collection of hyperplanes $\mathcal{H}^i(y)$ splits the space into boxes of size $(\frac{1}{r})^{i-1}$. We now consider these boxes. For a vector $m = (m_1, \ldots, m_d) \in \mathbb{Z}^d$, consider the box $B^i(y,m) = \{(x_1,\ldots,x_d) \in \mathbb{R}^d : m_j(\frac{1}{r})^{i-1}+y_j\sum_{k={i}}^{k_0+1}(\frac{1}{r})^{k} < x_j < (m_j+1)(\frac{1}{r})^{i-1}+y_j\sum_{k={i}}^{k_0+1} (\frac{1}{r})^{k}, \ \mbox{for every} \ 1 \leq j \leq d\}$. 

\begin{claim}\label{subboxA} For fixed $y \in [\frac{r}{2}-1]^d$ and any $0 \leq i' < i \leq k_0$, $\mathcal{H}^{i'}(y) \subseteq \mathcal{H}^{i}(y)$. Moreover, for any $m \in \mathbb{Z}^d$, the box $B^{i}(y,m)$ is completely contained in a box of the form $B^{i'}(y,m')$ for exactly one vector $m' \in \mathbb{Z}^{d}$. 
\end{claim}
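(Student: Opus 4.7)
The plan is to prove the two assertions in turn, with the first being the essential point and the second an easy consequence.

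For the inclusion $\mathcal{H}^{i'}(y) \subseteq \mathcal{H}^{i}(y)$, I would first observe that since $\mathcal{H}^{i}(y) = \bigcup_{j=1}^{d} \mathcal{H}^{i}_j(y)$ and hyperplanes in $\mathcal{H}^{i}_j(y)$ are normal to the $j$-th axis, it suffices to show that $\mathcal{H}^{i'}_j(y) \subseteq \mathcal{H}^{i}_j(y)$ for every $j$. Fix $j$ and an arbitrary hyperplane in $\mathcal{H}^{i'}_j(y)$, whose $j$-th coordinate equals $m'_j (1/r)^{i'-1} + y_j \sum_{k=i'}^{k_0+1}(1/r)^k$ for some $m'_j \in \mathbb{Z}$. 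I want to produce an integer $m_j$ such that this value equals $m_j (1/r)^{i-1} + y_j \sum_{k=i}^{k_0+1}(1/r)^k$. Solving for $m_j$ and multiplying through by $r^{i-1}$, the required identity becomes
\[
m_j \;=\; m'_j\, r^{i-i'} \;+\; y_j \sum_{k=i'}^{i-1} r^{i-1-k} \;=\; m'_j\, r^{i-i'} \;+\; y_j \cdot \frac{r^{i-i'}-1}{r-1}.
\]
The crux is thus showing that this right-hand side is an integer, which I expect to be the only nontrivial step in the whole proof.

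This integrality follows from two observations that are built into the setup: $r = f(r_0)$ is a positive (even) integer by construction, and $y_j$ is a non-negative integer since $y \in [\tfrac{r}{2}-1]^d$ and $r/2-1 \in \mathbb{Z}$. Consequently, $\frac{r^{i-i'}-1}{r-1} = 1 + r + r^2 + \cdots + r^{i-i'-1}$ is an integer, and the displayed expression for $m_j$ is an integer. This exhibits the hyperplane as an element of $\mathcal{H}^{i}_j(y)$, completing the first part. Intuitively, the nested offsets $y_j \sum_{k=i}^{k_0+1}(1/r)^k$ were precisely designed so that successively refined grids remain aligned with the coarser ones.

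For the second assertion, I would argue by connectedness. By the definition of $B^{i}(y,m)$, this open box is disjoint from $\mathcal{H}^{i}(y)$, and by the first part it is therefore also disjoint from $\mathcal{H}^{i'}(y)$. Since the open sets $\{B^{i'}(y,m') : m' \in \mathbb{Z}^d\}$ are exactly the connected components of $\mathbb{R}^d \setminus \mathcal{H}^{i'}(y)$, and $B^{i}(y,m)$ is a nonempty connected subset of $\mathbb{R}^d \setminus \mathcal{H}^{i'}(y)$, it must lie entirely within exactly one such component $B^{i'}(y,m')$, which is the desired conclusion.
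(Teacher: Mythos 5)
Your proposal is correct and follows essentially the same route as the paper: the first part is the same coefficient computation showing $m_j = m'_j r^{i-i'} + y_j\sum_{k=i'}^{i-1} r^{i-1-k}$ is an integer (using that $r$ is an integer and $y_j \in \{0,\ldots,\frac{r}{2}-1\}$ is an integer), and the second part is the paper's refinement observation, which you merely spell out via connected components. No gaps.
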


\begin{claimproof} Let $y = (y_1, \ldots, y_d)$. Let $x=(x_1,\ldots,x_d) \in \mathcal{H}^{i'}(y)$. There exists $1 \leq j \leq d$ such that $x \in \mathcal{H}^{i'}_j(y)$. Then there exists $m_j \in \mathbb{Z}$ such that 
\begin{align*}
x_j &=  m_j\Big(\frac{1}{r}\Big)^{i'-1}+y_j\sum_{k={i'}}^{k_0+1}\Big(\frac{1}{r}\Big)^{k} = m_j\Big(\frac{1}{r}\Big)^{i'-i}\Big(\frac{1}{r}\Big)^{i-1} +y_j\sum_{k={i'}}^{i-1}\Big(\frac{1}{r}\Big)^{k} +y_j\sum_{k={i}}^{k_0+1}\Big(\frac{1}{r}\Big)^{k} \\
&= \Big(m_j\Big(\frac{1}{r}\Big)^{i'-i}+y_j\sum_{k={i'}}^{i-1}\Big(\frac{1}{r}\Big)^{k-(i-1)}\Big)\Big(\frac{1}{r}\Big)^{i-1} + y_j\sum_{k={i}}^{k_0+1}\Big(\frac{1}{r}\Big)^{k}.
\end{align*}
Since the coefficient of $(\frac{1}{r})^{i-1}$ is an integer, we conclude that $x \in \mathcal{H}^{i}_j(y)$ and so $\mathcal{H}^{i'}(y) \subseteq \mathcal{H}^{i}(y)$.

To prove the remaining statement simply recall that, for $m, m' \in \mathbb{Z}^d$, $B^{i}(y,m)$ is one of the boxes induced by $\mathbb{R}^d \setminus \mathcal{H}^{i}(y)$ and $B^{i'}(y,m')$ is one of the boxes induced by $\mathbb{R}^d \setminus \mathcal{H}^{i'}(y)$. Since $\mathcal{H}^{i}(y)$ is a refinement of $\mathcal{H}^{i'}(y)$, for any box of the form $B^{i}(y,m)$ there must be exactly one box of the form $B^{i'}(y,m)$ containing it.
\end{claimproof}

We now construct a tree decomposition of $G[C(y)]$, for each element $C(y)$ of the $(1-1/r_0)$-general cover $\mathcal{C}$ defined above. Therefore, fix $y \in [\frac{r}{2}-1]^d$. For each $0 \leq i \leq k_0$ and $m \in \mathbb{Z}^d$, let $A^i(y,m) = \{v \in C^i(y): O_v \cap B^i(y,m) \neq \varnothing\}$ and let $X_{t^i(y,m)} = \bigcup_{0\leq k \leq i}\{v \in C^k(y) : O_v \cap B^i(y,m) \neq \varnothing\}$. In words, $X_{t^i(y,m)}$ is the set of vertices corresponding to objects of rank at most $i$ in $C(y)$ and intersecting the box $B^i(y,m)$. For each pair $(i,m)$, build a node $t^i(y,m)$ if $A^i(y,m) \neq \varnothing$ and associate to it the set $ X_{t^i(y,m)}$, which will be the corresponding bag in the tree decomposition we are building. We say that $t^{i_1}(y,{m_1})$ is a \textit{parent} of $t^{i_2}(y,m_2)$ if the following conditions are satisfied: $i_1 < i_2$, $B^{i_1}(y,m_1) \supseteq B^{i_2}(y,m_2)$ and, among all pairs satisfying these two conditions, $(i_1,m_1)$ has largest value of the first entry. Observe that, by \Cref{subboxA}, each node $t^{i_2}(y,{m_2})$ has at most one parent. For each pair of nodes $t^{i_1}(y,{m_1})$, $t^{i_2}(y,m_2)$ such that $t^{i_1}(y,{m_1})$ is a parent of $t^{i_2}(y,m_2)$, we then add the edge $t^{i_1}(y,{m_1})t^{i_2}(y,m_2)$. We claim that the resulting graph $F(y)$ is acyclic. Suppose, to the contrary, that it contains a cycle with vertices $t^{i_0}(y,{m_0}), t^{i_1}(y,m_1), \ldots, t^{i_{\ell-1}}(y,m_{\ell-1})$ in cyclic order. Without loss of generality, $t^{i_0}(y,{m_0})$ is a parent of $t^{i_1}(y,m_1)$. This implies that, for each $k$, $t^{i_k}(y,{m_k})$ is a parent of $t^{i_{k+1}}(y,m_{k+1})$ (indices modulo $\ell$). Therefore, by definition of parent, $i_0 < i_1 < \cdots < i_{\ell-1}$ and $i_{\ell-1} < i_0$, a contradiction.     

We then glue the components of $F(y)$ into a tree by adding a node $t^{-1}$ and making $t^{-1}$ adjacent to an arbitrary node of each component of $F(y)$. Let the resulting tree be $T(y)$. Observe that $|V(T(y))| \leq |V(G)|+1$. Indeed, $t^i(y,m)$ is a node of $T(y)$ only if $A^i(y,m) \neq \varnothing$ and, for fixed $y$, $A^{i_1}(y,m_1) \subseteq V(G)$ is disjoint from $A^{i_2}(y,m_2) \subseteq V(G)$. Let $\mathcal{T} = (T(y),\{X_{t^i(y,m)}\}_{t^i(y,m)\in V(T(y))})$, where we assign the empty bag to the node $t^{-1}$. Clearly, $\mathcal{T}$ can be computed in linear time.

\begin{claim} $\mathcal{T} = (T(y),\{X_{t^i(y,m)}\}_{t^i(y,m)\in V(T(y))})$ is a tree decomposition of $G[C(y)]$. 
\end{claim}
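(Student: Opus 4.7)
The plan is to verify the three defining axioms (T1), (T2), (T3) of a tree decomposition, all of which will rest on a single structural fact: for each $v \in C(y)$ with rank $i = \rk(O_v)$, since $v \in C^i(y)$ the object $O_v$ misses the grid $\mathcal{H}^i(y)$, so (being path-connected and compact) $O_v$ is contained in exactly one open box at its own level, which I denote $B^i(y,m^*_v)$. This ``home box'' of $v$ will be the combinatorial anchor throughout.

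First I would handle the two easy axioms. For (T1), the above observation gives $v \in A^i(y,m^*_v)$, so $A^i(y,m^*_v) \neq \varnothing$, the node $t^i(y,m^*_v)$ exists in $T(y)$, and its bag contains $v$. For (T2), I would take an edge $uv$ in $G[C(y)]$ with $\rk(O_u) = i \leq j = \rk(O_v)$: any point of $O_u \cap O_v$ lies in $O_v \subseteq B^j(y,m^*_v)$, whence $O_u \cap B^j(y,m^*_v) \neq \varnothing$, and since $\rk(O_u) \leq j$ both $u$ and $v$ land in the bag $X_{t^j(y,m^*_v)}$.

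The hard part will be (T3), the subtree condition. Fix $u \in C(y)$ of rank $i$ and let $T_u = \{t \in V(T(y)) : u \in X_t\}$; the goal is to show $T(y)[T_u]$ is connected. I would proceed in two sub-steps. First, I show that $T_u$ has $t^i(y,m^*_u)$ as its unique level-$i$ member and that every $t^k(y,m) \in T_u$ satisfies $k \geq i$ and $B^k(y,m) \subseteq B^i(y,m^*_u)$: indeed, $u \in X_{t^k(y,m)}$ forces $k \geq i$ by the definition of the bags together with $O_u \cap B^k(y,m) \neq \varnothing$, and then \Cref{subboxA} places $B^k(y,m)$ inside a unique level-$i$ box, which must be $B^i(y,m^*_u)$ since $O_u \subseteq B^i(y,m^*_u)$. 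Second, I trace the parent chain of any $t^k(y,m) \in T_u$ with $k > i$ back to $t^i(y,m^*_u)$, showing it never leaves $T_u$: the node $t^i(y,m^*_u)$ itself is a valid candidate parent of $t^k(y,m)$ (it satisfies $i < k$, $B^i(y,m^*_u) \supseteq B^k(y,m)$, and $u \in A^i(y,m^*_u)$), so the actual parent $t^{k'}(y,m')$, whose level is chosen maximal, satisfies $i \leq k' < k$; and from $B^{k'}(y,m') \supseteq B^k(y,m)$ one deduces $O_u \cap B^{k'}(y,m') \supseteq O_u \cap B^k(y,m) \neq \varnothing$, which together with $k' \geq i$ places $t^{k'}(y,m') \in T_u$. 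Iterating produces the desired path to $t^i(y,m^*_u)$ entirely inside $T_u$.

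Finally, the artificial root $t^{-1}$ carries an empty bag and therefore lies outside every $T_u$, so the gluing of $F(y)$ into $T(y)$ does not disrupt the connectedness of $T(y)[T_u]$. This yields all three axioms and completes the verification that $\mathcal{T}$ is a tree decomposition of $G[C(y)]$.
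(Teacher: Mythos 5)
Your proof is correct and follows essentially the same route as the paper's: (T1) and (T2) via the box at the level of the larger-rank endpoint, and (T3) by showing every bag containing $u$ corresponds to a box inside $u$'s unique level-$\rk(O_u)$ box and then walking the parent chain (which exists because the level-$\rk(O_u)$ node is always a candidate parent, by \Cref{subboxA}) back to that node without leaving $T_u$. The only cosmetic difference is that you make explicit the "home box" fact from path-connectedness, which the paper uses implicitly.
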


\begin{claimproof} We first check that (T1) holds. Let $v \in C(y)$ and suppose that $\rk(O_v) = i$. Then $O_v$ intersects one of the boxes $B^{i}(y,m)$, for some $m \in \mathbb{Z}^d$, and so $v \in A^i(y,m)$. Therefore, $t^i(y,m) \in V(T(y))$ and $v \in X_{t^i(y,m)}$. 

We now check that (T2) holds. Let $u,v \in C(y)$ such that $uv \in E(G)$. Then $O_u \cap O_v \neq \varnothing$ and let $x=(x_1,\ldots,x_d)$ be a point in this intersection. Without loss of generality, $\rk(O_u) \leq \rk(O_v)=i$. Since $v \in C^i(y)$, $x \not \in \mathcal{H}^i(y)$. This implies that $x$ is contained in a box $B^i(y,m)$, for some $m\in\mathbb{Z}^d$, and so $v \in A^i(y,m)$, $t^i(y,m) \in V(T(y))$ and $\{u,v\} \subseteq X_{t^i(y,m)}$.  

We finally check that (T3) holds. For $v \in C(y)$, let $T(y)_v$ be the subgraph of $T(y)$ induced by the set of nodes of $T(y)$ whose bag contains $v$. Let $v \in C(y)$ and suppose that $\rk(O_v) = i$. Observe first that there is a unique $m \in \mathbb{Z}^d$ such that $v \in X_{t^{i}(y,m)}$, or else $O_v \cap \mathcal{H}^i(y) \neq \varnothing$ and $v \not \in C(y)$. Observe now that, by definition, $v \not \in X_{{t^{i_1}}(y,m_1)}$ for any $i_1 < i$ and $m_1 \in \mathbb{Z}^d$. Suppose finally that $v \in X_{t^{i_1}(y,m_1)}$, for some $i_1 > i$ and $m_1\in \mathbb{Z}^d$. Then $O_v$ intersects $B^{i_1}(y,m_1)$ and, by \Cref{subboxA}, there is a unique $m' \in \mathbb{Z}^d$ such that $B^{i_1}(y,m_1)$ is completely contained in $B^{i}(y,m')$ (it is easy to see that $m' = m$). Hence, ${t^{i_1}(y,m_1)}$ must have a parent, say ${t^{i_2}(y,m_2)}$ for some $i_2$ such that $i_1 > i_2 \geq i$ and $m_2 \in \mathbb{Z}^d$. This means that $B^{i_2}(y,m_2) \supseteq B^{i_1}(y,m_1)$, and so $v \in X_{t^{i_2}(y,m_2)}$. We then deduce inductively that there must be a path from $t^{i_1}(y,m_1)$ to $t^{i}(y,m)$ in $T(y)_v$. Therefore, $T(y)_v$ is connected.
\end{claimproof} 

\begin{claim} $\alpha(\mathcal{T}) \leq cr^{2d}$.
\end{claim}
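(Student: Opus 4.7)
The plan is to bound the independence number of each bag by $cr^{2d}$ using a single covering argument driven by the $c$-fatness of $\mathcal{O}$; the bag at the auxiliary root $t^{-1}$ is empty and trivially satisfies the bound.

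First I would identify the two scales hidden in the bag $X_{t^i(y,m)}$. By construction, every vertex $v$ of the bag has $\rk(O_v)\le i$, i.e., $\lfloor\log_{1/r} s(O_v)\rfloor\le i$; since $0<1/r<1$ makes $\log_{1/r}$ strictly decreasing, this yields the lower bound $s(O_v)>(1/r)^{i+1}$ on the size of every object represented in the bag. On the other hand, every such object intersects the enclosing box $B^i(y,m)$, whose side length is $(1/r)^{i-1}$. The ratio of these two scales is the integer $r^2$.

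Next I would fix an arbitrary independent set $I\subseteq X_{t^i(y,m)}$, so that $\{O_v:v\in I\}$ is a pairwise non-intersecting family each of whose members meets $B^i(y,m)$. The key step is to cover (the closure of) $B^i(y,m)$ by a grid $\mathcal{F}$ of $r^{2d}$ closed axis-aligned boxes of side length $(1/r)^{i+1}$ --- exactly $r^2$ per axis, since $r$ is a positive integer by the definition $r=2\lceil\cdot\rceil$. Each $O_v$ with $v\in I$ then meets at least one $R\in\mathcal{F}$. Applying the definition of $c$-fatness to each $R\in\mathcal{F}$ (which has side length $(1/r)^{i+1}$, matching the lower bound on object sizes established above), at most $c$ of the pairwise non-intersecting objects $\{O_v:v\in I\}$ can meet $R$. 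Summing over the $r^{2d}$ boxes in $\mathcal{F}$ yields $|I|\le c\, r^{2d}$, and since this holds for every bag, $\alpha(\mathcal{T})\le c r^{2d}$.

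The only subtle point, and the one I would have to be careful about, is aligning scales so that $c$-fatness applies verbatim: the covering-box side length must be chosen equal to the worst-case lower bound $(1/r)^{i+1}$ on the sizes of objects in the bag forced by the rank condition, and the container side length $(1/r)^{i-1}$ then dictates the count $r^{2d}$. With these two choices locked in, the rest is an elementary grid-cover count and a direct invocation of \Cref{defc-fat}.
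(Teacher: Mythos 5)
Your proof is correct and is essentially identical to the paper's: cover the box $B^i(y,m)$ of side length $(1/r)^{i-1}$ by $r^{2d}$ boxes of side length $(1/r)^{i+1}$, note that the rank condition forces every object in the bag to have size at least $(1/r)^{i+1}$, and apply $c$-fatness to each small box to bound any independent set by $cr^{2d}$. The scale-alignment point you flag is exactly the one the paper's argument rests on, so there is nothing to add.
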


\begin{claimproof} Fix an arbitrary node $t^i(y,m)$ of $T(y)$. We bound the independence number of the subgraph of $G[C(y)]$ induced by the bag $X_{t^i(y,m)}$. Observe first that, for any $v \in X_{t^i(y,m)}$, $O_v$ intersects $B^i(y,m)$, which is a box of side length $(\frac{1}{r})^{i-1}$. Consider a collection $\mathcal{B}$ of $r^{2d}$ generic closed boxes in $\mathbb{R}^d$ of side length $(\frac{1}{r})^{i+1}$ and such that their union is exactly $B^i(y,m)$. Let $P \subseteq X_{t^i(y,m)}$ be an independent set of $G[C(y)]$ and let $\mathcal{P} = \{O_v : v \in P\}$ be the corresponding subcollection of $\mathcal{O}$ of pairwise non-intersecting objects. For each $v \in P$, $\rk(O_v) \leq i$ and so $s(O_v) \geq (\frac{1}{r})^{i+1}$. Moreover, $O_v$ intersects at least one box from $\mathcal{B}$. Therefore, since the collection $\mathcal{O}$ is $c$-fat, there are at most $c$ objects in $\mathcal{P}$ intersecting any fixed box in $\mathcal{B}$, and so $|P| \leq cr^{2d}$, as claimed.
\end{claimproof}

This concludes the proof of \Cref{treealphafatA}.
\end{proof}

We note some consequences of \Cref{treealphafatA}. First, there exist fractionally $\tin$-fragile classes of unbounded local tree-independence number.

\begin{corollary}\label{fragileunboundedlocal} The class of disk graphs is fractionally $\tin$-fragile but has unbounded local tree-independence number. 
\end{corollary}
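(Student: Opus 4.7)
The statement has two halves. The fractional $\tin$-fragility of the disk graph class will be a direct appeal to \Cref{treealphafatA}: one only needs to verify that an arbitrary collection of disks in $\mathbb{R}^{2}$ is $c$-fat for some absolute constant $c$. This is immediate from \Cref{cfatA} together with the well-known fact (recalled just after \Cref{cfatA}) that balls in $\mathbb{R}^{d}$ are $2$-locally fat, hence form a $4^{d}$-fat collection. Specialising to $d=2$ gives $c=16$, and \Cref{treealphafatA} then produces, for every $r > 1$ and every disk graph $G$, a $(1-1/r)$-general cover of $G$ whose induced subgraphs have tree-independence number bounded by a function of $r$ alone.

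For the second half, I reuse the construction appearing inside the proof of \Cref{equivlayeredA}: for each $n$, let $G_{n}$ be the graph obtained from the $n \times n$-grid graph by adding a dominating vertex $v_{n}$. The key verification is that $G_{n}$ is a disk graph. By Koebe's circle-packing theorem the planar grid admits a representation as a coin graph, i.e.\ as an interior-disjoint family $\mathcal{D}$ of disks whose pairwise tangencies realise exactly the edge set of the grid. I then represent $v_{n}$ by a single disk $D_{v_{n}}$ chosen large enough to contain every disk of $\mathcal{D}$; this disk meets every coin but does not alter any of the existing intersection relations, so the resulting intersection graph is exactly $G_{n}$.

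Having realised $G_{n}$ as a disk graph, I observe that $v_{n}$ is dominating, so $N_{G_{n}}^{1}[v_{n}] = V(G_{n})$ and therefore $G_{n}[N_{G_{n}}^{1}[v_{n}]] = G_{n}$ contains the $n \times n$-grid as an induced subgraph. Since tree-independence number is monotone under taking induced subgraphs (restrict every bag of a tree decomposition of the ambient graph to the subgraph's vertex set), the final lemma of \Cref{sec:layeredA}, which shows that the $n \times n$-grid graph has tree-independence number $\Omega(n)$, yields $\tin(G_{n}[N_{G_{n}}^{1}[v_{n}]]) = \Omega(n)$. Hence there is no function $f \colon \mathbb{N} \to \mathbb{N}$ bounding $\tin$ inside $1$-balls of disk graphs, so the class has unbounded local tree-independence number.

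\textbf{Main obstacle.} The only genuinely geometric step is realising $G_{n}$ as a disk graph; once that is in hand, the two halves of the corollary are wholly modular assemblies of \Cref{cfatA,treealphafatA} on one side and of the grid tree-independence lower bound on the other. I expect no unforeseen subtleties, since a Koebe packing plus one large enclosing disk is standard.
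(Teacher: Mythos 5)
Your proof is correct and follows essentially the same route as the paper: fractional $\tin$-fragility via \Cref{treealphafatA} (with $c$-fatness of disks coming from \Cref{cfatA} and $2$-local fatness of balls), and unboundedness of local tree-independence number via the grid-plus-dominating-vertex graphs $G_n$ realized as disk graphs. The only cosmetic difference is that you invoke Koebe's theorem to realize the grid, whereas a direct realization by unit disks centered at the grid points (as in \Cref{Fig:unboundlayered}) already suffices; the paper simply notes the inclusion $\mathcal{G}'\subseteq\mathcal{G}$ and cites the proof of \Cref{equivlayeredA} for the local lower bound.
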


\begin{proof} Let $\mathcal{G}$ be the class of intersection graphs of disks in $\mathbb{R}^2$. By \Cref{treealphafatA}, $\mathcal{G}$ is fractionally $\tin$-fragile. Let now $G_n$ be the graph obtained from the $n\times n$-grid graph by adding a dominating vertex. By the proof of \Cref{equivlayeredA}, the class $\mathcal{G}' = \{G_n : n \in \mathbb{N}\}$ has unbounded local tree-independence number. We conclude by observing that $\mathcal{G'} \subseteq \mathcal{G}$.
\end{proof}

It is easy to see that every $d$-dimensional grid of side length $n$ can be realized as the intersection graph of a family of unit balls in $\mathbb{R}^d$. Therefore, \Cref{treealphafatA} and \Cref{gridsunbounded} immediately imply the following dichotomy.

\begin{corollary}\label{gridsdich} Let $I, J \subseteq \mathbb{N}$, with $J \neq \{1\}$, and let $\mathcal{G} = \{G_{d,n} : d \in I, n \in J\}$ be a family of grids. Then $\mathcal{G}$ is fractionally $\tin$-fragile if and only if $I$ is finite. 
\end{corollary}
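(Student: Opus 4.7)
The plan is to prove the two directions separately, with both following quickly from results already available in the paper. Writing $\mathcal{G}(I,J) = \{G_{d,n} : d \in I, n \in J\}$, the ``only if'' direction is immediate from \Cref{gridsunbounded}: if $I$ is infinite and $J \neq \{1\}$, then the quoted lemma states verbatim that $\mathcal{G}(I,J)$ is not fractionally $\tin$-fragile. So the real work is in the ``if'' direction.

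For the ``if'' direction, I would first observe that for any fixed $d$ and $n$, the grid $G_{d,n}$ is realizable as the intersection graph of a collection of unit balls in $\mathbb{R}^d$: place a closed Euclidean ball of radius $1/2$ centered at each point of $[n]^d \subset \mathbb{R}^d$. Two such balls intersect precisely when the corresponding integer vectors are at Euclidean distance at most $1$, which among distinct integer vectors happens iff they differ by $\pm 1$ in exactly one coordinate, recovering the edges of $G_{d,n}$. Hence $G_{d,n}$ belongs to the class of intersection graphs of balls in $\mathbb{R}^d$.

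Next, I would invoke \Cref{cfatA}, according to which any collection of balls in $\mathbb{R}^d$ is $c(d)$-fat with $c(d) = (2k)^d$ for $k = 2$ (since balls are $2$-locally fat), i.e.\ $c(d) = 4^d$. Then \Cref{treealphafatA} applies and yields, for every $r_0 > 1$, a function $f$ and a bound $c(d)\, f(r_0)^{2d}$ on the tree-independence number of the elements of a $(1 - 1/r_0)$-general cover of $G_{d,n}$. In particular, the class of intersection graphs of balls in $\mathbb{R}^d$ is fractionally $\tin$-fragile with a function $g_d \colon \mathbb{N} \to \mathbb{N}$ depending only on $d$. Since $I$ is finite, setting $g(r_0) = \max_{d \in I} g_d(r_0)$ gives a single function witnessing fractional $\tin$-fragility of $\mathcal{G}(I,J)$.

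The only step requiring any care is the realization claim, and this is immediate from the integrality of grid coordinates; the rest is purely an application of \Cref{treealphafatA} combined with the observation that a finite union of fractionally $\tin$-fragile classes is fractionally $\tin$-fragile (by taking the pointwise maximum of the witnessing functions). There is no genuine obstacle here — this is why the paper describes the statement as ``immediate''.
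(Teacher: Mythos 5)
Your proposal is correct and follows essentially the same route as the paper: the paper also realizes each $G_{d,n}$ as a unit-ball intersection graph in $\mathbb{R}^d$ (hence a $c$-fat collection, via the observation following \Cref{cfatA}) and then cites \Cref{treealphafatA} for the ``if'' direction and \Cref{gridsunbounded} for the ``only if'' direction. Your explicit handling of the finite union over $d \in I$ by taking the pointwise maximum of the witnessing functions is a detail the paper leaves implicit, but it is the intended argument.
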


Note that \Cref{gridsdich} cannot be strengthened to bounded layered tree-independence number. Indeed, even though $2$-dimensional grids have bounded layered tree-independence number (since, for example, they are planar), the family $\{G_{3,n} : n \in \mathbb{N}\}$ of $3$-dimensional grids has unbounded layered tree-independence number, as observed in \Cref{sec:layeredlemmas}. 

In \Cref{layeredtreealgoA} we showed that the class of graphs of bounded layered tree-independence number is closed under taking odd powers. It is not clear whether odd powers of graphs from \textit{any} fractionally $\tin$-fragile class form a fractionally $\tin$-fragile class. Nevertheless, we now show that another fractionally $\tin$-fragile class, namely that of intersection graphs of $c$-fat collections of objects in $\mathbb{R}^d$, is closed under taking odd powers.   

\begin{theorem}\label{c-fatoddpower} Let $G$ be the intersection graph of a $c$-fat collection of objects $\mathcal{O}$ in $\mathbb{R}^d$. Let $k$ be a positive integer. Then $G^{2k+1}$ is the intersection graph of a $(3^d(2k+1)^dc)$-fat collection of objects in $\mathbb{R}^d$. Moreover, given $G$ and $\mathcal{O}$, such a collection can be computed in time polynomial in $|V(G)|$.
\end{theorem}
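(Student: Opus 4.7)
\textbf{The plan is to} set $O'_v := \bigcup_{u \in N^k_G[v]} O_u$ for each $v \in V(G)$. Each $O'_v$ is a finite union of compact path-connected sets, and is itself path-connected because any two constituents $O_u, O_{u'}$ of $O'_v$ are linked through $O_v$ by a chain of at most $2k+1$ pairwise intersecting objects from $\mathcal{O}$, obtained by concatenating $G$-paths of length at most $k$ from $u$ and $u'$ to $v$. For the realization of $G^{2k+1}$, note that $O'_u \cap O'_v \neq \varnothing$ iff there exist $u' \in N^k_G[u]$ and $v' \in N^k_G[v]$ with $O_{u'} \cap O_{v'} \neq \varnothing$, that is, with $d_G(u',v') \leq 1$; this is equivalent to $d_G(u,v) \leq 2k+1$. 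The collection is produced in polynomial time by running BFS from each vertex.

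\textbf{The core step} is bounding the fatness of this new collection. Fix a box $R$ of side $r$ and a subcollection $\mathcal{F} \subseteq \{O'_v\}$ of pairwise non-intersecting objects of size at least $r$ each meeting $R$; my goal is $|\mathcal{F}| \leq 3^d(2k+1)^d c$. The key reduction is the following selection: for every $O'_v \in \mathcal{F}$ I will exhibit $w \in N^k_G[v]$ with $s(O_w) \geq r/(2k+1)$ such that $O_w$ meets the box $R^*$ of side $3r$ concentric with $R$. Granting this, the selected $O_w$'s are pairwise non-intersecting (each $O_w \subseteq O'_v$ for a unique $O'_v \in \mathcal{F}$), and covering $R^*$ by the natural grid of $3^d(2k+1)^d$ axis-aligned boxes of side $r/(2k+1)$ and invoking the $c$-fatness of $\mathcal{O}$ on each grid cell (which meets at most $c$ pairwise non-intersecting objects from $\mathcal{O}$ of size at least $r/(2k+1)$) yields the desired bound.

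\textbf{The main obstacle} is proving the selection, since the individual constituents of $O'_v$ can be arbitrarily smaller than $O'_v$ itself. My argument exploits the fact that, for any compact $A \subseteq \mathbb{R}^d$, the size $s(A)$ coincides with the $\ell_\infty$-diameter of $A$. Set $L := r/(2k+1)$ and $\mathcal{L} := \{u \in N^k_G[v] : s(O_u) \geq L\}$. First, I would show $\mathcal{L} \neq \varnothing$: otherwise, for any $x \in O_u, y \in O_{u'}$ in $O'_v$, telescoping $\ell_\infty$-distances along the length-at-most-$2k+1$ chain from $O_u$ to $O_{u'}$ through $O_v$ gives $|x-y|_\infty < (2k+1) L = r$, contradicting $s(O'_v) \geq r$. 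Next, I would pick $u_0 \in N^k_G[v]$ with $O_{u_0} \cap R \neq \varnothing$ and any $u^* \in \mathcal{L}$, form a chain $O_{u_0} = O_{a_0}, \ldots, O_{a_m} = O_{u^*}$ of length $m \leq 2k$ going via $O_v$, and let $j^* := \min\{j : a_j \in \mathcal{L}\}$. If $j^* = 0$ then $w := u_0$ works; otherwise $O_{a_0}, \ldots, O_{a_{j^*-1}}$ all have size $<L$, so for any $p \in O_{a_0} \cap R$ and $q \in O_{a_{j^*-1}} \cap O_{a_{j^*}}$ the same telescoping gives $|p-q|_\infty < j^* L \leq 2kL < r$, placing $q \in O_{a_{j^*}} \cap R^*$ and showing that $w := a_{j^*}$ satisfies the selection.
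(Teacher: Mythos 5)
Your proposal is correct and follows essentially the same route as the paper: the same construction $O'_v=\bigcup_{u\in N^k_G[v]}O_u$, the same realization argument for $G^{2k+1}$, the same key selection claim (a constituent of size at least $r/(2k+1)$ meeting the concentric box of side $3r$), and the same grid-plus-pigeonhole finish. The only difference is cosmetic: you prove the selection claim directly by telescoping $\ell_\infty$-diameters along a chain and taking the first large object, whereas the paper argues by contradiction with a two-case enclosing-box induction, but the underlying idea is identical.
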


\begin{proof} For each $v \in V(G)$, denote by $O_v$ the object in $\mathcal{O}$ corresponding to the vertex $v \in V(G)$. For each non-negative integer $j$, let $N^j[O_v]$ be the subset of objects from $\mathcal{O}$ corresponding to vertices in $N^{j}_{G}[v]$, and let $O^j_v = \bigcup_{O \in N^j[O_v]}O$. Note that $O^j_v$ is path-connected and compact in $\mathbb{R}^d$.  

Consider the collection of objects $\mathcal{O}^k = \{O^k_v: v \in V(G)\}$. We claim that its intersection graph $G'$ is isomorphic to $G^{2k+1}$. The map sending each $v \in V(G)$ to $O^k_v$ gives a natural bijection between the vertex sets of $G^{2k+1}$ and $G'$. Consider now two distinct vertices $u,v \in G'$, corresponding to $O^k_u$ and $O^k_v$, respectively. We have that $u$ and $v$ are adjacent in $G'$ if and only if there exist $O_x \subseteq O^k_u$ and $O_y \subseteq O^k_v$ with $O_x \cap O_y \neq \varnothing$, for some $x,y\in V(G)$ (note that it might be $x = y$). But $d_G(u,x)$ and $d_G(v,y)$ are both at most $k$, from which $d_G(u,v) \leq 2k+1$, and so $u$ and $v$ are adjacent in $G^{2k+1}$. Conversely, if $u$ and $v$ are distinct vertices adjacent in $G^{2k+1}$, then $d_G(u,v) \leq 2k+1$. Take a shortest $u, v$-path $P$ in $G$. Let $P_u$ be the subpath of $P$ with endpoint $u$ and containing vertices at distance at most $k$ from $u$. Similarly, let $P_v$ be the subpath of $P$ with endpoint $v$ and containing vertices at distance at most $k$ from $v$. Then $\bigcup_{x \in V(P_v)}O_x \subseteq O^k_v$, $\bigcup_{x \in V(P_u)}O_x \subseteq O^k_u$ and $\bigcup_{x \in V(P_v)}O_x \cap \bigcup_{x \in V(P_u)}O_x \neq \varnothing$, from which $O^k_u \cap O^k_v \neq \varnothing$.         

We now show that the collection $\mathcal{O}^k$ is $(3^d(2k+1)^dc)$-fat. Suppose, to the contrary, that for some $r$ there exists a closed box $R$ of side length $r$ intersecting a subcollection $\mathcal{P}\subseteq \mathcal{O}^k$ of more than $3^d(2k+1)^dc$ pairwise non-intersecting objects of size at least $r$. Let $\mathcal{P} = \{O^k_{v_1},\ldots, O^k_{v_m}\}$, for some $m > 3^d(2k+1)^dc$. Let $R'$ be the box of side length $3r$ with the same center as $R$. 

\begin{claim}\label{powerfat} For each $i \in \{1, \ldots, m\}$, there exists $A_i \in N^k[O_{v_i}]$ of size at least $\frac{r}{2k+1}$ and such that $A_i \cap R' \neq \varnothing$.
\end{claim}

\begin{claimproof}[Proof of \Cref{powerfat}] Suppose, to the contrary, that there exists $i \in \{1, \ldots, m\}$ such that every object in $N^k[O_{v_i}]$ either does not intersect $R'$ or has size less than $\frac{r}{2k+1}$. We distinguish two cases. 

\textbf{Case I:} Every object in $N^k[O_{v_i}]$ intersects $R'$. By the assumption from the previous paragraph, every object in $N^k[O_{v_i}]$ has size less than $\frac{r}{2k+1}$. In particular, $O_{v_i}$ is contained in a box of size less than $\frac{r}{2k+1}$. We show inductively that, for each $j \in \{0, \ldots, k\}$, $O^j_{v_i}$ is contained in a box of size less than $(2j+1)\frac{r}{2k+1}$. The base case $j=0$ follows from the previous observation. Take now $j \in \{0, \ldots, k-1\}$ and suppose that $O^j_{v_i}$ is contained in a box $X$ of size less than $(2j+1)\frac{r}{2k+1}$. Since every object in $N^k[O_{v_i}]$ has size less than $\frac{r}{2k+1}$, every object in $N^{j+1}[O_{v_i}] \setminus N^{j}[O_{v_i}]$ is contained in a box of size less than $\frac{r}{2k+1}$, and moreover every such object must intersect the box $X$. Therefore, $O^{j+1}_{v_i}$ is contained in a box of size less than $(2(j+1)+1)\frac{r}{2k+1}$. Taking $j = k$, we deduce that $O^k_{v_i}$ is contained in a box of size less than $(2k+1)\frac{r}{2k+1} = r$, contradicting the fact that $O^k_{v_i}$ has size at least $r$.

\textbf{Case II:} There exists $B \in N^k[O_{v_i}]$ not intersecting $R'$. Let $u \in V(G)$ be the vertex corresponding to $B$. Then, as $N^{k}[O_{v_i}]$ intersects $R$, there exists a path $u_1\cdots u_{p}$ in $G$ from $u_1 = u$ to some $u_p$ of length at most $2k$ satisfying the following property: For each $j \in \{1, \ldots, p\}$, $O_{u_j}$ belongs to $N^{k}[O_{v_i}]$ and $O_{u_p}$ intersects $R$. Since $B$ does not intersect $R'$ and $O_{u_p}$ intersects $R\subseteq R'$, there exists an index $\ell \in \{2, \ldots, p\}$ such that $O_{u_{\ell-1}}$ does not intersect $R'$ whereas $O_{u_j}$ intersects $R'$ for all $j \geq \ell$. By assumption, each $O_{u_j}$ with $j \geq \ell$ has size less than $\frac{r}{2k+1}$. Then, using a similar argument as in Case I, it is easy to see that $O = \bigcup_{j=\ell}^{p}O_{u_j}$ is contained in a box of size less than $r$. But $O$ intersects $R$ (as $O_{u_p}$ does) and so $O$ is entirely contained in $R'$. Since $O_{u_{\ell-1}}$ intersects $O_{u_\ell}$ and the latter is contained in $R'$, we conclude that $O_{u_{\ell-1}}$ intersects $R'$, a contradiction. 
\end{claimproof}

By the previous claim, for each $i \in \{1, \ldots, m\}$, $N^k[O_{v_i}]$ contains an object $A_i$ of size at least $\frac{r}{2k+1}$ which intersects the box $R'$ of side length $3r$. Since $\mathcal{P}$ consists of pairwise non-intersecting objects, the same holds for the family $\mathcal{A} = \{A_i: i \in \{1, \ldots, m\}\}$. Moreover, again by assumption, $|\mathcal{A}| > 3^d(2k+1)^dc$. Observe now that the box $R'$ can be decomposed into a union of $3^d(2k+1)^d$ sub-boxes, each of side length $\frac{r}{2k+1}$. Then, by the pigeonhole principle, there must be at least $c+1$ distinct objects in $\mathcal{A}$ all intersecting the same sub-box of side length $\frac{r}{2k+1}$, contradicting the fact that $\mathcal{O}$ is $c$-fat. 
\end{proof}

We conclude this section by observing that even powers of fractionally $\tin$-fragile classes need not be fractionally $\tin$-fragile. Indeed, similarly to the proof of \Cref{countereven}, and using \Cref{bicliqueA}, we obtain the following. 

\begin{lemma} Fix an even $k \in \mathbb{N}$. The class $\mathcal{G}$ of chordal graphs is fractionally $\tin$-fragile but the class $\{G^k : G \in \mathcal{G}\}$ is not.  
\end{lemma}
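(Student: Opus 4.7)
The plan is to prove the two statements separately, both as immediate consequences of results already established in the excerpt (together with one external fact from \cite{DaMS22} and one from \cite{LMM24}).

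First I would handle the positive direction. By \cite{DaMS22}, a graph is chordal if and only if its tree-independence number is at most $1$, so $\mathcal{G}$ is a class of bounded tree-independence number. The excerpt (just after the definition of fractional $\tin$-fragility in \Cref{sec:fragilityA}) records that every class of bounded tree-independence number is efficiently fractionally $\tin$-fragile, by \cite{DFGK22}. Hence $\mathcal{G}$ is fractionally $\tin$-fragile.

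For the negative direction, I would invoke \cite[Lemma~5.16]{LMM24}, used already in the proof of \Cref{countereven}: for every graph $H$ there exists a chordal graph $G_H$ such that $G_H^k$ contains $H$ as an induced subgraph. Applying this with $H = K_{n,n}$ for each $n \in \mathbb{N}$ produces chordal graphs $G_n$ such that $G_n^k \in \{G^k : G \in \mathcal{G}\}$ contains an induced $K_{n,n}$. Consequently the class $\{G^k : G \in \mathcal{G}\}$ has unbounded induced biclique number. By \Cref{bicliqueA}, any fractionally $\tin$-fragile class has bounded induced biclique number, so $\{G^k : G \in \mathcal{G}\}$ cannot be fractionally $\tin$-fragile.

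There is no real obstacle here: the whole argument is a two-step assembly of facts already in place. The only subtlety worth flagging explicitly in the write-up is that \Cref{bicliqueA} is applied to a class that is not hereditary, but as noted in the proof of \Cref{bicliqueA} itself this is unproblematic because tree-independence number is monotone under induced subgraphs, so the argument there applies verbatim to any class containing graphs with arbitrarily large induced bicliques.
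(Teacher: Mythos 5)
Your proposal is correct and follows essentially the same route the paper intends: the positive direction via $\tin(G)\leq 1$ for chordal graphs together with the observation that bounded tree-independence number implies (efficient) fractional $\tin$-fragility, and the negative direction via \cite[Lemma~5.16]{LMM24} applied to $H = K_{n,n}$ combined with \Cref{bicliqueA}, exactly as in the paper's pointer to the proof of \Cref{countereven}. Your remark that \Cref{bicliqueA} needs no hereditarity is also consistent with how that lemma is proved in the paper.
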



\section{PTASes}\label{sec:ptasesA}
In this section we prove Results \ref{item:first}, \ref{item:second}, \ref{item:fourth}. Before providing the corresponding PTASes, we highlight some examples of problems captured by the framework of $(c, h, \psi)$-\textsc{Max Weight Induced Subgraph}, which is addressed by Result \ref{item:first} and which was defined in \Cref{sec:mainres}. To this purpose, it is useful to recall the following well-known observations (see, e.g., \cite{Bel13}): the subgraph and minor containment relations, as well as the property of being $q$-colorable, for fixed $q$, are all expressible in $\mathsf{MSO}_2$. This immediately implies that problems such as \textsc{Max Weight Independent Set}, \textsc{Max Weight Induced Forest} and \textsc{Max Weight Induced Planar Subgraph} fall in the framework. The same holds for \textsc{Max Weight Induced $q$-Colorable Subgraph}, which is the problem that, for a fixed positive integer $q$ and given a vertex-weighted graph $G$, asks to find a maximum-weight subset $F \subseteq V(G)$ such that $G[F]$ is $q$-colorable. The unweighted case $q = 2$ is known as \textsc{Max Bipartite Subgraph} \cite{JMMR20}. 

Another example is the following. For a positive integer $k$, let $\mathcal{H}_k$ be a set of connected graphs which are contained in $K_k$. For fixed $\mathcal{H}_k$, \textsc{Max $\mathcal{H}_k$-Free Node Set} is the problem that, given a graph $G$, asks to find a maximum-size subset $F \subseteq V(G)$ such that $G[F]$ is $\mathcal{H}_k$-subgraph-free \cite{LSH18}. A notable special case is \textsc{Max} $k$-\textsc{Dependent Set} \cite{DKL93}, the problem of finding a maximum-size induced subgraph of maximum degree at most $k$ (the case $k=1$ is also known as \textsc{Dissociation Set} \cite{ODF11,Yan81}). \textsc{Max $\mathcal{H}_k$-Free Node Set} corresponds to $(k-1, 1, \psi)$-\textsc{Max Weight Induced Subgraph}, where $\psi$ is the $\mathsf{MSO}_2$ formula expressing the property that none of the finitely many graphs in $\mathcal{H}_k$ is a subgraph of $G[F]$. We refer the reader to \cite{FTV15} for several other examples of problems which are special cases of $(c, h, \psi)$-\textsc{Max Weight Induced Subgraph}.

We now define \textsc{Max Weight Independent Packing}, which is addressed by Result \ref{item:second}. Given a graph $G$ and a finite family $\mathcal{H} = \{H_j\}_{j\in J}$ of connected non-null subgraphs of $G$, an \textit{independent $\mathcal{H}$-packing} in $G$ is a subfamily $\mathcal{H}' = \{H_i\}_{i\in I}$ of subgraphs from $\mathcal{H}$ (i.e., $I \subseteq J$) that are at pairwise distance at least $2$, that is, they are vertex-disjoint and there is no edge between any two of them. If the subgraphs in $\mathcal{H}$ are equipped with a weight function $w\colon J \rightarrow \mathbb{Q}_{+}$ assigning weight $w_j$ to each subgraph $H_j$, the \textit{weight} of an independent $\mathcal{H}$-packing $\mathcal{H}' = \{H_i\}_{i\in I}$ in $G$ is $\sum_{i\in I}w_i$. Fix now $h \in \mathbb{N}$. Given a graph $G$, a finite family $\mathcal{H} = \{H_j\}_{j\in J}$ of connected non-null subgraphs of $G$ such that $|V(H_j)| \leq h$ for each $j \in J$, and a weight function $w\colon J \rightarrow \mathbb{Q}_{+}$ on the subgraphs in $\mathcal{H}$, the problem \textsc{Max Weight Independent Packing} asks to find an independent $\mathcal{H}$-packing in $G$ of maximum weight. In the special case when $\mathcal{F}$ is a \textit{fixed} finite family of connected non-null graphs and $\mathcal{H}$ is the set of all subgraphs of $G$ isomorphic to a member of $\mathcal{F}$, the problem is called \textsc{Max Weight Independent $\mathcal{F}$-Packing} and is a common generalization of several problems, among which: \textsc{Independent $\mathcal{F}$-Packing} \cite{CH06}, \textsc{Max Weight Independent Set} ($\mathcal{F} = \{K_1\}$), \textsc{Max Weight Induced Matching} ($\mathcal{F} = \{K_2\}$), \textsc{Dissociation Set} ($\mathcal{F} = \{K_1, K_2\}$ and the weight function assigns to each subgraph $H_j$ the weight $|V (H_j)|$). Observe that \textsc{Max Weight Independent $\mathcal{F}$-Packing} is in fact a special case of $(c, h, \psi)$-\textsc{Max Weight Induced Subgraph}, where we take $c = h = \max_{H \in \mathcal{F}}|V(H)|$ and $\psi$ is the $\mathsf{MSO}_2$ formula expressing the property that every connected component of $G[F]$ belongs to $\mathcal{F}$.

\textsc{Max Weight Independent Packing} has a natural generalization, addressed by Result \ref{item:fourth}, and which we now define. For fixed positive integers $d$ and $h$, given a graph $G$ and a finite family $\mathcal{H} = \{H_j\}_{j\in J}$ of connected non-null subgraphs of $G$ such that $|V(H_j)| \leq h$ for every $j \in J$, a \textit{distance-$d$ $\mathcal{H}$-packing} in $G$ is a subfamily $\mathcal{H}' = \{H_i\}_{i\in I}$ of subgraphs from $\mathcal{H}$ that are at pairwise distance at least $d$. If we are also given a weight function $w\colon J \rightarrow \mathbb{Q}_{+}$, \textsc{Max Weight Distance-$d$ Packing} is the problem of finding a distance-$d$ $\mathcal{H}$-packing in $G$ of maximum weight. The case $d = 2$ coincides with \textsc{Max Weight Independent Packing}.

\subsection{Finding large induced sparse subgraphs satisfying a $\boldsymbol{\mathsf{CMSO}_2}$-definable near-monotone property in efficiently fractionally $\boldsymbol{\tin}$-fragile classes}

In this section we show that $(c, h, \psi)$-\textsc{Max Weight Induced Subgraph} admits a $\mathsf{PTAS}$ on every efficiently fractionally $\tin$-fragile class (Result \ref{item:first}). The following result will be crucial for our proof.  

\begin{theorem}[Lima et al.~\cite{LMM24}]\label{meta} For every $k$, $c$ and $\mathsf{CMSO}_2$ formula $\psi$, there exists a positive integer $g(k, c, \psi)$ such that the following holds. Let $G$ be a graph given along with a tree decomposition $\mathcal{T} =
(T, \{X_t\}_{t\in V(T)})$ of $G$ such that $\alpha(\mathcal{T}) \leq k$, and let $w\colon V(G) \rightarrow \mathbb{Q}_{+}$ be a weight function. Then, in time $g(k, c, \psi)\cdot |V (G)|^{O(R(k+1,c+1))} \cdot |V(T)|$, we can find a set $F \subseteq V(G)$ such that
\begin{itemize}
\item $G[F] \models \psi$,
\item $\omega(G[F]) \leq c$,
\item $F$ is of maximum weight subject to the conditions above,
\end{itemize}
or conclude that no such set exists.
\end{theorem}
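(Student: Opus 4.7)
My plan is a synthesis of a Ramsey-based enumeration with a Courcelle-style dynamic program over a nice refinement of $\mathcal{T}$. The starting observation is that any feasible solution $F$ must satisfy $|F \cap X_t| < R(k+1,c+1)$ for every bag $X_t$: indeed $G[F \cap X_t]$ inherits $\alpha \leq k$ from $G[X_t]$ and $\omega \leq c$ from $G[F]$, so a larger intersection would violate Ramsey's theorem. Consequently, restricting the bags of $\mathcal{T}$ to $F$ yields a tree decomposition of $G[F]$ of width at most $R(k+1,c+1) - 2$, which is what eventually enables Courcelle-style methods even though $G$ itself has unbounded treewidth.

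First I would transform $\mathcal{T}$ into a nice tree decomposition (introduce/forget/join) with $O(|V(T)|)$ nodes, preserving $\alpha(\mathcal{T}) \leq k$. Next, for each node $t$ I would maintain a table indexed by pairs $(S_t, \tau)$ where $S_t \subseteq X_t$ is a candidate for $F \cap X_t$ with $|S_t| < R(k+1,c+1)$ and $\omega(G[S_t]) \leq c$, and $\tau$ is a $\mathsf{CMSO}_2$-type of the labelled structure $(G[F_t], S_t)$, with $F_t \subseteq V_t$ the partial solution inside the subtree rooted at $t$; the table entry records the maximum $w(F_t)$ compatible with $(S_t, \tau)$, or $-\infty$ if no such $F_t$ exists. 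The crucial finiteness fact is that the number of $\mathsf{CMSO}_2$-types of a graph with a distinguished vertex set of size less than $R(k+1,c+1)$ is bounded by a constant depending only on $\psi$ and $R(k+1,c+1)$, by the standard Feferman--Vaught/Courcelle classification; this constant will be our $g(k,c,\psi)$.

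The DP transitions are the standard Courcelle ones (introduce vertex with or without adding it to $S_t$, forget vertex, join same-bag children via the Feferman--Vaught combination rule), rejecting along the way any $S_t$ with $\omega(G[S_t]) > c$; this local check suffices globally, since every clique of $G$ lies in some bag. At the root, whose bag is empty, we output the maximum weight over entries whose type models $\psi$, or report infeasibility if no finite entry exists. Each node is processed in time polynomial in $g(k,c,\psi)$ and in the number of candidate subsets per bag, the latter being $|V(G)|^{O(R(k+1,c+1))}$; summing over nodes yields the claimed running time.

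The main obstacle, I expect, is the rigorous handling of $\mathsf{CMSO}_2$-types relative to a distinguished boundary $S_t$ at a join node: one must verify that the type of $(G[F_t], S_t)$ is determined locally by the types on the two children's partial solutions, even though the shared bag $X_t$ may contain many vertices of $X_t \setminus S_t$ that are excluded from $F$. Establishing this locality — essentially a Feferman--Vaught composition lemma tailored to induced subgraphs with a small designated interface — is the technical heart of the proof and is what underwrites the finite-type bound $g(k,c,\psi)$ on which the whole complexity estimate hinges.
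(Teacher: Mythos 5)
This statement is not proved in the paper at all: it is imported verbatim from Lima et al.~\cite{LMM24}, so there is no in-paper proof to compare your attempt against. That said, your sketch is essentially a faithful reconstruction of the argument behind the cited result: the Ramsey observation $|F \cap X_t| < R(k+1,c+1)$ for every bag (valid since $\alpha(G[F\cap X_t]) \leq k$, $\omega(G[F\cap X_t]) \leq c$, and $R$ is symmetric in its arguments), the fact that every clique of $G$ lies in some bag so the clique-number constraint can be enforced bag-locally, and a dynamic program whose states are the $|V(G)|^{O(R(k+1,c+1))}$ bounded-size candidate intersections $S_t$ paired with $\mathsf{CMSO}_2$-types of boundaried partial solutions, which yields the claimed running time. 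The join-node issue you flag is not a genuine gap: since the interface $S_t$ has bounded size, the vertices of $X_t \setminus S_t$ are simply absent from the solution, and the type of the glued structure is determined by the types of the two parts via the standard Feferman--Vaught/Courcelle composition for gluing along a bounded boundary (together with the usual replaceability argument justifying keeping only a maximum-weight representative per type), so your outline is correct and consistent with the proof given in the cited source.
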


\begin{theorem}\label{metaPTAS} Let $h \in \mathbb{N}$ and let $\psi$ be a $\mathsf{CMSO}_2$ formula expressing an $h$-near-monotone property, let $f \colon \mathbb{N} \rightarrow \mathbb{N}$ be a function and let $c \in \mathbb{N}$. There exists an algorithm that, given 
\begin{itemize}
\item $r  \in \mathbb{N}$ with $r > h$, 
\item a $n$-vertex graph $G$ equipped with a $(1-1/r)$-general cover $\mathcal{C} = \{C_1,C_2,\ldots\}$ and, for each $i$, a tree decomposition $\mathcal{T}_i=(T_i,\{X_t\}_{t\in V(T_i)})$ of $G[C_i]$ with $\alpha(\mathcal{T}_i) \leq f(r)$, 
\item and a weight function $w\colon V(G) \rightarrow \mathbb{Q}_{+}$, 
\end{itemize}
in time $|\mathcal{C}|\cdot g(f(r),c,\psi) \cdot t \cdot n^{O(R(f(r)+1,c+1))}$, where $t = \max_i |V(T_i)|$ and $g$ is the function from \Cref{meta}, either returns a subset $F \subseteq V(G)$ such that $G[F] \models \psi$, $\omega(G[F]) \leq c$, and $w(F)$ is at least a factor $(1-h/r)$ of the optimal, or concludes that no such set $F$ exists.
\end{theorem}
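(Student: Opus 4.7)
The plan is to run the exact algorithm of \Cref{meta} separately on each graph $G[C_i]$, using the given tree decomposition $\mathcal{T}_i$ of independence number at most $f(r)$, and return whichever output set $F_i$ has the largest weight (declaring infeasibility only if every call does). Each call costs $g(f(r),c,\psi)\cdot n^{O(R(f(r)+1,c+1))}\cdot |V(T_i)|$ by \Cref{meta}, giving the claimed total running time. Every set $F_i$ returned already satisfies $G[F_i]\models\psi$ and $\omega(G[F_i])\leq c$, so the returned set is automatically feasible; all the work lies in the approximation analysis.

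For this, let $F^*$ be any optimal solution (if none exists then every subproblem is also infeasible, since any feasible subset of some $C_i$ is automatically feasible for $G$). Because $\psi$ is $h$-near-monotone, fix a witnessing system $\{R_v\subseteq F^* : v\in F^*\}$ at $F^*$. For each $i$ set $X_i=F^*\setminus C_i$ and $F'_i=F^*\setminus\bigcup_{v\in X_i}R_v$. Since $v\in R_v$ for every $v$, we have $F'_i\cap X_i=\varnothing$, hence $F'_i\subseteq F^*\cap C_i$; the defining property of $h$-near-monotonicity then gives $G[F'_i]\models\psi$, while $F'_i\subseteq F^*$ yields $\omega(G[F'_i])\leq c$. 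Thus $F'_i$ is a feasible candidate for the $C_i$-subproblem, and optimality of $F_i$ on that subproblem gives $w(F_i)\geq w(F'_i)$.

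To bound $\max_i w(F'_i)$ from below I would count by vertex. For each $u\in F^*$, let $S_u=\{v\in F^* : u\in R_v\}$; the $h$-near-monotone condition gives $|S_u|\leq h$, and by construction $u\notin F'_i$ iff $S_u\not\subseteq C_i$. Since $\mathcal{C}$ is a $(1-1/r)$-general cover, each element of $S_u$ is missing from at most $|\mathcal{C}|/r$ pieces of the cover, so by a union bound $u$ is removed from at most $h|\mathcal{C}|/r$ of the sets $F'_i$. Summing $w(u)$ over $u\in F^*$ yields
\[
\sum_i w(F'_i) \;\geq\; |\mathcal{C}|\,w(F^*) - \tfrac{h}{r}\,|\mathcal{C}|\,w(F^*) \;=\; \bigl(1-\tfrac{h}{r}\bigr)\,|\mathcal{C}|\,w(F^*),
\]
so by averaging some index $i$ achieves $w(F_i)\geq w(F'_i)\geq(1-h/r)\,w(F^*)$, which is meaningful precisely because $r>h$. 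The only real subtlety, rather than a genuine obstacle, is checking that $F'_i$ both lies in $C_i$ and still models $\psi$; both follow cleanly from the clauses ``$v\in R_v$'' and ``$G[Y\setminus\bigcup_{v\in X}R_v]\models\psi$'' built into the definition of $h$-near-monotonicity, combined with the tailored choice $X_i=F^*\setminus C_i$.
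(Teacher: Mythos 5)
Your proof is correct and follows essentially the same route as the paper: solve each subproblem $G[C_i]$ exactly via \Cref{meta}, and use the $h$-near-monotone system $\{R_v\}$ at an optimal $F^*$ together with the cover property to show some restricted solution $F'_i\subseteq C_i$ retains a $(1-h/r)$ fraction of the weight. The only cosmetic difference is that you phrase the averaging step as a deterministic count over cover elements where the paper computes an expectation over a uniformly random $C_i$; these are the same argument, and your handling of feasibility (any feasible $F_i\subseteq C_i$ is feasible in $G$, and infeasibility of $G$ forces infeasibility of every subproblem) is also consistent with the paper's.
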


\begin{proof} Observe first that, if an admissible solution $F$ in $G$ exists (i.e., $F \subseteq V(G)$ is such that $G[F] \models \psi$ and $\omega(G[F]) \leq c$), then there exists a system $\{R_v \subseteq F : v \in F\}$ of subsets of $F$ such that $F\setminus \bigcup_{v\in F\setminus C_i} R_v$ is an admissible solution in $G[C_i]$, for each $i \geq 1$. Indeed, since $\psi$ expresses an $h$-near-monotone property, $G[F\setminus \bigcup_{v\in F\setminus C_i} R_v] \models \psi$. Moreover, $\omega(G[F\setminus \bigcup_{v\in F\setminus C_i} R_v]) \leq \omega(G[F]) \leq c$.

For each $i \geq 1$, we proceed as follows. Using the algorithm from \Cref{meta}, we simply look for optimal solutions $F_i \subseteq C_i$ in $G[C_i]$. For each $i$, finding $F_i$ or concluding that no such set exists can be done in time $g(f(r),c,\psi) \cdot n^{O(R(f(r)+1,c+1))} \cdot t$. The total running time is then $|\mathcal{C}|\cdot g(f(r),c,\psi) \cdot t \cdot n^{O(R(f(r)+1,c+1))}$.

If, for some $i \geq 1$, there is no admissible solution in $G[C_i]$, then we return that no admissible solution exists in $G$. Correctness follows from the first paragraph. Otherwise, if optimal solutions exist in $G[C_i]$ for every $i$, then an optimal solution $Y\subseteq V(G)$ in $G$ exists. Let now $\{R_v \subseteq Y : v \in Y\}$ be a system of subsets of $Y$ as in the definition of $h$-near-monotonicity. We pick a $C_i$ from $\mathcal{C}$ uniformly at random and observe that 
\begin{align*}
\mathbb{E}\bigg[w\bigg(Y\setminus \bigcup_{v\in Y\setminus C_i} R_v\bigg)\bigg]
& = \mathbb{E}\bigg[\sum_{y \in Y}  w(y) \mathbbm{1}_{\big\{y \notin \bigcup_{v \in Y\setminus C_i}R_v\big\}}\bigg] \\
& =  \sum_{y \in Y}  w(y) \mathbb{P}\bigg(y \notin \bigcup_{v \in Y\setminus C_i}R_v\bigg) \\
& =  \sum_{y \in Y}  w(y) \bigg(1-\mathbb{P}\bigg(y \in \bigcup_{v \in Y\setminus C_i}R_v\bigg)\bigg) \\
& =  \sum_{y \in Y}  w(y) \bigg(1-\mathbb{P}\bigg(\bigcup_{v : y \in R_v}{\{v \notin C_i\}}\bigg)\bigg) \\
& \geq  \sum_{y \in Y}  w(y)\bigg(1- \sum_{v : y \in R_v}\mathbb{P}(v \notin C_i)\bigg) \\
& \geq  \sum_{y \in Y}  w(y)\bigg(1- \sum_{v : y \in R_v}\frac{1}{r}\bigg) \\
& \geq (1-h/r)w(Y).
\end{align*}
Hence, there exists an index $j$ such that $w(Y\setminus \bigcup_{v\in Y\setminus C_j} R_v) \geq (1-h/r) w(Y)$. We then return the subset $F_j$ computed above. Since $Y\setminus \bigcup_{v\in Y\setminus C_j} R_v$ is an admissible solution in $G[C_j]$, the optimality of $F_j$ in $G[C_j]$ implies that $w(F_j) \geq w(Y\setminus \bigcup_{v\in Y\setminus C_j} R_v) \geq (1-h/r)w(Y)$. 
\end{proof}

We remark that the near-monotonicity requirement in \Cref{metaPTAS} is necessary. Indeed, for fixed $r$, it is not difficult to see that inducing an $r$-regular subgraph is a property expressible in $\mathsf{CMSO}_2$ but which is not near-monotone for $r\geq 2$. On the other hand, for each fixed $r \in \{3,4,5\}$, given a planar graph $G$, the problem of finding a maximum-size subset of $V(G)$ inducing an $r$-regular subgraph does not admit a PTAS, unless $\mathsf{P} = \mathsf{NP}$ \cite{AEIM14}. Moreover, boundedness of clique number of the solution is necessary as well. Indeed, being a clique is a monotone property that can be easily expressed by a $\mathsf{CMSO}_2$ formula. On the other hand, the problem of finding a maximum-size clique on unit ball graphs in $\mathbb{R}^4$ does not admit a PTAS, unless the Exponential Time Hypothesis (ETH) fails \cite{BBB21}. 

\subsection{Packing subgraphs at distance at least $\boldsymbol{2}$ in efficiently fractionally $\boldsymbol{\tin}$-fragile classes}

In this section we show that \textsc{Max Weight Independent Packing} admits a $\mathsf{PTAS}$ on every efficiently fractionally $\tin$-fragile class (Result \ref{item:second}). Such a PTAS relies on the following result.

\begin{theorem}[Dallard et al.~\cite{DaMS22}]\label{MWIPA} Let $k$ and $h$ be two positive integers. Given a graph $G$ and a finite family $\mathcal{H} = \{H_j\}_{j\in J}$ of connected non-null subgraphs of $G$ such that $|V(H_j)| \leq h$ for every $j \in J$, \textsc{Max Weight Independent Packing} can be solved in time $O(|V(G)|^{h(k+1)} \cdot |V(T)|)$ if $G$ is given together with a tree decomposition $\mathcal{T} = (T, \{X_t\}_{t\in V(T)})$ with $\alpha(\mathcal{T}) \leq k$.
\end{theorem}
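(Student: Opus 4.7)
The plan is to solve \textsc{Max Weight Independent Packing} by dynamic programming on the given tree decomposition $\mathcal{T}=(T,\{X_t\}_{t\in V(T)})$. The crucial structural observation is that, in any independent $\mathcal{H}$-packing, at most $k$ subgraphs can meet any single bag $X_t$: choosing one vertex from each subgraph intersecting $X_t$ yields an independent set in $G[X_t]$, since distinct picked subgraphs are vertex-disjoint and non-adjacent by the distance-$\ge 2$ condition, and $\alpha(G[X_t])\le k$ by assumption. Since $|V(H_j)|\le h$ for every $j\in J$, at most $hk$ vertices of any bag are ``assigned'' to the packing in a valid partial solution, which is the linchpin keeping the state space polynomial.

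After rooting $T$, for every node $t$ I would tabulate values $\mathrm{opt}(t,\mathcal{S})$ indexed by configurations $\mathcal{S}=\{(j_1,T_1),\ldots,(j_s,T_s)\}$ with $s\le k$, where $j_i\in J$, the traces $T_i\subseteq X_t\cap V(H_{j_i})$ are pairwise disjoint, and $G$ contains no edge between distinct $T_i,T_{i'}$. The value $\mathrm{opt}(t,\mathcal{S})$ is the maximum total weight of an independent $\mathcal{H}$-packing in $G_t=G[\bigcup_{s\in V(T_t)}X_s]$ (where $T_t$ is the subtree rooted at $t$) whose restriction to $X_t$ is exactly $\mathcal{S}$. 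Each $H_j$ is determined by its at-most-$h$ vertices, so there are at most $|V(G)|^{h}$ relevant subgraphs; choosing up to $k$ of them, together with the at-most-$h$ vertices of an additional subgraph being introduced in a transition, bounds the number of configurations per bag by $O(|V(G)|^{h(k+1)})$.

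The transitions follow the standard introduce/forget/join template. At an introduce node for a vertex $v$, each child configuration is extended either by leaving $v$ unassigned or by guessing some $H_j\in\mathcal{H}$ with $v\in V(H_j)$ and inserting $v$ into the corresponding trace $T_j$ (creating the pair $(j,\{v\})$ if necessary), rejecting any extension that would place an edge between two different traces. At a forget node for $v$, every pair $(j,T_j)$ in the child configuration with $v\in T_j$ must be \emph{complete}, i.e.\ $T_j=V(H_j)$: by connectedness of $H_j$ together with property~(T3), if some vertex of $H_j$ still had to be introduced in an ancestor, $v$ would still be present in $X_t$, a contradiction. Completed subgraphs are then dropped from the state and their weights $w_j$ added into the DP value. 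At a join node, the two child configurations are required to coincide, and I sum the two stored optima while subtracting the total weight of shared subgraphs to prevent double counting.

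The delicate correctness claim is that, once $H_j$ has been completed and forgotten, it cannot interact with any subgraph picked elsewhere in the decomposition: if a vertex of such an $H_j$ were adjacent to, or equal to, a vertex of some later-picked $H_{j'}$, then by~(T2)–(T3) the relevant edge or shared vertex would force $V(H_j)\cap X_t$ to be nonempty at the forget node, contradicting completion. This is what allows the state to remember only \emph{live} subgraphs and still enforce the global distance-$\ge 2$ constraint. I expect the main obstacle to be the join node, where one must simultaneously (i) ensure that the two traces of each live subgraph agree across the two children, (ii) correctly add weights without double-counting partially shared subgraphs, and (iii) verify the distance-$\ge 2$ constraint between subgraphs fully contained in one child and those fully contained in the other. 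Once this bookkeeping is set up, each transition runs in time polynomial in $|V(G)|$, and multiplying the $O(|V(G)|^{h(k+1)})$ configurations by the $|V(T)|$ tree nodes and the per-transition cost yields the claimed running time $O(|V(G)|^{h(k+1)}\cdot |V(T)|)$.
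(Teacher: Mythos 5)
This theorem is not proved in the paper you are reading: it is imported verbatim from Dallard et al.~\cite{DaMS22}, where the argument is a reduction rather than a direct dynamic program. One forms the derived graph $G(\mathcal{H})$ on vertex set $J$, joining $i$ and $j$ whenever $H_i$ and $H_j$ share a vertex or are connected by an edge, transfers the given decomposition by setting $X'_t=\{j\in J: V(H_j)\cap X_t\neq\varnothing\}$, observes that $\alpha(\mathcal{T}')\le k$ by exactly your one-vertex-per-subgraph observation, and then runs the MWIS algorithm for decompositions of bounded independence number on $G(\mathcal{H})$, whose vertex count is $O(|V(G)|^h)$. Your plan of a direct introduce/forget/join DP on $\mathcal{T}$ is built on the same key counting idea and could be made to work, but as written it has a genuine soundness gap at the forget nodes.

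You require that when $v$ is forgotten, every live pair $(j,T_j)$ with $v\in T_j$ be complete, arguing that otherwise ``$v$ would still be present in $X_t$.'' This is not what connectivity and (T3) give you: they only force \emph{some} vertex of $H_j$, lying between its already-introduced part and its not-yet-introduced part, to remain in the bag, not $v$ itself. For instance, $V(H_j)=\{u,v,w\}$ with $v$ forgotten at $t$, $w$ still in $X_t$, and $u$ introduced only above $t$ is a perfectly legal partial solution that your rule rejects. Worse, dropping $H_j$ at this point while $w\in V(H_j)\cap X_t$ survives destroys soundness: a vertex introduced above $t$ that is adjacent to $w$ and assigned to a new subgraph $H_{j'}$ is no longer checked against $H_j$, so the algorithm can return $H_j$ and $H_{j'}$ at distance one; it also contradicts your own invariant that the restriction of the packing to $X_t$ equals the stored configuration. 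The standard repair is to keep $(j,\,V(H_j)\cap X_t)$ in the state as long as $V(H_j)\cap X_t\neq\varnothing$ and to discharge $H_j$ (adding $w_j$) only at the node where this intersection becomes empty; at that moment completeness is automatic, since a connected $H_j$ with vertices both in $G_t\setminus X_t$ and outside $G_t$ would have to meet $X_t$ by (T2)--(T3), and the same separation argument is what legitimately shows that a discharged $H_j$ can no longer conflict with anything chosen above. With this fix (plus some care that converting to a nice decomposition, whose bags here may have linear size, keeps the node count and per-transition work within $O(|V(G)|^{h(k+1)}\cdot|V(T)|)$), your DP becomes a valid, if more laborious, alternative to the reduction used in \cite{DaMS22}.
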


\begin{theorem}\label{PTASffA} Let $h \in \mathbb{N}$ and let $f \colon \mathbb{N} \rightarrow \mathbb{N}$ be a function. There exists an algorithm that, given 
\begin{itemize}
\item $r \in \mathbb{N}$ with $r > h$, 
\item a $n$-vertex graph $G$ equipped with a $(1-1/r)$-general cover $\mathcal{C} = \{C_1,C_2,\ldots\}$ and, for each $i$, a tree decomposition $\mathcal{T}_i=(T_i,\{X_t\}_{t\in V(T_i)})$ of $G[C_i]$ with $\alpha(\mathcal{T}_i) \leq f(r)$, 
\item a finite family $\mathcal{H}=\{H_j\}_{j\in J}$ of connected non-null subgraphs of $G$ such that $|V(H_j)| \leq h$ for every $j \in J$, 
\item and a weight function $w\colon J \rightarrow \mathbb{Q}_{+}$ on the subgraphs in $\mathcal{H}$, 
\end{itemize}
returns in time $|\mathcal{C}|\cdot O(n^{h(f(r)+1)} \cdot t)$, where $t = \max_i |V(T_i)|$, an independent $\mathcal{H}$-packing in $G$ of weight at least a factor $(1-h/r)$ of the optimal. 
\end{theorem}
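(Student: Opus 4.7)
The plan is to adapt the argument of \Cref{metaPTAS} to the packing setting, using \Cref{MWIPA} as the black-box subroutine in place of \Cref{meta}. Specifically, for each element $C_i$ of the cover, I would let $\mathcal{H}_i = \{H_j \in \mathcal{H} : V(H_j) \subseteq C_i\}$ and run the algorithm of \Cref{MWIPA} on the input $(G[C_i], \mathcal{H}_i, w|_{\mathcal{H}_i})$ with the provided tree decomposition $\mathcal{T}_i$, obtaining an optimal independent $\mathcal{H}_i$-packing $\mathcal{P}_i$ in $G[C_i]$. The algorithm then returns the heaviest $\mathcal{P}_i$. Note that any independent $\mathcal{H}_i$-packing in $G[C_i]$ is automatically an independent $\mathcal{H}$-packing in $G$, because vertex-disjointness and non-adjacency of the chosen subgraphs are preserved when passing from $G[C_i]$ back to $G$.

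For correctness, let $\mathcal{P}^* = \{H_i\}_{i \in I^*}$ be an optimal independent $\mathcal{H}$-packing in $G$. The key observation is that, although the cover is only fractional for \emph{vertices}, it is ``$(1-h/r)$-fractional for subgraphs of size $\leq h$'': for every $j \in J$, since each vertex of $V(H_j)$ belongs to at least $(1-1/r)|\mathcal{C}|$ elements of $\mathcal{C}$, the union bound gives that $H_j \in \mathcal{H}_i$ (i.e., $V(H_j) \subseteq C_i$) for at least $(1-h/r)|\mathcal{C}|$ indices $i$. Drawing $i$ uniformly at random from $\{1, \ldots, |\mathcal{C}|\}$, this yields
\begin{align*}
\mathbb{E}\bigg[\sum_{j \in I^* : V(H_j)\subseteq C_i} w_j\bigg]
&= \sum_{j\in I^*} w_j \cdot \mathbb{P}(V(H_j) \subseteq C_i) \\
&\geq \sum_{j\in I^*} w_j \cdot \bigg(1 - \sum_{v \in V(H_j)} \mathbb{P}(v \notin C_i)\bigg) \\
&\geq (1 - h/r)\, w(\mathcal{P}^*).
\end{align*}
Hence there exists an index $j^*$ such that the subfamily of $\mathcal{P}^*$ entirely contained in $C_{j^*}$ has total weight at least $(1-h/r)\, w(\mathcal{P}^*)$. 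This subfamily is a valid independent $\mathcal{H}_{j^*}$-packing in $G[C_{j^*}]$, so by the optimality of $\mathcal{P}_{j^*}$ we get $w(\mathcal{P}_{j^*}) \geq (1-h/r)\, w(\mathcal{P}^*)$, and the returned packing is at least as heavy.

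For the running time, constructing each $\mathcal{H}_i$ from $\mathcal{H}$ is cheap, and by \Cref{MWIPA} each call takes $O(n^{h(f(r)+1)} \cdot |V(T_i)|) = O(n^{h(f(r)+1)} \cdot t)$; summing over the $|\mathcal{C}|$ elements of the cover gives the claimed bound $|\mathcal{C}|\cdot O(n^{h(f(r)+1)} \cdot t)$. There is essentially no hard step here: the entire argument is a direct transcription of \Cref{metaPTAS} with the $h$-near-monotonicity hypothesis replaced by the trivial observation that a packing restricted to the subgraphs wholly inside $C_i$ is again a packing, together with the union-bound computation above. The only mild subtlety is that the fractionality loss is now controlled by the subgraph sizes (hence the factor $1-h/r$) rather than by the near-monotonicity parameter, but in both cases the bound arises from the same expectation calculation.
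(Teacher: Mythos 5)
Your proposal is correct and follows essentially the same route as the paper: run the tree-independence-number-based exact algorithm (\Cref{MWIPA}) on each $G[C_i]$, return the heaviest of the resulting packings, and argue via the union bound that each $H_j$ lies wholly inside at least $(1-h/r)|\mathcal{C}|$ cover elements, so some $C_i$ contains a sub-packing of the optimum of weight at least $(1-h/r)$ of the optimal. The only cosmetic differences are that you phrase the averaging as an expectation over a random cover element while the paper sums over all $C_i$ and applies pigeonhole, and that you make explicit the (harmless) restriction of $\mathcal{H}$ to the subgraphs contained in $C_i$.
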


The proof of \Cref{PTASffA} is similar to that of \Cref{metaPTAS}.
 
\begin{proof} For each $i \geq 1$, we proceed as follows. Using the algorithm from \Cref{MWIPA}, we simply compute a maximum-weight independent $\mathcal{H}$-packing $\mathcal{P}_i$ in $G[C_i]$ in time $O(n^{h(f(r)+1)} \cdot t)$. The total running time is then $|\mathcal{C}|\cdot O(n^{h(f(r)+1)} \cdot t)$. For a collection $\mathcal{A}$ of subgraphs of $G$, each isomorphic to a member of $\mathcal{H}$, and a subset $C \subseteq V(G)$, let $w(\mathcal{A}) = \sum_{A\in \mathcal{A}}w(A)$ and let $\mathcal{A} \cap C = \{A \in \mathcal{A} : A \subseteq C\}$. Observe that, given a subgraph $H$ of $G$, each vertex $v \in V(H)$ is not contained in at most $|\mathcal{C}|/r$ elements of the $(1-1/r)$-general cover $\mathcal{C}$. Hence, $V(H)$ is contained in at least $(1-|V(H)|/r)|\mathcal{C}|$ elements of $\mathcal{C}$. Let $\mathcal{P}=\{P_1,P_2,\ldots\}$ be an independent $\mathcal{H}$-packing in $G$ of maximum weight. Then 

\begin{align*}
\sum_{C_i \in \mathcal{C}}w(\mathcal{P}\cap C_i)
& = \sum_{C_i \in \mathcal{C}} \sum_{P_j \in \mathcal{P}}  w(P_j) \mathbbm{1}_{\{P_j \subseteq C_i\}} \\
& =  \sum_{P_j \in \mathcal{P}}  w(P_j) \sum_{C_i \in \mathcal{C}}\mathbbm{1}_{\{P_j \subseteq C_i\}} \\
& \geq  \sum_{P_j \in \mathcal{P}}  w(P_j) (1-|V(P_j)|/r)|\mathcal{C}| \\
& \geq  \sum_{P_j \in \mathcal{P}}  w(P_j) (1-h/r)|\mathcal{C}| \\
& = |\mathcal{C}|(1-h/r)w(\mathcal{P}).
\end{align*}

By the pigeonhole principle, there exists $C_i \in \mathcal{C}$ such that $w(\mathcal{P}\cap C_i) \geq (1-h/r)w(\mathcal{P})$. We then return the maximum-weight independent $\mathcal{H}$-packing $\mathcal{P}_i$ in $G[C_i]$ computed above. Since $\mathcal{P} \cap C_i$ is an independent $\mathcal{H}$-packing in $G[C_i]$, we have that $w(\mathcal{P}_i) \geq w(\mathcal{P}\cap C_i) \geq (1-h/r)w(\mathcal{P})$.
\end{proof}

For the special case of \textsc{Max Weight Independent Set} on intersection graphs of $c$-fat collections of objects in a fixed $d$-dimensional space, we obtain the following.

\begin{corollary} There exists an algorithm that, given $r \in \mathbb{N}$, a $c$-fat collection $\mathcal{O}$ of $n$ objects in $\mathbb{R}^d$ and its intersection graph $G$, and a weight function $w\colon V(G) \rightarrow \mathbb{Q}_{+}$, returns in time $(f(r)/2-1)^d\cdot O(n^{cf(r)^{2d}+2})$, where $f(r) = 2\Big\lceil \frac{1}{1-\big(1-\frac{1}{r}\big)^{\frac{1}{d}}}\Big\rceil$, an independent set in $G$ of weight at least a factor $(1-1/r)$ of the optimal. 
\end{corollary}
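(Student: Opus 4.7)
The plan is to obtain the algorithm as a direct composition of the two main ingredients already developed in the paper: the efficient fractional $\tin$-fragility of $c$-fat collections (\Cref{treealphafatA}) and the PTAS framework for \textsc{Max Weight Independent Packing} on efficiently fractionally $\tin$-fragile classes (\Cref{PTASffA}). The key observation is that \textsc{Max Weight Independent Set} is nothing but \textsc{Max Weight Independent Packing} in the special case where $\mathcal{H} = \{K_1 : v \in V(G)\}$, so that $h = 1$.

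First, I would apply \Cref{treealphafatA} to the input $c$-fat collection $\mathcal{O}$ and its intersection graph $G$. This computes, in linear time, a $(1 - 1/r)$-general cover $\mathcal{C} = \{C_1, C_2, \ldots\}$ of $G$ of size at most $(f(r)/2 - 1)^d$ and, for each $C_i \in \mathcal{C}$, a tree decomposition $\mathcal{T}_i = (T_i, \{X_t\}_{t \in V(T_i)})$ of $G[C_i]$ with $|V(T_i)| \leq n + 1$ and $\alpha(\mathcal{T}_i) \leq c f(r)^{2d}$.

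Next, I would feed the cover, these tree decompositions, the trivial family $\mathcal{H} = \{K_1 : v \in V(G)\}$ together with the vertex weights, and the parameter $r$, into the algorithm from \Cref{PTASffA} (which requires $r > h = 1$). With $h = 1$ and tree-independence bound $k := c f(r)^{2d}$, this algorithm returns an independent $\mathcal{H}$-packing — that is, an independent set of $G$ — of weight at least a factor $(1 - h/r) = (1 - 1/r)$ of the optimum.

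As for the running time, \Cref{PTASffA} runs in $|\mathcal{C}| \cdot O(n^{h(k+1)} \cdot t)$, where $t = \max_i |V(T_i)|$. Plugging in $|\mathcal{C}| \leq (f(r)/2 - 1)^d$, $h = 1$, $k = c f(r)^{2d}$, and $t \leq n + 1$, this simplifies to
\begin{equation*}
(f(r)/2 - 1)^d \cdot O\bigl(n^{c f(r)^{2d} + 1} \cdot (n+1)\bigr) = (f(r)/2 - 1)^d \cdot O\bigl(n^{c f(r)^{2d} + 2}\bigr),
\end{equation*}
which is precisely the claimed bound, swallowing the linear-time preprocessing of \Cref{treealphafatA}. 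There is no real obstacle here — the entire argument is a routine specialization of the two previously established theorems — so no separate correctness analysis beyond invoking them is needed.
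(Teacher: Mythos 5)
Your proposal is correct and follows exactly the paper's own proof: invoke \Cref{treealphafatA} to get the $(1-1/r)$-general cover of size at most $(f(r)/2-1)^d$ together with tree decompositions of independence number at most $cf(r)^{2d}$ and $|V(T)| \leq n+1$, then run the algorithm of \Cref{PTASffA} with $h=1$ (viewing \textsc{Max Weight Independent Set} as the packing problem with singleton subgraphs), which yields both the approximation guarantee and the stated running time.
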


\begin{proof} Given $r \in \mathbb{N}$, we use \Cref{treealphafatA} to compute in $O(n)$ time a $(1-1/r)$-general cover $\mathcal{C}$ of $G$ of size at most $(f(r)/2-1)^d$. Moreover, for each $C \in \mathcal{C}$, we compute in $O(n)$ time a tree decomposition $\mathcal{T} = (T,\{X_t\}_{t\in V(T)})$ of $G[C]$, with $|V(T)| \leq n+1$, such that $\alpha(\mathcal{T}) \leq cf(r)^{2d}$. We finally apply the algorithm from \Cref{PTASffA} (with $h=1$). The total running time is $(f(r)/2-1)^d\cdot O(n^{cf(r)^{2d}+2})$. 
\end{proof}


\subsection{Packing subgraphs at distance at least $\boldsymbol{d}$ in graphs with bounded layered tree-independence number or in intersection graphs of $\boldsymbol{c}$-fat collections}\label{sec:distanceA}

In this section we prove Result \ref{item:fourth}. The following result shows that, for each even $d\in \mathbb{N}$, \textsc{Max Weight Distance-$d$ Packing} admits a PTAS on every class of bounded layered tree-independence number.

\begin{theorem}\label{PTASltA} Let $h, \ell \in \mathbb{N}$. Let $d$ be an even positive integer. There exists an algorithm that, given 
\begin{itemize}
\item $r \in \mathbb{N}$ with $r > h$, 
\item a $n$-vertex graph $G$ equipped with a tree decomposition $\mathcal{T} = (T,\{X_t\}_{t\in V(T)}\})$ and a layering $(V_1, V_2, \ldots)$ of $G$ such that, for each bag $X_t$ and layer $V_i$, $\alpha(G[X_t \cap V_i]) \leq \ell$, 
\item a finite family $\mathcal{H}=\{H_j\}_{j\in J}$ of connected non-null subgraphs of $G$ such that $|V(H_j)| \leq h$ for every $j \in J$, 
\item and a weight function $w\colon J \rightarrow \mathbb{Q}_{+}$, 
\end{itemize}
returns, in time $r \cdot |V(T)| \cdot n^{O(r)}$, a distance-$d$ $\mathcal{H}$-packing in $G$ of weight at least a factor $(1-h/r)$ of the optimal. 
\end{theorem}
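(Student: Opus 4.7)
The plan is to reduce \textsc{Max Weight Distance-$d$ Packing} in $G$ to \textsc{Max Weight Independent Packing} in the odd power $G^{d-1}$, and then to apply the already-established pipeline for the independent-packing case (\Cref{PTASffA}). Write $d-1 = 1+2k$ with $k = (d-2)/2$; this is where the evenness of $d$ is used. The reduction rests on the following elementary observation: for vertex-disjoint vertex subsets $A, B \subseteq V(G)$, the inequality $d_G(A,B) \geq d$ holds if and only if no vertex of $A$ is adjacent in $G^{d-1}$ to any vertex of $B$. Consequently, if for each $j \in J$ we set $H'_j := G^{d-1}[V(H_j)]$, which is a connected non-null subgraph of $G^{d-1}$ on at most $h$ vertices (since $H_j$ is connected in $G$ and $E(G) \subseteq E(G^{d-1})$), then the family $\mathcal{H}' = \{H'_j\}_{j \in J}$, equipped with the original weight function $w$, has the property that distance-$d$ $\mathcal{H}$-packings in $G$ and independent $\mathcal{H}'$-packings in $G^{d-1}$ coincide and carry equal weight.

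To solve the reduced problem I would first apply \Cref{layeredtreealgoA} to the input tree decomposition $\mathcal{T} = (T,\{X_t\}_{t\in V(T)})$ and layering of $G$ with parameter $k = (d-2)/2$, producing in $O(|V(T)|\cdot(n+|E(G)|))$ time a tree decomposition of $G^{d-1}$ on the same tree $T$ together with a layering such that every bag-layer intersection has independence number at most $(1+4k)\ell = (2d-3)\ell$. Next, I would apply \Cref{layeredtofragileA} with the given parameter $r$ to obtain in polynomial time a $(1-1/r)$-general cover $\mathcal{C}$ of $G^{d-1}$ of size $r$, together with, for each $C \in \mathcal{C}$, a tree decomposition of $G^{d-1}[C]$ of independence number at most $f(r) := (2d-3)\ell(r-1)$. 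Finally, I would feed these data (the graph $G^{d-1}$, the cover $\mathcal{C}$ with its tree decompositions, the family $\mathcal{H}'$, the weight $w$, and the parameter $r>h$) into \Cref{PTASffA}; the returned independent $\mathcal{H}'$-packing in $G^{d-1}$, read back in $G$, is a distance-$d$ $\mathcal{H}$-packing of weight at least $(1-h/r)$ times the optimum.

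The running time is dominated by the last step, which costs $|\mathcal{C}| \cdot O(n^{h(f(r)+1)} \cdot t)$, where $t$ is the size of the largest auxiliary tree produced by \Cref{layeredtofragileA} (polynomial in $n$ and $|V(T)|$). Since $h$, $d$ and $\ell$ are constants, $h(f(r)+1)$ is $O(r)$, yielding the claimed bound $r\cdot |V(T)|\cdot n^{O(r)}$. The only non-routine ingredient is the power-graph reduction and the verification that the auxiliary objects feed correctly into the two prior results; the main obstacle is precisely making sure $d$ being even allows one to write $d-1$ as an odd power $1+2k$, so that \Cref{layeredtreealgoA}, which only controls odd powers, can be invoked. (By \Cref{countereven}, the even-power analogue fails, which is why the theorem is stated for even $d$ only.)
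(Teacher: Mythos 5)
Your proposal is correct and follows essentially the same route as the paper: reduce to \textsc{Max Weight Independent Packing} in the odd power $G^{d-1}$ (the paper cites an observation of Lima et al.\ for the equivalence you verify directly), then chain \Cref{layeredtreealgoA}, \Cref{layeredtofragileA}, and \Cref{PTASffA} with exactly the same parameter bookkeeping, giving bag-layer independence number $(2d-3)\ell$ and cover width $(2d-3)\ell(r-1)$. The running-time accounting also matches the paper's.
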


\begin{proof} Let $d = 2k$. As observed in \cite[Observation~5.14]{LMM24}, for $I \subseteq J$, the subfamily $\mathcal{H}' = \{H_i\}_{i\in I}$ is a distance-$d$ $\mathcal{H}$-packing in $G$ if and only if $\mathcal{H}'$ is an independent $\mathcal{H}$-packing in the graph $G^{d-1}$. Therefore, using BFS, we first compute in $O(n^3)$ time the graph $G^{2k-1}$. Using the algorithm from \Cref{layeredtreealgoA}, we compute in $O(|V(T)| \cdot n^2)$ time a tree decomposition $\mathcal{T}' = (T,\{X'_t\}_{t\in V(T)})$ of $G^{2k-1}$ and a layering $(V'_1,\ldots,V'_{\lceil \frac{m}{2k-1} \rceil})$ of $G^{2k-1}$ such that, for each bag $X'_t$ and layer $V'_i$, $\alpha(G^{2k-1}[X'_t \cap V'_i]) \leq (4k-3)\ell$. Using the algorithm from \Cref{layeredtofragileA}, we compute in $O(n^2 + n\cdot|V(T)|)$ time a $(1 - 1/r)$-general cover $\mathcal{C}$ of $G^{2k-1}$ of size $r$ and, for each $C \in \mathcal{C}$, a tree decomposition of $G^{2k-1}[C]$ with independence number at most $\ell (4k-3)(r-1)$. Finally, we apply the approximation algorithm from \Cref{PTASffA} to obtain, in time $r \cdot O(n^{h(\ell (4k-3)(r-1)+1)} \cdot |V(T)|)$, an independent $\mathcal{H}$-packing in $G^{2k-1}$ of weight at least a factor $(1-h/r)$ of the optimal.
\end{proof}

In a similar way, combining \Cref{PTASffA} with \Cref{treealphafatA,c-fatoddpower}, we immediately obtain the following:

\begin{theorem}\label{PTASintpower} Let $d \in \mathbb{N}$ be even. \textsc{Max Weight Distance-$d$ Packing} admits a $\mathsf{PTAS}$ for intersection graphs of $c$-fat collections of objects in $\mathbb{R}^k$, for any fixed $c$ and $k$. 
\end{theorem}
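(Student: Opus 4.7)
The plan is to mimic the proof of \Cref{PTASltA}, replacing the use of layered tree-independence number and \Cref{layeredtreealgoA} by its analogue for $c$-fatness, namely \Cref{c-fatoddpower}, and then invoking \Cref{treealphafatA} in place of \Cref{layeredtofragileA} to produce the required general cover with tree decompositions of small independence number.

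Concretely, given an instance consisting of an even $d = 2m$, a parameter $r > h$, a $c$-fat collection $\mathcal{O}$ in $\mathbb{R}^k$ with intersection graph $G$, a family $\mathcal{H} = \{H_j\}_{j \in J}$ of connected non-null subgraphs of $G$ with $|V(H_j)| \leq h$, and a weight function $w$, I would proceed as follows. First, I exploit the standard observation (used already in \Cref{PTASltA}) that a subfamily $\{H_i\}_{i \in I}$ is a distance-$d$ $\mathcal{H}$-packing in $G$ if and only if it is an independent $\mathcal{H}$-packing in $G^{d-1} = G^{2m-1}$; hence it suffices to approximate the latter. Since $2m-1$ is odd, \Cref{c-fatoddpower} applies and yields, in polynomial time, a geometric realization of $G^{2m-1}$ as the intersection graph of a $c'$-fat collection of objects in $\mathbb{R}^k$, where $c' = 3^k (2m-1)^k c$ depends only on $c$, $k$ and $d$.

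Next, I feed this realization into \Cref{treealphafatA} with parameter $r$: this computes in linear time a $(1-1/r)$-general cover $\mathcal{C}$ of $G^{2m-1}$ of size at most $(f(r)/2 - 1)^k$, together with tree decompositions of each $G^{2m-1}[C]$, for $C \in \mathcal{C}$, of independence number at most $c' f(r)^{2k}$, where $f(r) = 2\lceil 1 / (1 - (1-1/r)^{1/k}) \rceil$. All of these bounds depend only on $r$, $c$, $k$ and $d$, so they are effectively constants. Finally, I run the approximation algorithm from \Cref{PTASffA} on $G^{2m-1}$ with this cover and the family $\mathcal{H}$, which in time $|V(G)|^{O(h \cdot c' f(r)^{2k})}$ returns an independent $\mathcal{H}$-packing in $G^{2m-1}$ of weight at least $(1-h/r)$ of the optimum. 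By the first step this is a distance-$d$ $\mathcal{H}$-packing in $G$ of the same weight, and since $h$ is constant we obtain an arbitrarily good approximation ratio by choosing $r$ large enough.

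There is essentially no hard step: all of the heavy lifting has been carried out in \Cref{c-fatoddpower} (closure of $c$-fat intersection graph classes under odd powers), \Cref{treealphafatA} (efficient fractional $\tin$-fragility of such classes), and \Cref{PTASffA} (the generic PTAS for \textsc{Max Weight Independent Packing} on efficiently fractionally $\tin$-fragile classes). The only thing to check is that these ingredients compose correctly and that the resulting running time is polynomial in $|V(G)|$ with exponent depending only on $1/\varepsilon$, $c$, $k$, $d$ and $h$, which is immediate from the constant dependencies above.
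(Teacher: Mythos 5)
Your proposal is correct and follows exactly the route the paper takes: reduce distance-$d$ packing to independent packing in $G^{d-1}$, apply \Cref{c-fatoddpower} to realize the odd power as an intersection graph of a fat collection, then combine \Cref{treealphafatA} with \Cref{PTASffA}. The paper's proof of \Cref{PTASintpower} is precisely this combination (stated as immediate), so there is nothing to add beyond noting the trivial case $d=2$, where $G^{d-1}=G$ is already $c$-fat and \Cref{c-fatoddpower} is not needed.
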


We remark that \Cref{PTASltA} cannot be extended to odd values of $d$, unless $\mathsf{P} = \mathsf{NP}$. Indeed, Eto et al.~\cite{EGM14} showed that, for each $\varepsilon > 0$ and fixed odd $d \geq 3$, it is $\mathsf{NP}$-hard to approximate \textsc{Distance-$d$ Independent Set} to within a factor of $n^{1/2 - \varepsilon}$ for chordal graphs. However, it is not clear whether \Cref{PTASintpower} can be extended to odd values of $d$.


\subsection{Packing independent unit disks, unit-width rectangles and paths with bounded horizontal part on a grid}\label{sec:improvedA}

We have seen in \Cref{sec:layeredlemmas} that classes of intersection graphs of unit disks, unit-width rectangles and paths with bounded horizontal part on a grid have bounded layered tree-independence number and hence \textsc{Max Weight Independent Set} admits a PTAS when restricted to any of them thanks to either \Cref{metaPTAS} or \Cref{PTASffA}. In this section we provide PTASes with improved running time for \textsc{Max Weight Independent Set} on these three classes.

We begin with VPG/EPG graphs. The following result generalizes the PTAS for \textsc{Max Weight Independent Set} on $B_1$-EPG graphs with bounded horizontal part given in \cite[Theorem~6]{BBCGP20}. 

\begin{theorem}\label{indepPTASA} Let $c \in \mathbb{N}$. \textsc{Max Weight Independent Set} admits a PTAS when restricted to $n$-vertex graphs with a grid representation $\mathcal{R} = (\mathcal{G}, \mathcal{P},x)$ such that: 
\begin{enumerate}
\item each path in $\mathcal{P}$ has number of bends constant;
\item\label{3rdA} the horizontal part of each path in $\mathcal{P}$ has length at most $c$.
\end{enumerate}
If $x = v$, the running time is $O(c\lceil\frac{1}{\varepsilon}\rceil \cdot n^{\lceil\frac{1}{\varepsilon}\rceil c + 3})$. If $x = e$, the running time is $O(c\lceil\frac{1}{\varepsilon}\rceil \cdot n^{3\lceil\frac{1}{\varepsilon}\rceil c + 2})$. 
\end{theorem}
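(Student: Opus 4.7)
The plan is to apply a standard Baker-style shifting technique, combining it with the dynamic-programming algorithm for \textsc{Max Weight Independent Set} on graphs of bounded tree-independence number. Fix $\varepsilon > 0$ and let $k = \lceil 1/\varepsilon \rceil$. I would consider $ck$ shifts $s \in \{0, 1, \ldots, ck-1\}$. For each such $s$, define the \emph{deletion columns} $F_s = \{s + j\cdot ck : j \in \mathbb{N}_0\}$ in $\mathcal{G}$, and let $V_s \subseteq V(G)$ be the set of vertices whose associated path has horizontal part intersecting some column of $F_s$. Set $G_s := G - V_s$. Since consecutive deletion columns are $ck \geq c$ apart, a path whose horizontal part has length at most $c$ and which avoids $F_s$ must lie entirely between two consecutive deletion columns; hence $G_s$ decomposes into vertex-disjoint induced subgraphs $G_s^1, G_s^2, \ldots$, one per slab of at most $ck$ consecutive columns, with no edges between different slabs (an edge would force two paths to meet on a grid-point/grid-edge belonging to some column, which is either in $F_s$ or lies in a unique slab).

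Next, for each piece $G_s^i$ I would apply \Cref{pathA} to the natural grid representation inherited from $\mathcal{R}$ (restricted to at most $ck$ columns), producing in $O(n\log n)$ time a tree decomposition $\mathcal{T}_s^i = (T_s^i, \{X_t\})$ with $|V(T_s^i)| = O(n)$ and independence number at most $\lceil 1/\varepsilon\rceil c$ in the VPG case, or at most $3\lceil 1/\varepsilon\rceil c - 1$ in the EPG case. Then I would invoke \Cref{MWIPA} with $h = 1$ (so that \textsc{MWIP} becomes \textsc{MWIS}) on each $G_s^i$ to compute an exact optimum, yielding per-shift running times matching $n^{\lceil 1/\varepsilon\rceil c + 3}$ and $n^{3\lceil 1/\varepsilon\rceil c + 2}$ once multiplied by $|V(T_s^i)|$. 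Taking the union of the slab optima gives an independent set $S_s$ in $G_s$, hence also in $G$. The algorithm outputs the best $S_s$ across all $ck$ shifts, for a total running time that matches the theorem's claim.

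The correctness of the approximation ratio follows from an averaging argument. Let $S^\star$ be an optimal independent set of $G$. For each $v \in S^\star$, the horizontal part of $P_v$ meets at most $c$ columns, so $v$ belongs to $V_s$ for at most $c$ of the $ck$ shifts. Summing over shifts,
\[
\sum_{s=0}^{ck-1} w\bigl(S^\star \cap V(G_s)\bigr) \;\geq\; (ck - c)\cdot w(S^\star) \;=\; c(k-1)\cdot w(S^\star),
\]
so some $s^\star$ satisfies $w(S^\star \cap V(G_{s^\star})) \geq (1 - 1/k)\, w(S^\star) \geq (1-\varepsilon)\, w(S^\star)$. Because $S^\star \cap V(G_{s^\star})$ is a feasible independent set in $G_{s^\star}$ and our exact algorithm on $G_{s^\star}$ returns an optimum $S_{s^\star}$, we obtain $w(S_{s^\star}) \geq (1-\varepsilon)\, w(S^\star)$, as required.

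The main subtlety will be aligning the geometric quantity ``length of the horizontal part $\leq c$'' with the combinatorial count of columns touched, so that the step ``each vertex of $S^\star$ is removed in at most $c$ out of the $ck$ shifts'' is tight with the precise constants in the theorem; this requires only careful bookkeeping about whether endpoints of $h(P)$ are counted and about integer alignment of the columns of $\mathcal{G}$. A secondary technical point is verifying that restricting $\mathcal{R}$ to a slab actually yields a valid grid representation of $G_s^i$ with the required column bound; this is immediate from the way grid representations and bends are encoded in \Cref{sec:prelimA}. Once these normalizations are pinned down, the shifting loop and its analysis proceed exactly as sketched.
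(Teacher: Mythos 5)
Your proposal follows essentially the same route as the paper's proof: shift over $ck$ residue classes of columns, observe that after deleting one class the graph splits into slabs each admitting a grid representation with at most $kc$ columns, solve each slab exactly via \Cref{pathA} combined with \Cref{MWIPA} (with $h=1$), and average over the shifts. The one substantive discrepancy is your deletion rule: you remove every vertex whose horizontal part \emph{intersects} a deletion column and then assert that each vertex of $S^\star$ is removed in at most $c$ of the $ck$ shifts. A horizontal part of length exactly $c$ with integer endpoints meets $c+1$ columns (for $c=1$, a path with horizontal part $[0,1]$ is deleted in two residue classes), so as literally stated the averaging only gives a factor $1-\frac{c+1}{ck}$, which does not match the claimed ratio with the stated exponents and can degenerate for small $c$ and large $\varepsilon$. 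The paper avoids this by defining $X_i$ as the vertices whose path uses a horizontal grid-edge from column $i$ to column $i+1$ (in effect a half-open convention): a path whose horizontal part has length at most $c$ crosses at most $c$ such unit gaps, which yields exactly the ``at most $c$ shifts'' count, and the slab-separation argument still goes through. You flagged this alignment issue yourself as bookkeeping, and switching to the crossed-gap rule is indeed the only correction needed; the rest of your argument coincides with the paper's.
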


\begin{proof}
Let $G$ be a $n$-vertex graph with a grid representation $\mathcal{R} = (\mathcal{G}, \mathcal{P},x)$ satisfying the conditions above. Without loss of generality, we may assume that all the paths in $\mathcal{P}$ contain only grid-points with non-negative coordinates. Moreover, we may assume that $G$ is connected. Therefore, no column in $\mathcal{G}$ is unused and so $\mathcal{G}$ has at most $(c+1)n$ columns. Further note that since any path $P \in \mathcal{P}$ has number of bends constant, we can compute the horizontal part $h(P)$ of $P$ in $O(1)$ time. Given $0 < \varepsilon < 1$, we fix $k = \lceil 1/\varepsilon \rceil$. 

For any $i \geq 0$, we denote by $X_i$ the set of vertices whose corresponding path contains a grid-edge $[(i,j),(i+1,j)]$ for some $j \geq 0$ (here and in the following $[(i,j),(i+1,j)]$ denotes the grid-edge with endpoints $(i,j)$ and $(i+1,j)$). Note that we can compute the at most $(c+1)n -1$ non-empty sets $X_{i}'s$ in $O(n)$ time. In view of applying a shifting argument, we now partition $G$ into slices via the following. For any $d \in \{0, \ldots, kc-1\}$, let $V_d = \bigcup_{\ell \in \mathbb{N}_0} X_{d + \ell kc}$ be the set of vertices whose corresponding path contains a grid-edge $[(d +\ell kc,j),(d+\ell kc +1,j)]$ for some $\ell,j \in \mathbb{N}_0$. We claim that, for any $d \in \{0,\ldots,kc-1\}$, $G - V_d$ is disconnected. Indeed, after deleting $V_d$, no vertex whose horizontal part is contained in the interval $[0, d + \ell kc]$ can be adjacent to a vertex whose horizontal part is contained in the interval $[d + \ell kc+1, (c+1)n]$. Similarly, every component of $G - V_d$ admits a grid representation in which the number of columns is bounded by $kc$. By \Cref{pathA} and \Cref{MWIPA}, for each component of $G - V_d$, we compute a maximum-weight independent set in $O(n^{kc+2})$ time, if $x = v$, or in $O(n^{3kc+1})$ time, if $x = e$. The union $U_d$ of these independent sets over the components of $G - V_d$ is then an independent set of $G$ and, after repeating the procedure above for each $d \in \{0,\ldots, kc-1\}$, we return the maximum-weight set $U$ among the $U_d$'s. The total running time is then $O(kc \cdot n^{kc+3})$, if $x = v$, or $O(kc \cdot n^{3kc+2})$, if $x = e$.  
 
It remains to show that $w(U) \geq (1 - \varepsilon)w(\mathsf{OPT})$, where $\mathsf{OPT}$ denotes an optimal solution of \textsc{Max Weight Independent Set} with instance $G$. Note that, for any $d \in \{0,\ldots , kc-1\}$, $\mathsf{OPT} \cap V_d$ is the set of vertices in $\mathsf{OPT}$ whose corresponding path contains a grid-edge $[(d +\ell kc,j),(d+\ell kc +1,j)]$ for some $\ell,j \in \mathbb{N}_0$. Since the horizontal part of each path has length at most $c$, we have that every vertex in $\mathsf{OPT}$ belongs to at most $c$ distinct $V_d$'s. Therefore, denoting by  $d_0$ an index attaining $\min_{d \in \{0, \ldots, kc-1\}} w(\mathsf{OPT} \cap V_d)$, we have that 
\[
kc\cdot w(\mathsf{OPT} \cap V_{d_0}) \leq \sum_{d=0}^{kc-1} w(\mathsf{OPT} \cap V_d) \leq c\cdot w(\mathsf{OPT})
\] and so 
\[
w(\mathsf{OPT}) = w(\mathsf{OPT}\setminus V_{d_0}) + w(\mathsf{OPT} \cap V_{d_0}) \leq w(U) + \varepsilon \cdot w(\mathsf{OPT}),
\]
thus concluding the proof.
\end{proof}

We note two consequences of \Cref{indepPTASA}. It provides a PTAS for \textsc{Max Weight Independent Set} on equilateral $B_1$-VPG graphs (i.e., $B_1$-VPG graphs where, for each path, its horizontal and vertical segment have the same length) where paths have bounded horizontal part and on unit $B_k$-VPG graphs (i.e., $B_k$-VPG graphs where each segment has unit length). Lahiri et al.~\cite{LMS15} provided a $O(\log d)$-approximation algorithm for the unweighted version on equilateral $B_1$-VPG graphs, where $d$ denotes the ratio between the maximum and minimum length of segments of paths, which for constant $d$ gives a constant-factor approximation algorithm. \Cref{indepPTASA} improves this to a PTAS. \Cref{indepPTASA} also complements the $O(k^4)$-approximation algorithm for \textsc{Dominating Set} on unit $B_k$-VPG graphs provided by Chakraborty et al.~\cite{CDM22}.

We now pass to the PTASes for intersection graphs of unit disks and unit-width rectangles. The running time of our PTAS for unit disk graphs improves on the $(1-1/k)$-approximation algorithm with running time $O(k n^{4\lceil \frac{2(k-1)}{\sqrt{3}}\rceil})$ by Matsui~\cite{Mat98}.

\begin{theorem}\label{rectanglesPTASA} \textsc{Max Weight Independent Set} admits a PTAS when restricted to:
\begin{itemize}
\item Intersection graphs of a family $\mathcal{D}$ of $n$ unit disks of common radius $1$. The running time is $O(\lceil \frac{3}{\varepsilon} \rceil \cdot n^{3\lceil\frac{3-\varepsilon}{2\varepsilon}\rceil + 3})$.
\item Intersection graphs of a family $\mathcal{R}$ of $n$ unit-width rectangles of common width $1$. The running time is $O(\lceil \frac{2}{\varepsilon} \rceil \cdot n^{\lceil \frac{2}{\varepsilon} \rceil + 2})$.
\end{itemize}
\end{theorem}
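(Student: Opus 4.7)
My plan is to prove both parts by a standard shifting argument in the spirit of Baker, combined with the slab tree-decomposition bounds from \Cref{diskA} and \Cref{rectangleA} and the algorithm for \textsc{Max Weight Independent Packing} of \Cref{MWIPA}. The proof parallels \Cref{indepPTASA}, but the shifts are carried out in the Euclidean plane rather than along the columns of a grid.

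For the unit disk case, I would partition $\mathbb{R}^2$ into vertical strips $S_i = [i, i+1] \times \mathbb{R}$ of unit width and, given $\varepsilon > 0$, fix $k = \lceil 3/\varepsilon \rceil$. For each shift $d \in \{0, 1, \ldots, k-1\}$, I would let $V_d$ be the set of vertices whose disk intersects some strip of index $d + \ell k$, $\ell \geq 0$, and remove $V_d$ from $G$. Every connected component of $G - V_d$ then has a geometric realization confined to a slab of width at most $k-1$ sitting between two consecutive removed strips, so \Cref{diskA} produces, in $O(n \log n)$ time, a tree decomposition of independence number at most $3\lceil (k-1)/2 \rceil$ with $O(n)$ bags. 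Running \Cref{MWIPA} with $h = 1$ on each component yields a maximum-weight independent set $I_d$ of $G - V_d$ within the claimed per-shift budget, and the algorithm returns the heaviest $I_d$.

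The approximation ratio reduces to a routine averaging argument: since each unit disk has horizontal extent of length $2$, it intersects at most $3$ consecutive strips $S_i$ and therefore lies in at most $3$ of the $V_d$'s. Hence $\sum_{d=0}^{k-1} w(\mathsf{OPT} \cap V_d) \leq 3\, w(\mathsf{OPT})$, so some index $d^{\star}$ satisfies $w(\mathsf{OPT} \setminus V_{d^{\star}}) \geq (1 - 3/k)\, w(\mathsf{OPT}) \geq (1 - \varepsilon)\, w(\mathsf{OPT})$; as $\mathsf{OPT} \setminus V_{d^{\star}}$ is still independent in $G - V_{d^{\star}}$, optimality of $I_{d^{\star}}$ closes the argument.

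The unit-width rectangle case goes through by the same template with only cosmetic changes: unit-width strips again, shift parameter $k = \lceil 2/\varepsilon \rceil$ (each unit-width rectangle has horizontal extent $1$ and hence intersects at most $2$ strips, yielding ratio $1 - 2/k \geq 1 - \varepsilon$), tree-independence number at most $k - 1$ inside each component of $G - V_d$ via \Cref{rectangleA}, and \Cref{MWIPA} for the dynamic-programming step. I do not foresee any substantive obstacle; the only care required is matching the slab width and ceiling functions exactly to the exponents $3\lceil (3-\varepsilon)/(2\varepsilon) \rceil + 3$ and $\lceil 2/\varepsilon \rceil + 2$ appearing in the theorem, which is a routine calculation once $k$ is substituted back.
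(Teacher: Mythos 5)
Your proposal is correct and matches the paper's own proof essentially verbatim: the paper likewise shifts over unit-width vertical strips with $k=\lceil 3/\varepsilon\rceil$ for disks and $k=\lceil 2/\varepsilon\rceil$ for rectangles, applies \Cref{diskA} and \Cref{rectangleA} to each component of $G-V_d$, solves each component by \Cref{MWIPA} with $h=1$, and concludes by the same averaging argument as in \Cref{indepPTASA}. The one detail to adjust is to take the strips half-open (the paper counts objects whose horizontal part meets $[i,i+1)$): with closed strips $[i,i+1]\times\mathbb{R}$, a unit disk whose projection is exactly an interval $[i,i+2]$ with integer endpoints meets $4$ strips rather than $3$ (and a width-$1$ rectangle can meet $3$ rather than $2$), so the half-open convention is what makes your counts, and hence the stated exponents, come out exactly as claimed.
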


\begin{proof} Since the PTASes are similar, we introduce some common notation. Let $G$ be the intersection graph of the family $\mathcal{O}$, where $\mathcal{O}$ is either $\mathcal{D}$ or $\mathcal{R}$. Without loss of generality, we may assume that all objects in $\mathcal{O}$ are contained in the positive quadrant. Moreover, we may assume that $G$ is connected. Therefore, $\mathcal{O}$ is contained in a grid $\mathcal{G}$ with $O(n)$ columns. For each $O \in \mathcal{O}$, we compute the horizontal part $h(O)$ of $O$ (i.e., the projection of $O$ onto the $x$-axis) in $O(1)$ time. Given $0 < \varepsilon < 1$, we fix $k = \lceil 2/\varepsilon \rceil$, in the case of rectangles, and $k = \lceil 3/\varepsilon \rceil$, in the case of disks. Let $X_i$ be the set of vertices whose corresponding objects have horizontal part intersecting the half-open segment $[(i,0),(i+1,0))$. We can compute the $X_i$'s in $O(n)$ time. We now partition $G$ into slices as follows. For any $d \in \{0, \ldots, k-1\}$, let $V_d = \bigcup_{\ell \in \mathbb{N}_0} X_{d + \ell k}$. As in the proof of \Cref{indepPTASA}, it is easy to see that, for any $d \in \{0,\ldots,k-1\}$, each component of $G - V_d$ admits a geometric realization which is contained in an axis-aligned rectangle with width at most $k - 1$. If the family consists of disks, for each component of $G - V_d$, we compute a maximum-weight independent set in $O(n^{3\lceil\frac{k-1}{2}\rceil + 2})$ time (thanks to \Cref{diskA} and \Cref{MWIPA}). If the family consists of rectangles, for each component of $G - V_d$, we compute a maximum-weight independent set in $O(n^{k + 1})$ time (thanks to \Cref{rectangleA} and \Cref{MWIPA}). In either case, the union $U_d$ of these independent sets over the components of $G - V_d$ is an independent set of $G$ and, after repeating the procedure above for each $d \in \{0,\ldots, k-1\}$, we return the maximum-weight set $U$ among the $U_d$'s. The total running time is then $O(k \cdot n^{3\lceil\frac{k-1}{2}\rceil + 3})$, in the case of disks, and $O(k \cdot n^{k + 2})$, in the case of rectangles. Similarly to \Cref{indepPTASA}, it is easy to see that $w(U) \geq (1 - \varepsilon)w(\mathsf{OPT})$, where $\mathsf{OPT}$ denotes an optimal solution of \textsc{Max Weight Independent Set} with instance $G$. 
\end{proof}

\section{Subexponential-time algorithms}\label{sec:subexp}

Although the focus of this paper is on approximation schemes, in this short section we note some consequences of our work in relation to subexponential-time algorithms. In particular, the observations from \Cref{sec:layeredA} immediately lead to subexponential-time algorithms for \textsc{Max Weight Distance-$d$ Packing}, for each fixed even $d\in\mathbb{N}$, on classes of bounded layered tree-independence number, as we argue below. We have listed in \Cref{sec:ptasesA} some examples of problems falling in this framework. Further well-known examples of dual problems (minimization problems) are \textsc{$k$-Separator} (also known as \textsc{$k$-Component Order Connectivity} in its unweighted version) \cite{BMN15, Lee19} and its special case \textsc{Min Weight $3$-Path Vertex Cover} \cite{BKSS14,Lee19}.

\begin{corollary}\label{subexp} Let $\ell,d\in\mathbb{N}$ be fixed constants, with $d$ even. Let $G$ be a $n$-vertex graph for which we can compute, in time $\mathsf{poly}(n)$, a tree decomposition and a layering witnessing layered tree-independence number at most $\ell$. Then \textsc{Max Weight Distance-$d$ Packing} can be solved in $2^{O(\sqrt{n}\log{n})}$ time\footnote{An alternative way to obtain subexponential-time algorithms for some of the \textit{unweighted} problems captured by \textsc{Max Weight Distance-$d$ Packing} follows from a general result of Korhonen and Lokshtanov~\cite{KL23}, who provided $2^{O_{H}(n^{2/3}\log n)}$-algorithms on $H$-induced-minor-free graphs for unweighted problems such as \textsc{Independent Set} and \textsc{Induced Matching}. This result applies to every class of bounded layered tree-independence number, as this must be $K_{n,n}$-induced-minor-free, for some $n$.}.
\end{corollary}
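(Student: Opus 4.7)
The plan follows the same strategy as the proof of \Cref{PTASltA}, but replaces the PTAS step with an exact algorithm that becomes viable because \Cref{sqrttreealphaA} upgrades bounded layered tree-independence number to $O(\sqrt{n})$ tree-independence number. Write $d=2k$ and let the input consist of $G$ (with the promised tree decomposition and layering of layered independence number at most $\ell$) and a family $\mathcal{H}=\{H_j\}_{j\in J}$ of connected subgraphs with $|V(H_j)|\leq h$, where $h$ is a fixed constant from the problem definition.

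First, I would reduce to an independent packing problem exactly as in \Cref{PTASltA}: a subfamily of $\mathcal{H}$ is a distance-$d$ $\mathcal{H}$-packing in $G$ if and only if it is an independent $\mathcal{H}$-packing in $G^{d-1}=G^{2k-1}$, and the graph $G^{2k-1}$ can be built in polynomial time by running BFS from each vertex. The parity of $d$ is crucial here: because $d-1$ is odd, I can next invoke \Cref{layeredtreealgoA} with parameter $k-1$ (so that $1+2(k-1)=2k-1$; the case $k=1$ is degenerate since $G^{2k-1}=G$ and no such step is needed) to obtain in polynomial time a tree decomposition $\mathcal{T}'$ and a layering of $G^{2k-1}$ whose layered independence number is at most $(1+4(k-1))\ell=(4k-3)\ell$, a constant.

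Next, applying \Cref{sqrttreealphaA} to $\mathcal{T}'$ and this layering yields, in polynomial time, a tree decomposition $\mathcal{T}$ of $G^{2k-1}$ with $\alpha(\mathcal{T})\leq 2\sqrt{(4k-3)\ell\cdot n}=O(\sqrt{n})$. Finally, I would feed $\mathcal{T}$, the graph $G^{2k-1}$, and the family $\mathcal{H}$ into the exact algorithm of \Cref{MWIPA}, whose running time is $O(n^{h(\alpha(\mathcal{T})+1)}\cdot |V(T)|)$. Since $h$, $k$, and $\ell$ are all fixed constants, this evaluates to $n^{O(\sqrt{n})}=2^{O(\sqrt{n}\log n)}$, as required.

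There is no substantive obstacle beyond stitching these ingredients together correctly, since every building block already appears in the paper; the main conceptual point is that the evenness of $d$ is exactly what enables the use of \Cref{layeredtreealgoA}, mirroring the parity restriction in \Cref{PTASltA} and reflecting the barrier for even powers exhibited by \Cref{countereven}.
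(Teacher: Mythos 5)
Your argument is correct, but it follows a genuinely different route from the paper's. The paper's proof applies \Cref{sqrttreealphaA} directly to $G$ to obtain a tree decomposition of $G$ itself with independence number $O(\sqrt{n})$, and then simply invokes the result of Lima et al.~\cite{LMM24} that \textsc{Max Weight Distance-$d$ Packing} (for even $d$) is solvable in $n^{O(k)}$ time when a tree decomposition of independence number $k$ is supplied; the parity of $d$ is absorbed entirely into that cited theorem. You instead re-derive the distance-$d$ case inside the paper's own machinery, mirroring \Cref{PTASltA}: reduce to an independent $\mathcal{H}$-packing in the odd power $G^{d-1}$, use \Cref{layeredtreealgoA} to keep the layered independence number of the power bounded by $(4k-3)\ell$, apply \Cref{sqrttreealphaA} to the power to get an $O(\sqrt{n})$-independence tree decomposition of $G^{2k-1}$, and finish with the exact packing algorithm of \Cref{MWIPA}. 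The bookkeeping is sound (the tree $T$ has polynomial size, $h$, $\ell$, $d$ are constants, so the final bound is $n^{O(\sqrt n)}=2^{O(\sqrt n\log n)}$), and your treatment of the case $k=1$ and of why evenness of $d$ is needed is accurate. What your route buys is self-containedness: you only need the distance-$2$ algorithm of Dallard et al.\ as an external black box, with the distance-$d$ reduction carried out explicitly via the power-graph lemmas already proved in the paper; what the paper's route buys is brevity, at the cost of leaning on the stronger statement from \cite{LMM24}.
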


\begin{proof} We first compute, in time $\mathsf{poly}(n)$, a tree decomposition and a layering witnessing layered tree-independence number at most $\ell$. We then apply \Cref{sqrttreealphaA} and compute a tree decomposition of $G$ with independence number $O(\sqrt{n})$. It is then enough to recall that \textsc{Max Weight Distance-$d$ Packing} is solvable in time $n^{O(k)}$, where $k$ is the tree-independence number of the input graph, provided a tree decomposition with independence number at most $k$ is given in input \cite{LMM24}.
\end{proof}

\begin{remark} Note that, in \Cref{subexp}, we require a tree decomposition and a layering giving constant layered tree-independence number. This is because we cannot directly use $O(\sqrt{n})$ tree-independence number in conjunction with the $\mathsf{XP}$-approximation algorithm for tree-independence number of Dallard et al.~\cite{DFGK22} (for fixed $k$, there exists an algorithm that, given a $n$-vertex graph $G$, in time $2^{O(k^2)}n^{O(k)}$, either decides that $\tin(G) > k$, or outputs a tree decomposition of $G$ with independence number at most $8k$), as this would not grant subexponential time. 
\end{remark}

\Cref{subexp} has interesting consequences. In particular, paired with \Cref{layeredfat}, it immediately gives a $2^{O(\sqrt{n}\log{n})}$-time algorithm for \textsc{Max Weight Distance-$d$ Packing} on intersection graphs of similarly-sized $c$-fat families of objects in $\mathbb{R}^2$. A $2^{O(\sqrt{n}\log{n})}$-time algorithm for the special case $d=2$ (i.e., for \textsc{Max Weight Independent Set}) on unit disk graphs was first given in \cite{TP02}. More recently, de Berg et al.~\cite{BBK20} provided $2^{O(\sqrt{n})}$-time algorithms for the \textit{unweighted} version of many problems on intersection graphs of similarly-sized globally fat objects in $\mathbb{R}^d$ and this is tight under the ETH. Examples of such problems are \textsc{Independent Set}, \textsc{Induced Matching}, \textsc{Distance-$d$ Dominating Set} and, more generally, problems whose solutions (or the complements thereof) can contain at most a constant number of vertices from any clique. 

Although the $2^{O(\sqrt{n})}$-time algorithm for \textsc{Independent Set} from \cite{BBK20} can be extended to handle the weighted version (see, e.g., \cite{BK20}), to the best of our knowledge very little is known about the weighted case in general. The only lower bound we are aware of is by de Berg and Kisfaludi{-}Bak~\cite{BK20}, hinting at the fact that the weighted versions could be considerably harder: \textsc{Min Weight Dominating Set} cannot be solved in $2^{o(n)}$ time on unit ball graphs in $\mathbb{R}^3$, unless the ETH fails. In fact, de Berg and Kisfaludi{-}Bak~\cite{BK20} asked to determine the complexity of the weighted versions of problems falling in the framework of \cite{BBK20} when restricted to intersection graphs of similarly-sized fat objects in $\mathbb{R}^2$. \Cref{subexp} partially answers this question and shows that also the weighted versions of several of these problems admit subexponential-time algorithms, at the cost of an extra $\log n$ factor in the exponent. It is then natural to ask whether this $\log n$ factor can be shaved off or not. In other words, do the complexities of the weighted and unweighted versions match in $\mathbb{R}^2$? 

Finally, it would be interesting to investigate whether the notion of fractional $\tin$-fragility could be useful in the design of subexponential-time algorithms.

\section{Concluding remarks and open problems}\label{sec:remarks}

In this work we began investigating the notion of fractional $\tin$-fragility, which allows to unify classes of sparse graphs (e.g., proper minor-closed) and dense graphs (e.g., geometric intersection graphs), and demonstrated its usefulness in the design of PTASes for maximization problems. In particular, we obtained an approximation meta-theorem for the broad family of fractionally $\tin$-fragile classes. Besides generality, we believe that an additional strength of this approach lies in the simplicity of the approximation schemes obtained. We now list some open questions whose answers allow to identify the applicability limits of our frameworks. 

The first natural direction for a possible extension is given by the following.

\begin{question} Is it true that, for each even $d\in \mathbb{N}$, \textsc{Max Weight Distance-$d$ Packing} admits a PTAS on every efficiently fractionally $\tin$-fragile class?  
\end{question}

Recall that, for each odd $d \geq 3$, we cannot hope for a positive answer to the previous question, unless $\mathsf{P} = \mathsf{NP}$ \cite{EGM14}. However, it is not clear whether the following question has still a negative answer.  

\begin{question} Let $d \in \mathbb{N}$ be odd. Does \textsc{Max Weight Distance-$d$ Packing} admit a $\mathsf{PTAS}$ for intersection graphs of $c$-fat collections of objects in $\mathbb{R}^k$, for fixed $c$ and $k$? 
\end{question}

We observed in \Cref{separators} that every fractionally $\tin$-fragile class has separators of sublinear independence number. The converse does not hold in general, as shown by the following construction. Let $\mathcal{G}$ be the class of triangle-free graphs $G$ on $n$ vertices and with $\alpha(G) = O(\sqrt{n\log n})$ (such graphs exist for every sufficiently large $n$ by \cite{Kim95}). Let now $\mathcal{G}'$ be the class consisting of those graphs obtained by taking two copies of a graph in $\mathcal{G}$ and adding all possible edges between the two copies. The class $\mathcal{G}'$ has separators of sublinear independence number but, by Ramsey's theorem, 
$\mathcal{G}'$ has unbounded induced biclique number and hence is not fractionally $\tin$-fragile thanks to \Cref{bicliqueA}. However, the following might be true.

\begin{question}\label{qsublinear} Is it true that every hereditary class with separators of sublinear independence number is fractionally $\tin$-fragile?      
\end{question}

We remark that the similar question of whether every subgraph-closed class with strongly sublinear separators is fractionally $\tw$-fragile was asked by Dvo\v{r}\'{a}k~\cite{Dvo16} and is still open. It turns out that the answer is positive under the additional assumption of bounded maximum degree (see \cite[Lemma~19]{Dvo16}).   

A \textit{clique-based separator} of a graph $G$ is a collection $\mathcal{S}$ of vertex-disjoint cliques whose union is a balanced separator of $G$. The \textit{weight} of $\mathcal{S}$ is the quantity $\sum_{C\in\mathcal{S}} \log(|C|+1)$. de Berg et al.~\cite{BBK20,BKMT23} showed that several intersection graphs of geometric objects in the plane (e.g., map graphs and intersection graphs of convex globally fat objects, similarly-sized globally fat objects, or pseudo-disks) admit clique-based separators of weight $O(\sqrt{n})$ and used this to obtain algorithms with running time $2^{O(\sqrt{n})}$ for many problems on such graphs, as remarked in \Cref{sec:subexp}. Clearly, if a graph class $\mathcal{G}$ admits clique-based separators of sublinear weight, then it also admits separators of sublinear independence number. However, the class $\mathcal{G}'$ defined above shows that the converse does not hold, as each clique-based separator of a graph in $\mathcal{G}'$ has weight $\Omega(n)$. 

\begin{question} Is it true that every class of bounded layered tree-independence number admits clique-based separators of sublinear weight?
\end{question}

It would be interesting to further investigate the notions of layered treewidth, layered tree-independence number and fractional $\tin$-fragility for classes of string graphs and intersection graphs of pseudo-disks. Recall that a string graph is the intersection graph of a set of curves in the plane, where it can be assumed that no three curves meet at a single point (see, e.g., \cite{asinowski}). For an integer $k \geq 2$, if each curve is in at most $k$ intersections with other curves, then the corresponding string graph is called a \textit{$k$-string graph}. A \textit{$(g, k)$-string graph} is defined analogously for curves on a surface of Euler genus at most $g$. In general, the class of string graphs is not fractionally $\tin$-fragile (see \Cref{bicliqueA}). However, Dujmovi\'c et al.~\cite{DJM18} showed that the class of $(g, k)$-string graphs has bounded layered treewidth. A natural question is what happens for its superclass\footnote{The maximum degree of a $k$-string graph might be much less than $k$.} of string graphs of bounded maximum degree: 

\begin{question} Does the class of string graphs of bounded maximum degree have bounded layered treewidth?
\end{question}

More generally, one can consider $K_{t,t}$-subgraph-free string graphs. Thanks to a result of Lee~\cite{Lee17}, these graphs have separators of size $O(\sqrt{n})$. However, the layered tree-independence number is unbounded, as is easily seen by considering $2$-dimensional grids with a dominating vertex.

\begin{question} Let $t \geq 2$. Is the class of $K_{t,t}$-subgraph-free string graphs fractionally $\tin$-fragile? More generally, is the class of $K_{t,t}$-free string graphs fractionally $\tin$-fragile?
\end{question}

A set of objects in the plane is a collection of \textit{pseudo-disks} if each object is bounded by a Jordan curve and, for each pair of objects, their boundaries intersect at most twice. de Berg et al.~\cite{BKMT23} showed that any intersection graph of pseudo-disks has a clique-based separator of weight $O(n^{2/3}\log n)$. \Cref{fragileunboundedlocal} implies that the class of intersection graphs of pseudo-disks has unbounded layered tree-independence number. However, we ask the following.  

\begin{question} Is the class of intersection graphs of pseudo-disks fractionally $\tin$-fragile?
\end{question}

\bibliography{references}

\begin{thebibliography}{10}

\bibitem{AKS98}
Pankaj~K. Agarwal, Marc {van Kreveld}, and Subhash Suri.
\newblock Label placement by maximum independent set in rectangles.
\newblock {\em Computational Geometry}, 11(3):209--218, 1998.

\bibitem{AEIM14}
Yuichi Asahiro, Hiroshi Eto, Takehiro Ito, and Eiji Miyano.
\newblock Complexity of finding maximum regular induced subgraphs with
  prescribed degree.
\newblock {\em Theoretical Computer Science}, 550:21--35, 2014.

\bibitem{asinowski}
Andrei Asinowski, Elad Cohen, Martin~Charles Golumbic, Vincent Limouzy, Marina
  Lipshteyn, and Michal Stern.
\newblock Vertex intersection graphs of paths on a grid.
\newblock {\em Journal of Graph Algorithms and Applications}, 16(2):129--150,
  2012.

\bibitem{Bak94}
Brenda~S. Baker.
\newblock Approximation algorithms for {NP}-complete problems on planar graphs.
\newblock {\em J. ACM}, 41(1):153--180, 1994.

\bibitem{BLLSX22}
Sayan Bandyapadhyay, William Lochet, Daniel Lokshtanov, Saket Saurabh, and Jie
  Xue.
\newblock True contraction decomposition and almost {ETH}-tight bipartization
  for unit-disk graphs.
\newblock In Xavier Goaoc and Michael Kerber, editors, {\em 38th International
  Symposium on Computational Geometry (SoCG 2022)}, volume 224 of {\em Leibniz
  International Proceedings in Informatics (LIPIcs)}, pages 11:1--11:16.
  Schloss Dagstuhl -- Leibniz-Zentrum f{\"u}r Informatik, 2022.

\bibitem{Bel13}
R\'{e}my Belmonte.
\newblock {\em Algorithmic and Combinatorial Aspects of Containment Relations
  in Graphs}.
\newblock PhD thesis, University of Bergen, 2013.

\bibitem{BMN15}
Walid Ben{-}Ameur, Mohamed{-}Ahmed Mohamed{-}Sidi, and Jos{\'{e}} Neto.
\newblock The $k$-separator problem: polyhedra, complexity and approximation
  results.
\newblock {\em Journal of Combinatorial Optimization}, 29(1):276--307, 2015.

\bibitem{BBCGP20}
Stephane Bessy, Marin Bougeret, Steven Chaplick, Daniel Gon{\c{c}}alves, and
  Christophe Paul.
\newblock On independent set in {$B_1$}-{EPG} graphs.
\newblock {\em Discrete Applied Mathematics}, 278:62--72, 2020.

\bibitem{BD17}
Therese Biedl and Martin Derka.
\newblock Splitting {$B_2$-VPG} graphs into outer-string and co-comparability
  graphs.
\newblock In Faith Ellen, Antonina Kolokolova, and J{\"{o}}rg{-}R{\"{u}}diger
  Sack, editors, {\em Algorithms and Data Structures - 15th International
  Symposium, ({WADS} 2017)}, volume 10389 of {\em Lecture Notes in Computer
  Science}, pages 157--168. Springer, 2017.

\bibitem{BBB21}
Marthe Bonamy, \'{E}douard Bonnet, Nicolas Bousquet, Pierre Charbit, Panos
  Giannopoulos, Eun~Jung Kim, Pawe\l{} Rz\k{a}\.{z}ewski, Florian Sikora, and
  St\'{e}phan Thomass\'{e}.
\newblock {EPTAS} and subexponential algorithm for maximum clique on disk and
  unit ball graphs.
\newblock {\em J. ACM}, 68(2), 2021.

\bibitem{BG22}
Flavia Bonomo-Braberman and Carolina~Lucía Gonzalez.
\newblock A new approach on locally checkable problems.
\newblock {\em Discrete Applied Mathematics}, 314:53--80, 6 2022.

\bibitem{BCK22}
Prosenjit Bose, Paz Carmi, J.~Mark Keil, Anil Maheshwari, Saeed Mehrabi,
  Debajyoti Mondal, and Michiel Smid.
\newblock Computing maximum independent set on outerstring graphs and their
  relatives.
\newblock {\em Computational Geometry}, 103:101852, 2022.

\bibitem{BKSS14}
Bostjan Bre\v{s}ar, Rastislav Krivo\v{s}{-}Bellu\v{s}, Gabriel Semani\v{s}in,
  and Priomoz \v{S}parl.
\newblock On the weighted $k$-path vertex cover problem.
\newblock {\em Discrete Applied Mathematics}, 177:14--18, 2014.

\bibitem{GS93}
Graham~R. Brightwell and Edward~R. Scheinerman.
\newblock Representations of planar graphs.
\newblock {\em SIAM Journal on Discrete Mathematics}, 6(2):214--229, 1993.

\bibitem{CH06}
Kathie Cameron and Pavol Hell.
\newblock Independent packings in structured graphs.
\newblock {\em Mathematical Programming}, 105(2-3):201--213, 2006.

\bibitem{CDM22}
Dibyayan Chakraborty, Sandip Das, and Joydeep Mukherjee.
\newblock On dominating set of some subclasses of string graphs.
\newblock {\em Computational Geometry}, 107:101884, 2022.

\bibitem{Cha03}
Timothy~M. Chan.
\newblock Polynomial-time approximation schemes for packing and piercing fat
  objects.
\newblock {\em Journal of Algorithms}, 46(2):178--189, 2003.

\bibitem{Cha04}
Timothy~M. Chan.
\newblock A note on maximum independent sets in rectangle intersection graphs.
\newblock {\em Information Processing Letters}, 89(1):19--23, 2004.

\bibitem{CS05}
L.~Sunil Chandran and C.~R. Subramanian.
\newblock Girth and treewidth.
\newblock {\em Journal of Combinatorial Theory, Series B}, 93(1):23--32, 2005.

\bibitem{CE}
Bruno Courcelle and Joost Engelfriet.
\newblock {\em Graph structure and monadic second-order logic. {A}
  language-theoretic approach}.
\newblock Cambridge University Press, 2012.

\bibitem{CFK}
Marek Cygan, Fedor~V. Fomin, {\L}ukasz Kowalik, Daniel Lokshtanov, D\'{a}niel
  Marx, Marcin Pilipczuk, Micha{\l} Pilipczuk, and Saket Saurabh.
\newblock {\em {P}arameterized {A}lgorithms}.
\newblock Springer, 2015.

\bibitem{DFGK22}
Cl\'{e}ment Dallard, Fedor~V. Fomin, Petr~A. Golovach, Tuukka Korhonen, and
  Martin Milani\v{c}.
\newblock Computing tree decompositions with small independence number.
\newblock {\em CoRR}, abs/2207.09993, 2022.
\newblock URL: \url{https://arxiv.org/abs/2207.09993}.

\bibitem{DMS21a}
Cl\'{e}ment Dallard, Martin Milani\v{c}, and Kenny \v{S}torgel.
\newblock Treewidth versus clique number. {I}. {G}raph classes with a forbidden
  structure.
\newblock {\em SIAM Journal on Discrete Mathematics}, 35(4):2618--2646, 2021.

\bibitem{DMiS22}
Cl\'{e}ment Dallard, Martin Milani\v{c}, and Kenny \v{S}torgel.
\newblock Treewidth versus clique number. {III}. {T}ree-independence number of
  graphs with a forbidden structure.
\newblock {\em CoRR}, abs/2206.15092, 2022.
\newblock URL: \url{https://arxiv.org/abs/2206.15092}.

\bibitem{DaMS22}
Cl\'{e}ment Dallard, Martin Milani\v{c}, and Kenny \v{S}torgel.
\newblock Treewidth versus clique number. {II}. {T}ree-independence number.
\newblock {\em Journal of Combinatorial Theory, Series B}, 164:404--442, 2024.

\bibitem{BBK20}
Mark de~Berg, Hans~L. Bodlaender, S\'{a}ndor Kisfaludi-Bak, D\'{a}niel Marx,
  and Tom~C. van~der Zanden.
\newblock A framework for exponential-time-hypothesis--tight algorithms and
  lower bounds in geometric intersection graphs.
\newblock {\em SIAM Journal on Computing}, 49(6):1291--1331, 2020.

\bibitem{BK20}
Mark de~Berg and S{\'{a}}ndor Kisfaludi{-}Bak.
\newblock Lower bounds for dominating set in ball graphs and for weighted
  dominating set in unit-ball graphs.
\newblock In Fedor~V. Fomin, Stefan Kratsch, and Erik~Jan van Leeuwen, editors,
  {\em Treewidth, Kernels, and Algorithms - Essays Dedicated to Hans L.
  Bodlaender on the Occasion of His 60th Birthday}, volume 12160 of {\em
  Lecture Notes in Computer Science}, pages 31--48. Springer, 2020.

\bibitem{BKMT23}
Mark de~Berg, S{\'{a}}ndor Kisfaludi{-}Bak, Morteza Monemizadeh, and Leonidas
  Theocharous.
\newblock Clique-based separators for geometric intersection graphs.
\newblock {\em Algorithmica}, 85(6):1652--1678, 2023.

\bibitem{DHK05}
Erik~D. Demaine, MohammadTaghi Hajiaghayi, and Ken{-}ichi Kawarabayashi.
\newblock Algorithmic graph minor theory: Decomposition, approximation, and
  coloring.
\newblock In {\em 46th Annual {IEEE} Symposium on Foundations of Computer
  Science {(FOCS} 2005)}, pages 637--646. {IEEE} Computer Society, 2005.

\bibitem{DHM10}
Erik~D. Demaine, MohammadTaghi Hajiaghayi, and Bojan Mohar.
\newblock Approximation algorithms via contraction decomposition.
\newblock {\em Combinatorica}, 30(5):533--552, 2010.

\bibitem{DKL93}
Anders Dessmark, Klaus Jansen, and Andrzej Lingas.
\newblock The maximum $k$-dependent and $f$-dependent set problem.
\newblock In Kam{-}Wing Ng, Prabhakar Raghavan, N.~V. Balasubramanian, and
  Francis Y.~L. Chin, editors, {\em Algorithms and Computation, 4th
  International Symposium ({ISAAC} 1993)}, volume 762 of {\em Lecture Notes in
  Computer Science}, pages 88--98. Springer, 1993.

\bibitem{DDO04}
Matt DeVos, Guoli Ding, Bogdan Oporowski, Daniel~P. Sanders, Bruce Reed, Paul
  Seymour, and Dirk Vertigan.
\newblock Excluding any graph as a minor allows a low tree-width $2$-coloring.
\newblock {\em Journal of Combinatorial Theory, Series B}, 91(1):25--41, 2004.

\bibitem{DEW17}
Vida Dujmovi\'{c}, David Eppstein, and David~R. Wood.
\newblock Structure of graphs with locally restricted crossings.
\newblock {\em SIAM Journal on Discrete Mathematics}, 31(2):805--824, 2017.

\bibitem{DEMWW22}
Vida Dujmovi\'{c}, Louis Esperet, Pat Morin, Bartosz Walczak, and David~R.
  Wood.
\newblock Clustered $3$-colouring graphs of bounded degree.
\newblock {\em Combinatorics, Probability and Computing}, 31(1):123–135,
  2022.

\bibitem{DJM18}
Vida Dujmovi\'{c}, Gwena\"{e}l Joret, Pat Morin, Sergey Norin, and David~R.
  Wood.
\newblock Orthogonal tree decompositions of graphs.
\newblock {\em SIAM Journal on Discrete Mathematics}, 32(2):839--863, 2018.

\bibitem{DMW17}
Vida Dujmovi\'{c}, Pat Morin, and David~R. Wood.
\newblock Layered separators in minor-closed graph classes with applications.
\newblock {\em Journal of Combinatorial Theory, Series B}, 127:111--147, 2017.

\bibitem{DMW23}
Vida Dujmovi\'{c}, Pat Morin, and David~R. Wood.
\newblock Graph product structure for non-minor-closed classes.
\newblock {\em Journal of Combinatorial Theory, Series B}, 162:34--67, 2023.

\bibitem{DSW16}
Vida Dujmovi\'{c}, Anastasios Sidiropoulos, and David~R. Wood.
\newblock Layouts of expander graphs.
\newblock {\em Chicago Journal of Theoretical Computer Science}, 2016(1), 2016.

\bibitem{Dvo16}
Zden\v{e}k Dvo\v{r}\'{a}k.
\newblock Sublinear separators, fragility and subexponential expansion.
\newblock {\em European Journal of Combinatorics}, 52:103--119, 2016.

\bibitem{Dvo18}
Zden\v{e}k Dvo\v{r}\'{a}k.
\newblock Thin graph classes and polynomial-time approximation schemes.
\newblock In Artur Czumaj, editor, {\em Twenty-Ninth Annual {ACM-SIAM}
  Symposium on Discrete Algorithms ({SODA} 2018)}, pages 1685--1701. {SIAM},
  2018.

\bibitem{Dvo20}
Zden\v{e}k Dvo\v{r}\'{a}k.
\newblock Baker game and polynomial-time approximation schemes.
\newblock In Shuchi Chawla, editor, {\em Proceedings of the 2020 {ACM-SIAM}
  Symposium on Discrete Algorithms ({SODA} 2020)}, pages 2227--2240. {SIAM},
  2020.

\bibitem{GWP22}
Zden\v{e}k Dvo\v{r}\'{a}k.
\newblock Talk at GWP 2022 - satellite workshop of ICALP 2022, 2022.
\newblock URL: \url{https://homepages.ecs.vuw.ac.nz/~bretteni/GWP2022/}.

\bibitem{Dvo22}
Zden\v{e}k Dvo\v{r}\'{a}k.
\newblock {Approximation Metatheorems for Classes with Bounded Expansion}.
\newblock In Artur Czumaj and Qin Xin, editors, {\em 18th Scandinavian
  Symposium and Workshops on Algorithm Theory (SWAT 2022)}, volume 227 of {\em
  Leibniz International Proceedings in Informatics (LIPIcs)}, pages
  22:1--22:17. Schloss Dagstuhl -- Leibniz-Zentrum f{\"u}r Informatik, 2022.

\bibitem{DGLTT22}
Zden\v{e}k Dvo\v{r}\'{a}k, Daniel Gon\c{c}alves, Abhiruk Lahiri, Jane Tan, and
  Torsten Ueckerdt.
\newblock On comparable box dimension.
\newblock In Xavier Goaoc and Michael Kerber, editors, {\em 38th International
  Symposium on Computational Geometry (SoCG 2022)}, volume 224 of {\em Leibniz
  International Proceedings in Informatics (LIPIcs)}, pages 38:1--38:14.
  Schloss Dagstuhl -- Leibniz-Zentrum f{\"u}r Informatik, 2022.

\bibitem{DL21}
Zden\v{e}k Dvo\v{r}\'{a}k and Abhiruk Lahiri.
\newblock Approximation schemes for bounded distance problems on fractionally
  treewidth-fragile graphs.
\newblock In Petra Mutzel, Rasmus Pagh, and Grzegorz Herman, editors, {\em 29th
  Annual European Symposium on Algorithms (ESA 2021)}, volume 204 of {\em
  Leibniz International Proceedings in Informatics (LIPIcs)}, pages
  40:1--40:10. Schloss Dagstuhl -- Leibniz-Zentrum f{\"u}r Informatik, 2021.

\bibitem{DLP23}
Zden\v{e}k Dvo\v{r}\'{a}k, Daniel Lokshtanov, Fahad Panolan, Saket Saurabh, Jie
  Xue, and Meirav Zehavi.
\newblock Efficient approximation for subgraph-hitting problems in sparse
  graphs and geometric intersection graphs.
\newblock {\em CoRR}, abs/2304.13695, 2023.
\newblock URL: \url{https://arxiv.org/abs/2304.13695}.

\bibitem{DN16}
Zden\v{e}k Dvo\v{r}\'{a}k and Sergey Norin.
\newblock Strongly sublinear separators and polynomial expansion.
\newblock {\em {SIAM} Journal on Discrete Mathematics}, 30(2):1095--1101, 2016.

\bibitem{DN19}
Zden\v{e}k Dvo\v{r}\'{a}k and Sergey Norin.
\newblock Treewidth of graphs with balanced separations.
\newblock {\em Journal of Combinatorial Theory, Series B}, 137:137--144, 2019.

\bibitem{Epp00}
David Eppstein.
\newblock Diameter and treewidth in minor-closed graph families.
\newblock {\em Algorithmica}, 27(3):275--291, 2000.

\bibitem{EJS05}
Thomas Erlebach, Klaus Jansen, and Eike Seidel.
\newblock Polynomial-time approximation schemes for geometric intersection
  graphs.
\newblock {\em SIAM Journal on Computing}, 34(6):1302--1323, 2005.

\bibitem{EGM14}
Hiroshi Eto, Fengrui Guo, and Eiji Miyano.
\newblock Distance-$d$ independent set problems for bipartite and chordal
  graphs.
\newblock {\em Journal of Combinatorial Optimization}, 27(1):88--99, 2014.

\bibitem{FLS12}
Fedor~V. Fomin, Daniel Lokshtanov, and Saket Saurabh.
\newblock Bidimensionality and geometric graphs.
\newblock In Yuval Rabani, editor, {\em Twenty-Third Annual {ACM-SIAM}
  Symposium on Discrete Algorithms ({SODA} 2012)}, pages 1563--1575. {SIAM},
  2012.

\bibitem{FTV15}
Fedor~V. Fomin, Ioan Todinca, and Yngve Villanger.
\newblock Large induced subgraphs via triangulations and {CMSO}.
\newblock {\em {SIAM} Journal on Computing}, 44(1):54--87, 2015.

\bibitem{FP11}
Jacob Fox and J{\'{a}}nos Pach.
\newblock Computing the independence number of intersection graphs.
\newblock In Dana Randall, editor, {\em Twenty-Second Annual {ACM-SIAM}
  Symposium on Discrete Algorithms ({SODA} 2011)}, pages 1161--1165. {SIAM},
  2011.

\bibitem{GMY23}
Esther Galby, Andrea Munaro, and Shizhou Yang.
\newblock Polynomial-time approximation schemes for independent packing
  problems on fractionally tree-independence-number-fragile graphs.
\newblock In Erin~W. Chambers and Joachim Gudmundsson, editors, {\em 39th
  International Symposium on Computational Geometry (SoCG 2023)}, volume 258 of
  {\em Leibniz International Proceedings in Informatics (LIPIcs)}, pages
  34:1--34:15. Schloss Dagstuhl -- Leibniz-Zentrum f{\"u}r Informatik, 2023.

\bibitem{GKM22}
Waldo G{\'{a}}lvez, Arindam Khan, Mathieu Mari, Tobias M{\"{o}}mke,
  Madhusudhan~Reddy Pittu, and Andreas Wiese.
\newblock A $3$-approximation algorithm for {M}aximum {I}ndependent {S}et of
  rectangles.
\newblock In Joseph~(Seffi) Naor and Niv Buchbinder, editors, {\em Proceedings
  of the 2022 {ACM-SIAM} Symposium on Discrete Algorithms ({SODA} 2022)}, pages
  894--905. {SIAM}, 2022.

\bibitem{GB10}
Matt Gibson and Imran~A. Pirwani.
\newblock Algorithms for dominating set in disk graphs: {B}reaking the $\log n$
  barrier.
\newblock In Mark de~Berg and Ulrich Meyer, editors, {\em Algorithms -- ESA
  2010}, volume 6346 of {\em Lecture Notes in Computer Science}, pages
  243--254. Springer Berlin Heidelberg, 2010.

\bibitem{GLS09}
Martin~Charles Golumbic, Marina Lipshteyn, and Michal Stern.
\newblock Edge intersection graphs of single bend paths on a grid.
\newblock {\em Networks}, 54(3):130--138, 2009.

\bibitem{GB07}
Alexander Grigoriev and Hans~L. Bodlaender.
\newblock Algorithms for graphs embeddable with few crossings per edge.
\newblock {\em Algorithmica}, 49(1):1--11, 2007.

\bibitem{HP17}
Sariel Har-Peled and Kent Quanrud.
\newblock Approximation algorithms for polynomial-expansion and low-density
  graphs.
\newblock {\em SIAM Journal on Computing}, 46(6):1712--1744, 2017.

\bibitem{HM85}
Dorit~S. Hochbaum and Wolfgang Maass.
\newblock Approximation schemes for covering and packing problems in image
  processing and {VLSI}.
\newblock {\em J. ACM}, 32(1):130–136, 1985.

\bibitem{HMR98}
Harry~B. Hunt, Madhav~V. Marathe, Venkatesh Radhakrishnan, S.S Ravi, Daniel~J.
  Rosenkrantz, and Richard~E. Stearns.
\newblock $\mathsf{NC}$-approximation schemes for $\mathsf{NP}$- and
  $\mathsf{PSPACE}$-hard problems for geometric graphs.
\newblock {\em Journal of Algorithms}, 26(2):238--274, 1998.

\bibitem{JMMR20}
Satyabrata Jana, Anil Maheshwari, Saeed Mehrabi, and Sasanka Roy.
\newblock Maximum bipartite subgraph of geometric intersection graphs.
\newblock In M.~Sohel Rahman, Kunihiko Sadakane, and Wing{-}Kin Sung, editors,
  {\em Algorithms and Computation - 14th International Conference ({WALCOM}
  2020)}, volume 12049 of {\em Lecture Notes in Computer Science}, pages
  158--169. Springer, 2020.

\bibitem{Kim95}
Jeong~Han Kim.
\newblock The {R}amsey number $r(3, t)$ has order of magnitude $t^2/\log t$.
\newblock {\em Random Structures \& Algorithms}, 7(3):173--207, 1995.

\bibitem{KL23}
Tuukka Korhonen and Daniel Lokshtanov.
\newblock Induced-minor-free graphs: {S}eparator theorem, subexponential
  algorithms, and improved hardness of recognition.
\newblock {\em CoRR}, abs/2308.04795, 2023.
\newblock URL: \url{https://arxiv.org/abs/2308.04795}.

\bibitem{LMS15}
Abhiruk Lahiri, Joydeep Mukherjee, and C.~R. Subramanian.
\newblock {M}aximum {I}ndependent {S}et on {$B_1$}-{VPG} graphs.
\newblock In Zaixin Lu, Donghyun Kim, Weili Wu, Wei Li, and Ding{-}Zhu Du,
  editors, {\em Combinatorial Optimization and Applications - 9th International
  Conference ({COCOA} 2015)}, volume 9486 of {\em Lecture Notes in Computer
  Science}, pages 633--646. Springer, 2015.

\bibitem{Lee19}
Euiwoong Lee.
\newblock Partitioning a graph into small pieces with applications to path
  transversal.
\newblock {\em Mathematical Programming}, 177(1-2):1--19, 2019.

\bibitem{Lee17}
James~R. Lee.
\newblock {S}eparators in region intersection graphs.
\newblock In Christos~H. Papadimitriou, editor, {\em 8th Innovations in
  Theoretical Computer Science Conference (ITCS 2017)}, volume~67 of {\em
  Leibniz International Proceedings in Informatics (LIPIcs)}, pages 1:1--1:8.
  Schloss Dagstuhl--Leibniz-Zentrum f{\"u}r Informatik, 2017.

\bibitem{TP02}
Nissan Lev{-}Tov and David Peleg.
\newblock Exact algorithms and approximation schemes for base station placement
  problems.
\newblock In Martti Penttonen and Erik~Meineche Schmidt, editors, {\em
  Algorithm Theory - {SWAT} 2002, 8th Scandinavian Workshop on Algorithm
  Theory}, volume 2368 of {\em Lecture Notes in Computer Science}, pages
  90--99. Springer, 2002.

\bibitem{LSH18}
Xiaosong Li, Yishuo Shi, and Xiaohui Huang.
\newblock {PTAS} for {$\mathscr{H}$}-free node deletion problems in disk
  graphs.
\newblock {\em Discrete Applied Mathematics}, 239:119--124, 2018.

\bibitem{LMM24}
Paloma~T. Lima, Martin Milani\v{c}, Peter Mur\v{s}i\v{c}, Karolina Okrasa,
  Pawe\l{} Rz\c{a}\.{z}ewski, and Kenny \v{S}torgel.
\newblock Tree decompositions meet induced matchings: beyond {M}ax {W}eight
  {I}ndependent {S}et.
\newblock {\em CoRR}, abs/2402.15834, 2024.
\newblock URL: \url{https://arxiv.org/abs/2402.15834}.

\bibitem{LPS23}
Daniel Lokshtanov, Fahad Panolan, Saket Saurabh, Jie Xue, and Meirav Zehavi.
\newblock A framework for approximation schemes on disk graphs.
\newblock In Nikhil Bansal and Viswanath Nagarajan, editors, {\em Proceedings
  of the 2023 {ACM-SIAM} Symposium on Discrete Algorithms ({SODA} 2023)}, pages
  2228--2241. {SIAM}, 2023.

\bibitem{LY93}
Carsten Lund and Mihalis Yannakakis.
\newblock The approximation of maximum subgraph problems.
\newblock In Andrzej Lingas, Rolf~G. Karlsson, and Svante Carlsson, editors,
  {\em Automata, Languages and Programming, 20th International Colloquium
  (ICALP 1993)}, volume 700 of {\em Lecture Notes in Computer Science}, pages
  40--51. Springer, 1993.

\bibitem{Mar05}
D{\'a}niel Marx.
\newblock Efficient approximation schemes for geometric problems?
\newblock In Gerth~St{\o}lting Brodal and Stefano Leonardi, editors, {\em
  Algorithms -- ESA 2005}, volume 3669 of {\em Lecture Notes in Computer
  Science}, pages 448--459. Springer Berlin Heidelberg, 2005.

\bibitem{Mat98}
Tomomi Matsui.
\newblock Approximation algorithms for maximum independent set problems and
  fractional coloring problems on unit disk graphs.
\newblock In Jin Akiyama, Mikio Kano, and Masatsugu Urabe, editors, {\em
  Discrete and Computational Geometry, Japanese conference (JCDCG'98)}, volume
  1763 of {\em Lecture Notes in Computer Science}, pages 194--200. Springer,
  1998.

\bibitem{NHK04}
Tim Nieberg, Johann~L. Hurink, and Walter Kern.
\newblock A robust {PTAS} for maximum weight independent sets in unit disk
  graphs.
\newblock In Juraj Hromkovic, Manfred Nagl, and Bernhard Westfechtel, editors,
  {\em Graph-Theoretic Concepts in Computer Science, 30th International
  Workshop ({WG} 2004)}, volume 3353 of {\em Lecture Notes in Computer
  Science}, pages 214--221. Springer, 2004.

\bibitem{ODF11}
Yury Orlovich, Alexandre Dolgui, Gerd Finke, Valery Gordon, and Frank Werner.
\newblock The complexity of dissociation set problems in graphs.
\newblock {\em Discrete Applied Mathematics}, 159(13):1352--1366, 2011.

\bibitem{OS96}
Mark~H. Overmars and A.~Frank {van der Stappen}.
\newblock Range searching and point location among fat objects.
\newblock {\em Journal of Algorithms}, 21(3):629--656, 1996.

\bibitem{PSZ19}
Fahad Panolan, Saket Saurabh, and Meirav Zehavi.
\newblock Contraction decomposition in unit disk graphs and algorithmic
  applications in parameterized complexity.
\newblock In Timothy~M. Chan, editor, {\em Proceedings of the Thirtieth Annual
  {ACM-SIAM} Symposium on Discrete Algorithms ({SODA} 2019)}, pages 1035--1054.
  {SIAM}, 2019.

\bibitem{vdS94}
A.~Frank {van der Stappen}.
\newblock {\em Motion planning amidst fat obstacles}.
\newblock PhD thesis, Utrecht University, 1994.

\bibitem{SHO93}
A.~Frank {van der Stappen}, Dan Halperin, and Mark~H. Overmars.
\newblock The complexity of the free space for a robot moving amidst fat
  obstacles.
\newblock {\em Computational Geometry}, 3(6):353--373, 1993.

\bibitem{vLee05}
Erik~Jan van Leeuwen.
\newblock Approximation algorithms for unit disk graphs.
\newblock In Dieter Kratsch, editor, {\em Graph-Theoretic Concepts in Computer
  Science, 31st International Workshop ({WG} 2005)}, volume 3787 of {\em
  Lecture Notes in Computer Science}, pages 351--361. Springer, 2005.

\bibitem{vLee06}
Erik~Jan van Leeuwen.
\newblock Better approximation schemes for disk graphs.
\newblock In Lars Arge and Rusins Freivalds, editors, {\em Algorithm Theory -
  {SWAT} 2006, 10th Scandinavian Workshop on Algorithm Theory}, volume 4059 of
  {\em Lecture Notes in Computer Science}, pages 316--327. Springer, 2006.

\bibitem{Yan81}
Mihalis Yannakakis.
\newblock Node-deletion problems on bipartite graphs.
\newblock {\em SIAM Journal on Computing}, 10(2):310--327, 1981.

\bibitem{Yol18}
Nikola Yolov.
\newblock Minor-matching hypertree width.
\newblock In Artur Czumaj, editor, {\em Proceedings of the Twenty-Ninth Annual
  {ACM-SIAM} Symposium on Discrete Algorithms ({SODA} 2018)}, pages 219--233.
  {SIAM}, 2018.

\end{thebibliography}

\end{document}